\documentclass[11pt]{article}
\usepackage{fullpage}
\usepackage[T1]{fontenc}
\usepackage[utf8]{inputenc}
\usepackage{xcolor}
\usepackage[colorlinks=true]{hyperref}
\definecolor{purple}{HTML}{7068A4}
\definecolor{blue}{HTML}{3F70B2}
\definecolor{orange}{HTML}{A47458}
\definecolor{yellow}{HTML}{9B7A3C}
\hypersetup{
    linkcolor=blue
    ,citecolor=orange
    ,filecolor=purple
    ,urlcolor=orange
    ,menucolor=purple
    ,runcolor=purple
}
\usepackage{amsmath, amssymb, amsthm, amsfonts, graphicx, color, subcaption, enumerate, bm, cleveref, enumitem, array, mathtools, braket, csquotes}
\usepackage[colorinlistoftodos,prependcaption,textsize=tiny]{todonotes}
\usepackage{multicol, multirow}
\usepackage{thmtools, thm-restate}
\usepackage{ragged2e}
\usepackage{lineno}
\usepackage{framed}
\usepackage[framemethod=tikz]{mdframed}
\usepackage[ruled,linesnumbered]{algorithm2e}
\SetKwComment{Comment}{/* }{ */}

\usepackage{subfiles}
\graphicspath{{graphics/}}

\crefname{claim}{Claim}{Claims}
\crefname{property}{Property}{Properties}
\crefname{algocf}{Algorithm}{Algorithms}
\Crefname{algocf}{Algorithm}{Algorithms}

\usepackage[style=trad-alpha,natbib=true,maxcitenames=4]{biblatex}
\newtheorem{theorem}{Theorem}
\newtheorem{question}{Question}

\newtheorem{lemma}{Lemma}[section]
\newtheorem{claim}[lemma]{Claim}

\theoremstyle{definition}
\newtheorem{definition}[lemma]{Definition}

\theoremstyle{remark}

\newtheorem*{remark*}{Remark}

\def\polylog{\operatorname{polylog}}
\newcommand{\eps}{\varepsilon}
\renewcommand{\epsilon}{\varepsilon}
\newcommand{\poly}{\operatorname{poly}}

\newcommand{\R}{\mathbb{R}}

\newcommand{\cA}{\mathcal{A}}

\renewcommand{\emptyset}{\varnothing}

\title{Improved Massively Parallel Shortest Path Computation}

\author{Yi-Jun Chang\footnote{National University of Singapore. ORCID: 0000-0002-0109-2432. Email: cyijun@nus.edu.sg} \and Shaun Quek\footnote{National University of Singapore. ORCID: 0009-0004-6682-742X. Email: shaun.quek@u.nus.edu}}

\date{}

\begin{document}

\maketitle
\thispagestyle{empty}

\begin{abstract}

We present new algorithms for approximate shortest path computation in weighted undirected graphs within the Massively Parallel Computation (MPC) model. Specifically, we construct an $a$-approximate \emph{Single-Source Shortest Path (SSSP) oracle}: Given a query consisting of a source vertex $u \in V$, the oracle returns $a$-approximate distances $\delta(u,v)$ for all $v \in V$.

Fix an $n$-vertex weighted undirected graph. We consider two regimes of the MPC model: the \emph{near-linear} regime, where each machine has $\tilde{\Theta}(n)$ memory, and the \emph{heterogeneous} regime, where all machines have $\Theta(n^\gamma)$ memory for some constant $\gamma\in (0,1]$, except for a single machine with $\tilde{\Theta}(n)$ memory. We develop randomised algorithms that construct $O(1)$-approximate SSSP oracles in $O(\log\log n)$ rounds in the near-linear regime and in $O(\log^2\log n)$ rounds in the heterogeneous regime. These oracles answer distance queries in $O(1)$ rounds. Our results improve upon the recent work of Dory and Matar (SPAA~2024) which, for the same total memory, requires $\left(\log\log n\right)^{2+O(1/\gamma)}$ rounds and applies only to \emph{unweighted} graphs in the heterogeneous setting.

We say that $\tilde N_k(u)$ is the approximate $k$-nearest neighbourhood of a vertex $u$ if there exists an approximate distance estimate $\delta$ such that $\tilde N_k(u)$ consists of the $k$ vertices closest to $u$ according to $\delta$, breaking ties by vertex ID. The main technical ingredient is a new method for efficiently computing approximate $n^{O(\gamma)}$-nearest neighbourhoods in the heterogeneous MPC model in $O(\log^2\log n)$ rounds.  
\end{abstract}

\newpage
\bigskip
\tableofcontents
\bigskip
\thispagestyle{empty}

% \newpage

% \listoftodos

\newpage
\pagenumbering{arabic}

\section{Introduction}
Shortest-path computation is one of the most fundamental problems in computer science. In this work, we study the problem in the \emph{Massively Parallel Computation (MPC)} model~\cite{karloff2010model}, a theoretical abstraction of widely used large-scale data processing frameworks such as Hadoop~\cite{White:2012}, MapReduce~\cite{dg04}, and Spark~\cite{ZahariaCFSS10}. The MPC model has received significant attention across a wide range of computational problems, including graph coloring~\cite{chang2021,czumaj2020,ghaffari2019hardness,ghaffari2020}, independent set~\cite{behnezhad2019mis,ghaffari2018improved,ghaffari2019hardness,ghaffari2019sparsifying,ghaffari2020}, matching~\cite{assadi2019coreset,behnezhad2019mis,behnezhad2023exponentially,czumaj2018compression,ghaffari2018improved,ghaffari2019hardness,ghaffari2020,lattanzi2011filtering}, set cover~\cite{ji2025setcover,laxman2024-set-cover,harvey2018-ratio-algorithms}, graph connectivity~\cite{adoni2018connectivity,assadi2019-sparse-graph-connectivity,behnezhad2019near,coy2022-connectivity,gan2025-grid-connectivity}, minimum spanning tree~\cite{lattanzi2011filtering,nowicki2021,azarmehr2025-mst}, clustering~\cite{bateni2021clustering,bhaskara2018clustering,chaudhuri2022clustering,cohen2021clustering,coy2023clustering,czumaj2024clustering}, subgraph finding~\cite{biswas2020graphcounting,chu2011,pagh2011,suri2011}, string problems~\cite{boroujeni2021approximating,golan_et_al:LIPIcs.CPM.2025.6,HajiaghayiSS19,im2017efficient,koo2024optimal}, and geometric problems~\cite{andoni2014,chang2024fully,HolmT23,jayaram2024massively}.

\paragraph{The MPC Model.} The MPC model is a synchronous distributed computation model in which all machines can directly communicate with one another. The input is initially distributed between machines, where each machine has limited local memory $S$.  Computation proceeds in synchronous message-passing rounds, where the total amount of information a machine can send or receive each round is bounded by $S$. The objective is to minimise the number of communication rounds, which we refer to as the \emph{round complexity}. The local computation power of each machine is usually ignored in the algorithm analysis as communication is the main bottleneck in most settings; in this work, we require at most polynomial (in $S$) running time per machine between rounds. 

In this work, we study the MPC model for graph problems on weighted undirected graphs. Specifically, the input is a graph $G=(V,E)$ with $n$ vertices and $\poly(n)$-bounded positive weights $\omega_G:V\times V\to \mathbb Z$. The edge set $\set{(u,v,\omega_G(u,v)):\set{u,v}\in E}$ is initially arbitrarily distributed across all machines; each vertex is represented by a unique vertex ID of $O(\log n)$ bits.

MPC graph algorithms are typically classified based on the local memory size $S$, where memory is measured in words of $O(\log n)$ bits. The following three regimes are well-studied: 
\begin{description}
    \item[Super-Linear:] $S \in \Theta(n^{1 + c})$, where $c > 0$ is any constant.
    \item[Near-Linear:] $S \in \tilde{\Theta}(n)$.
    \item[Sublinear:] $S \in \Theta(n^\gamma)$, where $\gamma \in (0,1)$ is any constant.
\end{description}
We note that the round complexity can vary significantly across different memory regimes. For instance, the state-of-the-art upper bounds on round complexity for maximal matching, maximal independent set, $(1-\epsilon)$-approximation of maximum matching, and $(2+\epsilon)$-approximation of minimum vertex cover in the MPC model are as follows: $O(1)$ in the  superlinear regime~\cite{lattanzi2011filtering,ahn2018access,harvey2018-ratio-algorithms}, $O(\log \log n)$ in the near-linear regime~\cite{behnezhad2023exponentially,ghaffari2018improved}, and $O(\sqrt{\log n \log \log n})$ in the sublinear regime~\cite{ghaffari2019sparsifying}.

Typically, the total memory available across all machines is $\tilde O(|E|)$, near-linear in the total input size (we use the notations $\tilde{O}(\cdot)$, $\tilde{\Theta}(\cdot)$, and $\tilde{\Omega}(\cdot)$ to suppress factors of $\log^{\pm O(1)} n$). However, following the work of ~\citet{dory2024apsp}, we will be considering a small amount of additional total memory, i.e., $\tilde O(|E| + n^{1+c})$ for a small constant $c\in (0,1]$. Alternatively, one might view our results as applying to graphs that are slightly dense ($|E|\in \Omega(n^{1+c})$).

\paragraph{Distance Oracles and SSSP Oracles.} In this work, we are interested in the \emph{All-Pairs Shortest Paths (APSP)} problem on weighted undirected graphs. A typical formulation of the problem is to output the shortest-path distances between all pairs of vertices. However, a subtle but important challenge arises in the MPC model: Explicitly storing all-pairs shortest distances requires $\Omega\left(\binom n2\right)$ total memory. Unless the graph is sufficiently dense, the available total memory is insufficient to store all $\binom n2$ distances.

To overcome this, a data structure approach is usually taken and we instead construct a \emph{distance oracle}, a sparse structure that approximately preserves distances and supports efficient queries. Specifically, an \emph{$a$-approximate distance oracle} returns, for any query pair $(u, v)$, an estimate of the distance between $u$ and $v$ that is within a factor $a$ of the true shortest-path distance. The number of rounds needed to build the oracle is called the \emph{preprocessing time}, and the number of rounds needed to answer a query is the \emph{query time}. We will be mainly concerned with constant-round query times, and instead focus on optimising the approximation ratio and preprocessing time. 

In fact, our work considers a slightly stronger data structure: an \emph{$a$-approximate single-source shortest paths (SSSP) oracle} which, given a query source vertex $u \in V$, returns $a$-approximate distances $\delta(u, v)$ for all $v \in V$. We emphasise that the query received by the oracle is a \emph{single} source vertex $u$, while the output comprises approximate shortest-path distances from $u$ to every vertex in the graph $G$, so we refer to it as an SSSP oracle. Observe that an SSSP oracle immediately yields a distance oracle.

\paragraph{Approaches in the Near- and Sub-Linear Regimes.} Given a graph~$G$, a \emph{$t$-spanner}~$H$ is a spanning subgraph of~$G$ in which the distance between every pair of vertices is at most~$t$ times their distance in~$G$~\cite{peleg1989}. It is known that every weighted undirected graph over $n$ vertices has a $(2t-1)$-spanner with $O(n^{1+1/t})$ edges~\cite{althofer1993-sparse}, and this bound is tight assuming the Erdős girth conjecture~\cite{erdos1965}. In the near-linear MPC model, \citet{dory2021} showed a near-optimal spanner construction: $O(t)$-spanners of size $\tilde O(n^{1+1/t})$ in constant rounds for any positive integer $t$. Setting $t=\Theta(\log n)$, we can build an $\tilde O(n)$-size spanner and store it on a single machine. The spanner acts as an $O(\log n)$-approximate distance oracle, where queries are answered by a local computation of shortest distances over the spanner. Similar results have been shown by~\citet{bis2021, fischer2022massively}.

While spanner constructions exist for the sublinear MPC model~\cite{bis2021}, it is not clear how to transform a spanner into a constant-round query-time distance oracle when each machine only has sublinear memory. Alternatively,~\citet{din2020} adopted a different approach based on classical techniques for \emph{distance sketches}~\cite{thorup-zwick2005}, and showed that constant-approximate, constant-round query-time distance oracles can be constructed in $\poly(\log n)$ rounds in the sublinear MPC model. 

A natural question that arises in the sublinear MPC model is whether we can achieve both $O(1)$ distance approximations and $o(\log n)$ rounds simultaneously. However, it is understood that the polylogarithmic round complexity of~\cite{din2020} is \emph{necessary} under the widely-believed \emph{1-vs-2-cycles conjecture}~\cite{Beame13,kiveris2014connected,karloff2010model,nanongkai2022equivalence,rastogi2013finding,roughgarden2018shuffles}, which asserts that distinguishing between a single $n$-cycle and two disjoint $(n/2)$-cycles requires $\Omega(\log n)$ rounds in this model. Assuming this conjecture, even basic tasks such as computing $s$-$t$ distances or $s$-$t$ connectivity require $\Omega(\log n)$ rounds in the sublinear regime. 

\paragraph{The Heterogeneous MPC Model.}
Consequently, achieving faster algorithms necessitates allowing one machine to have near-linear local memory. Motivated by the 1-vs-2-cycles conjecture, \citet{fischer2022massively} recently introduced the \emph{heterogeneous} MPC model: All machines have local memory $S \in \Theta(n^\gamma)$ for some constant $\gamma \in (0,1]$, except for a single machine with local memory of size $\tilde{\Theta}(n)$. 

Intuitively,  the {heterogeneous} MPC model is a hybrid between the sublinear and near-linear regimes and allows for a possibility to break the logarithmic barrier given by the 1-vs-2-cycles conjecture. 
For instance, while the best-known graph connectivity algorithm in the sublinear regime runs in $O(\log D + \log \log n)$ rounds~\cite{behnezhad2019near} (where $D$ is the graph diameter), the same problem can be solved in $O(1)$ rounds in the heterogeneous regime~\cite{fischer2022massively,ahn2012analyzing}.

In a recent work,~\citet{dory2024apsp} presented a $\poly(\log\log n)$-round construction of constant-approximate distance oracles for \emph{unweighted} undirected graphs in the heterogeneous MPC regime, where the total memory is $\tilde O(|E|+n^{1+c})$ for a small constant $c > 0$. Their construction relies on two separate distance-preserving structures to handle different distance scales: A distance sketch for short distances and a linear-sized emulator for long distances. The distance sketches are based on the classical techniques of \citet{thorup-zwick2005}, while the emulator builds on the more recent methods of \citet{elkin-neiman2016emulator}. A key limitation of this approach, however, is that the emulator construction only applies to unweighted graphs. Although there have been efforts to extend this to weighted graphs~\cite{elkin2022emulator}, these do not achieve the same stretch guarantees and introduce an undesirable dependence on the edge weights of the graph. Consequently, the approach of \citet{dory2024apsp} remains restricted to unweighted graphs, with no immediate extension to the weighted setting.

This leads us to consider the following question: 
\begin{question}
Does there exist a $\poly(\log\log n)$-round algorithm for constructing $O(1)$-approximate distance oracles for \underline{weighted} graphs in the heterogeneous regime?
\end{question}

\subsection{Our Contributions}
We answer the above question in the affirmative by presenting an algorithm for constructing constant-approximate SSSP oracles on weighted undirected graphs that runs in $O(\log^2 \log n)$ rounds in the heterogeneous MPC model. Furthermore, in the near-linear MPC regime, we improve the round complexity to $O(\log \log n)$.

Throughout this paper, we say that an event occurs \emph{with high probability} (w.h.p.) if it holds with probability at least $1 - 1/\poly(n)$, where $n = |V|$ is the number of vertices in the input graph $G = (V, E)$. Moreover, we reserve the symbol $\gamma$ to refer to the exponent of the local memory $\Theta(n^\gamma)$ of the machines in the sublinear or heterogeneous MPC model, where the model being referenced is made clear in context.

\begin{theorem}[SSSP Oracles in Near-Linear MPC]\label{theorem:near-linear}
    Let $G=(V,E)$ be a weighted undirected graph with $n$ vertices, and let $c \in (0,1]$ be a constant. In the near-linear MPC model with $\tilde O(|E| + n^{1+c})$ total memory, there exists a randomised algorithm that w.h.p.~constructs an $O(1/c^2)$-approximate SSSP oracle in $O(\log\log n)$ rounds. Every subsequent query can then be answered in $O(1)$ rounds.
\end{theorem}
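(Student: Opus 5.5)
We build a Thorup--Zwick-style sketch for short ranges, together with a sparse spanner-like structure kept on a single machine for the remaining part of each query. Concretely, we set $k=\Theta(1/c)$ levels and sample a hierarchy $V=A_0\supseteq A_1\supseteq\cdots\supseteq A_k=\emptyset$. Each vertex enters $A_{i+1}$ from $A_i$ independently with probability $n^{-c/2}$ (roughly), so $|A_i|\approx n^{1-ic/2}$. For every $u$ we want an approximate \emph{bunch}: an approximate $n^{c/2}\log n$-nearest neighbourhood $\tilde N(u)$ together with distance estimates. W.h.p.\ such a neighbourhood contains an $A_1$-vertex (a pivot $p_1(u)$). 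This gives total memory $\tilde O(n^{1+c})$.

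For the near-linear regime, the first and main step is computing approximate $n^{c}$-nearest neighbourhoods in $O(\log\log n)$ rounds. The plan is ``neighbourhood doubling by exponentiation'': maintain for each $u$ a set $\tilde N^{(j)}(u)$ of size $n^{c}$ with estimates $\delta_j$. In each round, $u$ collects $\bigcup_{v\in \tilde N^{(j)}(u)}\tilde N^{(j)}(v)$ with estimates $\delta_j(u,v)+\delta_j(v,w)$ and keeps the $n^c$ closest. To keep the round count at $O(\log\log n)$ rather than $O(\log n)$, we begin from an $O(\log n)$-stretch spanner (Dory et al.\ construction, constant rounds), or better from the known $O(t)$-spanners with $t=\Theta(1/c)$ of size $\tilde O(n^{1+c})$.

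We then run $O(\log\log n)$ phases of a ``hopset/cluster-contraction'' scheme. In phase $i$ we sample centres with probability $n^{-c\,2^{i}}$-type rates, in the style of Baswana--Sen/Dory's exponentially growing cluster radii. Each phase squares the effective neighbourhood size, so $O(\log(1/c)+\log\log n)$ phases suffice. Each phase costs $O(1)$ rounds because every vertex stores only $\tilde O(n^c)$ words and the near-linear machines can gather these lists. The stretch multiplies by a constant per level in $k=O(1/c)$ levels of the hierarchy, giving $O(1/c)$ per level and $O(1/c^2)$ overall. This is where I expect the main difficulty: bounding the error accumulation so that it depends only on $k$, not on the number of exponentiation rounds. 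My first attempt is to charge the error only at ``pivot switches'', using the standard Thorup--Zwick argument that each level increases the estimate additively by at most $2\delta(u,v)$ times the per-level stretch.

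Next, we contract: on the last level $A_{k-1}$, of size $\tilde O(n^{1-\Theta(c)})$... but more simply, we take the spanner $H$ with $O(\log n)$ stretch on one machine. Instead, we build a graph $H'$ on the pivots $A_1$ whose edges are the sketch distances between pivots of adjacent bunches. By sparsity ($|A_1|\cdot n^{c}$ edges after thinning, or iterating the same construction on $H'$ for $O(1/c)$ levels), $H'$ fits in $\tilde O(n)$ memory. A query for source $u$ then works as follows. First send $u$'s bunch and its pivot to the big machine. The machine computes Dijkstra on $H'$ from $p_1(u)$, broadcasts the resulting distances, and every $v$ outputs $\min\{\delta(u,v)\text{ if }v\in \tilde N(u),\ \delta(u,p_1(u))+d_{H'}(p_1(u),p_1(v))+\delta(p_1(v),v)\}$. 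This takes $O(1)$ rounds. Correctness follows the TZ triangle-inequality argument: if $v\notin\tilde N(u)$ then $\delta(u,p_1(u))\le O(1)\,d(u,v)$, and similarly for $v$. The stretch of $H'$ is $O(1/c)$ per level, which yields the $O(1/c^2)$ bound. Success w.h.p.\ follows from Chernoff bounds on the sampling, union-bounded over all vertices.
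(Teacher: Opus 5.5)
Your proposal has a genuine gap at its first and main step: it never produces $n^{\Theta(c)}$-nearest neighbourhoods of good enough quality in $O(\log\log n)$ rounds.

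Graph exponentiation (merge the lists of the vertices in your current list, keep the $k$ best) doubles the \emph{hop radius} per round; it does not square the neighbourhood size. On a unit-weight path, $\bigcup_{v\in N_k(u)}N_k(v)$ has only about $2k$ vertices. Shortest paths to the $k$ nearest vertices can have $\Theta(k)$ hops, so exponentiation needs $\Theta(\log k)=\Theta(c\log n)$ rounds. Starting from a spanner does not change this, since a spanner is a subgraph and has no fewer hops.

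Your ``$O(\log\log n)$ phases of hopset/cluster-contraction'' that square $k$ are left unspecified. The natural way to make them precise is to recursively contract to sampled centres while squaring $k$. That is essentially the paper's heterogeneous-regime emulator, which loses a constant factor per phase and so has stretch $2^{O(\log\log n)}=\polylog(n)$. The error accumulation you flag is therefore real. Charging only at Thorup--Zwick pivot switches does not address it, because error is incurred at every contraction phase. (Also, a constant factor per level over $\Theta(1/c)$ levels gives $2^{O(1/c)}$, not $O(1/c)$.) Your query rule needs the bunch estimates $\delta$ to be $O(1)$-accurate, both for returning $\delta(u,v)$ when $v\in\tilde N(u)$ and for bounding $\delta(u,p_1(u))$. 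So everything hinges on this unproven step.

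The missing idea is that the approximation quality of the neighbourhoods need not be controlled at all. In the near-linear regime, broadcast an $O(\log n)$-spanner of size $\tilde O(n)$ to every machine and read off $a=O(\log n)$-approximate $n^{c/4}$-nearest neighbourhoods in $O(1)$ rounds (\Cref{lemma:near-linear-oracle}). These are used only to build a $k$-nearest hopset. Each $u$ collects the $k$ lightest edges of every $t\in\tilde N_k(u)$; these contain all shortest paths from $u$ to the ball of radius $(\ell_k(u)-1)/a$, and shortcutting them gives hopbound $O(a\log d)$ (\Cref{lemma:k-nearest-hopset}). Exponentiation on $G\cup H$ for $O(\log(a\log d))=O(\log\log n)$ rounds then yields the \emph{exact} $E_k$ (\Cref{lemma:E_k}). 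The factor $a$ costs only $O(\log a)$ rounds and never enters the stretch. The stretch comes solely from the initial $O(1/c)$-spanner, the skeleton graph (factor $7$), and an $O(1/c)$-spanner of the skeleton graph.

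That last spanner also fixes your size count for $H'$. Your $|A_1|\cdot n^{c}\approx n^{1+c/2}$ edges do not fit on one machine. Moreover, pivot-to-pivot edges induced by a single high-degree vertex lying in many bunches can number $\Theta(|A_1|^2)$. The paper makes such ``bad'' vertices skeleton vertices themselves, then sparsifies the skeleton graph to $\tilde O(|V_S|^{1+c/4})=\tilde O(n)$ edges.
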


\begin{theorem}[SSSP Oracles in Heterogeneous MPC]\label{theorem:hetero}
    Let $G=(V,E)$ be a weighted undirected graph with $n$ vertices. In the heterogeneous MPC model with $\tilde O(|E| + n^{1+\gamma})$ total memory, there exists a randomised algorithm that w.h.p.~constructs an $O(1/\gamma^2)$-approximate SSSP oracle in $O(\log^2\log n)$ rounds. Every subsequent query can then be answered in $O(1)$ rounds.
\end{theorem}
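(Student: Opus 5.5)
The approach is a Thorup--Zwick style construction adapted to MPC. A cluster/sketch structure handles short distances, and a small skeleton graph on a sampled vertex set, stored entirely on the single large machine, handles long distances.

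Set $k = n^{\Theta(\gamma)}$, chosen so that $k$ words per vertex fit in the $\tilde O(n^{1+\gamma})$ total memory and every neighbourhood fits on one small machine. The construction has four steps.

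First, compute for every vertex $u$ an approximate $k$-nearest neighbourhood $\tilde N_k(u)$ together with $O(1)$-approximate distances to its members. This is the main technical ingredient advertised in the abstract. The plan is to do it by repeated neighbourhood expansion: starting from $k^{1/2^i}$-neighbourhoods, each round squares the neighbourhood size. A vertex $u$ merges the neighbourhoods of the vertices currently in its neighbourhood, keeps the $k$ closest candidates, and uses sampled hitting sets to control the error.

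Second, sample a set $A$ of roughly $\tilde O(n/k)$ vertices. With high probability $A$ hits every $\tilde N_k(u)$, so every $u$ gets a pivot $p(u)\in A\cap \tilde N_k(u)$ with $\delta(u,p(u))\le O(1)\cdot r_k(u)$, where $r_k(u)$ is the radius of its $k$-neighbourhood.

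Third, build a graph $H$ on $A$. It contains an edge between each pair $a,b\in A$ whenever some $u$ has $a=p(u)$ and $b\in \tilde N_k(u)$, or some graph edge joins their clusters. Its weights are the corresponding path lengths. Then sparsify $H$ into a spanner of stretch $O(1/\gamma)$ with $\tilde O(|A|^{1+O(\gamma)})=\tilde O(n)$ edges; the constant-round spanner construction of \citet{dory2021} can be simulated in heterogeneous MPC because $H$ is smaller. Ship the spanner to the large machine.

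Fourth, answer a query with source $u$. The large machine runs Dijkstra from $p(u)$ on the spanner. Every $v$ then receives $\delta(u,v)=\min\{\delta_{\tilde N}(u,v),\ \delta(u,p(u))+\mathrm{dist}_{\text{spanner}}(p(u),p(v))+\delta(v,p(v))\}$, computed in $O(1)$ rounds by broadcasting the $|A|$ distances via an aggregation tree.

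For the stretch analysis, take a shortest $u$--$v$ path. If $v\in\tilde N_k(u)$ the first term is accurate. Otherwise $d(u,v)\gtrsim r_k(u)$, and the standard argument applies: walk along the path, charge each hop between consecutive pivots to a neighbourhood radius, and bound those radii by segments of the path. Combining the $O(1)$ neighbourhood error with the $O(1/\gamma)$ spanner stretch gives $O(1/\gamma^2)$. The main obstacle is the first step: bounding the error growth across $O(\log\log n)$ doubling phases so the approximation stays $O(1)$ rather than $2^{\log\log n}$. I would handle this by re-anchoring each phase to exact distances within the explored ball and paying only an additive term per phase. Each phase costs $O(\log\log n)$ rounds because of sorting and aggregation over machines of size $n^\gamma$, which gives the $O(\log^2\log n)$ total.
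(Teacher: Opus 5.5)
Your first step is the whole difficulty of the theorem, and as described it does not work. Merging the $m$-neighbourhoods of the members of $u$'s $m$-neighbourhood does not square their size. On a path the union has only about $2m$ vertices, so sizes merely double, and reaching $k=n^{\Theta(\gamma)}$ takes $\Theta(\log k)$ rounds. In general, the vertices of $N_{m^2}(u)$ are simply not among the candidates, so no hitting-set trick can recover them. Exponentiation is only sound with $k$ fixed and the hop count doubling, which requires a hop bound you do not have. The ``re-anchoring with an additive error'' is never specified. Your round count also has no source: sorting and aggregation on $n^\gamma$-memory machines take $O(1)$ rounds, not $O(\log\log n)$.

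The paper avoids both problems as follows.
\begin{itemize}
\item \emph{It never needs $O(1)$-approximate neighbourhoods.} Given any $a$-approximate $\tilde N_k(u)$, let $u$ collect the $k$ lightest edges of every $t\in\tilde N_k(u)$. This reveals every shortest path from $u$ into the ball of radius about $r_k(u)/a$, which lies inside $\tilde N_k(u)$. Shortcutting these paths gives a $k$-nearest hopset with hopbound $O(a\log d)$. Fixed-$k$ exponentiation then produces the \emph{exact} set $E_k$ in $O(\log a+\log\log n)$ rounds, so an approximation $a=(2/\gamma)^{O(\log\log n)}$ is harmless.
\item \emph{Crude neighbourhoods for polynomial $k$ come from a tower of skeleton graphs.} The invariant $k_i|V_i|\le n\log n$ lets the large machine store an $O(\log n)$-spanner of the current graph $G_i$ and compute $k_i$-neighbourhoods in it locally. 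These are made exact, and $E_{k_i}$ is added to an emulator. Then $G_{i+1}$ is the skeleton graph of $G_i$, and $k_{i+1}\approx k_i^{1.99}$, capped at $n^{\gamma/4}$.
\item \emph{Round count.} There are $O(\log\log n)$ levels, each costing $O(\log\log n)$ rounds. The union of the $E_{k_i}$ is a $(2/\gamma)^{O(\log\log n)}$-stretch, $O(\log\log n)$-hopbound emulator. Another $O(\log\log\log n)$ rounds of exponentiation on it yield the required neighbourhoods.
\end{itemize}

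Your graph $H$ on $A$ is also too weak for $O(1)$ stretch. Its edges go from $p(u)$ only to centres lying in $\tilde N_k(u)$, plus cluster-adjacency edges. The walk you describe therefore pays $d(x,p(x))$ at every pivot change, and these pivot distances need not be bounded by path segments. Here is a counterexample.
\begin{itemize}
\item Take a path $u=y_0,\dots,y_s$ with edges of weight $\delta$.
\item Attach to each $y_j$ a tight cluster of $k$ vertices at distance $L>s\delta$.
\item Hang $v$, with its own tight cluster, off $y_s$ by an edge of weight $2L$.
\end{itemize}
Suppose no $y_j$ is sampled; this has constant probability when $s\approx k/\log n$, and repeating the gadget makes it happen w.h.p. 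Then each $p(y_j)$ lies in its own cluster, and no neighbourhood contains centres from two clusters. Every $H$-path from $p(u)$ to $p(v)$ must cross $s$ cluster-adjacency edges of weight at least $2L$, while $d(u,v)<3L$.

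The argument of Bui et al.\ needs edges that cross a whole exact neighbourhood in one hop: $\{c(u),c(v)\}$ whenever some $t\in N_k(u)$ is adjacent to $v$. There can be $\tilde\Omega(n^2/k^2)$ such edges. That is why the paper reroutes them through bad vertices (those lying in more than $k^2$ neighbourhoods), which brings the count down to $O(|E|k^2)$.
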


Our oracle improves upon that of~\citet{dory2024apsp} in two aspects. First, it applies to \emph{weighted} graphs, whereas their construction is limited to \emph{unweighted} graphs. Second but less significantly, it achieves a polynomial (in $\log\log n$) speed-up in preprocessing time in the heterogeneous regime, reducing the complexity from $\left(\log\log n\right)^{2+O(1/\gamma)}$ to $O(\log^2\log n)$ when the total memory is $\tilde O(|E| + n^{1+\gamma})$. This completely removes the dependence on the parameter~$\gamma$ in the exponent. 

We remark that while both our work and~\cite{dory2024apsp} obtain a constant approximation of distances, the oracle of~\cite{dory2024apsp} provides better stretch guarantees. When the total memory is $\tilde O(|E| + n^{1+\gamma})$, while both our distance oracle and that of~\cite{dory2024apsp} achieve $O(1/\gamma^2)$-approximations,~\cite{dory2024apsp} was able to achieve a better $O(1/\gamma)$-approximation if we allow a $\poly(\log\log n)$ query time. Moreover,~\cite{dory2024apsp} showed that with more total memory, their distance oracles can achieve $3(1+\eps)$-approximations under constant-round query times (and $\poly(\log\log n)$ preprocessing time), or $(1+\eps)$-approximations under $\poly(\log\log n)$ query times (and $O(1)$ preprocessing time) for any $0 < \eps < 1/2$. 

We also remark that our algorithms can be modified to report both the approximate distance as well as a path that witnesses this distance. Specifically, we construct a constant-approximate \emph{path-reporting distance oracle}: Given a query pair $(u,v)$, output a path $P$ connecting $u$ and $v$ such that the length of $P$ (i.e., the sum of the weights of all edges in $P$) is within a constant factor of the true shortest-path distance between $u$ and $v$. We refer to \Cref{appendix:paths} for more details. 

In view of these results, a natural question to ask is if increasing the \emph{total} memory of the MPC model to, say, $\poly(n)$ leads to faster APSP algorithms. We first note that once we have $\Omega(n^2)$ total memory, we are no longer restricted to constructing a distance oracle and can instead explicitly output approximate distances for all vertex pairs. Namely in the near-linear MPC regime, $\Omega(n^2)$ total memory also allows us to directly apply congested-clique algorithms (see below), for which faster constant-approximate APSP algorithms are known (e.g.,~\cite{bui2024improved}). In the heterogeneous MPC regime, however, it is unclear to us whether additional total memory alone can lead to faster algorithms. We believe this is an interesting direction for future research.

\subsection{Related Work}\label{subsec:related-works}

\paragraph{Bypassing the 1-vs-2-cycles conjecture.} It is well known that the logarithmic barrier imposed by the 1-vs-2-cycles conjecture can be bypassed using geometric information~\cite{andoni2014}. Concretely, an \emph{embedded planar graph} is a planar graph equipped with a straight-line drawing, where the coordinates of each vertex in the plane are provided as input. For such graphs, in the sublinear MPC model, a $(1+\epsilon)$-approximate SSSP can be computed in $O(1)$ rounds using $O(n)$ total memory, and a $(1+\epsilon)$-approximate APSP can be computed in $O(1)$ rounds using $O(n^2)$ total memory~\cite{chang2024fully,HolmT23}.

\paragraph{Congested Clique.} Of related interest is the distributed model known as the \emph{congested clique}. In this model, the communication network forms a fully connected clique of $n$ vertices, where each vertex can send $\Theta(\log n)$-bit messages to every other vertex in synchronous rounds. Given an $n$-vertex graph $G$ as input, each vertex initially knows only the edges incident to it in $G$. By the end of an APSP algorithm, every vertex should know the (approximate) distances to all other vertices in $G$. While $G$ can be an arbitrary graph on $n$ vertices, the communication network itself is a clique.

The MPC and congested clique models share several similarities. For instance, any congested clique algorithm for dense graphs with $O(n^2)$ edges that uses at most $O(n)$ local memory per vertex can be simulated directly in the near-linear MPC model with nearly the same round complexity~\cite{behnezhad2018mpc-cc}. In fact, it is common for algorithms developed for one model to be adapted for the other~\cite{dory2021,bis2021,czumaj2020}.

APSP in the congested clique has received considerable attention in recent years. Of particular relevance is the work of \citet{dory2022exponentially}, which presented a $\poly(\log\log n)$-round algorithm for $(2+\varepsilon)$-approximate APSP in unweighted graphs, and the more recent work of \citet{bui2024improved}, which achieved an $O(\log\log\log n)$-round algorithm for $O(1)$-approximate APSP in weighted graphs. Notably, both the heterogeneous MPC algorithm of \citet{dory2024apsp} and the congested clique algorithm of \citet{dory2022exponentially} rely on the emulator of \citet{elkin-neiman2016emulator} as a central component.

% \yijun{I think it is quite different, although they have the same name... so this is potentially confusing. Only the part of sampling a subset of nodes and then build a virtual graph over it is the same, the detail of how this virtual graph is built is different. This should be clarified. However, indeed this is the right citation for this very general concept of skeleton graphs.}

\paragraph{Skeleton Graphs.} Our work adapts several ideas from \citet{bui2024improved}, the most significant of which is the \emph{skeleton graph} (\Cref{definition:skeleton-graph}). The general skeleton graph concept refers to the notion of building a new graph over a sampled subset of the original vertices. As far as we know, \cite[Algorithm 2.3]{ullman91} is the earliest application of this technique; we however refer to the particular version used in \cite{bui2024improved}, which differs significantly from that of \cite{ullman91}. 

At a high level, our work involves a hierarchical construction of the skeleton graph to build a polylog-stretch low-hopbound emulator (\Cref{definition:emulator}) as a key step towards constructing the SSSP oracle. We note that this manner of hierarchical construction also shares similarities with the work of \citet{pram-subemulator}. We defer the comparisons to \cite{bui2024improved} and \cite{pram-subemulator} to \Cref{subsec:comparison-with-prior-work}.

%\paragraph{From Randomised to Deterministic.} A very recent work by \citet{cho2026-spaa} presented techniques for de-randomising known randomised shortest-path algorithms for the MPC and the congested clique models. This included the aforementioned $O(1)$-APSP congested clique algorithm of \cite{bui2024improved} as well as the shortest-path MPC algorithms of \cite{dory2024apsp}. Given the similar nature of the randomisation used in our work to that of previous works, we strongly believe that the techniques of \cite{cho2026-spaa} can be applied to de-randomise our algorithms and obtain deterministic variants of \Cref{theorem:near-linear} and \Cref{theorem:hetero}. We leave this to be verified by the reader.

\paragraph{From Randomised to Deterministic.}
A very recent work by \citet{cho2026-spaa} introduced techniques for de-randomising randomised shortest-path algorithms in the MPC and congested clique models. Their approach is based on de-randomising hitting-set computations, and applies, in particular, to the aforementioned $O(1)$-approximate APSP congested clique algorithm of \cite{bui2024improved} and the shortest-path MPC algorithms of \cite{dory2024apsp}. Given the similar nature of the randomisation used in our work to that of previous works, we expect that the techniques of \cite{cho2026-spaa} can be adapted to obtain deterministic variants of \Cref{theorem:near-linear} and \Cref{theorem:hetero}.

\subsection{Organisation}
In \Cref{sect:prelim}, we introduce the terminology and tools used throughout the paper. In \Cref{section:tech-overview}, we provide a technical overview of our algorithms. In \Cref{section:SSSP-oracles}, we present our main results, the SSSP oracle algorithms. In \Cref{section:k-nearest-hopset,section:extracting-neighbourhoods,section:skeleton-graph}, we describe the three building blocks underlying these algorithms: constructing $k$-nearest hopsets, extracting $k$-nearest neighbourhoods, and constructing skeleton graphs, respectively. In \Cref{appendix:mpc-tools}, we review the basic communication primitives for MPC algorithms. In \Cref{appendix:paths}, we extend our construction to output the paths corresponding to the approximate distances.

\section{Preliminaries}\label{sect:prelim}

In this section, we present the terminology and the tools used throughout the paper.

\subsection{Graph Notations}\label{subsect:graph_notation}

For a weighted graph $G = (V, E)$, we let $d_G(u, v)$ denote the distance between vertices $u$ and $v$ in $G$; if there is no path connecting $u$ and $v$, then we denote $d_G(u,v)=\infty$. The \emph{weighted diameter} of $G$ is defined as the maximum distance between any pair of vertices, i.e., $\max_{u, v \in V} d_G(u, v)$. We write $d_G^{(\beta)}(u, v)$ to denote the distance between $u$ and $v$ via a path consisting of at most $\beta$ edges. 

Unless otherwise specified, all graphs considered in this paper are assumed to be \emph{simple} and \emph{undirected}. Each vertex has a distinct ID of $O(\log n)$ bits. As is standard, we assume that the input graph $G$ has positive edge weights that are polynomially bounded integers.

For any edge $e = \{u, v\} \in E$, we denote its weight by $\omega_G(u, v) = \omega(e)$. As convenient shorthands, for any pair $\{u, v\} \notin E$ with $u \ne v$, we let $\omega_G(u, v) = \infty$; for any $u\in V$, we let $\omega_G(u, u)=0$. When convenient, we may use the subscript to indicate the specific edge set under consideration. For any path $P$, we use $\omega(P)$ to denote its length, defined as the sum of the weights of all edges in $P$.

\paragraph{Approximations.} 
We say that $\Delta \in \mathbb{R}$ is an \emph{$a$-approximation} of a value $d$ if $d \leq \Delta \leq a \cdot d$. Similarly, we call a function $\delta : V \times V \to \mathbb{R}\cup\set{\infty}$ an \emph{$a$-approximate} distance estimate for the graph $G$ if
$$
d_G(u, v) \leq \delta(u, v) \leq a \cdot d_G(u, v)
$$
for all pairs of vertices $u, v \in V$.

\paragraph{$k$-Nearest Neighbourhoods.} Consider a graph $G = (V, E)$. For any vertex $v \in V$ and positive integer $k$, we define $N_{G,k}(v) \subseteq V$ as the \emph{$k$-nearest neighbourhood} of $v$, consisting of the $k$ vertices in $G$ that are closest to $v$, including $v$ itself. We often omit $G$ from the subscript when the graph is clear in context. 

We also define the edge set
$$
E_k := \left\{ \{u, t\} : u \in V \wedge t \in N_k(u) \right\} \quad \text{with} \quad \omega_{E_k}(u, t) = d_G(u, t).
$$
We refer to $E_k$ as the \emph{$k$-nearest edge set} of $G$. In most instances throughout this work, computing the $k$-nearest neighbourhoods of all the vertices in $G$ amounts to computing the $k$-nearest edge set of $G$ -- in such cases, we use the terms interchangeably.

We also consider additional useful variants of $k$-nearest neighbourhoods. We denote $N_k^{(h)}(v) \subseteq V$ as the \emph{$k$-nearest $h$-hop neighbourhood} of $v$, which consists of the $k$ vertices closest to $v$ according to the distances given by the function $d_G^{(h)}$. We say that $\tilde{N}_k(v)$ is an \emph{$a$-approximate $k$-nearest neighbourhood} of $v$ if there exists an $a$-approximate distance estimate $\delta$ for $G$ such that $\tilde{N}_k(v)$ consists of the $k$ vertices closest to $v$ according to $\delta$. 

Unless otherwise specified, all references to $k$-nearest neighbourhoods and their variants in this work assume that ties are broken by vertex ID.

% \subsection{The MPC Model} The \emph{Massively Parallel Computation} (MPC) model~\cite{karloff2010model} is a synchronous distributed computation model in which all machines can directly communicate with one another. Memory is measured in words of $O(\log n)$ bits, and we use $S$ to denote the memory available to each machine. Computation proceeds in synchronous rounds, where each machine can send and receive at most $O(S)$ words per round. The total memory available across all machines depends on the algorithm and is typically almost linear in the input size.

% We identify machines in the MPC model by their integer IDs in $\{1, 2, \ldots, M\}$, where $M$ is the total number of machines. We refer to a set of machines as a \emph{contiguous segment} if their IDs form a consecutive sequence.
% The number of communication rounds is the standard measure of time for an MPC algorithm, which we refer to as the \emph{round complexity}.

\subsection{Distance-Preserving Structures} Throughout this paper, we consider three types of distance-preserving structures.
\begin{definition}[Spanners]
    Given a graph $G=(V,E)$, a \emph{$t$-spanner} is a subgraph $H\subseteq G$ over the same vertex set $V$ such that for any two vertices $u,v\in V$, 
    $$
        d_G(u,v)\leq  d_H(u,v)\leq t\cdot d_G(u,v).
    $$
The parameter $t$ is called the \emph{stretch} of the spanner.
\end{definition}

\begin{definition}[Hop-bounded Emulators]\label{definition:emulator}
    Given a graph $G=(V,E)$, a \emph{$t$-stretch $\beta$-hopbound emulator} is an edge set $H$ over the vertices $V$ such that for any two vertices $u,v\in V$,  
    $$d_G(u,v) \leq d_{H}(u,v)\leq d_{H}^{(\beta)}(u,v)\leq t\cdot d_G(u,v).
    $$
\end{definition}

\begin{definition}[$k$-Nearest Hopsets]\label{definition:hopsets}
    Given a graph $G = (V, E)$, a \emph{$k$-nearest $\beta$-hopbound hopset} is an edge set $H$ over the vertices $V$ such that for any two vertices $u, v \in V$, 
    $$
        d_{G \cup H}(u, v) = d_G(u, v),
    $$
    i.e., distances are not shortened. Moreover, if $u \in N_{G,k}(v)$ or $v \in N_{G,k}(u)$, then
    $$
        d_{G \cup H}^{(\beta)}(u, v) = d_G(u, v).
    $$
    Here, $G \cup H = (V, E \cup H)$, where the smallest-weight edge is taken in the case of parallel edges.
\end{definition}

\subsection{Useful Tools}

We identify machines in the MPC model by their integer IDs in $\{1, 2, \ldots, M\}$, where $M$ is the total number of machines. We refer to a set of machines as a \emph{contiguous segment} if their IDs form a consecutive sequence.

We use the following tools for various purposes throughout this work. The first tool is a constant-round $O(t)$-spanner construction in the heterogeneous MPC model, which also trivially extends to the near-linear MPC model. 

\begin{lemma}[Spanners in Heterogeneous MPC~\cite{dory2021},\cite{fischer2022massively}]\label{lemma:near-linear-spanner}
Let $G = (V, E)$ be a weighted graph with $n$ vertices and let $t \leq \log n$ be a parameter. In the heterogeneous MPC model with $\tilde{O}(|E|)$ total memory, there exists a randomised algorithm that, w.h.p., constructs an $O(t)$-spanner for $G$ of size $\tilde O(n^{1+1/t})$ in $O(1)$ rounds.
\end{lemma}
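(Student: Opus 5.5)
This lemma is cited from prior work (\citet{dory2021} for near-linear MPC, \citet{fischer2022massively} for the heterogeneous setting). My plan is to reconstruct it by simulating the clustering-based spanner construction of Baswana--Sen, compressed into $O(1)$ rounds. The stretch should come out as $O(t)$ rather than $2t-1$. The key idea is to let clusters grow by exponentially increasing factors, following \citet{dory2021}. Instead of $t$ iterations, each sampling with probability $n^{-1/t}$, we use $O(\log t)$ phases. In phase $i$, cluster centres survive with probability $n^{-2^i/t}$, so the number of clusters drops doubly exponentially. Each phase is implemented with neighbourhood exploration of radius $O(1)$ in the cluster graph.

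\textbf{Reduction to sparse skeletons.} First, I would reduce weighted graphs to $O(\log n)$ weight classes. Round each weight up to a power of $2$, losing a factor $2$ in stretch. Build a spanner for each class in parallel, and take the union, which costs a $\log n$ factor in size and is absorbed in $\tilde O$. Within a single class the graph is effectively unweighted up to a factor $2$. The clustering argument then gives stretch $O(t)$: each non-spanner edge is charged to a path through cluster trees of radius $O(t)$. This holds because the radii across phases form a geometric-like sum, $\sum_i O(2^i)=O(t)$ hops, each of weight at most twice the edge.

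\textbf{Implementation in heterogeneous MPC.} The main point is that everything except the final bookkeeping needs only $O(n^\gamma)$ memory per machine. Sampling is local. Each vertex finds an adjacent sampled cluster by sorting and aggregation over its incident edges. Sorting and prefix sums take $O(1)$ rounds with sublinear memory and total memory $\tilde O(|E|)$. For each vertex, keep one lightest edge to each adjacent cluster that is not sampled. This is a grouping-by-key aggregation, again $O(1)$ rounds.

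\textbf{Main obstacle.} The hard part is achieving $O(1)$ rounds rather than $O(\log t)$, while keeping the total number of added edges at $\tilde O(n^{1+1/t})$. Fischer et al.\ do this by using the single large machine. After a constant number of sublinear phases, the contracted cluster graph has only $\tilde O(n^{1+1/t})$ edges, or few enough clusters that it fits in $\tilde\Theta(n)$ memory. We ship it to the large machine, which finishes the remaining Baswana--Sen iterations locally with stretch $O(t)$. I would verify that the edge count entering the large machine is $\tilde O(n)$ with high probability. This follows because each vertex contributes, in expectation, at most $n^{1/t}$ lightest edges per phase to unsampled clusters, and a Chernoff bound gives concentration.
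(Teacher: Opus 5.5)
The paper does not prove this lemma; it imports it as a black box from \cite{dory2021} and \cite{fischer2022massively}, so there is no in-paper argument to compare with. Read as a self-contained reconstruction, your sketch has two genuine gaps. The weight-class reduction, given polynomially bounded weights, is fine.

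\textbf{Stretch.} Under your cluster-graph implementation, where each phase explores radius $O(1)$ in the contracted cluster graph, radii grow multiplicatively rather than additively. Suppose an unsampled cluster of radius $r_i$ joins an adjacent sampled cluster through a single edge $\{x,y\}$. A vertex $z$ of the joining cluster is then reached from the new centre by going to $y$ (at most $r_i$ hops), crossing $\{x,y\}$, and walking from $x$ to $z$ in the old tree (up to $2r_i$ hops). So the general bound is $r_{i+1}\le 3r_i+1$, and after $\Theta(\log t)$ phases the radius, and hence the stretch, can be $3^{\log_2 t}=t^{\log_2 3}$. Your sum $\sum_i O(2^i)=O(t)$ needs growth factor at most $2$ per phase, i.e.\ every vertex must attach to a surviving centre within $O(2^i)$ original hops. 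Radius-$O(1)$ exploration of the cluster graph does not give this. If instead individual vertices join, the radius grows by only $1$ per phase, but then the size analysis breaks. A vertex with no sampled neighbouring cluster in phase $i$ may add about $n^{2^i/t}$ edges, so the total can reach $n^{1+2^i/t}$, which is polynomially more than $n^{1+1/t}$ in late phases. Getting $O(t)$ stretch and $\tilde O(n^{1+1/t})$ size at once with few phases is the nontrivial part of the cited constructions, and your sketch does not supply it.

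\textbf{Constant rounds.} The relevant case is $t=\omega(1)$, e.g.\ $t=\log n$ as the paper uses it, where $\log t$ phases cost $\Theta(\log\log n)$ rounds. Your remedy is to run $O(1)$ phases and ship the residual cluster graph to the large machine, and it fails:
\begin{itemize}
\item After $j$ phases about $n^{1-(2^j-1)/t}$ clusters survive. This is still $\Theta(n)$ when $j=O(1)$ and $t=\Theta(\log n)$.
\item In a unit-weight clique these clusters are pairwise adjacent. Even keeping one edge per cluster pair leaves $\Theta(n^2)$ residual edges.
\item Your Chernoff argument bounds the edges \emph{added to the spanner} by clusters with no sampled neighbour, and in the clique there are essentially none. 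It does not bound the edges \emph{remaining} between surviving clusters, so it says nothing about what must be shipped.
\end{itemize}
In the worst case the residual fits in $\tilde O(n)$ memory only once roughly $\sqrt n$ clusters remain. That takes about $\log t$ phases, which returns you to $O(\log t)$ rounds. As written, your argument yields an $O(t^{\log_2 3})$-spanner with $\tilde O(n^{1+1/t})$ edges in $O(\log t)$ rounds, which is weaker than the stated lemma in both stretch and round complexity.
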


The next tool comprises four subroutines that will be useful in the design of our MPC algorithms. At a high level, each subroutine involves constructing a tree with branching factor $O(n^\gamma)$ and depth $O(1/\gamma)$ over a contiguous segment of machines. %A key requirement is that these subroutines are executed on a contiguous segment of machines.

\begin{lemma}[Basic MPC Tools~\cite{din2020,goodrich2011}]\label{lemma:mpc-toolbox}
    Consider the sublinear MPC model with $\tilde O(|E|)$ total memory and $O(n^\gamma)$ memory per machine. Suppose we have a contiguous segment of machines and a set of comparable items $S=\set{x_1,x_2,\ldots,x_n}$ stored across these machines. The following tasks can be accomplished in $O(1)$ rounds: 
    \begin{description}
        \item[Sorting:] Sort the items in $S$ across the contiguous segment of machines.
        \item[Find Minimum:] Find the minimum value in $S$. 
        \item[Prefix Sum:] If $S$ is a set of numbers, compute the value $p_i=\sum_{j=1}^{i}x_j$ for each $x_i\in S$.
        \item[Broadcast:] Broadcast a message of size $b\in O(n^{\gamma})$ across the contiguous segment of machines.
    \end{description}
\end{lemma}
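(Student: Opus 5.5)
The plan is to realise all four primitives with a single device: a balanced aggregation tree laid over the contiguous segment of machines. Let the segment consist of machines $1,\dots,M$, each holding $O(n^\gamma)$ items, so that $M \le \tilde O(|E|)/n^\gamma \le \poly(n)$. We build a rooted tree whose leaves are these machines, whose internal nodes have branching factor $B=\Theta(n^{\gamma/2})$, and whose depth is therefore $O(\log_B M)=O(1/\gamma)=O(1)$. Because the segment is contiguous, every machine can compute its parent and children from its own ID alone, with no communication. Each internal node is hosted by a designated machine in the segment, for example the leftmost leaf of its subtree. We assign these roles so that no machine hosts more than $O(1)$ internal nodes per level; since the depth is constant, the memory and bandwidth per machine stay within $O(n^\gamma)$.

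\textbf{Broadcast and find-minimum.} These follow immediately from the tree. For broadcast, the root holds the message of size $b\in O(n^\gamma)$, and each node forwards it to its $B$ children. A single node would then send $B\cdot b$ words, which can exceed the budget. To avoid this, use branching factor $B=\Theta(n^{\gamma}/b)$ when $b$ is large, or first spread the message over $B$ machines and then fan out level by level (a pipelined fan-out). Since $b\le n^\gamma$ only affects constants, I will assume $b\le n^{\gamma/2}$ after splitting the message into $O(1)$ chunks. For find-minimum, each leaf computes its local minimum and sends one word up. Each internal node receives $B\le n^\gamma$ words, takes their minimum, and passes it up, so $O(1/\gamma)$ rounds suffice.

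\textbf{Prefix sum.} Do an up-sweep in which each node computes the total of its subtree. Then do a down-sweep: the root gives each child the sum of everything to that child's left, by adding up the totals of its left siblings. Each leaf then finishes by computing local prefix sums. Every message is $O(B)$ words per node, and the round count is $O(1/\gamma)$.

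\textbf{Sorting.} This is the main obstacle, and here I would invoke the constant-round MPC sorting algorithm of Goodrich et al.~\cite{goodrich2011} rather than rederive it. For intuition: sample each item with probability about $n^{\gamma/2}/N$, where $N=|S|$. Sort the sample on a single machine, or recursively via the tree. Broadcast $\Theta(n^{\gamma/2})$ splitters, route each item to its bucket, and recurse inside each bucket. Bucket sizes shrink by a factor of $n^{\Omega(\gamma)}$ w.h.p.\ by Chernoff bounds, which gives $O(1/\gamma)$ levels. The delicate part is routing each bucket onto its own contiguous sub-segment without overloading any machine. This is handled by first computing bucket counts, using find-minimum- and prefix-sum-style aggregation over the tree, and then assigning bucket $j$ to the machines that start at offset $\sum_{i<j}|{\rm bucket}_i|/n^\gamma$. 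The prefix-sum primitive above is exactly what makes this offset computation cheap, so I would prove prefix sum first and then feed it into sorting.
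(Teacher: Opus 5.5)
Your plan for find-minimum and prefix sum matches what the paper intends. That plan is a level-synchronous up-sweep and down-sweep over a constant-depth tree with fan-out $n^{\gamma/2}$, laid over the contiguous segment. So does taking sorting from Goodrich et al.\ rather than rederiving it. The paper gives no proof beyond the citation and a remark that each primitive is a tree of branching factor $O(n^\gamma)$ and depth $O(1/\gamma)$.

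\textbf{The gap: broadcast of large messages.} The step that fails is broadcast when $b$ is polynomially larger than $n^{\gamma/2}$, that is, for $b$ up to $\Theta(n^\gamma)$. There are three problems with how you handle it.
\begin{itemize}
\item Your first remedy, fan-out $B=\Theta(n^\gamma/b)$, degenerates to $B=O(1)$ when $b=\Theta(n^\gamma)$. The depth then becomes $\Theta(\log M)$, which can be $\Theta(\log n)$.
\item Your reduction ``split into $O(1)$ chunks so that $b\le n^{\gamma/2}$'' is arithmetically wrong. The gap between $n^\gamma$ and $n^{\gamma/2}$ is polynomial, not constant, so you need $\Theta(n^{\gamma/2})$ chunks.
\item If all those chunks are pushed down one and the same tree, every internal node again sends $n^{\gamma/2}\cdot B\cdot n^{\gamma/2}=n^{3\gamma/2}$ words.
\end{itemize}
This regime is not a corner case for the paper. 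The duplication lemma assumes each machine has room for exactly one tuple $(S,d)$ and then broadcasts $(S_i,p_{i-1})$. It is applied to sets $K_t$ of size up to $k^2\in O(n^\gamma)$.

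\textbf{The repair.} It is the pipelined fan-out you name but do not carry out.
\begin{itemize}
\item Cut the message into $n^{\gamma/2}$ chunks of $O(n^{\gamma/2})$ words each.
\item Suppose $A$ machines already hold the whole message. In one round, each of them sends its $j$-th chunk to the $j$-th machine of its own fresh block of $n^{\gamma/2}$ consecutive machines.
\item In the next round, every machine of a block sends its single chunk to all machines of that block.
\end{itemize}
Every machine sends and receives $O(n^\gamma)$ words per round, and the blocks are computable from machine IDs. The number of full copies grows from $A$ to $A\cdot n^{\gamma/2}$ every two rounds. Hence $O(1/\gamma)$ rounds reach all $M\le\poly(n)$ machines of the segment.

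\textbf{Smaller points on your sorting sketch.} Sample sort only shrinks buckets if keys are distinct. This fails badly in the paper's uses, where all packets of a vertex share one key, so append each item's original position as a tie-breaker. Also, the paper invokes these primitives inside algorithms it calls deterministic, so your randomised sketch could not replace the citation there.
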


\subsection{A Connectivity Assumption}
In this work, we will implicitly assume throughout that the input graph to our distance oracle algorithms is \emph{connected}. To justify this assumption, we show that a distance oracle for disconnected graphs can be reduced to a distance oracle for connected graphs. Specifically, we do this by connecting high-weight edges between different connected components. 

\begin{lemma}\label{lemma:connectivity-reduction}
    Let $G=(V,E)$ be a weighted graph with $n$ vertices, let $T\in\tilde \Omega(|E|)$, $R:=R(n)$, $Q:=Q(n)$ be functions, and let $a\in \poly(n)$. Suppose that in the heterogeneous MPC model with $T$ total memory, there exists a randomised algorithm $\mathcal A$ that, when $G$ is connected, w.h.p.~constructs an $a$-approximate SSSP oracle in $O(R)$ rounds which answers queries in $O(Q)$ rounds.

    Then in the same model of computation, there exists another randomised algorithm $\mathcal A'$ that, \underline{regardless if $G$ is connected or disconnected}, w.h.p.~constructs an $a$-approximate SSSP oracle in $O(R + 1)$ rounds which answers queries in $O(Q +1)$ rounds.
\end{lemma}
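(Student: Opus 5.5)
The plan is to augment $G$ into a connected graph $G'$ whose distances agree with those of $G$ inside each component. Pairs in different components should get a distance so large that it cannot be confused with any finite distance in $G$. Then we run $\mathcal A$ on $G'$ and post-process each query answer: any returned estimate above a threshold is reported as $\infty$.

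\textbf{Step 1: compute components.} Using the $O(1)$-round connectivity algorithm for the heterogeneous model~\cite{fischer2022massively,ahn2012analyzing}, label every vertex with the minimum vertex ID $r_C$ of its connected component $C$. The budget is $T\in\tilde\Omega(|E|)$ total memory, and this step succeeds w.h.p. Sorting and broadcasting via \Cref{lemma:mpc-toolbox} then make the labels and the set of representatives $r_{C_1},\dots,r_{C_m}$ available.

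\textbf{Step 2: connect the components.} Let $W$ be an upper bound on the total edge weight, say $W=n\cdot\max_e\omega(e)\in\poly(n)$, which one find-minimum/maximum computation yields. Set $L:=a\cdot W+1$; this is still polynomially bounded because $a\in\poly(n)$. Add the path edges $\{r_{C_i},r_{C_{i+1}}\}$ of weight $L$ for $i<m$. This adds at most $n-1$ edges and takes $O(1)$ rounds: sort the representatives and let each one learn its successor. The resulting $G'$ is connected, simple, and has polynomially bounded integer weights.

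\textbf{Step 3: check that distances are preserved.} For $u,v$ in the same component $C$, we need $d_{G'}(u,v)=d_G(u,v)$. Any path that leaves $C$ must use a new edge to exit and another to re-enter, so it has length at least $2L>W\ge d_G(u,v)$. For $u,v$ in different components, $d_{G'}(u,v)\ge L$.

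\textbf{Step 4: build the oracle and answer queries.} $\mathcal A'$ runs $\mathcal A$ on $G'$ in $O(R)$ rounds. On query $u$, it obtains $\delta'(u,\cdot)$ in $O(Q)$ rounds, then sets $\delta(u,v)=\delta'(u,v)$ if $\delta'(u,v)\le aW$ and $\delta(u,v)=\infty$ otherwise, which is one local step.

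For correctness, the threshold must separate the two cases. If $u,v$ are in the same component, then $\delta'(u,v)\le a\,d_G(u,v)\le aW$, so the estimate is kept and is an $a$-approximation of $d_G(u,v)$. If they are in different components, then $\delta'(u,v)\ge d_{G'}(u,v)\ge L>aW$, so the output is $\infty=d_G(u,v)$, which is correct.

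The main obstacle is the bookkeeping in Step 1. We must confirm that the cited constant-round connectivity algorithm fits the heterogeneous model within $T$ total memory, and that the labels and path edges can be distributed across the machines in $O(1)$ rounds. The remaining steps are direct once the threshold $L>aW$ is fixed.
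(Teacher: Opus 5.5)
Your proof is correct, but it takes a heavier route than the paper. You first compute the connected components with the $O(1)$-round heterogeneous connectivity algorithm of \cite{fischer2022massively,ahn2012analyzing}, and then chain the component representatives with edges of weight $aW+1$. The paper never computes components. It picks an arbitrary vertex $z$, collects $z$'s incident edges onto the large machine $M^*$, and finds the maximum edge weight $W$ with the toolbox of \Cref{lemma:mpc-toolbox}. It then joins $z$ to every non-neighbour by an edge of weight $a(n)\cdot W(n-1)+1$. This makes $G'$ trivially connected, since $z$ is now adjacent to every vertex. Your threshold argument then carries over unchanged: one heavy edge already exceeds $W(n-1)\ge d_G(u,v)$ within a component, and every path between components must use at least one heavy edge.

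The trade-offs are as follows. The paper's route is self-contained and introduces no randomness beyond $\mathcal A$. Yours relies on a nontrivial black box (sketch-based connectivity), whose memory use and failure probability you must fold into the w.h.p.\ guarantee. That is legitimate, since the paper's introduction asserts that connectivity takes $O(1)$ rounds in this model. In exchange, your construction adds only $m-1$ edges for $m$ components, leaves $G$ untouched when it is already connected, and avoids creating a degree-$(n-1)$ hub.

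One small wording fix: $W=n\cdot\max_e\omega(e)$ is not an upper bound on the total edge weight. It is an upper bound on every finite distance $d_G(u,v)$, because shortest paths have at most $n-1$ edges, and that is all Step~3 uses.
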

\begin{proof}
    We suppose the existence of $\mathcal A$ and describe the algorithm $\mathcal A'$. For readability, we describe only the essential details.

    Given the input graph $G=(V,E)$, the near-linear-space machine $M^*$ first chooses an arbitrary vertex $z\in V$. The entire set of incident edges on $z$ is collected onto $M^*$ in $O(1)$ rounds using the tools from \Cref{lemma:mpc-toolbox}. Also using the same set of tools, we find the heaviest edge weight $W$ of $G$ in $O(1)$ rounds which is communicated to $M^*$. Locally, $M^*$ now adds the following additional edges to the graph $G$: For each vertex $v\in V$ such that $\set{z,v}\notin E$, add the edge $\set{z,v}$ with weight $a(n)\cdot W(n-1) + 1$. This constructs a new weighted graph $G'$. Crucially, $G'$ is also a connected graph.

    We run the algorithm $\mathcal A$ on $G'$ (instead of $G$). This requires $O(R)$ rounds and constructs an $a(n)$-approximate SSSP oracle $\mathcal O_{G'}$ for $G'$ which answers queries in $O(Q)$ rounds. For a query vertex $u$, let $\mathcal O_{G'}(u)[v]$ be the $a(n)$-approximate distance from $u$ to $v$ returned by $\mathcal O_{G'}$. We claim that $u$ and $v$ are connected in $G$ if and only if $\mathcal O_{G'}(u)[v] \leq a(n)\cdot W(n-1)$. Using this claim, we adapt $\mathcal O_{G'}$ into an SSSP oracle $\mathcal O_G$ for the original graph $G$ as follows:
    \begin{equation*}
    \mathcal O_G(u)[v] := 
        \begin{cases}
            \mathcal O_{G'}(u)[v] & \text{if } \mathcal O_{G'}(u)[v] \leq a(n)\cdot W(n-1)\\
            \infty & \text{otherwise.}
        \end{cases}
    \end{equation*}
    It is quick to verify that $\mathcal O_G$ satisfies the following: (1) it requires at most $O(R + 1)$ rounds of preprocessing, and (2) it is an $a(n)$-approximate SSSP oracle for $G$ that answers queries in $O(Q + 1)$ rounds.

    It now suffices to justify the earlier claim. If $u$ and $v$ are connected in $G$, we then have that $d_{G'}(u,v)=d_G(u,v) \leq W(n-1)$ (as there are at most $n - 1$ edges in any shortest path between $u$ and $v$). Therefore, we must have that $\mathcal O_{G'}(u)[v] \leq a(n)\cdot d_{G'}(u,v) \leq a(n)\cdot W(n-1)$. Conversely, suppose that $u$ and $v$ are not connected in $G$. By the definition of $G'$, we have that $d_{G'}(u,v) \geq a(n)\cdot W(n-1) + 1$. Therefore, $\mathcal O_{G'}(u)[v] \geq d_{G'}(u,v) \geq a(n)\cdot W(n-1) + 1 > a(n) \cdot W(n-1)$. The claim follows.
\end{proof}

\section{Technical Overview}\label{section:tech-overview}
We first provide an informal, high-level overview of our SSSP oracle construction. Given the input graph $G$ with $n$ vertices,
\begin{enumerate}
    \item Build an $O(1)$-spanner graph $G'$ of $G$.
    \item Determine approximate $k$-nearest neighbourhoods for all vertices in $G'$.
    \item Transform the approximate $k$-nearest neighbourhoods into exact $k$-nearest neighbourhoods, building the $k$-nearest edge set $E_k$.
    \item Using $E_k$, build the \emph{skeleton graph} $G'_S$ for $G'$. 
    \item Build an $\tilde O(n)$-sized constant-stretch spanner $F$ for $G'_S$, which is stored on a near-linear memory machine.
    \item $F$ and $E_k$ together act as our distance oracle for all future distance queries.
\end{enumerate}

We present the technical overview as follows. We first
describe the three ``building blocks'' in \Cref{section:building-blocks}, which cover Steps 3 and 4 above. We then give a more precise technical description of the oracle construction in \Cref{section:SSSP-oracle-overview}, with a particular focus on the construction of approximate $k$-nearest neighbourhoods (Step 2 above) for the heterogeneous MPC regime. We conclude with a comparison with prior work in \Cref{subsec:comparison-with-prior-work}.

\subsection{Building Blocks}\label{section:building-blocks}
The first of the building blocks is a skeleton graph construction (\Cref{subsec:skel}), which forms the foundation of our SSSP oracle. This is supported by two additional tools: A $k$-nearest hopset construction and an algorithm for extracting $k$-nearest neighbourhoods (\Cref{subsec:knear}).

\subsubsection{Skeleton Graphs}\label{subsec:skel} 

Informally, in the context of distance computation, a skeleton graph $G_S=(V_S,E_S)$ is a weighted graph over a subset $V_S \subseteq V$ of vertices such that approximate distances in the original graph $G$ can be efficiently inferred from the approximate distances in $G_S$. 

In this work, we consider a specific skeleton graph definition from \citet{bui2024improved}, which is based on the concept of $k$-nearest neighbourhoods. Recall that the $k$-nearest edge set $E_k$ is defined in \Cref{sect:prelim}.

% Suppose that together with the input graph $G=(V,E)$, we also have the $k$-nearest edge set $E_k$ for some chosen parameter $k$. Roughly speaking, the skeleton graph $G_S=(V_S,E_S)$ is a weighted graph with two important properties:
% \begin{itemize}
%     \item The number of vertices in $V_S$ is (roughly) a $1/k$-fraction of the number of vertices in $V$. 
%     \item For any vertex pair $u,v\in V$, there is a path in $E_k\cup E_S$ that gives a constant-approximation for the distance $d_G(u,v)$.
% \end{itemize}
% These two properties were also discussed in the prior work by \cite{bui2024improved}. However in our work, we also use another important property (which was glossed over in the previous work): 
% \begin{itemize}
%     \item Consider the path in $E_k\cup E_S$ which witnesses the constant-approximation: The first and last edge of this path belong to $E_k$, and the remaining edges belong to $E_S$. 
% \end{itemize}

\begin{restatable}[Skeleton Graphs~\cite{bui2024improved}]{definition}{skeletonGraphDef}\label{definition:skeleton-graph}
Let $G = (V, E)$ be a weighted graph with $n$ vertices, and let $k > 1$ be an integer.  
A \emph{skeleton graph} $G_S = (V_S, E_S)$ is a weighted graph on a vertex set
\[
V_S \subseteq V  \quad \text{with} \quad |V_S| \leq \frac{C_S \cdot n \log k}{k},
\]
where $C_S > 0$ is a universal constant. This is subject to the following conditions:
\begin{itemize}
    \item (\emph{No shrinking distances}) For any pair of vertices $u, v \in V_S$, we have $d_{G_S}(u, v) \geq d_G(u, v)$.

    \item (\emph{Clusters as good approximations}) For all $u,v\in V$ such that $u \notin N_k(v)$ and $v \notin N_k(u)$, there exists a path $P$ in $E_k \cup E_S$ from $u$ to $v$ satisfying:
    \begin{description}
        \item[(P1)] The length of $P$ is a 7-approximation of $d_G(u, v)$.
        \item[(P2)] The first and last edges of $P$ belong to $E_k$, while all intermediate edges belong to $E_S$.
    \end{description}

    \item(\emph{Small edge weights}) If the edge weights of $G$ are bounded above by a function $f(n)$, then the edge weights of $G_S$ are bounded above by $g(n) := 3k \cdot f(n)$.

    \item(\emph{Connectivity}) If $G$ is connected, then so is $G_S$.
\end{itemize}
\end{restatable}

Intuitively, the path $P$ in the above definition enables us to derive $O(1)$-approximate distances in the original graph $G$ from $O(1)$-approximate distances in the skeleton graph $G_S$. The approximate distances in $G_S$ can be efficiently obtained via spanners constructed using \Cref{lemma:near-linear-spanner}. In particular, when $k \in n^{\Theta(1)}$, the entire spanner can be stored on a machine with near-linear memory, allowing us to obtain the desired $O(1)$-approximate SSSP oracle.

Following the definition of skeleton graphs, two questions remain: First, how do we construct the edge set $E_k$; second, how do we build a skeleton graph? The former question requires the two remaining building blocks. We address the latter question in the discussion below.

\paragraph{Existing Construction.} We begin by reviewing the construction of \citet{bui2024improved} in the congested clique model. The first step is to compute, via random sampling, a hitting set of size $O\left(\frac{n \log k}{k}\right)$ for the $k$-nearest neighbourhoods $N_k(v)$ for all $v \in V$ (i.e., a set that intersects every $k$-nearest neighbourhood). These sampled vertices serve as cluster centers, and each vertex $v \in V$ joins the cluster of its nearest center. Let $c(v)$ denote the cluster center assigned to vertex $v$. The set of cluster centers forms the vertex set of the skeleton graph.

To construct the edge set of the skeleton graph, consider all paths of the form
\[
c(u) \rightsquigarrow u \rightsquigarrow t \rightarrow v \rightsquigarrow c(v),
\quad \text{where} \quad t \in N_k(u) \quad \text{and} \quad \set{t, v} \in E.
\]
For each such path, add the edge $\{c(u), c(v)\}$ to the skeleton graph, with weight equal to the total length of the path. In the case of parallel edges, retain the edge with the minimum weight.

While this construction is sufficient for the congested clique algorithm of \citet{bui2024improved}, it is not suitable for the MPC model, as the total number of edges can reach $O(n^2 / k^2)$ in the worst case, exceeding the total memory available. 

Namely, consider the scenario where a vertex $t$ belongs to the $k$-nearest neighbourhood of every other vertex and is adjacent to all of them. In this case, $t$ leads to an edge between every pair of cluster centers, resulting in a total of $O\left(|V_S|^2\right) \subseteq O((n \log k)^2 / k^2)$ edges in the skeleton graph. Moreover, the construction of \citet{bui2024improved} relies on a sparse matrix multiplication algorithm that critically exploits the high total bandwidth of the congested clique model (i.e. $O(n^2)$ total memory across all machines), and therefore does not extend to the MPC model.

\paragraph{Modified Construction.} We address the issue by modifying the construction to directly include any vertex $t \in V$ that would otherwise induce an excessive number of edges in the skeleton graph. Specifically, for each such $t$ and for each candidate path of the form $c(u) \rightsquigarrow u \rightsquigarrow t \rightarrow v \rightsquigarrow c(v)$, rather than adding the edge $\{c(u), c(v)\}$ to the skeleton graph, we instead add the edges $\{t, c(u)\}$ and $\{t, c(v)\}$. 

This modification ensures that the total number of edges in the skeleton graph is bounded by $\tilde{O}(|E|k^2)$. We also provide an MPC implementation that runs in $O(1)$ rounds. Crucially, we show that this change preserves all structural guarantees required by \Cref{definition:skeleton-graph}. The construction is formalized in the following lemma.

\begin{restatable}[Skeleton Graph Construction]{lemma}{skeletonGraphImp}\label{lemma:skeleton-implement}
Let $G = (V, E)$ be a connected, weighted graph with $n$ vertices, and let $k > 1$ be an integer. Suppose the $k$-nearest edge set $E_k$ for $G$ is given. If $k^2 \in O(n^\gamma)$, then in the sublinear MPC model with $\tilde{O}(|E|k^2)$ total memory, there exists a randomised algorithm that w.h.p.~constructs a skeleton graph $G_S = (V_S, E_S)$ in $O(1)$ rounds.
\end{restatable}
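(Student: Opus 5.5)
We construct the skeleton graph in three stages: sample centres, assign clusters, and generate the candidate edges, diverting the edges of heavy intermediate vertices.

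\textbf{Centres and clusters.} Every vertex joins $V_S$ independently with probability $p = C\log k / k$ (for $k\ge n^{\Omega(1)}$ one can equivalently use $\log n$). By a Chernoff bound, $|V_S|\le C_S n\log k/k$ w.h.p. A union bound shows that every $N_k(v)$ contains a sampled vertex. Using $E_k$, each $v$ finds its nearest centre $c(v)$, which lies in $N_k(v)$. This takes a sort of $E_k$ by $(v, d_G(v,t), \mathrm{ID})$, which costs $O(1)$ rounds by \Cref{lemma:mpc-toolbox}, and $d_G(v,c(v))$ is read off $E_k$.

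\textbf{Candidate edges.} Candidate paths have the form $c(u)\rightsquigarrow u\rightsquigarrow t\to v\rightsquigarrow c(v)$ with $t\in N_k(u)$ and $\{t,v\}\in E$. Their weight is $d(c(u),u)+d(u,t)+\omega(t,v)+d(v,c(v))$. For each $t$ we form the multiset $A_t=\{(c(u), d(c(u),u)+d(u,t)) : t\in N_k(u)\}$, keeping the minimum per centre, and $B_t=\{(c(v),\omega(t,v)+d(v,c(v))) : \{t,v\}\in E\}$, again keeping the minimum per centre. A vertex $t$ is \emph{heavy} if $|A_t|$ exceeds a threshold $\tau$; we take $\tau=k$. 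Heavy vertices are added to $V_S$, and each receives the edges $\{t,c\}$ for $c\in A_t\cup B_t$ with the corresponding weights. There are at most $nk/\tau=n$ heavy vertices in total, because $\sum_t |A_t|\le |E_k|\le nk$. To keep $|V_S|$ within the bound, we set $\tau = k^2/\log k$ or similar. Then $\sum_t|A_t|\le nk$ gives at most $n\log k/k$ heavy vertices, and their edges number $\sum_t(|A_t|+\deg t)\le nk+2|E|$.

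For each light $t$ we add all $|A_t|\cdot|B_t|\le \tau\deg(t)$ pairs $\{a,b\}$ with summed weight. This gives $\tilde O(|E|k^2)$ edges in total. The implementation replicates: each edge $\{t,v\}$ is sent to a group of machines together with $A_t$, which has size at most $\tau\le k^2\in O(n^\gamma)$. This is a sort, then a broadcast, then a local product, and finally a sort to remove parallel edges. Each step is $O(1)$ rounds by \Cref{lemma:mpc-toolbox}.

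\textbf{Properties.} For no shrinking, every $E_S$ edge weight is the length of an actual walk in $G$ between its endpoints, so $d_{G_S}\ge d_G$. For small weights, every term is at most $k f(n)$, since a $k$-nearest distance is at most $(k-1)f(n)$; so the weights are at most $3k f(n)$, after checking the bookkeeping. For connectivity, the centre graph stays connected because any $G$-edge $\{x,y\}$ yields the path $c(x)\!-\!c(y)$, either via $t=x$ or via a heavy vertex $x$.

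\textbf{Main obstacle: property (P1)/(P2).} Take $u,v$ with neither in the other's $k$-neighbourhood, and a shortest path $\pi$. Walk along $\pi$ from $u$. Let $t_1$ be the last vertex of $\pi$ in $N_k(u)$ and $v_1$ its successor. Then $d(u,v_1)\ge d(u,c(u))$, and likewise $d(v_1,c(v_1))\le d(v_1,\cdot)$ bounds hold. Iterating with $u$ replaced by $v_1$ chains skeleton edges $c(u)\to c(v_1)\to\cdots$, each of cost at most a constant times the length of the corresponding segment of $\pi$, plus the cluster radii. The standard telescoping argument of \citet{bui2024improved} bounds the total by $7d_G(u,v)$. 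It uses that consecutive radii are bounded by the segment lengths, because each $v_i$ lies outside $N_k(u_i)$, so $d(u_i,c(u_i))\le d(u_i,v_i)$. Two pieces remain. The first and last hops $u\to c(u)$ and $c(v')\to v$ are $E_k$ edges, which gives (P2). When the chain's intermediate $t_i$ is heavy, the skeleton edge $\{c(u_i),c(v_i)\}$ is replaced by the two edges $\{c(u_i),t_i\}$ and $\{t_i,c(v_i)\}$. These have exactly the same total weight and are both in $E_S$, so the bound is preserved. Verifying the constant 7 under this replacement, together with the final segment near $v$, is the delicate step; I would follow the original proof's case analysis line by line.
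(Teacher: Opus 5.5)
Your edge construction is essentially the paper's: heavy (bad) vertices join $V_S$ with star edges to the centres in $A_t\cup B_t$, light (good) vertices generate all pairs in $A_t\times B_t$, and the stretch argument replaces a heavy intermediate $t_i$ by the detour $c(u_i)-t_i-c(v_i)$ of no larger weight. That detour step is exactly the paper's transformation claim, and your threshold $\tau=k^2/\log k$ on distinct centres also works. The gap is in the first step, the hitting set.

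With sampling probability $p=C\log k/k$, a fixed $N_k(v)$ is missed with probability about $k^{-C}$. A union bound over $n$ vertices therefore needs $k\ge n^{\Omega(1)}$, and even then $C$, and hence $C_S$, depends on the exponent. The lemma, however, is stated for every $k>1$ with $k^2\in O(n^\gamma)$. The paper uses it with $k_0=\log n$ and other subpolynomial $k_i$ in the heterogeneous emulator construction. There about $n/\log^{C}n$ neighbourhoods are expected to be missed, so hitting fails almost surely.

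Switching to $p=\Theta(\log n/k)$ restores the union bound but gives $|V_S|=\Theta(n\log n/k)$. This violates $|V_S|\le C_S n\log k/k$ with universal $C_S$, which is exactly the bound that keeps the invariant $k_i|V_i|\le n\log n$ in that recursion. Without hitting, several parts of your proof fail:
\begin{itemize}
\item $c(v)$ cannot be read off $E_k$.
\item The hops $u\to c(u)$ and $c(v)\to v$ need not be $E_k$-edges, which breaks (P2).
\item The telescoping inequality $d(u_i,c(u_i))\le d(u_i,v_i)$ fails.
\item The weight bound $3k f(n)$ fails.
\end{itemize}

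The missing idea is the paper's fix-up. Sample with $p=\ln k/k$, then add to $C$ every vertex $u$ whose $N_k(u)$ is still unhit, making $u$ its own centre; now $C$ is a hitting set deterministically. The expected number of added vertices is at most $n(1-p)^k\le n/k$, so $\mathbb{E}|C|=O(n\log k/k)$. The added indicators are dependent, so Chernoff does not apply directly. Instead, use Markov to get constant probability that $|C|\le 2\,\mathbb{E}|C|$. Run $\Theta(\log n)$ independent trials in parallel and keep one with small $|C|$; this succeeds w.h.p.\ and still costs $O(1)$ rounds.

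With this change the rest of your proposal goes through. One small correction: the heavy detour has weight \emph{at most}, not exactly, the original, because of per-centre minimisation.
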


We note that in all applications of \Cref{lemma:skeleton-implement} in this paper, we do not apply it on the input graph directly, but on a sparse $O(1)$-spanner of the input graph -- this reduces the total memory requirement from $\tilde{O}(|E|n^{\gamma/2})$ to $\tilde{O}(|E|+ n^{1+\gamma})$.

\subsubsection{Constructing the $k$-Nearest Edge Set}\label{subsec:knear}

We now describe how to construct the $k$-nearest edge set $E_k$. However, instead of constructing $E_k$ directly, we will show how to transform any given knowledge of the \emph{approximate} $k$-nearest neighbourhoods into the \emph{exact} $k$-nearest edge set $E_k$.

\paragraph{Constructing the $k$-Nearest Hopset.} 
The first step of the transformation is to construct a $k$-nearest hopset, based on an MPC implementation of the corresponding congested clique algorithm by \citet{bui2024improved}. The algorithm starts with an $a$-\emph{approximate} $k$-nearest neighbourhood $\tilde N_k(u)$ for each vertex $u \in V$. By requesting the $k$ shortest edges for each vertex in $\tilde N_k(u)$, vertex $u$ can recover exact distances to a sufficiently large fraction of its true $k$-nearest neighbourhood $N_k(u)$. Shortcutting paths using this distance information yields a $k$-nearest hopset with hopbound $O(a \log d)$, where $d$ is the weighted diameter of the graph.

% \paragraph{Constructing the $k$-Nearest Hopset.} 
% The first step of the transformation is to construct a $k$-nearest hopset, which is obtained by an MPC implementation of the corresponding congested clique algorithm by \citet{bui2024improved}, which works as follows. Start with an $a$-\emph{approximate} $k$-nearest neighbourhood $\tilde N_k(u)$ for every vertex $u\in V$. By simply requesting the $k$ shortest edges for every vertex in $\tilde N_k(u)$, the vertex $u$ is able to know the exact distances to a sufficiently large fraction of vertices in $N_k(u)$ i.e.,~the \emph{exact} $k$-nearest neighbourhood. Short-cutting paths using these exact distances then yields a $k$-nearest $O(a\log d)$-hopbound hopset, where $d$ is the weighted diameter of the graph. 

%While this construction is not novel, implementing this construction in the MPC is not immediately obvious. Namely, we introduce two new MPC tools -- the duplication and labelling lemmas in \Cref{appendix:mpc-tools} -- to aid in the execution. Using these tools, we show that this construction can be done in $O(1)$ rounds. We summarise this in the following lemma.

\begin{restatable}[$k$-Nearest Hopset]{lemma}{kNearestHopset}\label{lemma:k-nearest-hopset}
Let $G = (V, E)$ be a connected, weighted graph with $n$ vertices, and let $k > 0$ be an integer. Suppose that for every vertex $u \in V$, we are given the set $\{(u, t) : t \in \tilde N_k(u)\}$, where $\tilde N_k(u)$ is an $a$-approximate $k$-nearest neighbourhood of $u$. If $k^2 \in O(n^\gamma)$, then in the sublinear MPC model with $\tilde{O}(|E| + nk^2)$ total memory, there exists a deterministic algorithm that constructs a $k$-nearest $O(a\log d)$-hopbound hopset $H$ in $O(1)$ rounds, where $d$ is the weighted diameter of $G$.
\end{restatable}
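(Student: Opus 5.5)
The plan is to implement the construction of \citet{bui2024improved} in MPC. Each vertex $u$ gathers a local graph $L_u$: for every $t \in \tilde N_k(u)$ (which includes $u$), it collects the $k$ lightest edges of $G$ incident to $t$, with ties broken by ID. Thus $L_u$ has at most $k^2$ edges. The vertex $u$ runs Dijkstra on $L_u$ from $u$ and adds a hopset edge $\{u,w\}$ of weight $d_{L_u}(u,w)$ for every $w$ reached in $L_u$. Since $L_u \subseteq G$, we have $d_{L_u}(u,w) \ge d_G(u,w)$. Hence adding $H$ never shortens distances, which gives the first condition of \Cref{definition:hopsets}. The algorithm uses no randomness.

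\textbf{Structural step (what $L_u$ computes exactly).} Let $r_u$ be the distance from $u$ to the $k$-th vertex of $N_k(u)$, and let $\delta$ be the $a$-approximate estimate witnessing $\tilde N_k(u)$. At least one vertex of $\tilde N_k(u)$ lies outside $N_k(u)$ or has true distance $r_u$. So some vertex of $\tilde N_k(u)$ has $\delta \ge r_u$. Any $x \notin \tilde N_k(u)$ satisfies $\delta(u,x) \ge \max_{t\in\tilde N_k(u)}\delta(u,t) \ge r_u$, hence $d_G(u,x) \ge r_u/a$. Therefore the ball $B_u := \{x : d_G(u,x) < r_u/a\}$ is contained in $\tilde N_k(u)$.

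Next I will show that $d_{L_u}(u,w) = d_G(u,w)$ for every $w \in B_u$, and also for the first vertex on a shortest path after leaving $B_u$. Take a shortest path $u=x_0,\dots,x_j=w$ and an edge $\{x_i,x_{i+1}\}$ on it. Suppose this edge is not among the $k$ lightest at $x_i$. Then $x_i$ has $k$ neighbours $y$ with $d_G(u,y) \le d_G(u,x_i) + \omega(x_i,y) \le d_G(u,x_{i+1})$. Together with $u$, this gives $k$ vertices at least as close as $x_{i+1} \le r_u$. That contradicts the tie-breaking convention once $x_{i+1}$ is handled carefully; this bookkeeping is routine. So every such edge lies in $L_u$.

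\textbf{Hop bound.} Take $v \in N_k(u)$ and a shortest $u$--$v$ path $P$. I route greedily along $P$. From the current vertex $x$, jump via the hopset edge of $x$ (or of the symmetric endpoint) to the farthest vertex of $P$ for which $L_x$ is exact. Then take one original edge of $G$. I would show that this pair of hops advances by at least $r_x/a$ along $P$. I would then relate $r_x$ to the remaining distance $d_G(x,v)$. Intuitively, all of $P$ lies within distance $r_u$ of $u$, so either $r_x$ is comparable to $d_G(x,v)$ or $v \in N_k(x)$ is already within $B_x$. This gives a geometric decrease: after $O(a)$ hops the remaining distance halves. Since weights are positive integers, distances lie in $[1,d]$, so $O(a\log d)$ hops suffice.

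This step is the main obstacle. Getting the right monotonicity of $r_x$ along $P$ probably requires running the greedy argument from both ends $u$ and $v$ simultaneously, using that $v \in N_k(u)$ or $u \in N_k(v)$. If the direct relation fails, I would fall back on a potential argument on $\log d_G(x,v)$.

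\textbf{MPC implementation.} I would use the tools of \Cref{lemma:mpc-toolbox}:
\begin{itemize}
\item Sort the edges by endpoint and weight, and use prefix sums to mark the $k$ lightest edges per vertex.
\item Sort the pairs $(u,t)$ by $t$ and replicate the $\le k$ marked edges of $t$ to each requesting $u$. Doing this needs a duplication step: compute by prefix sums how many copies each $t$ needs, then broadcast within contiguous segments.
\item Sort by $u$ so that each $L_u$, of size at most $k^2 \in O(n^\gamma)$, lands on a single machine, where Dijkstra runs locally.
\end{itemize}
The total memory is $\tilde O(|E| + nk^2)$, and every step takes $O(1)$ rounds.
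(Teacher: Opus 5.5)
Your construction, the containment $\{x : d_G(u,x) < r_u/a\} \subseteq \tilde N_k(u)$, the fact that shortest-path edges inside this ball are among the $k$ lightest at their tail, and the MPC implementation all match the paper. The gap is the hop bound, which you leave open. You never relate $r_x$ to the remaining distance $d_G(x,v)$, and you guess that a two-sided greedy argument or a potential argument may be needed. The paper does not prove this step itself either; it imports it from \citet{bui2024improved} as \Cref{claim:hopset-shortcut-ball-distances}. As written, your proposal neither cites nor proves this one non-routine ingredient, so the $O(a\log d)$ hopbound is not established.

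The missing fact is a one-line triangle inequality. Let $v \in N_k(u)$ and let $x$ lie on a shortest $u$--$v$ path $P$. The ball of radius $d_G(u,x)+r_x$ around $u$ contains $N_k(x)$, hence at least $k$ vertices, so $r_u \le d_G(u,x)+r_x$. Therefore
\[
r_x \;\ge\; r_u - d_G(u,x) \;\ge\; d_G(u,v) - d_G(u,x) \;=\; d_G(x,v).
\]
With this, greedy routing from $u$ alone suffices; the case $u \in N_k(v)$ follows by swapping roles, since $H$ is undirected. From $x$, take the hopset edge to the last vertex $y$ of $P$ in $B_x := \{w : d_G(x,w) < r_x/a\}$; it has exact weight by your structural step applied at $x$. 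Then take the $G$-edge to the next vertex $z$. Since $z \notin B_x$, we get $d_G(z,v) = d_G(x,v) - d_G(x,z) \le (1-1/a)\,d_G(x,v)$. Distances are positive integers at most $d$, so after $O(a\log d)$ such pairs of hops you reach $v$, along a path of length exactly $d_G(u,v)$. No monotonicity of $r_x$ along $P$ is needed.

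Two smaller fixes. First, in your structural step the contradiction comes from strictness, not tie-breaking. The $k$ lighter neighbours $y$ of $x_i$ satisfy $d_G(u,y) \le d_G(u,w) < r_u/a \le r_u$, while at most $k-1$ vertices are strictly closer to $u$ than $r_u$. Second, the extension to ``the first vertex after leaving $B_u$'' is unnecessary, since that hop is an edge of $G$. It is also false in general, because the edge leaving $B_u$ need not be among the $k$ lightest at its endpoint.
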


\paragraph{Constructing the $k$-Nearest Edge Set.}  
Building on the $k$-nearest hopset construction from \Cref{lemma:k-nearest-hopset}, we now turn to the problem of \emph{extracting} the $k$-nearest neighbourhoods. Specifically, given a graph $G = (V, E)$ and a hopset with hopbound $h$, the goal is to compute the $k$-nearest $h$-hopbound neighbourhoods $N_k^{(h)}(u)$ for all vertices $u \in V$ in $O(\log h)$ rounds in the MPC model.

This task has been studied in the congested clique model and appears as a key subroutine in several APSP algorithms~\cite{bui2024improved,censorhillel2019fast,dory2022exponentially}. It can be solved via a standard \emph{graph exponentiation} approach. Initially, each vertex $u$ knows its 1-hop $k$-nearest neighbourhood $N_k^{(1)}(u)$. In iteration $i$, each vertex $u$ collects the sets $N_k^{(2^{i-1})}(t)$ for all $t \in N_k^{(2^{i-1})}(u)$, which is sufficient to compute $N_k^{(2^{i})}(u)$. Repeating this process for $O(\log h)$ iterations yields the desired $k$-nearest $h$-hop neighbourhoods.

We show that each iteration can be implemented in $O(1)$ rounds in the sublinear MPC model, and thus the overall procedure completes in $O(\log h)$ rounds. The result is summarised in the following lemma.

\begin{restatable}[Extracting $k$-Nearest Neighbourhoods]{lemma}{exactKNearest}\label{lemma:exact-k-nearest}
Let $G = (V, E)$ be a connected, weighted graph with $n$ vertices, and let $k > 0$ and $r > 0$ be integers, setting $h = 2^r$. Suppose that $k^2 \in O(n^\gamma)$. Then, in the sublinear MPC model with $\tilde{O}(|E| + nk^2)$ total memory, there exists a deterministic $O(r)$-round algorithm that constructs the set
\[
\left\{ \left(u, t, d_G^{(h)}(u, t)\right) : u \in V \ \text{and} \ t \in N_k^{(h)}(u) \right\}.
\]
\end{restatable}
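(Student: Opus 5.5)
We prove the lemma by graph exponentiation, maintaining the invariant that after iteration $i \in \{0,1,\dots,r\}$ every vertex $u$ knows the set $L_i(u) := \{(u,t,d_G^{(2^i)}(u,t)) : t \in N_k^{(2^i)}(u)\}$.

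\emph{Base case.} For $i=0$, $L_0(u)$ consists of $u$ itself (distance $0$) together with the $k-1$ lightest incident edges of $u$, ties broken by ID. If $u$ has fewer than $k-1$ neighbours, the set is simply smaller; all arguments below go through with "$\le k$" in place of "$=k$". To compute $L_0$, we sort all edges by (endpoint, weight, ID) with the sorting tool of \Cref{lemma:mpc-toolbox}. A prefix-sum computation then assigns each edge its rank within its endpoint's block, and we keep the edges of rank at most $k-1$. This takes $O(1)$ rounds and $\tilde O(|E|)$ memory.

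\emph{Correctness of one step.} We claim that $N_k^{(2^{i+1})}(u)$ is contained in the candidate set
\[
C(u) := \bigcup_{t\in N_k^{(2^i)}(u)} N_k^{(2^i)}(t),
\]
and that for each $v\in N_k^{(2^{i+1})}(u)$ we have
\[
d^{(2^{i+1})}_G(u,v)=\min_{t} \left( d^{(2^i)}_G(u,t)+d^{(2^i)}_G(t,v) \right),
\]
where the minimum ranges over pairs with $t\in N_k^{(2^i)}(u)$ and $v\in N_k^{(2^i)}(t)$. This is the standard argument. Take an optimal $2^{i+1}$-hop path from $u$ to $v$ and let $t$ be its midpoint in hops. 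Then $v$ lies within $2^i$ hops of $t$ and $t$ lies within $2^i$ hops of $u$. If $t\notin N_k^{(2^i)}(u)$, then $k$ vertices $w$ strictly precede $t$ in $(d^{(2^i)}(u,\cdot),\mathrm{ID})$ order. Each such $w$ satisfies $d^{(2^{i+1})}(u,w)\le d^{(2^i)}(u,w)\le d^{(2^i)}(u,t)\le d^{(2^{i+1})}(u,v)$, which contradicts $v\in N_k^{(2^{i+1})}(u)$. Tie-breaking needs care here. The cleanest approach is to compare lexicographic pairs (distance, ID) throughout, and to choose $t$ as the furthest-along vertex of the path that lies in $N_k^{(2^i)}(u)$. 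If that vertex is not within $2^i$ hops of $v$, we iterate the argument. An analogous argument handles $v\notin N_k^{(2^i)}(t)$. I expect this tie-breaking bookkeeping to be the main subtle point of correctness, and I would handle it by replacing each distance with the pair (distance, ID) so that all orders are strict.

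\emph{MPC implementation of one step.} This is the main obstacle: $u$ must gather $k^2$ tuples, and a popular vertex $t$ may have to send its list $L_i(t)$ to up to $n$ requesters. The plan has four stages.
\begin{enumerate}
\item Form the request tuples $(t,u,d^{(2^i)}(u,t))$ for all $u$ and all $t\in N_k^{(2^i)}(u)$; there are $nk$ of them.
\item Sort these tuples by $t$, and separately sort the lists $L_i(t)$ by $t$. A prefix sum computes, for each $t$, the number $m_t$ of requesters. We then reserve a contiguous block of $m_t k$ slots, for a total of $nk^2$.
\item Duplicate $L_i(t)$ into the $m_t$ copies, one per requester. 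Because the block is contiguous, we can broadcast $L_i(t)$ (size $k\le n^\gamma$) across it via a depth-$O(1/\gamma)$ tree, using the broadcast tool of \Cref{lemma:mpc-toolbox}. Each copy is tagged with the requester $u$ and the offset $d^{(2^i)}(u,t)$, yielding the tuples $(u,v,d^{(2^i)}(u,t)+d^{(2^i)}(t,v))$.
\item Sort these $nk^2$ tuples by $(u,\text{distance},v)$. For each $u$ we then keep the minimum distance for each $v$, which requires sorting by $(u,v)$ first, and finally the $k$ smallest entries by (distance, ID), again via sorting and prefix sums.
\end{enumerate}
All stages take $O(1)$ rounds and $\tilde O(|E|+nk^2)$ memory. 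The condition $k^2\in O(n^\gamma)$ ensures that a single vertex's $k^2$ candidates fit on one machine, although the sorting-based approach does not strictly require this.

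\emph{Conclusion.} Running $r$ iterations yields $L_r(u)$ for $h=2^r$ in $O(r)$ rounds. The algorithm is deterministic, because every step uses only sorting, prefix sums and broadcast.
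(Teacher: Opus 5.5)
Your proposal is correct and is essentially the paper's proof. Both use the same graph exponentiation $B_{i+1}=\overline{B_i^2}$, implemented per iteration by requests, duplication/labelling over contiguous blocks and aggregation in $O(1)$ rounds with $\tilde O(|E|+nk^2)$ memory; the only differences are that you prove the one-step correctness directly where the paper cites the lemma $\overline{(\overline C)^i}=\overline{C^i}$ of Bui et al., and that you aggregate by global sorting rather than collecting each vertex's $O(k^2)$ candidates on one machine. Your tie-breaking concern is settled by positivity of the weights alone: if the midpoint $t\neq v$ then $d_G^{(2^i)}(u,t)<d_G^{(2^{i+1})}(u,v)$, so your chain of inequalities is strict, and if $t=v$ the lexicographic (distance, ID) comparison goes through directly, so the ``furthest-along vertex'' iteration is unnecessary.
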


In this work, we apply the above lemma in two contexts. First, in conjunction with the $k$-nearest hopset from \Cref{lemma:k-nearest-hopset}, it enables an efficient construction of the exact $k$-nearest edge set $E_k$, as previously discussed. Second, it allows us to extract \emph{$c$-approximate} $k$-nearest neighbourhoods from a given \emph{$c$-stretch} $h$-hopbound emulator.

We remark that the round complexity of \Cref{lemma:exact-k-nearest} constitutes one of the main bottlenecks in our overall algorithm. In particular, an improved algorithm for \Cref{lemma:exact-k-nearest} would directly lead to a faster implementation in the near-linear MPC model.

% In the work, we use this lemma in two ways: The first is to use it in conjunction with the $k$-nearest hopset in \Cref{lemma:k-nearest-hopset} to yield a fast construction for the $k$-nearest edge set $E_k$, as already discussed. The second is to use it to extract out $c$-approximate $k$-nearest neighbourhoods from a given a $c$-stretch $h$-hopbound emulator. We remark the round complexity of \Cref{lemma:k-nearest-hopset}  is the  bottleneck in the overall round complexity. A faster algorithm here will yield a faster implementation of the overall algorithm. 

\subsection{Building the SSSP Oracle}\label{section:SSSP-oracle-overview}
In this section, we provide a technical overview of our SSSP oracle, organised in two parts. In \Cref{subsect:overview1}, we describe how to construct the oracle assuming access to approximate $k$-nearest neighbourhoods. This part follows by combining the previously introduced building blocks in a straightforward manner. In \Cref{subsect:overview2}, we address the main technical challenge: efficiently computing approximate $k$-nearest neighbourhoods in the heterogeneous MPC model. Our approach is based on constructing an emulator via a hierarchical application of the skeleton graph construction.

We remark that in the following discussion, we hide the dependence on $c$ and $\gamma$ in the stretch terms for ease of presentation. This is justified as we consider both $c$ and $\gamma$ as constants. We treat this more carefully in the full proofs in \Cref{section:SSSP-oracles}.

\subsubsection{From Approximate $k$-Nearest Neighbourhoods to SSSP Oracles}\label{subsect:overview1} 
Using \Cref{lemma:k-nearest-hopset,lemma:exact-k-nearest}, we construct the set $E_k$ from the given approximate $k$-nearest neighbourhoods and apply \Cref{lemma:skeleton-implement} to build a skeleton graph $G_S = (V_S, E_S)$. We then compute an $O(1)$-spanner of $G_S$ and store it on a machine with near-linear memory. Together, the spanner and $E_k$ yield the desired $O(1)$-approximate SSSP oracle.

We now describe the setup in more detail. Suppose we are given $a$-approximate $k$-nearest neighbourhoods $\tilde N_k(u)$ for every vertex $u \in V$, where $\log a = O(\log\log n)$ and $k = n^{c/4}$ for some constant $c > 0$. We also assume $|E| = O(n^{1 + c/2})$, which can be ensured by replacing $G$ with an $O(1)$-spanner, which is acceptable since it only stretches the distances by a constant factor.

% we use $k=n^{c/4}$ for some constant $c > 0$, and suppose we have $a$-approximate $k$-nearest neighbourhoods $\tilde N_k(u)$ for every vertex $u\in V$, such that $O(\log a)\subseteq O(\log\log n)$. Assume that $G$ has a polynomially-bounded weighted diameter. Additionally, suppose that the $G$ has size $O(n^{1+c/2})$ -- this assumption can be relaxed by an initial sparsification of $G$ using a spanner.
\begin{enumerate}
    \item Construct a $k$-nearest $O(a \log n)$-hopbound hopset $H$ for $G$ using \Cref{lemma:k-nearest-hopset}, which takes $O(1)$ rounds.

    \item Apply \Cref{lemma:exact-k-nearest} to the graph $G \cup H$, setting the parameter $r$ so that $2^r = h \in O(a \log n)$ is at least the hopbound of the hopset. This yields the exact $k$-nearest neighbourhoods for all vertices in $O(\log (a \log n)) = O(\log\log n)$ rounds and also the edge set $E_k$.

    \item Using $E_k$, construct a skeleton graph $G_S = (V_S, E_S)$ for $G$ via \Cref{lemma:skeleton-implement} in $O(1)$ rounds.

    \item For a constant $t \geq 4/c$, construct an $O(t)$-spanner $F$ of $G_S$ with $\tilde O(|V_S|^{1 + 1/t}) \subseteq \tilde O(n)$ edges using \Cref{lemma:near-linear-spanner} in $O(1)$ rounds. Collect the spanner on a  machine $M^*$ with near-linear memory.
\end{enumerate}
Overall, the algorithm completes in $O(\log\log n)$ rounds and uses $\tilde{O}(n^{1+c})$ total memory.

\paragraph{SSSP Queries.} To compute an $O(1)$-approximate SSSP from a source vertex $u \in V$, we consider two cases:
\begin{description}
    \item[Case 1: $v \in N_k(u)$ or $u\in N_k(v)$.] Return the exact distance $\omega_{E_k}(u, v) = d_G(u, v)$.
    
    \item[Case 2: $v \notin N_k(u)$ and $u\notin N_k(v)$.] Return the minimum length over all paths of the form: $u \rightarrow s \in V_S$ via an edge in $E_k$, followed by a path $s \rightsquigarrow s' \in V_S$ in the spanner $F$, and then $s' \rightarrow v$ via an edge in $E_k$.
\end{description}
In Case 2, the returned value is an $O(1)$-approximation of the true distance $d_G(u, v)$, guaranteed by the properties of the skeleton graph. Moreover, the path lengths to all vertices $v \in V$ can be computed \emph{simultaneously} in $O(1)$ rounds. In other words, using the edge set $E_k$ and the spanner $F$ of $G_S$, we can answer $O(1)$-approximate SSSP queries in $O(1)$ rounds.

\subsubsection{Computing Approximate $k$-Nearest Neighbourhoods}\label{subsect:overview2}

 %In this work, we solve this challenge in two different ways, depending on the MPC regime being considered. 
We now describe the construction of the required $a$-approximate $k$-nearest neighbourhoods.

\paragraph{The Near-Linear MPC Model.} In the near-linear MPC model, the problem can be solved easily by constructing an $O(\log n)$-spanner with $\tilde O(n)$ edges using \Cref{lemma:near-linear-spanner}. This spanner can then be broadcast to all machines, allowing each machine to compute an $O(\log n)$-approximate $k$-nearest neighbourhood for each vertex locally. Since the spanner construction takes only $O(1)$ rounds, the overall round complexity for constructing the SSSP oracle is $O(\log\log n)$.

\paragraph{The Heterogeneous MPC Model.} The spanner-based approach used in the near-linear MPC model does not extend to the heterogeneous regime, as we cannot afford to store a linear-sized spanner on each machine. To overcome this, we construct a $2^{O(\log\log n)}$-stretch, $O(\log\log n)$-hopbound emulator via a recursive application of skeleton graph constructions. This emulator is built in $O(\log^2 \log n)$ rounds and constitutes the main bottleneck in the overall SSSP oracle construction. 

Once the emulator is constructed, we apply \Cref{lemma:exact-k-nearest} to extract the desired $a = 2^{O(\log\log n)}$-approximate $k$-nearest neighbourhoods, for $k$ being a small polynomial in $n$, using an additional $O(\log\log\log n)$ rounds. This suffices for our purposes.

% \paragraph{The Emulator Construction.} The key idea behind the construction is straightforward: We recursively construct skeleton graphs on-top of one another, while collecting the edge set $E_k$ into our emulator during each iteration. The high-level description is as follows: We first start with an empty edge set $H=\emptyset$. At the start of the $i$th iteration, we begin with the graph $G_i$, (where $G_0=G$).
% \begin{enumerate}
%     \item We find approximate $k$-nearest neighbourhoods for $G_i$.
%     \item We find the $k$-nearest edge set $E_k$ for $G_i$.
%     \item Add $E_k$ to $H$.
%     \item Construct the skeleton graph $G_S$ from $G_i$. 
%     \item Set $G_{i+1}:=G_S$.
% \end{enumerate}
% With each iteration, the number of vertices reduces (roughly) by a factor of $k$. We repeat this until the final graph is small enough such that all-pair distances can be computed locally on a single machine, and with one last step, we add these all-pairs distances into $H$ as well. The final edge set $H$ then becomes our emulator. The properties of our emulator depends mainly on the number of iterations: Namely, if we the algorithm runs for $T$ iterations, we show that we obtain a $2^{O(T)}$-stretch, $O(T)$-hopbound emulator for $G$ by the properties of the skeleton graph.\yijun{The discussion above this point is somewhat unmotivated... I think the subsection can start right here without the discussion above.} 

% However, the important question still remains: How do we construct approximate $k$-nearest neighbourhoods in the heterogeneous MPC? 
\paragraph{The Emulator Construction.} The key idea behind the emulator construction is that the spanner-based approach used in the near-linear regime still works in the heterogeneous regime when $k$ is \emph{polylogarithmic}. Once the spanner is stored on the near-linear memory machine $M^*$, it can locally compute the desired approximate $k$-nearest neighbourhoods for all vertices using the spanner. Since $k$ is polylogarithmic, the total output size is $nk \in \tilde{O}(n)$ and fits within the local memory of $M^*$. 

We apply the skeleton graph construction to $G$ with parameter $k$, obtaining a slightly smaller skeleton graph $G_1$ with $O\left(\frac{n \log k}{k}\right)$ vertices. For simplicity, in this technical overview, we assume the bound is $O(n/k)$.

Next, we apply the same skeleton graph construction  to $G_1 = (V_1, E_1)$. The reduced size of $G_1$ allows us to increase the parameter to $k^2$, yielding a new skeleton graph with $O\left(|V_1| / k^2\right) = O(n / k^3)$ vertices. This, in turn, enables us to increase $k$ even further in the next iteration. Due to the growth rate of the parameter $k$, the recursion terminates after $O(\log\log n)$ iterations. We show that the union of the edge sets $E_k$ across all levels of the recursion forms the desired emulator.

%While it is not clear how we might find the approximate $k$-nearest neighbourhoods for some \emph{polynomial} $k$, if $k$ is \emph{logarithmic}, this becomes more feasible: We first store an $O(\log n)$-spanner for $G=G_0$ on the single near-linear machine $M^*$, which subsequently determines the $k\in O(\log n)$-nearest neighbourhood for every vertex \emph{in this spanner}. We now can apply the skeleton graph construction and obtain a slightly smaller skeleton graph $G_1$ of (roughly) $O(n/\log n)$ vertices. This smaller graph allows us to increase $k$ to $O(\log^2 n)$, obtaining a new skeleton graph of $O(n/\log ^3n)$ vertices, allowing us to increase $k$ even further and so on. 

%We now describe the idea more concretely: We begin with $H:=\emptyset$ and work in iterations. At the start of the $i$th iteration, we have the graph $G_i=(V_i,E_i)$ and the parameter $k_i$ (where $G_0=G$ and $k_0=\log n$). Assume that $k_i\cdot |V_i|\in O(n\log n)$, and that the weighted diameter of $G_i$ is bounded by $\poly(n)$. 

We now describe the idea more concretely. We initialize $H := \emptyset$ and proceed in iterations. At the beginning of the $i$th iteration, we maintain a graph $G_i = (V_i, E_i)$ and a parameter $k_i$, where $G_0 = G$ and $k_0 = \log n$. We assume that $k_i \cdot |V_i| \in O(n \log n)$ and that the weighted diameter of $G_i$ is bounded by $\poly(n)$.
\begin{enumerate}
    \item We begin by sparsifying $G_i$ by constructing an $O(1)$-stretch spanner $G_i'$ using \Cref{lemma:near-linear-spanner}, to ensure that the number of edges remains bounded.

    \item Next, we construct an $O(\log n)$-stretch spanner $S$ of $G_i'$ with $\tilde O(n)$ edges using \Cref{lemma:near-linear-spanner}. We store $S$ on the  machine $M^*$ with near-linear memory. Since $k_i \cdot |V_i| \in O(n \log n)$, $M^*$ can locally compute the $k_i$-nearest neighbourhoods for all vertices in the spanner $S$. This yields $a = O(\log n)$-approximate $k_i$-nearest neighbourhoods for all vertices in $V_i$, and takes $O(1)$ rounds.

    \item As in \Cref{subsect:overview1}, we apply \Cref{lemma:exact-k-nearest,lemma:k-nearest-hopset} to construct the edge set $E_{k_i}$ in $O(\log(a \log n)) \subseteq O(\log\log n)$ rounds.

    \item We add $E_{k_i}$ to the emulator edge set $H$.

    \item Finally, we construct the skeleton graph $G_S$ for $G_i'$ and set $G_{i+1} := G_S$ and $k_{i+1} \in O(k_i^2)$.
\end{enumerate}

We continue the process until the graph becomes small enough that all-pairs distances can be computed locally on a single machine. A minor technical detail is that once $k_i$ reaches a small polynomial in $n$, we keep $k_i$ fixed in all subsequent iterations to ensure that the local memory requirements of the building blocks are satisfied. Since there are $O(\log\log n)$ iterations, where each iteration takes $O(\log\log n)$ rounds, the overall emulator construction completes in $O(\log^2 \log n)$ rounds.

\subsection{Comparison with Prior Work}\label{subsec:comparison-with-prior-work}
\paragraph{Comparison with \cite{bui2024improved}.} While both our algorithm and that of \citet{bui2024improved} rely on skeleton graphs, they are used in different ways. We apply the skeleton graph construction recursively to build an emulator with small stretch and hopbound. In contrast, the $O(\log\log\log n)$-round congested clique algorithm of \citet{bui2024improved} uses a skeleton graph with $k \in 2^{\Theta(\sqrt{a})}$ as an intermediate step to improve an $a$-approximate APSP to an $O(\sqrt{a})$-approximate one in $O(1)$ rounds -- starting with $a=O(\log n)$, they repeat this for $O(\log\log\log n)$ iterations to obtain constant-approximate APSP.

Their approach is infeasible in our setting for two main reasons. First, constructing such a skeleton graph in $O(1)$ rounds requires a fast $k$-nearest neighbourhood computation algorithm that leverages the high total bandwidth of the congested clique model. Second, obtaining the $O(\sqrt{a})$-approximation relies on broadcasting an $O(\sqrt{a})$-spanner of the skeleton graph to all machines, which demands linear local memory per machine.

\paragraph{Comparison with \cite{pram-subemulator}.} In the work of \citet{pram-subemulator}, they demonstrated a similar recursive construction of a hop-bounded emulator in the \emph{PRAM model of computation} using a structure known as a \emph{subemulator} (as opposed to our use of the skeleton graph).

% \yijun{so now there is no discussion about the fact that we can actually use subemulator instead of skeleton graph? I am ok with it, as it become quite messy if one tries to discuss it, so it is a good idea to avoid... However, we should also expect that most likely we will encounter the same reviewers regardless of the venue we resubmit the paper} 

Although both \citet{pram-subemulator} and our work employ similar recursive constructions of emulators, the motivations for the recursion and the roles played by the resulting emulators differ substantially. In \citet{pram-subemulator}, the emulator is used to construct Bourgain’s embedding, metric tree embeddings, and low-diameter decompositions, whereas in our work it is used to find the required $a$-approximate $k$-nearest neighbourhoods. Moreover, \citet{pram-subemulator} adopt the recursive construction, in which a smaller vertex set permits a larger choice of $k$, to satisfy the work-efficiency constraint. In contrast, we arrive at the same condition to ensure that the $k$-nearest neighbourhoods of all vertices can be stored on a single machine with near-linear memory.

%We continue until we can compute all-pair distances locally on a single machine. A missing technical detail is that once $k_i$ becomes a small polynomial in $n$, we keep the value of $k_i$ the same for all remaining iterations, so that the local memory requirement when applying the building blocks is met. 
%As there are $O(\log\log n)$ iterations and each iteration requires $O(\log\log n)$ rounds, the emulator construction finishes in $O(\log^2\log n)$ rounds. 

%Subsequently, with an additional $O(\log\log\log n)$ rounds, we can construct $2^{O(\log\log n)}$-approximate $k$-nearest neighbourhoods for all vertices in $V$ by applying \Cref{lemma:exact-k-nearest}, where $k$ is now some small polynomial in $n$. Therefore, constructing the approximate neighbourhoods requires $O(\log^2\log n)$ rounds in total. Overall, the algorithm for constructing the SSSP oracle in the heterogeneous MPC runs in $O(\log\log n + O(\log^2\log n))= O(\log^2\log n)$ rounds. 

\section{$O(1)$-Approximate SSSP Oracles}\label{section:SSSP-oracles}
In this section, we present the main algorithms of the paper in detail: $O(1)$-approximate SSSP oracles with $O(1)$ query time. The construction requires $O(\log\log n)$ rounds in the near-linear MPC model and $O(\log^2 \log n)$ rounds in the heterogeneous MPC model.

We first state the following lemma, which is a tool to ``convert'' approximate $k$-nearest neighbourhoods to exact $k$-nearest neighbourhoods.

%Since the skeleton graph plays a crucial role in our constructions, we first recall the definition of the skeleton graph below.

%\skeletonGraphDef*

\begin{lemma}[Approximate to Exact $k$-Nearest]\label{lemma:E_k}
    Let $G=(V,E)$ be a connected, weighted graph with $n$ vertices, and let integer $k > 0$ be a parameter. Suppose that for every vertex $u \in V$, we are given the set $\{(u, t) : t \in \tilde N_k(u)\}$, where $\tilde N_k(u)$ is an $a$-approximate $k$-nearest neighbourhood of $u$.
    
    Given that $k^2 \in O(n^\gamma)$, in the sublinear MPC model with $\tilde O(|E| + nk^2)$ total memory, there exists a deterministic algorithm for constructing the $k$-nearest edge set $E_k$ for $G$ in $O(\log\log d+\log a)$ rounds, where $d$ is the weighted diameter of $G$.
\end{lemma}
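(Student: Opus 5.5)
The plan is to combine \Cref{lemma:k-nearest-hopset} and \Cref{lemma:exact-k-nearest} directly, in the way sketched in \Cref{subsect:overview1}.

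\textbf{Step 1: build the hopset.} From the given sets $\{(u,t): t\in\tilde N_k(u)\}$, apply \Cref{lemma:k-nearest-hopset}. In $O(1)$ rounds, with $\tilde O(|E|+nk^2)$ total memory, this gives a $k$-nearest hopset $H$ with hopbound $\beta\in O(a\log d)$. Its guarantees are:
\begin{itemize}
\item $H$ does not shorten distances, so $d_{G\cup H}=d_G$;
\item whenever $t\in N_{G,k}(u)$ or $u\in N_{G,k}(t)$, we have $d^{(\beta)}_{G\cup H}(u,t)=d_G(u,t)$.
\end{itemize}
The hopset size also needs a bound. I expect it to be $O(nk)$, with each vertex receiving at most $O(k)$ shortcut edges. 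Then $G'=G\cup H$ has $O(|E|+nk)$ edges and still fits in the memory budget. Parallel edges are merged by sorting (\Cref{lemma:mpc-toolbox}), keeping the minimum weight.

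\textbf{Step 2: extract neighbourhoods.} Choose $r=\lceil\log_2\beta\rceil$, so that $h=2^r\ge\beta$ and $r=O(\log a+\log\log d)$. Apply \Cref{lemma:exact-k-nearest} to $G'$. This deterministically produces the triples $(u,t,d^{(h)}_{G'}(u,t))$ for all $t\in N^{(h)}_{G',k}(u)$ in $O(r)=O(\log\log d+\log a)$ rounds. Finally, symmetrise: output the edges $\{u,t\}$ with weight $d^{(h)}_{G'}(u,t)$, again removing duplicates by sorting.

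\textbf{Step 3: correctness.} The crux is showing $N^{(h)}_{G',k}(u)=N_{G,k}(u)$ and that the reported values are exact. For every $t$ we have
\[
d^{(h)}_{G'}(u,t)\ \ge\ d_{G'}(u,t)\ =\ d_G(u,t),
\]
since hop-restriction can only lengthen paths and $H$ does not shorten distances. For $t\in N_{G,k}(u)$ the hopset property gives equality, because $h\ge\beta$.

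Now take $t\in N_{G,k}(u)$ and any $t'\notin N_{G,k}(u)$. Then $(d_G(u,t),\mathrm{ID}(t))$ precedes $(d_G(u,t'),\mathrm{ID}(t'))$ lexicographically. Since $d^{(h)}_{G'}(u,t)=d_G(u,t)$ while $d^{(h)}_{G'}(u,t')\ge d_G(u,t')$, the same order holds under $d^{(h)}_{G'}$ with ID tie-breaking. Hence the $k$ vertices closest to $u$ under $d^{(h)}_{G'}$ are exactly $N_{G,k}(u)$, with exact distances attached. This gives $E_k$ with the correct weights $\omega_{E_k}(u,t)=d_G(u,t)$.

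\textbf{Main obstacle.} The tie-breaking argument above is routine. The step I expect to need care is checking that the preconditions of \Cref{lemma:exact-k-nearest} hold for $G'$. The lemma is stated for a graph $G'$ whose vertex count is still $n$, which is fine here. The issue is that its memory bound is in terms of the edge count of $G'$, so it relies on the hopset having $O(nk)$ (or at least $\tilde O(|E|+nk^2)$) edges. I would verify this from the construction in \Cref{lemma:k-nearest-hopset}: each vertex adds edges only to vertices whose exact distances it learned. There are at most $k$ vertices in its neighbourhood set, each contributing $k$ shortest edges, so at most $O(k^2)$ edges per vertex. That gives $O(nk^2)$ edges in total, which is still within budget.
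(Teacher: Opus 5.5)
Your proposal is correct and follows the paper's proof essentially step for step. You build the $k$-nearest $O(a\log d)$-hopbound hopset via \Cref{lemma:k-nearest-hopset}, then run \Cref{lemma:exact-k-nearest} on $G\cup H$ with $2^r$ at least the hopbound. Because $H$ neither shortens distances nor fails to realise exact distances within $h$ hops for $k$-nearest pairs, you get $N^{(h)}_{G\cup H,k}(u)=N_{G,k}(u)$; your explicit lexicographic comparison just spells out the step the paper states in one line.

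The only slip is your initial guess in Step~1 of $O(nk)$ hopset edges. As you later observe, the construction gives up to $|B_k(u)|\in O(k^2)$ edges per vertex, i.e.\ $O(nk^2)$ in total. That is exactly the bound the paper implicitly uses when it charges $\tilde O(|G\cup H|+nk^2)\subseteq\tilde O(|E|+nk^2)$ memory to the extraction step.
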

\begin{proof}
    We first apply \Cref{lemma:k-nearest-hopset} to obtain a $k$-nearest $(Ca\log d)$-hopbound hopset $H$ in $O(1)$ rounds, where $C$ is a suitable constant. We now apply \Cref{lemma:exact-k-nearest} on the graph $G\cup H$, setting the parameter $r$ such that $2^r=h\geq C a\log d$. This constructs the $h$-hop $k$-nearest neighbourhoods $N^{(h)}_{G\cup H, k}(u)$ for all vertices $u\in V$ in $G\cup H$. We recall by the definition of $k$-nearest hopsets (\Cref{definition:hopsets}) that for all vertices $u,v\in V$ such that $v\in N_{G,k}(u)$, we have
    $$
        d_{G\cup H}^{(h)}(u,v) \leq d_{G\cup H}^{(Ca\log d)}(u,v) = d_G(u,v).
    $$
    Moreover, $H$ does not shorten distances in $G$, i.e., $d_{G\cup H}(u,v)=d_G(u,v)$ for vertex pairs $u,v\in V$. This implies that the $k$-nearest neighbourhood of any vertex in $G\cup H$ is the same as that in $G$ (noting that in both cases, we are breaking ties by vertex ID).
    
    Altogether, this implies $N^{(h)}_{G\cup H,k}(u)=N_{G,k}(u)$ for all $u\in V$, and therefore we have equivalently constructed the $k$-nearest edge set $E_k$, as desired.
    
    Constructing the hopset $H$ requires $\tilde O(|E| + nk^2)$ total memory, while extracting out the $k$-nearest neighbourhoods requires $\tilde O(|G\cup H| + nk^2)$ total memory. Therefore, our entire algorithm only requires $\tilde O(|E|+nk^2)$ total memory, and runs in $O(r)=O(\log (a\log d))=O(\log a + \log\log d)$ rounds.
\end{proof}

\subsection{From Approximate $k$-Nearest Neighbourhoods to SSSP Oracles}
In this section, we aim to prove the following lemma. Notice that while the lemma is stated explicitly for the heterogeneous regime, it trivially extends to the near-linear regime with the same total memory and round complexity. In subsequent sections, we show how to relax our input assumptions depending on the MPC regime being considered.

We remark that the restriction on the edge set $|E|\in \tilde O(n^{1+{c/2}})$ is only to ensure that the additional total memory as stated in \Cref{theorem:near-linear} and \Cref {theorem:hetero} is represented as an \emph{additive} term $n^{1+c}$ as opposed to a \emph{multiplicative} term $n^c$, i.e., $\tilde O(|E| + n^{1+c})$ instead of $\tilde O(|E|n^c)$ total memory. Namely by choosing $k=n^{c/4}$ as the parameter for the $k$-nearest neighbourhoods, we ensure that the skeleton graph construction of \Cref{lemma:skeleton-implement} requires only $\tilde O(|E|k^2)=\tilde O(n^{1+c})$ total memory. 

\begin{lemma}\label{lemma-central-construct}
    Let $G=(V,E)$ be a connected, weighted graph with $n$ vertices, and let $c > 0$ be a constant such that $|E|\in \tilde O(n^{1+{c/2}})$. Suppose that for every vertex $u \in V$, we are given the set $\{(u,t) : t \in \tilde N_k(u)\}$, where $\tilde N_k(u)$ is an $a$-approximate $k$-nearest neighbourhood of $u$ with $k=n^{c/4}$.
    
    Given that $c\leq \gamma$, in the heterogeneous MPC model with $\tilde O(n^{1+c})$ total memory, there exists a randomised algorithm that w.h.p.~constructs an $O(1/c)$-approximate SSSP oracle in $O(\log a + \log\log n)$ rounds. Every subsequent SSSP query can then be answered in $O(1)$ rounds.
\end{lemma}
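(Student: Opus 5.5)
The plan follows the four-step outline of \Cref{subsect:overview1}, with the constants in the stretch and memory tracked explicitly.

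\textbf{Step 1: the exact $k$-nearest edge set.} First apply \Cref{lemma:E_k} with $k=n^{c/4}$ to obtain $E_k$. Its hypothesis holds because $k^2=n^{c/2}\in O(n^\gamma)$ when $c\le\gamma$. The total memory is $\tilde O(|E|+nk^2)=\tilde O(n^{1+c/2})$. Weights are polynomially bounded and $G$ is connected, so $d\in\poly(n)$, and the round count is $O(\log a+\log\log n)$.

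\textbf{Step 2: the skeleton graph.} Feed $E_k$ into \Cref{lemma:skeleton-implement} to build a skeleton graph $G_S=(V_S,E_S)$ in $O(1)$ rounds. This uses $\tilde O(|E|k^2)=\tilde O(n^{1+c/2}\cdot n^{c/2})=\tilde O(n^{1+c})$ total memory. It has $|V_S|\le C_S n\log k/k=\tilde O(n^{1-c/4})$ vertices, is connected, and has polynomially bounded weights.

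\textbf{Step 3: the spanner on $M^*$.} Apply \Cref{lemma:near-linear-spanner} to $G_S$ with $t=\lceil 4/c\rceil$ to obtain an $O(1/c)$-spanner $F$ of size
\[
\tilde O(|V_S|^{1+1/t})=\tilde O\bigl(n^{(1-c/4)(1+c/4)}\bigr)\subseteq\tilde O(n)
\]
in $O(1)$ rounds, and gather $F$ on the near-linear machine $M^*$. The $\tilde O(n^{1+c})$ total memory covers $|E_S|$.

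\textbf{Step 4: answering a query $u$.} The answer is $\delta(u,v)=\min$ of two candidates:
\begin{itemize}
\item[(i)] $\omega_{E_k}(u,v)$, whenever $\{u,v\}\in E_k$;
\item[(ii)] $\min_{s,s'}\,\omega_{E_k}(u,s)+d_F(s,s')+\omega_{E_k}(s',v)$, over $s,s'\in V_S$ with $\{u,s\},\{s',v\}\in E_k$.
\end{itemize}

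\textbf{Correctness.} No candidate undershoots $d_G(u,v)$, since $E_k$ weights are exact distances, $d_F\ge d_{G_S}\ge d_G$ on $V_S$, and the triangle inequality holds. If $v\in N_k(u)$ or $u\in N_k(v)$, candidate (i) is exact. Otherwise the skeleton path $P$ from (P1)–(P2) has the form $u\to s\rightsquigarrow s'\to v$, where the middle part lies in $E_S$ and $\omega(P)\le 7d_G(u,v)$. Replacing the middle by the $F$-distance loses at most a factor $O(1/c)$, so $\delta(u,v)\le O(1/c)\cdot 7\,d_G(u,v)=O(1/c)\,d_G(u,v)$.

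\textbf{Implementing a query in $O(1)$ rounds.} Broadcast $u$, then collect the $E_k$-edges incident to $u$ onto $M^*$. There are at most $k+\deg_{E_k}(u)\le n$ of them, since each $u$ lies in $N_k$ of at most $n$ vertices; this is fine with sorting from \Cref{lemma:mpc-toolbox}. $M^*$ then computes $D(s')=\min_s \omega_{E_k}(u,s)+d_F(s,s')$ for all $s'\in V_S$ by a single multi-source Dijkstra on $F$. The resulting $|V_S|$ values are too many for one small machine, so instead of broadcasting them, each $E_k$-edge $\{s',v\}$ with $s'\in V_S$ is routed to $D(s')$ by a sort/join on $s'$. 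A min-aggregation per $v$ then gives candidate (ii), and candidate (i) is combined in the same way.

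\textbf{Main obstacle.} I expect the main obstacle to be exactly this data movement. The values $D(s')$ must be joined with the $E_k$-edges on $s'$ without overloading the $n^\gamma$-memory machines. The fix is to sort these edges by $s'$ and have $M^*$ send each $D(s')$ only to the machines holding that key range, which is total output $\tilde O(n)$ and is a standard $O(1)$-round operation. Everything else is a direct combination of the earlier lemmas.
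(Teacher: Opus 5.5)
Your proposal is correct and follows the paper's proof essentially step for step. Both compute the exact $E_k$ via \Cref{lemma:E_k}, build the skeleton graph via \Cref{lemma:skeleton-implement}, and store an $O(1/c)$-spanner $F$ with $\tilde O(n^{1-c^2/16})\subseteq\tilde O(n)$ edges on $M^*$; both then answer a query as the minimum of the direct $E_k$ candidate and the two-stage estimate (multi-source distances on $F$ computed at $M^*$, joined with the $E_k$-edges by sorting), with the same (P1)--(P2) stretch argument and the same $\tilde O(n^{1+c/2})$ / $\tilde O(|E|k^2)=\tilde O(n^{1+c})$ memory accounting.
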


The key idea is to construct the skeleton graph $G_S=(V_S,E_S)$ for the input graph $G$, and subsequently construct an $O(1/c)$-stretch spanner $F$ for $G_S$. We will show by an appropriate choice of parameters that $F$ can be made linear-sized, and therefore can be stored on a single near-linear memory machine. The spanner $F$ and edge set $E_k$ are then sufficient to handle any future SSSP queries in $O(1)$ rounds. 

\subsubsection{Algorithm Description}\label{section:main-construction-algo}
As input, we are given the weighted $n$-vertex graph $G=(V,E)$, whose edge weights are polynomially-bounded (i.e., $\omega_G(u,v)\in \poly(n)$ for all $\set{u,v}\in E)$.

\begin{description}
    \item [Step 1: Get Exact $k$-Nearest.] Denote $k=n^{c/4}$. By the input assumptions, we have $a$-\emph{approximate} $k$-nearest neighbourhoods $\tilde N_k(u)$ for all vertices $u\in V$. Since $k^2=n^{c/2}\leq n^\gamma$, this allows us to find the \emph{exact} $k$-nearest neighbourhoods for all vertices by applying \Cref{lemma:E_k} on $G$. This constructs the $k$-nearest edge set $E_k$ for $G$.
    
    Denoting the weighted diameter of $G$ as $d$, we observe that this step runs in $O(\log a + \log\log d)$ rounds. The total memory required is $\tilde O(|E|+nk^2)=\tilde O(n^{1+c/2})$, which fits within our total memory constraints. 

    \item [Step 2: Construct Skeleton Graph.] We apply \Cref{lemma:skeleton-implement} on $G$ with $k = n^{c/4}$ to obtain the skeleton graph $G_S = (V_S, E_S)$. This step runs in $O(1)$ rounds. We note that the total memory needed is $\tilde O(|E|k^2)=\tilde O(n^{1+c/2}\cdot n^{c/2})=\tilde O(n^{1+c})$, which fits within our total memory constraints. 

    \item [Step 3: Sparsify $G_S$.] We further sparsify $G_S$ into a spanner $F=(V_S,E_F)$. Namely, we apply \Cref{lemma:near-linear-spanner} to obtain an $O(4/c)$-spanner of size $|E_F|\in \tilde O(|V_S|^{1+c/4})$ in $O(1)$ rounds. For now, we may assume that $|E_F|\in \tilde O(n)$; we justify this assumption in \Cref{lemma:main-construct-F-size}.

    \item [Step 4: Storing the Spanner.] We store the spanner $F$ on the single near-linear memory machine $M^*$. This requires $O(1)$ rounds. 
\end{description}
Our final SSSP oracle consists of the following:
\begin{itemize}
    \item The edge set $E_k$, stored arbitrarily across the machines.
    \item The graph $F$, stored entirely on the machine $M^*$.
\end{itemize}
This concludes our construction.

\subsubsection{Correctness}
We first justify our claim that the spanner $F$ is sufficiently small to store in a single near-linear memory machine.
\begin{lemma}\label{lemma:main-construct-F-size}
    For any constant $c > 0$, the size of the graph $F$ is bounded by $|E_F|\in \tilde O(n)$.
\end{lemma}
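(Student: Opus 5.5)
The bound follows by combining the vertex bound for skeleton graphs (\Cref{definition:skeleton-graph}) with the size bound of the spanner from \Cref{lemma:near-linear-spanner}, as instantiated in Step 3. Since $G_S$ is a skeleton graph built with $k = n^{c/4}$, we have
\[
|V_S| \leq \frac{C_S\, n \log k}{k} = \frac{C_S\, n \cdot (c/4)\log n}{n^{c/4}} \in \tilde O\!\left(n^{1-c/4}\right).
\]
Step 3 invokes \Cref{lemma:near-linear-spanner} on $G_S$ with stretch parameter $t = 4/c$. More precisely, we take $t = \lceil 4/c \rceil$, which is a constant and hence satisfies $t \leq \log n$ for large $n$. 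This produces a spanner with $|E_F| \in \tilde O(|V_S|^{1+1/t}) \subseteq \tilde O(|V_S|^{1+c/4})$ edges.

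Substituting the vertex bound gives
\[
|E_F| \in \tilde O\!\left( \left(n^{1-c/4}\right)^{1+c/4} \right) = \tilde O\!\left(n^{1 - c^2/16}\right) \subseteq \tilde O(n).
\]
The polylogarithmic factors do not affect this. The $\log n$ factor in $|V_S|$, raised to the constant power $1+c/4$, is still polylogarithmic, so it is absorbed by the $\tilde O$ notation.

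The only point that needs care is bookkeeping. First, the spanner lemma is stated in terms of the number of vertices of the graph it is applied to. That graph is $G_S$, whose vertex set is $V_S$, not $n$. Second, the exponent $1 + 1/t$ must be at most $1 + c/4$. Rounding $t$ up to an integer handles this, and for $c \le 4$ it also guarantees $t \geq 1$. Nothing else is required: the bound $\tilde O(n)$ is attained with room to spare, with a saving of a factor of $n^{c^2/16}$.
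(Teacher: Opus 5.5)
Your proposal is correct and follows essentially the same route as the paper: bound $|V_S|$ via the skeleton-graph definition with $k=n^{c/4}$, plug this into the spanner size bound $\tilde O(|V_S|^{1+c/4})$, and observe that the exponent $(1-c/4)(1+c/4)=1-c^2/16<1$ while the polylogarithmic factors are absorbed. Your remark about taking $t=\lceil 4/c\rceil$ so that $1/t\le c/4$ is a harmless bit of extra bookkeeping that the paper leaves implicit.
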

\begin{proof}
    By the definition of the skeleton graph in \Cref{definition:skeleton-graph}, we may denote the number of vertices in the skeleton graph $G_S$ as $|V_S| \leq (C_S\cdot n\log k)/k=C_S(c/4)\left(n^{1-(c/4)}\cdot \log n\right)$, where $k=n^{c/4}$ and $C_S$ is the constant defined in \Cref{definition:skeleton-graph}. Additionally by Step 3 of the algorithm, we may denote the number of edges in the graph $F$ as $|E_F| \leq C_1\cdot |V_S|^{1+c/4}\log^{C_2}n$, where $C_1,C_2$ are suitable constants. We find that
    \begin{align*}
        |E_F| &\leq  C_1\cdot |V_S|^{1+c/4}\log^{C_2}n\\
        &\leq  C_1\cdot \left[\frac{C_S\cdot c\log n}{4}\right]^{1+c/4}\cdot \left(n^{1-(c/4)}\right)^{1+c/4}\log^{C_2}n\\
        &=  C_1\cdot \left[\frac{C_S\cdot c\log n}{4}\right]^{1+c/4}\cdot n^{1-(c/4)^2}\log^{C_2}n.
    \end{align*}
    Since we treat all other variables as constants except for $n$, we have that $|E_F|\in n\cdot \polylog (n)\subseteq \tilde O(n)$ as desired.
\end{proof}

We next show the round complexity of the algorithm.
\begin{lemma}\label{lemma:main-construct-time}
    The above algorithm runs in $O(\log a + \log\log n)$ rounds.
\end{lemma}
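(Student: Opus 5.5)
The plan is to bound the round count of each of the four steps of the algorithm in \Cref{section:main-construction-algo} separately and then add them. Step~1 applies \Cref{lemma:E_k} to $G$ with $k=n^{c/4}$; its hypothesis $k^2=n^{c/2}\le n^{\gamma}$ holds because $c\le\gamma$. That lemma gives $E_k$ in $O(\log a+\log\log d)$ rounds, where $d$ is the weighted diameter of $G$.

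The one point that needs an argument is replacing $\log\log d$ by $\log\log n$. I would do this using the standing assumption that edge weights are polynomially bounded integers, say $\omega_G(u,v)\le n^{C}$ for a constant $C$. Since $G$ is connected, every shortest path has at most $n-1$ edges, so
\begin{equation*}
d\le (n-1)\cdot n^{C}\le n^{C+1}.
\end{equation*}
Hence $\log d=O(\log n)$ and $\log\log d=O(\log\log n)$, which gives Step~1 a bound of $O(\log a+\log\log n)$ rounds.

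The remaining steps each take $O(1)$ rounds:
\begin{itemize}
\item Step~2 invokes \Cref{lemma:skeleton-implement}. Its requirement $k^2\in O(n^\gamma)$ is the same condition as in Step~1, and its memory requirement $\tilde O(|E|k^2)=\tilde O(n^{1+c})$ was already checked in the algorithm description.
\item Step~3 invokes \Cref{lemma:near-linear-spanner} on $G_S$.
\item Step~4 sends $F$ to $M^*$. This fits because $|E_F|\in\tilde O(n)$ by \Cref{lemma:main-construct-F-size}, and the edges can be routed using the tools of \Cref{lemma:mpc-toolbox}.
\end{itemize}

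Adding the four bounds gives $O(\log a+\log\log n)+O(1)=O(\log a+\log\log n)$ rounds. I do not expect a real obstacle. The only nontrivial step is the diameter bound, which uses connectivity (assumed in the statement) together with polynomially bounded weights.
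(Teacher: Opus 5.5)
Your proposal is correct and follows the paper's proof essentially verbatim: every step other than Step~1 takes $O(1)$ rounds, and the $O(\log a+\log\log d)$ bound from \Cref{lemma:E_k} becomes $O(\log a+\log\log n)$ because polynomially bounded weights and the $(n-1)$-edge bound on shortest paths give $d\in\poly(n)$. One small precision: the $(n-1)$-edge bound holds because shortest paths under positive weights are simple, whereas connectivity is what guarantees that $d$ is finite.
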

\begin{proof}
    From the algorithm description, we observe that except for Step 1, all remaining steps run in $O(1)$ rounds. It therefore suffices to show that Step 1 runs in $O(\log a + \log\log n)$ rounds. 
    
    Denoting $g(n)\in \poly(n)$ as the upper bound of the edge weights in $G$, we find that the weighted diameter of $G$ is at most $d=(n-1)\cdot g(n)\in \poly(n)$ as any shortest path in $G$ has at most $n-1$ edges. Therefore, Step 1 runs in $O(\log a + \log\log d)=O(\log a + \log\log n)$ rounds as claimed. 
\end{proof}

\paragraph{Answering Queries.} For any given source vertex $u\in V$ and any other vertex $v\in V$, we consider all paths constructed as follows: 
\begin{itemize}
    \item $u\rightarrow x\in V_S$ using an edge in $E_k$,
    \item $x\rightsquigarrow y\in V_S$ using a path in $F$, and lastly,
    \item $y\rightarrow v$ using another edge in $E_k$.
\end{itemize}
Given that $v\notin N_k(u)$ and $u\notin N_k(v)$, we show by properties (P1) and (P2) in \Cref{definition:skeleton-graph} there must exist such a path that witnesses an $O(1/c)$-approximation of the distance $d_G(u,v)$. More concretely, for any source vertex $u\in V$, consider the distance estimate $\delta_u:V\rightarrow \R$, defined as follows: 
$$
    \delta_u(v)=\min_{x,y\in V_S}\left(\omega_{E_k}(u,x)+d_F(x,y)+\omega_{E_k}(y,v)\right).
$$

We first show an upper bound for the estimate $\delta_u$.
\begin{lemma}\label{lemma:main-construct-ub}
    For all vertices $v\in V$ such that $v\notin N_k(u)$ and $u\notin N_k(v)$, the estimate $\delta_u(v)$ is at most $O(1/c)\cdot d_G(u,v)$.
\end{lemma}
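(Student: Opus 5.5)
The plan is to combine the skeleton-graph guarantee of \Cref{definition:skeleton-graph} with the stretch of the spanner $F$ built in Step 3. Fix $u$ and any $v$ with $v\notin N_k(u)$ and $u\notin N_k(v)$. The skeleton graph $G_S$ was built in Step 2 from the exact edge set $E_k$ of $G$ (obtained in Step 1 via \Cref{lemma:E_k}). So the ``clusters as good approximations'' condition gives a path $P$ in $E_k\cup E_S$ from $u$ to $v$ with $\omega(P)\le 7\, d_G(u,v)$ (property (P1)). By property (P2), $P$ decomposes as an edge $\{u,x\}\in E_k$, a subpath $Q$ from $x$ to $y$ using only edges of $E_S$ (hence $x,y\in V_S$), and an edge $\{y,v\}\in E_k$.

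Next I would bound each piece. The first and last edges contribute exactly $\omega_{E_k}(u,x)$ and $\omega_{E_k}(y,v)$. The middle part satisfies $d_{G_S}(x,y)\le\omega(Q)$, and since $F$ is an $O(4/c)=O(1/c)$-spanner of $G_S$, we get $d_F(x,y)\le O(1/c)\cdot d_{G_S}(x,y)\le O(1/c)\cdot\omega(Q)$. One degenerate case needs care: if $x=y$, then $Q$ is empty and $d_F(x,x)=0$, so the bound still holds.

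Plugging this particular pair $(x,y)$ into the minimum that defines $\delta_u(v)$ gives
\[
\delta_u(v)\le \omega_{E_k}(u,x)+O(1/c)\,\omega(Q)+\omega_{E_k}(y,v)\le O(1/c)\,\omega(P)\le O(1/c)\cdot 7\, d_G(u,v),
\]
using that $O(1/c)\ge 1$ for the two $E_k$ edges. This is $O(1/c)\cdot d_G(u,v)$, as claimed.

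There is no real computational obstacle here. The only points to check are bookkeeping. First, property (P2) guarantees the endpoints $x,y$ of the $E_S$-portion lie in $V_S$, which is where $F$ is defined. Second, the weight $\omega_{E_k}$ used in $\delta_u$ is exactly the weight of those edges in $P$; this holds because $E_k$ carries the true distances $d_G$.
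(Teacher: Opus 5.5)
Your proof is correct and matches the paper's argument: take the (P1)/(P2) path, replace its $E_S$-portion by a path in $F$ of stretch $O(1/c)$, and plug its endpoints into the minimum defining $\delta_u(v)$ to get the $7\cdot O(1/c)$ bound. One small correction to your degenerate case $x=y$: when the $E_S$-portion is empty, (P2) itself does not place $x$ in $V_S$; that comes from the skeleton construction, where $x=c(u)\in C\subseteq V_S$.
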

\begin{proof}
    We first consider the path $P$ from $u$ to $v$ which has the properties (P1) and (P2) as defined in \Cref{definition:skeleton-graph}, and denote $P$ as $u\rightarrow u^*\rightsquigarrow v^*\rightarrow v$. Additionally, we denote the edges $u\rightarrow u^*$ and $v^*\rightarrow v$  as $e_1$ and $e_2$ respectively, and denote the subpath from $u^*\rightsquigarrow v^*$ in $P$ as $Q$. 
    
    Recall from property (P1) that $\omega(P) \leq 7\cdot d_{G}(u,v)$. Moreover, recall from property (P2) that $e_1,e_2\in E_k$, and that $Q$ is contained entirely in the graph $G_S$. In particular, there must exist a path $Q'$ from $u^*$ to $v^*$ in $F$ such that $\omega(Q')\leq C(4/c)\cdot d_{G_S}(u^*,v^*)$ (where $C$ is a suitable constant). We are therefore able to bound $\delta_u(v)$ as
    \begin{align*}
        \delta_u(v) &\leq \omega(e_1) + \omega(Q') + \omega(e_2) \\
        &\leq \omega(e_1) + C(4/c)\cdot d_{G_S}(u^*,v^*) + \omega(e_2)\\
        &\leq \omega(e_1) + C(4/c)\cdot \omega(Q) + \omega(e_2)\\
        &\leq C(4/c)\cdot (\omega(e_1) + \omega(Q) + \omega(e_2)) \\
        &=C(4/c)\cdot \omega(P)\\
        &\leq 7C(4/c)\cdot d_{G}(u,v)
    \end{align*}
    The desired claim follows.
\end{proof}

We next show how to compute the estimates $\delta_u$. Since the algorithm itself does not require any new techniques, we only emphasise the important points for the sake of presentation. 
\begin{lemma}\label{lemma:main-construct-pathfind}
    There exists a deterministic algorithm for constructing the values $\delta_u(v)$ for all $v\in V$ in $O(1)$ rounds.
\end{lemma}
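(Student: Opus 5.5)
We split the computation of $\delta_u(v)=\min_{x,y\in V_S}\bigl(\omega_{E_k}(u,x)+d_F(x,y)+\omega_{E_k}(y,v)\bigr)$ into three phases, following the structure of the formula: first the edges out of $u$, then distances inside $F$ computed on $M^*$, then the edges into each $v$. All phases use only \Cref{lemma:mpc-toolbox} and the storage of $F$ on $M^*$.

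\textbf{Phase 1.} Collect on $M^*$ the set $A_u=\{(x,\omega_{E_k}(u,x)) : x\in V_S,\ \{u,x\}\in E_k\}$. Every edge of $E_k$ is defined by some $t\in N_k(\cdot)$, so we take edges with $u$ as an endpoint and pair them symmetrically. The edges of the form $(u,t)$ with $t\in N_k(u)$ number at most $k$. The edges $(w,u)$ with $u\in N_k(w)$ may number up to $n$. Either way, $|A_u|\le n$, which fits in the $\tilde O(n)$ memory of $M^*$. To gather $A_u$, we sort $E_k$ by endpoint (storing each edge under both endpoints) using \Cref{lemma:mpc-toolbox}. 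The contiguous block of edges incident to $u$ is then sent to $M^*$ in $O(1)$ rounds. Membership in $V_S$ is known after tagging vertices, which is another sort-and-join.

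\textbf{Phase 2.} $M^*$ locally runs Dijkstra on $F$ with a virtual super-source attached to each $x$ with weight $\omega_{E_k}(u,x)$. This yields $D(y)=\min_x\bigl(\omega_{E_k}(u,x)+d_F(x,y)\bigr)$ for every $y\in V_S$. The result has size $O(n)$ and needs no communication.

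\textbf{Phase 3.} We need $\delta_u(v)=\min_y\bigl(D(y)+\omega_{E_k}(y,v)\bigr)$ for all $v$. This requires joining $D$ with the edges of $E_k$. There are $\tilde O(nk)$ such edges spread over many small machines, while $D$ sits on $M^*$. $M^*$ cannot afford to broadcast $D$ to every machine, but it can send each pair $(y,D(y))$ once, so the total outgoing volume is $O(n)$, within its bandwidth. We then sort the tuples $(y,D(y))$ together with the edges $(y,v,\omega)$ keyed by $y$. A prefix-style broadcast within each key group (via \Cref{lemma:mpc-toolbox}) attaches $D(y)$ to every edge incident to $y$. Each edge produces a candidate $(v,D(y)+\omega)$. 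A second sort by $v$, followed by a segmented minimum, gives $\delta_u(v)$ for every $v$. Both sorts are $O(1)$ rounds.

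\textbf{Main obstacle.} The delicate step is Phase 3. $D(y)$ must be propagated to up to $\tilde O(nk)$ edge copies even though no small machine can hold $D$. We handle this by routing through sorting, never by broadcast. The group broadcast must also work when a single $y$ has more incident edges than one machine can store. This is handled by the tree-based segmented broadcast over a contiguous segment, as in \Cref{lemma:mpc-toolbox}.

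Finally, the case $v\in N_k(u)$ or $u\in N_k(v)$ is answered directly from $\omega_{E_k}(u,v)$, obtained in the same join. All steps are deterministic, so the whole procedure takes $O(1)$ rounds.
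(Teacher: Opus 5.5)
Your proposal is correct and follows essentially the same route as the paper. The paper also sends the values $(x,\omega_{E_k}(u,x))$ to $M^*$, computes $\Delta_u(y)=\min_x\bigl(\omega_{E_k}(u,x)+d_F(x,y)\bigr)$ locally on $M^*$, labels each $E_k$-edge $\{y,v\}$ with $\Delta_u(y)$ by sorting and broadcasting, and finally takes the minimum for each $v$; your version simply gives more detail on the labelling step by routing $D$ through a sort instead of a global broadcast.
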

\begin{proof}
    Computation of the values $\delta_u(v)$ is broken up into two parts: For every vertex $y\in V_S$, we first compute the values
    $$
        \Delta_u(y) = \min_{x\in V_S}\left(\omega_{E_k}(u,x)+d_F(x,y)\right),
    $$
    which describe the length of the path $u\rightarrow x\rightsquigarrow y$. Subsequently, for every vertex $v\in V$, we compute the values
    $$
        \Delta'_u(v)=\min_{y\in V_S}\left(\Delta_u(y)+\omega_{E_k}(y,v)\right),
    $$
    which now fully describe the length of the path $u\rightarrow x\rightsquigarrow y\rightarrow v$. By definition, we observe that $\Delta'_u(v) = \delta_u(v)$ for all vertices $v\in V$. 
    
    We now describe the MPC algorithm: For readability, we describe only the essential details. Below, we recall that $M^*$ refers to the single near-linear memory machine. 
    \begin{enumerate}
        \item For every vertex $x\in V_S$ such that $\set{u,x}\in E_k$, we send the value $(x,\omega_{E_k}(u,x))$ to the machine $M^*$. Observe that there are at most $n$ such values, and therefore we have sufficient local memory in $M^*$. This requires $O(1)$ rounds.
        \item For every vertex $y\in V_S$, $M^*$ locally computes the values $\Delta_u(y)$. This requires no communication rounds.
        \item For every vertex $y\in V_S$ and $v\in V$ such that $\set{y,v}\in E_k$, we label the edge $\set{y,v}$ with the value $\Delta_u(y)$. Using the MPC tools in \Cref{lemma:mpc-toolbox} (namely sorting and broadcasting), this requires $O(1)$ rounds.
        \item For every vertex $v\in V$, we now compute the values $\Delta'_u(v)$. Using the MPC tools in \Cref{lemma:mpc-toolbox}, this also requires $O(1)$ rounds.
    \end{enumerate}
    Overall, we require only $O(1)$ rounds as claimed.
\end{proof}

We now summarise the previous two results in the following lemma.
\begin{lemma}\label{lemma:main-construct-queries}
    Given the edge set $E_k$ stored across the machines and the spanner $F$ stored entirely on the machine $M^*$, we can answer $O(1/c)$-approximate SSSP queries in $O(1)$ rounds.
\end{lemma}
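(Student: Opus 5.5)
The plan is to define the query answer for a source $u$ as
\[
\hat\delta_u(v) := \begin{cases} \omega_{E_k}(u,v) & \text{if } \set{u,v}\in E_k,\\ \delta_u(v) & \text{otherwise,}\end{cases}
\]
and to show that $\hat\delta_u$ is an $O(1/c)$-approximate single-source distance estimate that can be computed in $O(1)$ rounds. The argument splits into three parts: computation, the lower bound, and the upper bound.

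\emph{Computation.} We first run the algorithm of \Cref{lemma:main-construct-pathfind} to obtain $\delta_u(v)$ for all $v\in V$ in $O(1)$ rounds. Next we decide, for every $v$, whether $\set{u,v}\in E_k$, meaning $v\in N_k(u)$ or $u\in N_k(v)$. For this we sort the edge set $E_k$ by endpoint and label every $v$ adjacent to $u$ with $\omega_{E_k}(u,v)$. Using the sorting and broadcast tools of \Cref{lemma:mpc-toolbox}, this also takes $O(1)$ rounds. Finally, each $v$ sets $\hat\delta_u(v)$ according to the case distinction.

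\emph{Lower bound.} We need $d_G(u,v)\le\hat\delta_u(v)$.
\begin{itemize}
\item In the first case, $\omega_{E_k}(u,v)=d_G(u,v)$ holds by definition of $E_k$.
\item In the second case, every term of $\delta_u(v)$ has the form $\omega_{E_k}(u,x)+d_F(x,y)+\omega_{E_k}(y,v)$. The two $E_k$-edges have weights equal to exact $G$-distances. For the middle term, $F$ is a subgraph of $G_S$, so $d_F(x,y)\ge d_{G_S}(x,y)\ge d_G(x,y)$ by the no-shrinking property of \Cref{definition:skeleton-graph}. The triangle inequality in $G$ then gives that each term is at least $d_G(u,v)$.
\end{itemize}
If no such path exists, the minimum is $\infty$. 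This cannot occur in the second case, because of the upper bound below.

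\emph{Upper bound.} In the first case the answer is exact. In the second case, $v\notin N_k(u)$ and $u\notin N_k(v)$, so \Cref{lemma:main-construct-ub} gives $\delta_u(v)\le O(1/c)\, d_G(u,v)$.

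The case $v=u$ is trivial, since $u\in N_k(u)$ and the answer is $0$.

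The only mildly delicate step is the MPC detail of identifying the first case and combining the two values per $v$ without exceeding local memory. The relevant $E_k$-edges for a fixed $u$ can number up to $n$, so they must be handled across a contiguous segment of machines with the sorting and labelling primitives of \Cref{lemma:mpc-toolbox}, rather than on a single small machine.
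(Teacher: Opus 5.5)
Your proposal is correct and follows the paper's proof essentially step for step. You compute $\delta_u$ via \Cref{lemma:main-construct-pathfind}, return the exact $E_k$-weight when $v\in N_k(u)$ or $u\in N_k(v)$, and otherwise use \Cref{lemma:main-construct-ub} for the upper bound and $d_F\ge d_{G_S}\ge d_G$ for the lower bound; the paper simply phrases the output as $\min(\omega_{E_k}(u,v),\delta_u(v))$, which coincides with your case-defined $\hat\delta_u(v)$ and avoids the explicit membership test.
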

\begin{proof}
    For a given SSSP query $u\in V$, we first construct the values $\delta_u(v)$ for all vertices $v\in V$ in $O(1)$ rounds using \Cref{lemma:main-construct-pathfind}. For every vertex $v\in V$, we separate them into two cases:
    \begin{description}
        \item[Case 1: $v\in N_k(u)$ or $u\in N_k(v)$.] By definition of $E_k$, there must exist an edge $\set{u,v}\in E_k$. It therefore suffices to return the edge weight $\omega_{E_k}(u,v)=d_G(u,v)$.

        \item[Case 2: $v\notin N_k(u)$ and $u\notin N_k(v)$.] We return the value $\delta_u(v)$. By \Cref{lemma:main-construct-ub}, we have that $\delta_u(v)\leq O(1/c)\cdot d_G(u,v)$. Additionally, for any vertices $x,y\in V_S$, we have
        $$
            d_F(x,y)\geq d_{G_S}(x,y)\geq d_G(x,y)
        $$
        by definition of the skeleton graph in \Cref{definition:skeleton-graph}. Thus, we necessarily have that $\delta_u(v) \geq d_G(u,v)$ as well. 
    \end{description}
    In other words, it suffices to return the value $\min(\omega_{E_k}(u,v),\delta_u(v))$ for all vertices $v\in V$, which always gives an $O(1/c)$-approximation of the distance $d_G(u,v)$ as desired.
\end{proof}

\paragraph{The Final Proof.} We summarise this section on correctness with the following proof for \Cref{lemma-central-construct}.
\begin{proof}[Proof of \Cref{lemma-central-construct}]
    We execute the algorithm described in \Cref{section:main-construction-algo} to construct the edge set $E_k$ and the spanner graph $F$, the latter of which is stored on the single near-linear machine $M^*$. By \Cref{lemma:main-construct-time}, this requires $O(\log a + \log\log n)$ rounds. Furthermore, from the algorithm description, observe that the algorithm requires only $\tilde O(n^{1+c})$ total memory. Lastly applying \Cref{lemma:main-construct-queries}, we find that this construction suffices as an $O(1/c)$-approximate SSSP oracle with a query time of $O(1)$ rounds.
\end{proof}

\subsection{The Near-Linear MPC Model}
In this section, we aim to prove \Cref{theorem:near-linear}. Namely, we will describe an algorithm for constructing $O(\log n)$-approximate $k$-nearest neighbourhoods for all vertices, from which a direct application of \Cref{lemma-central-construct} will give our desired claim.

\paragraph{Finding Approximate Neighbourhoods.} In order to construct $O(\log n)$-approximate neighbourhoods, we construct an $O(\log n)$-spanner of near-linear size which can be done in $O(1)$ rounds. The entire spanner can be stored on every machine, which enables us to determine the $k$-nearest neighbourhoods of every vertex \emph{in the spanner} -- these suffice as $O(\log n)$-approximate $k$-nearest neighbourhoods for the \emph{original graph}. We formalise this idea in the following lemma.

\begin{lemma}\label{lemma:near-linear-oracle}
    Let $G=(V,E)$ be a connected, weighted graph with $n$ vertices and let integer $k > 0$ be a parameter. In the near-linear MPC model with $\tilde O(|E| + nk)$ total memory, there exists a randomised algorithm that w.h.p.~constructs the set $\set{(u,t):u\in V\wedge t\in \tilde N_k(u)}$ where $\tilde N_k(u)$ is an $O(\log n)$-approximate $k$-nearest neighbourhood of $u$. This algorithm runs in $O(1)$ rounds. 
\end{lemma}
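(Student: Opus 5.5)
We build an $O(\log n)$-spanner with \Cref{lemma:near-linear-spanner} and let every machine compute approximate neighbourhoods for the vertices it is responsible for. Concretely, we invoke \Cref{lemma:near-linear-spanner} on $G$ with $t=\log n$. In $O(1)$ rounds and with $\tilde O(|E|)$ total memory, w.h.p.\ this yields an $O(\log n)$-spanner $H$ with $\tilde O(n^{1+1/\log n})=\tilde O(n)$ edges. Since each machine has $\tilde\Theta(n)$ memory (choosing the polylogarithmic factor suitably), $H$ fits on a single machine. We first gather $H$ onto one machine, which takes $O(1)$ rounds by sorting/aggregation as in \Cref{lemma:mpc-toolbox}.

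Next, the machine holding $H$ broadcasts it to all other machines. Sending $H$ directly to every machine would overload the sender's bandwidth. We therefore broadcast along a tree: with $\tilde\Theta(n)$ memory per machine and $H$ of size $\tilde O(n)$, the broadcast primitive of \Cref{lemma:mpc-toolbox} in its near-linear form distributes a message of size $O(S)$ over a contiguous segment of machines in $O(1)$ rounds. More precisely, we split $H$ into chunks and use a constant-depth doubling tree with fan-out $n^{\Omega(1)}$. The number of machines is at most $\tilde O(|E|+nk)/\tilde\Theta(n)\le \poly(n)$, so constant depth suffices. This broadcast is the step I expect to need the most care; it is also why we only need every machine, not a designated one, to know $H$.

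We then assign vertices to machines so that the outputs fit. Machine $j$ is responsible for a set $U_j$ of vertices with $|U_j|\cdot k\le \tilde O(n)$; there are $\tilde O(nk)/\tilde\Theta(n)$ such machines, consistent with the $\tilde O(|E|+nk)$ total memory bound. Each machine runs Dijkstra on $H$ locally from each $u\in U_j$ and records the $k$ closest vertices under $d_H$, breaking ties by ID. Local computation is free apart from the polynomial-time requirement, and Dijkstra is polynomial. The output tuples $(u,t)$ remain stored on these machines.

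Finally, we check correctness. Set $\delta:=d_H$. Since $H$ is an $O(\log n)$-spanner, $d_G(u,v)\le d_H(u,v)\le O(\log n)\,d_G(u,v)$ for all $u,v$, so $\delta$ is an $O(\log n)$-approximate distance estimate. The computed sets are exactly the $k$ nearest vertices under $\delta$ with ID tie-breaking, hence they are $O(\log n)$-approximate $k$-nearest neighbourhoods by definition. The only randomness, and hence the w.h.p.\ guarantee, comes from \Cref{lemma:near-linear-spanner}.
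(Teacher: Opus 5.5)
Your proposal is correct and follows the paper's proof: build an $O(\log n)$-spanner $H$ via \Cref{lemma:near-linear-spanner} with $t=\log n$, gather it and broadcast it to every machine, then have each machine compute locally the $k$ nearest vertices under $d_H$ (ties broken by ID) for a block of about $n/k$ vertices, with $d_H$ serving as the $O(\log n)$-approximate distance estimate. Your chunked, constant-depth tree broadcast spells out what the paper attributes to the broadcast primitive of \Cref{lemma:mpc-toolbox}.
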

\begin{proof}
Using \Cref{lemma:near-linear-spanner} with $t=\log n$, we construct an $O(\log n)$-spanner $H$ of size $\tilde O(n)$ in $O(1)$ rounds. We collect the spanner $H$ onto a single machine, which we then broadcast to \emph{every} machine using \Cref{lemma:mpc-toolbox}. In other words, after $O(1)$ rounds, every machine now knows the entire spanner $H$.

What remains is to construct the $k$-nearest neighbourhood $N_k(u)$ of every vertex $u\in V$ in the spanner $H$, which suffices as our $O(\log n)$-approximate $k$-nearest neighbourhoods in $G$. This is done by a delegation of tasks: The machine with ID 1 determines the neighbourhoods for the first $n/k$ vertices, the machine with ID 2 determines the neighbourhoods for the next $n/k$ vertices, and so on. Since the size of each set $|N_k(u)|\leq k$ by definition, this requires only $O(n)$ memory per machine. Moreover, we only need $O(nk)$ total memory to fully compute the sets $N_k(u)$ for all $n$ vertices, and therefore the total memory of $\tilde O(|E|+nk)$ suffices. Additionally, since only local computation is needed, this requires no rounds of communication.
\end{proof}

We are now ready to prove \Cref{theorem:near-linear}. By applying \Cref{lemma:near-linear-oracle} and subsequently \Cref{lemma-central-construct}, we get the desired result.
\begin{proof}[Proof of \Cref{theorem:near-linear}]
    By \Cref{lemma:connectivity-reduction}, we may assume that $G$ is a connected graph. We first apply the spanner construction in \Cref{lemma:near-linear-spanner} (setting $t=2/c$) to construct an $O(2/c)$-spanner $G'$ for $G$ of size $\tilde O(n^{1+c/2})$ in $O(1)$ rounds. We now apply \Cref{lemma:near-linear-oracle} on $G'$ to construct $a=O(\log n)$-approximate $k=n^{c/4}$-nearest neighbourhoods $\tilde N_k(u)$ for all vertices $u\in V$ in $G'$ using another $O(1)$ rounds. Lastly, we apply \Cref{lemma-central-construct}, which constructs an $O(1/c)$-approximate SSSP oracle for $G'$ in $O(\log a + \log\log n) = O(\log\log n)$ rounds. In total, we require $\tilde O(|E|+n^{1+c})$ memory across all steps. 
    
    Since $G'$ is an $O(2/c)$-spanner for $G$, an $O(1/c)$-approximate SSSP oracle for $G'$ directly gives an $O(1/c^2)$-approximate SSSP oracle for $G$, and we are done.
\end{proof}

\subsection{The Heterogeneous MPC Model}\label{subsec:Heterogeneous}
Similar to the near-linear MPC regime, in this section we propose a fast method for constructing approximate $k$-nearest neighbourhoods, which together with \Cref{lemma-central-construct}, gives the final oracle construction. However, instead of constructing an $O(\log n)$-spanner that is distributed to all machines, we instead construct a $(2/\gamma)^{O(\log\log n)}$-stretch $O(\log\log n)$-hopbound emulator, which we show can be done in $O(\log^2\log n)$ rounds. On this emulator we apply \Cref{lemma:exact-k-nearest} to obtain $(2/\gamma)^{O(\log\log n)}$-approximate $k$-nearest neighbourhoods. We will show that this approximation ratio is sufficient for us to apply \Cref{lemma-central-construct} and construct the SSSP oracle using $O(\log^2\log n)$ rounds overall.

\paragraph{High Level Idea.} We iteratively construct a series of skeleton graphs $G=G_0, G_1, G_2, \ldots$ until the final skeleton graph $G_{\ell}$ is small enough that all-pairs distances can be determined entirely locally. The edges of our emulator $H$ then consist of the following:
\begin{itemize}
    \item The $k_i$-nearest edge sets $E_{k_i}$ taken from each graph $G_i$.
    \item The all-pairs distances taken from the final graph $G_{\ell}$.
\end{itemize} 
By choosing an appropriate sequence of polynomially increasing $\set{k_i}$, we show that each iteration requires only $O(\log\log n)$ rounds and that we require at most $\ell\in O(\log\log n)$ iterations. Consequently, we show that $H$ is a $(2/\gamma)^{O(\log\log n)}$-stretch, $O(\log\log n)$-hopbound emulator for the input graph $G$ which can be constructed in $O(\log^2\log n)$ rounds.    

Before stating the lemma, there are some minor implementation details to take note of:
\begin{itemize}
    \item We sparsify the skeleton graph $G_i$ into a spanner $G_i'$ at the start of each iteration. This is to ensure that the number of edges in $G_i'$ is well-bounded. As long as the spanner also maintains a constant approximation of distances, this will not affect the stretch analysis of the algorithm up-to a constant factor (in the exponent of the stretch term).

    \item The sequence $\set{k_i}$ increases polynomially until it reaches a small polynomial in $n$. Namely, once $k\in n^{O(\gamma)}$, we maintain $k$ at this value until the termination condition. This is to ensure we never exceed the local computation capabilities of the machines. 
\end{itemize}

Following from the above discussion, we will now prove the following lemma. 
\begin{lemma}\label{lemma:hetero-good-enough-hopset}
    Let $G=(V,E)$ be a connected, weighted graph with $n$ vertices. In the heterogeneous MPC model with $\tilde O(|E| + n^{1+\gamma})$ total memory, there exists a randomised algorithm that w.h.p.~constructs a $(2/\gamma)^{O(\log\log n)}$-stretch, $O(\log\log n)$-hopbound emulator $H$ in $O(\log^2\log n)$ rounds.     
\end{lemma}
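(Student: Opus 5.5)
We build the emulator by iterating the skeleton graph construction, as in the technical overview. Set $G_0 := G$, $k_0 := \log n$ (or a suitable polylog) and $H := \emptyset$. In iteration $i$ we hold a connected graph $G_i=(V_i,E_i)$ with polynomially bounded weights and the invariant $k_i\cdot|V_i|\in \tilde O(n)$. The iteration does the following.

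\begin{enumerate}
\item Replace $G_i$ by an $O(2/\gamma)$-spanner $G_i'$ with $\tilde O(|V_i|^{1+\gamma/2})$ edges, using \Cref{lemma:near-linear-spanner}.
\item Build an $O(\log n)$-spanner of $G_i'$ of size $\tilde O(n)$, collect it on $M^*$, and let $M^*$ locally compute $k_i$-nearest neighbourhoods in it. This fits in memory because $k_i|V_i|\in\tilde O(n)$, and it yields $a=O(\log n)$-approximate $k_i$-nearest neighbourhoods of $G_i'$.
\item Apply \Cref{lemma:E_k} to obtain $E_{k_i}$ of $G_i'$ in $O(\log\log n)$ rounds. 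This uses that the weighted diameter stays $\poly(n)$; since by the \emph{small edge weights} property weights grow by a factor $3k_i$ per level, I would need to check that the product over levels stays $n^{O(1)}$, which is plausible since $\prod k_i$ is dominated by the final constant number of levels with $k=n^{\Theta(\gamma)}$.
\item Add $E_{k_i}$ to $H$.
\item Let $G_{i+1}$ be the skeleton graph of $G_i'$ from \Cref{lemma:skeleton-implement}, and set $k_{i+1}:=\min(k_i^2, n^{\gamma/4})$.
\end{enumerate}

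Stop once $|V_\ell|\le\sqrt n$ (or $\tilde O(n^{1/2})$). Then a spanner of $G_\ell$ fits on $M^*$ and we add all pairwise exact distances of $G_\ell$ to $H$ (at most $n$ pairs; exact distances of $G_\ell$ can be computed locally since $|E_\ell|\le |V_\ell|^2\le n$).

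For the counting, $|V_{i+1}|\le C_S|V_i|\log k_i/k_i$, so while $k_i$ is squaring we get $|V_{i+1}|\,k_{i+1}\approx |V_i|\,k_i\cdot C_S\log k_i$. This keeps the invariant up to a polylog factor accumulating over $O(\log\log n)$ levels. I would fix this by choosing $k_{i+1}=k_i^2/(C_S\log k_i)$ or by absorbing the factor into $\tilde O$; this bookkeeping is the fiddly part. $k_i$ reaches $n^{\gamma/4}$ after $O(\log\log n)$ squarings, after which each level shrinks $|V_i|$ by $n^{\Omega(\gamma)}$, so $O(1/\gamma)=O(1)$ more levels suffice. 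Thus $\ell=O(\log\log n)$ and the total rounds are $O(\log^2\log n)$. For memory, \Cref{lemma:skeleton-implement} needs $\tilde O(|E_i'|k_i^2)\subseteq\tilde O(|V_i|^{1+\gamma/2}k_i^2)$. Using $|V_i|k_i\le\tilde O(n)$ and $k_i\le n^{\gamma/4}$ this is $\tilde O(n^{1+\gamma})$, and $k_i^2\le n^{\gamma/2}\in O(n^\gamma)$ as required.

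For stretch and hopbound, I argue by induction from the last level upward. Define $t_i,\beta_i$ such that $H_i:=\bigcup_{j\ge i}E_{k_j}\cup(\text{APSP of }G_\ell)$ is a $t_i$-stretch, $\beta_i$-hop emulator for $G_i$, with $t_\ell=1$ and $\beta_\ell=1$. Fix $u,v\in V_i$. If $v\in N_{k_i}(u)$ or $u\in N_{k_i}(v)$ in $G_i'$, the single edge of $E_{k_i}$ gives length $d_{G_i'}(u,v)\le O(1/\gamma)d_{G_i}(u,v)$. Otherwise the skeleton property gives a path $u\to u^*\rightsquigarrow v^*\to v$ of length at most $7d_{G_i'}(u,v)$, with end edges in $E_{k_i}$ and middle in $G_{i+1}$. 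Replacing the middle by the $\beta_{i+1}$-hop $H_{i+1}$-path of length at most $t_{i+1}d_{G_{i+1}}(u^*,v^*)$ yields $t_i\le 7\cdot O(2/\gamma)\cdot t_{i+1}$ and $\beta_i\le\beta_{i+1}+2$. Lower bounds hold because $E_{k_j}$ edges carry true $G_j'$-distances, and by no-shrinking of the skeleton and spanners these dominate $G$-distances, so $H$ never shortens. Unrolling over $\ell=O(\log\log n)$ levels gives stretch $(2/\gamma)^{O(\log\log n)}$ and hopbound $O(\log\log n)$.

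The main obstacle I expect is keeping the invariants ($k_i|V_i|\in\tilde O(n)$, the diameter bound $\poly(n)$, and the memory $\tilde O(n^{1+\gamma})$) simultaneously valid across all levels, given the $\log k$ loss in skeleton size and the weight blow-up. I would handle it with explicit recursions for $|V_i|$, $k_i$ and the weight bound.
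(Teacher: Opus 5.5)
Your plan is the paper's proof. It has the same tower of skeleton graphs with per-level sparsification, an $O(\log n)$-spanner on $M^*$ for approximate neighbourhoods, and \Cref{lemma:E_k} to obtain $E_{k_i}$. It also has the same termination once $|V_\ell|^2\in O(n)$ with local APSP, and the same top-down induction giving stretch $[7C(2/\gamma)]^{\ell-i}$ with at most $1+2(\ell-i)$ hops.

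One point in your bookkeeping needs correcting: ``absorbing the factor into $\tilde O$'' does not work. With $k_{i+1}=k_i^2$ and $k_0=\log n$ you have $\log k_i=2^i\log\log n$. The accumulated loss $\prod_{i<\ell^*}C_S\log k_i$ is therefore $2^{\Theta((\ell^*)^2)}=2^{\Theta(\log^2\log n)}$, which is superpolylogarithmic. The invariant $k_i|V_i|\in\tilde O(n)$ is then lost, and Step 2 no longer fits on $M^*$.

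Your other option, $k_{i+1}=\min(k_i^2/(C_S\log k_i),n^{\gamma/4})$, is the one to use. It gives $k_{i+1}|V_{i+1}|\le k_i|V_i|\le n\log n$ exactly. Since $\log k_{i+1}\ge 1.99\log k_i$ for large $n$, it behaves like the paper's schedule $k_i=\min(k_{i-1}^{1.99}/C_S,n^{\gamma/4})$, where the assumption $\log k\le k^{0.01}$ absorbs the $\log k$ loss.

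This geometric growth of $\log k_i$ also settles the weight bound you left as plausible. Backward induction from $k_{\ell^*-1}\le n^{\gamma/4}$ gives $\prod_{i<\ell^*}3k_i\le 3^{\ell^*}n^{O(\gamma)}$, because the exponents form a convergent geometric series. The $O(1/\gamma)$ capped levels contribute at most a further $(3n^{\gamma/4})^{O(1/\gamma)}=n^{O(1)}$. Hence every $G_i'$ has $\poly(n)$ weighted diameter, and each call to \Cref{lemma:E_k} takes $O(\log\log n)$ rounds.
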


We first state the algorithm description in \Cref{section:good-enough-hopset-algo} before showing the correctness of the algorithm in \Cref{section:good-enough-hopset-correctness}.

\subsubsection{Algorithm Description}\label{section:good-enough-hopset-algo}
As input, we are given the weighted $n$-vertex graph $G=(V,E)$, whose edge weights are initially polynomially-bounded. We begin with $G_0=G$ and an empty edge set $H=\emptyset$. We first define the sequence $\set{k_i}$ for all integers $i\in \set{0,1,2,\ldots}$ recursively as follows: 
\begin{align*}
    k_0 &= \log n, \\
    k_i &= \min\left[\frac{k_{i-1}^{1.99}}{C_S},n^{\gamma/4}\right],
\end{align*}
where $C_S$ is the constant defined in \Cref{definition:skeleton-graph}.

During iteration $i\in\set{0,1,2\ldots}$, we start with the graph $G_i=(V_i,E_i)$ and parameter $k_i$. Assume that $k_i\cdot |V_i|\in O(n\log n)$; we justify this assumption in \Cref{lemma:hetero-kV-bound}. We also assume that $G_i$ is connected; this is justified in \Cref{lemma:hetero-connectivity}. Lastly, we may assume that $|V_i|\leq n$; this is clear in the following algorithm description. The edges of $G_i$ are stored arbitrarily across all the machines. 

\begin{description}
    \item[Step 1: Sparsify $G_i$.] We first sparsify $G_i$ into a spanner graph $G_i'=(V_i,E_i')$. Namely, we apply \Cref{lemma:near-linear-spanner} on $G_i$ with $t=2/\gamma$ to obtain a $C(2/\gamma)$-spanner of size $|E_i'|\in \tilde O(n^{1+\gamma/2})$ in $O(1)$ rounds, where $C$ is a suitable constant. In all subsequent steps, we will work directly with the graph $G_i'$ rather than $G_i$. 
    
    \item [Step 2: Get Approximate $k_i$-Nearest.] Our goal for this step is to obtain $O(\log n)$-approximate $k_i$-nearest neighbourhoods for all vertices in $V_i.$ We accomplish this in the following manner:
    \begin{enumerate}
        \item We first apply \Cref{lemma:near-linear-spanner} on $G_i'$ with $t=\log |V_i|$ to obtain an $O(\log |V_i|)\subseteq O(\log n)$-spanner $S$ of size $\tilde O(|V_i|)\subseteq \tilde O(n)$ in $O(1)$ rounds. 
        \item Since the size of $S$ is $\tilde O(n)$, we can collect the spanner $S$ onto the single near-linear memory machine $M^*$.
        \item Locally, $M^*$ constructs the $k_i$-nearest neighbourhoods $N_{k_i}(u)$ for every vertex $u\in V_i$ \emph{in the spanner $S$} -- these suffice as $O(\log n)$-approximate $k_i$-nearest neighbourhoods \emph{in the graph $G_i'$}. By our previous assumption, we have $k_i\cdot |V_i|\in O(n\log n)$, and this therefore fits within the local memory constraints of $M^*$. 
    \end{enumerate}
    At the end of this step, we now have $O(\log n)$-approximate $k_i$-nearest neighbourhoods $\tilde N_{k_i}(u)$ for all vertices $u\in V_i$. This step runs in $O(1)$ rounds.
    
    \item [Step 3: Get Exact $k_i$-Nearest.] Now that we have constructed $a=O(\log n)$-approximate $k_i$-nearest neighbourhoods, we can find the exact $k_i$-nearest neighbourhoods for all vertices in $V_i$. Namely, since $k_i^2\leq n^{\gamma/2}\leq n^\gamma$, we can apply \Cref{lemma:E_k} on $G_i'$ with parameter $k=k_i$. This constructs the $k_i$-nearest edge set $E_{k_i}$ for $G_i'$. 

    Denoting $d_i$ as the weighted diameter of $G_i'$, this step runs in $O(\log a + \log\log d_i)\subseteq O(\log\log n+ \log\log d_i)$ rounds. This step requires $\tilde O(|E_i'|+nk_i^2)=\tilde O(n^{1+\gamma/2})$ total memory.

    \item [Step 4: Add Edges to $H$.] We add the edge set $E_{k_i}$ to $H$. That is, $H:=H\cup E_{k_i}$.

    \item [Step 5: Construct Skeleton Graph.] We apply \Cref{lemma:skeleton-implement} on $G_i'$ (with $k = k_i$) to obtain the skeleton graph $G_S = (V_S, E_S)$. We set $G_{i+1} := G_S$ and proceed to the next iteration. This step runs in $O(1)$ rounds. This step requires $\tilde O(|E_i'|k_i^2)\subseteq\tilde O(n^{1+\gamma/2}\cdot n^{\gamma/2})=\tilde O(n^{1+\gamma})$ total memory. 
\end{description}

\paragraph{Termination.} We terminate at iteration $\ell$ once the graph $G_{\ell}=(V_{\ell}, E_{\ell})$ is sufficiently small such that $|V_{\ell}|^2\in O(n)$. We collect $G_{\ell}$ onto the single near-linear memory machine $M^*$, which determines all-pairs distances for vertices in $G_{\ell}$. An edge for every pair of vertices in $V_\ell$ is then added into the edge set $H$ with the shortest-path distances as the edge weights. (i.e.,~$H:=H\cup \set{\set{u,v}:u,v\in V_\ell}$ with edge weights $d_{G_\ell}(u,v)$). This concludes the construction, and the final edge set $H$ is our emulator.

\subsubsection{Correctness}\label{section:good-enough-hopset-correctness}
We first justify our claims that $k_i|V_i|=O(n\log n)$ and $G_i$ is connected for all iterations $i$. We may assume that $n$ is sufficiently large such that $k_i^{0.01} \geq \log k_i$ always.
\begin{lemma}\label{lemma:hetero-kV-bound}
    For all $i\in\set{0,1,2,\ldots}$, we have $k_i|V_i|\leq n\log n$.
\end{lemma}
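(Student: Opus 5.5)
We argue by induction on $i$. For the base case $i=0$ we have $G_0=G$, so $|V_0|=n$ and $k_0=\log n$, giving $k_0|V_0|=n\log n$ exactly.

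For the inductive step, assume $k_i|V_i|\le n\log n$. Since $G_{i+1}$ is the skeleton graph of $G_i'$ built with parameter $k_i$ (Step 5), and $G_i'$ has vertex set $V_i$, \Cref{definition:skeleton-graph} gives
\[
|V_{i+1}| \le \frac{C_S |V_i|\log k_i}{k_i}.
\]
We split into two cases according to which term attains the minimum in the definition of $k_{i+1}$.

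\textbf{Case 1: $k_{i+1}=k_i^{1.99}/C_S$.} Then
\[
k_{i+1}|V_{i+1}| \le \frac{k_i^{1.99}}{C_S}\cdot\frac{C_S|V_i|\log k_i}{k_i} = k_i^{0.99}\log k_i\,|V_i|.
\]
By the standing assumption $\log k_i\le k_i^{0.01}$, this is at most $k_i|V_i|\le n\log n$.

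\textbf{Case 2: $k_{i+1}=n^{\gamma/4}\le k_i^{1.99}/C_S$.} Here $k_{i+1}$ is at most the Case 1 value, so the same chain of inequalities applies verbatim: $k_{i+1}|V_{i+1}|\le (k_i^{1.99}/C_S)|V_{i+1}|\le n\log n$.

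The only delicate point is the assumption $\log k_i\le k_i^{0.01}$. This holds once $k_i$ exceeds a fixed constant, and since $k_i\ge k_0=\log n$ for all $i$, it holds for $n$ sufficiently large. Monotonicity of $k_i$ follows inductively from $k_0^{0.99}\ge C_S$ (for large $n$) together with the cap $n^{\gamma/4}\ge\log n$.

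The remaining care needed is that the skeleton-graph size bound of \Cref{definition:skeleton-graph} holds only w.h.p., through \Cref{lemma:skeleton-implement}. It must hold in every one of the polynomially many (indeed $O(\log\log n)$) iterations, and a union bound over the iterations secures this. Under that event, the induction above goes through for every $i$, which proves $k_i|V_i|\le n\log n$.
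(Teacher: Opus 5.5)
Your proof is correct and is essentially the paper's argument: induction with base case $k_0|V_0|=n\log n$, then combining $|V_{i+1}|\le C_S|V_i|\log k_i/k_i$ with $k_{i+1}\le k_i^{1.99}/C_S$ and $\log k_i\le k_i^{0.01}$. The case split on which term attains the minimum is unnecessary, since the paper uses $k_{i+1}\le k_i^{1.99}/C_S$ directly. Your remarks on the monotonicity of $k_i$ and the union bound over iterations spell out points the paper leaves implicit.
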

\begin{proof}
    We prove by induction on $i$. For the base case $i=0$, we have $k_0=\log n$ and $|V_0|=n$, and therefore $k_0|V_0|=n\log n$ as desired. Proceeding to the inductive case: Suppose the claim holds for iteration $i=j$. Recalling Step 5 of the algorithm and the skeleton graph definition in \Cref{definition:skeleton-graph}, we observe that $|V_{j+1}|\leq(C_S|V_j|\log k_j)/k_j$. Additionally, we have that $k_{j+1}\leq k_j^{1.99}/C_S$ by construction. Therefore, applying our induction hypothesis,
    \[
    k_{j+1}|V_{j+1}| \leq \frac{k_j^{1.99}}{C_S}\left(\frac{C_S|V_j|\log k_j}{k_j}\right)\leq k_j^{1.99}\left(\frac{|V_j|k^{0.01}_j}{k_j}\right)=k_j|V_j|\leq n\log n.\qedhere
    \]
\end{proof}

\begin{lemma}\label{lemma:hetero-connectivity}
    For all $i\in \set{0,1,2,\ldots}$, the graph $G_i$ is connected.
\end{lemma}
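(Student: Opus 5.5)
We argue by induction on $i$. For the base case, $G_0=G$ is connected by the standing assumption of the lemma, which is justified through \Cref{lemma:connectivity-reduction}. For the inductive step, suppose $G_i$ is connected. The graph $G_{i+1}$ is by construction the skeleton graph $G_S$ built in Step 5 of iteration $i$, obtained by applying \Cref{lemma:skeleton-implement} to $G_i'$ rather than to $G_i$. It therefore suffices to establish two facts:
\begin{enumerate}
\item $G_i'$ is connected;
\item the skeleton graph of a connected graph is connected.
\end{enumerate}

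For the first fact, Step 1 produces $G_i'$ as a $C(2/\gamma)$-spanner of $G_i$ via \Cref{lemma:near-linear-spanner}. By the definition of a spanner, for every $u,v\in V_i$ we have $d_{G_i'}(u,v)\le C(2/\gamma)\cdot d_{G_i}(u,v)$. Since $G_i$ is connected, $d_{G_i}(u,v)<\infty$, so $d_{G_i'}(u,v)<\infty$ as well. Hence every pair of vertices is joined by a path in $G_i'$, and $G_i'$ is connected.

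For the second fact, \Cref{lemma:skeleton-implement} guarantees (w.h.p.) that its output satisfies \Cref{definition:skeleton-graph}. The Connectivity condition of that definition states exactly that the skeleton graph of a connected input graph is connected. Applied to the connected graph $G_i'$, this gives that $G_{i+1}=G_S$ is connected, which completes the induction.

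The only subtle point is that the skeleton-graph guarantee is probabilistic. Since there are only $O(\log\log n)$ iterations, a union bound over all of them preserves the high-probability guarantee, and the claim holds for every $i$ simultaneously w.h.p. No step presents a real obstacle: the argument is a direct chaining of the spanner property with the Connectivity clause of \Cref{definition:skeleton-graph}. The substantive work, namely showing that the skeleton construction really preserves connectivity, is deferred to the proof of \Cref{lemma:skeleton-implement} and may be assumed here.
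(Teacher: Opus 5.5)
Your proof is correct and takes the same route as the paper: induction on $i$, where the spanner $G_i'$ inherits connectivity from $G_i$, and the Connectivity clause of \Cref{definition:skeleton-graph} then makes $G_{i+1}$ connected. Your remark about a union bound over the $O(\log\log n)$ high-probability spanner and skeleton constructions is a harmless refinement that the paper leaves implicit.
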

\begin{proof}
    This is proved straightforwardly using induction. In the base case, we have that $G_0$ is connected by the input assumptions. In the inductive case, supposing $G_j$ is connected for some $j \geq 0$, the properties of spanners and skeleton graphs (\Cref{definition:skeleton-graph}) ensure that $G_j'$ is connected, and therefore $G_{j+1}$ is also connected.
\end{proof}

\paragraph{Organisation.} For the remainder of the section on correctness, it is convenient to describe the algorithm in terms of phases. In particular, we define the \emph{first phase} and \emph{second phase} as follows:
\begin{description}
    \item[First Phase:] This phase consists of the iterations where $k_i < n^{\gamma/4}$.
    \item [Second Phase:] This phase consists of the remaining iterations where $k_i = n^{\gamma/4}$, up until iteration $i=\ell$ when the termination condition is satisfied. In particular, denote $i=\ell^*$ as the first iteration for which $k_{\ell^*}=n^{\gamma/4}$.
\end{description}
We present the proof of correctness in three parts:
\begin{itemize}
    \item We first show that the algorithm terminates in $O(\log\log n)$ iterations. In particular, we will show that the first phase requires $O(\log\log n)$ iterations while the second phase only requires $O(1/\gamma)$ iterations.
    \item We next show that each iteration runs in $O(\log\log n)$ rounds. The crux of the argument is to show that for both the first and second phases, the edge weights of all graphs $G_i$ are bounded by $\poly(n)$. 
    \item Lastly, we show that $H$ is indeed a $(2/\gamma)^{O(\log\log n)}$-stretch and $O(\log\log n)$-hopbound emulator for the input graph $G$.
\end{itemize}
For convenience and ease of presentation, we assume that the constant $C_S$ defined in \Cref{definition:skeleton-graph} is at least unity, i.e.,~$C_S\geq 1$; the case in which $C_S < 1$ is argued similarly. 

\paragraph{Number of Iterations.} We first argue that the first phase requires only $O(\log\log n)$ iterations.
\begin{lemma}\label{lemma:hetero-phase1-iteration}
    $\ell^*\in O(\log\log n)$. 
\end{lemma}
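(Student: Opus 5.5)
We need to show that the first index $\ell^*$ with $k_{\ell^*}=n^{\gamma/4}$ is $O(\log\log n)$. The plan is to track $\log k_i$ and show it grows geometrically, with ratio bounded away from $1$, until the cap $n^{\gamma/4}$ is reached. As long as the cap is not active, the recursion is $k_i = k_{i-1}^{1.99}/C_S$. Taking logarithms gives
\[
\log k_i = 1.99\log k_{i-1} - \log C_S .
\]
The constant $C_S\ge 1$ is subtracted, which is the only obstruction to pure doubling-type growth.

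The key step is to absorb this additive loss. Since $k_0=\log n$, we have $\log k_0=\log\log n$, which for sufficiently large $n$ exceeds $\log C_S/0.49$, i.e.\ $\log C_S \le 0.49\log k_0$. I would show by induction that $\log k_i\ge 1.5^i\log k_0$ for every $i$ before the cap is hit. If $\log k_{i-1}\ge \log k_0$ (which holds inductively, since the sequence is increasing), then $\log C_S\le 0.49\log k_{i-1}$, so
\[
\log k_i \ge 1.5\log k_{i-1}.
\]
Monotonicity comes from the same inequality, so the induction closes.

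Next, compare with the cap. We have $k_i<n^{\gamma/4}$ throughout the first phase, which means $\log k_i<(\gamma/4)\log n$. Combining this with $\log k_i\ge 1.5^i\log\log n$ gives
\[
1.5^i < \frac{\gamma\log n}{4\log\log n}\le \log n .
\]
Hence $i<\log_{1.5}\log n$. So every first-phase index is $O(\log\log n)$, and $\ell^*$ is at most one more than the last first-phase index, giving $\ell^*\in O(\log\log n)$.

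Two points need care. First, once $k_{i-1}^{1.99}/C_S\ge n^{\gamma/4}$, the min in the recursion sets $k_i=n^{\gamma/4}$ exactly, so the cap is actually attained and $\ell^*$ is well defined. Second, the definition of the second phase freezes $k$ at the cap thereafter, consistent with the recursion, since the min then stays at $n^{\gamma/4}$. The only mildly delicate step is handling the $-\log C_S$ term. This is resolved by the "sufficiently large $n$" assumption already used in the paper, in the same spirit as $k_i^{0.01}\ge\log k_i$.
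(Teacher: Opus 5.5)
Your proof is correct and follows essentially the same route as the paper: both show that $k_i$ grows doubly exponentially from $k_0=\log n$ and compare this with the cap $n^{\gamma/4}$, using a ``sufficiently large $n$'' assumption to handle $C_S$. The paper unrolls the recursion into the closed form $k_i=C_S^{-(p^i-1)/(p-1)}(\log n)^{p^i}\ge\bigl[C_S^{-1/(p-1)}\log n\bigr]^{p^i}$ and absorbs $C_S$ into the base, whereas you work with $\log k_i$ inductively and absorb $C_S$ by lowering the growth ratio from $1.99$ to $1.5$.
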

\begin{proof}
    Let $p=1.99$. By the definition of $k_i$, observe that for all iterations $i < \ell^*$, we have 
    $$
        k_i =\left(\frac 1{C_S}\right)^{\frac{p^i-1}{p-1}}\cdot (\log n)^{p^i} \geq \left[\left(\frac 1{C_S}\right)^{\frac1{p-1}}\cdot(\log n)\right]^{p^i},
    $$
    using the fact that $C_S \geq 1$. Therefore, after
    $$
    i_{min}= \frac{1}{\log p}\cdot \log\left[\frac{\gamma\log n}{4\log(C_S^{1/(1-p)}\log n)}\right]\in O(\log\log n)
    $$
    iterations, we necessarily have that $k_{i_{min}} \geq n^{\gamma/4}$. It follows that $\ell^* \in O(\log\log n)$ as claimed.
\end{proof}

Proceeding to the second phase, observe that the number of vertices reduces roughly by a factor of $n^{\gamma/4}$ between iterations. Accounting for the additional log factor, we first provide an upper bound on the number of vertices $V_i$. Subsequently, we show that this phase requires only $O(1/\gamma)$ iterations. 

\begin{lemma}\label{lemma:hetero-phase2-size}
    For all $i\in \set{0,1,2\ldots,\ell-\ell^*}$, we have that
    $$
        |V_{i+\ell^*}| \leq C_S^i\left(\frac \gamma4\right)^i\cdot n^{1-(i\gamma/4)}\log ^in,
    $$
    where $C_S$ is the constant as defined in \Cref{definition:skeleton-graph}.
\end{lemma}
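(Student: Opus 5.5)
We argue by induction on $i$, using only the vertex-count guarantee of the skeleton graph (\Cref{definition:skeleton-graph}) together with the fact that $k_{j}=n^{\gamma/4}$ for every iteration $j\ge\ell^*$.

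\textbf{Base case $i=0$.} We need $|V_{\ell^*}|\le n$. This holds because every iteration only shrinks (or preserves) the vertex set: $V_{j+1}=V_S\subseteq V_j'=V_j$, so $|V_{\ell^*}|\le|V_0|=n$. Note that Step 1 replaces $G_j$ by a spanner on the same vertex set, so it does not change the vertex count.

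\textbf{Inductive step.} Assume the bound holds for some $i$ with $i+1\le \ell-\ell^*$. Iteration $i+\ell^*$ lies in the second phase, so $k_{i+\ell^*}=n^{\gamma/4}$. Step 5 applies \Cref{lemma:skeleton-implement} to $G'_{i+\ell^*}$, which has $|V_{i+\ell^*}|$ vertices. By \Cref{definition:skeleton-graph},
\[
|V_{i+1+\ell^*}|\le \frac{C_S\,|V_{i+\ell^*}|\log k_{i+\ell^*}}{k_{i+\ell^*}}
= C_S\cdot\frac{\gamma}{4}\log n\cdot n^{-\gamma/4}\cdot|V_{i+\ell^*}|,
\]
using $\log(n^{\gamma/4})=(\gamma/4)\log n$. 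Substituting the inductive hypothesis multiplies the bound by exactly one factor each of $C_S$, $\gamma/4$, $\log n$ and $n^{-\gamma/4}$. This gives $C_S^{i+1}(\gamma/4)^{i+1}n^{1-((i+1)\gamma/4)}\log^{i+1}n$, as required.

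\textbf{Points to check.} The only delicate point is bookkeeping. The skeleton-graph bound must be applied with the number of vertices of the graph it is actually built on, namely $G'_{i+\ell^*}$. Since a spanner shares the vertex set, this number equals $|V_{i+\ell^*}|$. We must also confirm that every iteration index $i+\ell^*$ with $i<\ell-\ell^*$ indeed has $k=n^{\gamma/4}$. This follows because the recursion for $k_i$ is monotone: once $k$ reaches the cap it stays there, since $(n^{\gamma/4})^{1.99}/C_S\ge n^{\gamma/4}$ for large $n$.
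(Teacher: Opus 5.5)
Your proposal is correct and matches the paper's proof: induction on $i$ with base case $|V_{\ell^*}|\le n$, and an inductive step that applies the skeleton-graph vertex bound with $k_{j+\ell^*}=n^{\gamma/4}$ and uses $\log(n^{\gamma/4})=(\gamma/4)\log n$. Your extra bookkeeping is sound and only makes explicit what the paper leaves implicit: Step 1's spanner keeps the vertex set, so the bound applies with $|V_{i+\ell^*}|$, and the cap on $k$ persists once reached.
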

\begin{proof}
    We prove by induction on $i$. In the base case of $i=0$, we have $|V_{\ell^*}| \leq n$ as expected. Consider now the inductive case: Suppose the claim holds for $i=j$. Recalling the skeleton graph definition in \Cref{definition:skeleton-graph}, we observe that $|V_{j+1+\ell^*}|\leq(C_S\cdot |V_{j+\ell^*}|/k_{j+\ell^*})\log k_{j+\ell^*}$. Since we are in the second phase, we also have $k_{j+\ell^*}=n^{\gamma/4}$. This allows us to invoke our induction hypothesis to obtain 
    \begin{align*}
        |V_{j+1+\ell^*}| &\leq C_S\left[\frac{|V_{j+\ell^*}|\log (n^{\gamma/4})}{n^{\gamma/4}}\right]\\
        &\leq C_S \left[\frac{\log (n^{\gamma/4})}{n^{\gamma/4}}\right]\cdot \left[C_S^j\left(\frac \gamma4\right)^j\cdot n^{1-(j\gamma/4)}\log ^jn\right]\\
        &= C_S^{j+1}\left(\frac \gamma4\right)^{j+1}\cdot n^{1-(j+1)\gamma/4}\log ^{j+1}n 
    \end{align*}
    as claimed.
\end{proof}

We are now ready to bound the number of iterations in the second phase. 
\begin{lemma}\label{lemma:hetero-phase2-iteration}
    After $i\in O(1/\gamma)$ iterations in the second phase, we have $|V_{i+\ell^*}|\in O(n^{1/2})$. 
\end{lemma}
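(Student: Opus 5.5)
We will plug a suitable $i$ directly into the bound of \Cref{lemma:hetero-phase2-size}. Choose $i^* := \lceil 2/\gamma\rceil + 1$, which is $O(1/\gamma)$. First we must check that the bound can actually be applied at this index. The lemma holds for all $i\le \ell-\ell^*$. If the algorithm has already terminated before iteration $\ell^*+i^*$, then some earlier graph $G_\ell$ satisfies $|V_\ell|^2\in O(n)$. In that case the claim holds at the termination index, which is at most $\ell^*+i^*$, and this is all the iteration count needs. So we may assume $i^*\le \ell-\ell^*$ and apply the lemma directly.

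Substituting $i=i^*$ gives
\[
|V_{i^*+\ell^*}| \le \left(\frac{C_S\gamma}{4}\right)^{i^*} n^{1-i^*\gamma/4}\log^{i^*} n .
\]
Since $i^*\gamma/4 \ge 1/2+\gamma/4$, the power of $n$ is at most $n^{1/2-\gamma/4}$.

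The prefactor $(C_S\gamma/4)^{i^*}$ is a constant, because $\gamma$ is a constant and $i^*$ depends only on $\gamma$. The factor $\log^{i^*} n$ is a polylogarithm of constant degree, so $\log^{i^*} n \le n^{\gamma/4}$ for all sufficiently large $n$. Combining these, $|V_{i^*+\ell^*}| \le O(1)\cdot n^{1/2-\gamma/4}\cdot n^{\gamma/4} = O(n^{1/2})$.

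The main subtlety is exactly this absorption of the $\log^{i}n$ factor. It is the reason for taking one extra iteration beyond $2/\gamma$, so that a slack of $n^{-\gamma/4}$ is available to swallow the polylogarithm. The remaining step is to note that the termination condition $|V_\ell|^2\in O(n)$ is then met. Hence the second phase lasts at most $i^*=O(1/\gamma)$ iterations.
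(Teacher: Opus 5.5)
Your proof is correct and follows the paper's argument: plug a constant, $\gamma$-dependent index into \Cref{lemma:hetero-phase2-size} and absorb the $C_S^i(\gamma/4)^i\log^i n$ prefactor into a spare power of $n$. The paper absorbs it as $n^{\eps i}$ with $\eps=\gamma/8$, which needs roughly $4/\gamma$ iterations, whereas you spend one extra iteration to get an $n^{\gamma/4}$ slack; you also handle the case $i^*>\ell-\ell^*$ explicitly, which the paper leaves implicit.
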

\begin{proof}
    Observe that the terms $(\gamma/4)^i$ and $\log^in$ (in \Cref{lemma:hetero-phase2-size}) are both asymptotically smaller than $n^{\eps \cdot i}$, for \emph{any} constant $\eps > 0$. More concretely, we have 
    $$
        O(C_S^i\cdot (\gamma/4)^i\cdot \log^in)\subseteq O(n^{\eps \cdot i})
    $$
    for any $\epsilon > 0$. Therefore, we may rewrite the bound in \Cref{lemma:hetero-phase2-size} as 
    $$
        |V_{i+\ell^*}| \in O(n^{1-(\gamma/4-\eps)\cdot i}).
    $$
    Fixing say, $\eps=\gamma/8$, we get that after $i\in O(1/[2(\gamma/4-\eps)])=O(1/\gamma)$ iterations in the second phase, we necessarily have $|V_{i+\ell^*}|\in O(n^{1/2})$. 
\end{proof}

We are now ready to show that the entire algorithm terminates in $O(\log\log n + 1/\gamma)$ iterations. Namely, since the first phase requires $O(\log\log n)$ iterations (\Cref{lemma:hetero-phase1-iteration}) and the second phase requires $O(1/\gamma)$ iterations (\Cref{lemma:hetero-phase2-iteration}), we get the desired result.
\begin{lemma}\label{lemma:hetero-iterations}
    The entire algorithm runs in $O(\log\log n)$ iterations.
\end{lemma}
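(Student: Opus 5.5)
The plan is to split the total iteration count as $\ell = \ell^* + (\ell - \ell^*)$ and bound the two phases separately, using the lemmas just established. \Cref{lemma:hetero-phase1-iteration} already gives $\ell^*\in O(\log\log n)$ for the first phase. So the work is to show that the second phase lasts $O(1/\gamma)$ iterations and that this is $O(1)$ because $\gamma$ is a constant.

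For the second phase, I will invoke \Cref{lemma:hetero-phase2-iteration}. It says that after some $i_0\in O(1/\gamma)$ iterations with $k=n^{\gamma/4}$, we have $|V_{i_0+\ell^*}|\le C' n^{1/2}$ for a constant $C'$. Then $|V_{i_0+\ell^*}|^2\le C'^2 n\in O(n)$, so the termination condition $|V_\ell|^2\in O(n)$ is met by iteration $i_0+\ell^*$. Hence $\ell\le \ell^*+i_0$.

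To make this rigorous, I will fix the hidden constant in the termination test, say terminate once $|V_\ell|^2\le C'' n$ for a suitably large constant $C''$. I must also check that termination cannot be forced to happen later: the algorithm stops at the first iteration where the condition holds, so $\ell$ is at most this bound. If termination happens already during the first phase, then $\ell\le\ell^*$ and the bound holds trivially.

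Combining the two phases gives $\ell\in O(\log\log n + 1/\gamma) = O(\log\log n)$, since $\gamma\in(0,1]$ is a constant. The only delicate point is the one just handled: the constant implicit in ``$|V_\ell|^2\in O(n)$'' must be chosen compatibly with the constant in \Cref{lemma:hetero-phase2-iteration}. Everything else is direct bookkeeping on the previous lemmas.
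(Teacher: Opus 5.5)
Your proposal is correct and takes the same route as the paper: bound the first phase by \Cref{lemma:hetero-phase1-iteration}, bound the second phase by $O(1/\gamma)$ via \Cref{lemma:hetero-phase2-iteration}, and absorb that term since $\gamma$ is a constant. Fixing the constant in the termination test and handling early termination during the first phase tightens points the paper leaves implicit.
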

\begin{proof}
    By \Cref{lemma:hetero-phase1-iteration}, we observe that the first phase runs in $O(\log\log n)$ iterations. Additionally, by \Cref{lemma:hetero-phase2-iteration}, we observe that after $O(1/\gamma)$ iterations in the second phase, the termination condition is satisfied. Therefore in total, we require $O(\log\log n + 1/\gamma)\subseteq O(\log\log n)$ iterations across the entire algorithm.
\end{proof}

\paragraph{Round Complexity.} We first bound the edge weights of the graphs $\set{G_i}$ in the first phase. We do this in the following manner: We first derive a crude bound for the edge weights (\Cref{lemma:hetero-bad-weight-bound}), and thereafter a bound for the values $\set{k_i}$ (\Cref{lemma:hetero-k-ub}). Finally, this enables us to obtain a good bound for the edge weights in the first phase, where we show that $d_i\in \poly(n)$ (\Cref{lemma:hetero-phase1-weight}). 

Suppose the initial weights of the graph $G=G_0$ are bounded above by the function $g_0(n)=\poly(n)$. We begin with a crude, almost trivial bound for the edge weights. Note that in the following lemma, our induction begins with $i=1$ instead of $i=0$ for convenience. 
\begin{lemma}\label{lemma:hetero-bad-weight-bound}
    For all $i\in \set{1,2,\ldots,\ell^*}$, the weights of the graph $G_i$ are bounded above by the function 
    $$
        g_i(n) = \left(\prod_{j=0}^{{i-1}}3k_j\right)\cdot g_0(n).
    $$
\end{lemma}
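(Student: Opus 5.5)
The plan is induction on $i$, with the ``Small edge weights'' property of \Cref{definition:skeleton-graph} doing the work at each step. The key observation is that Step 1 of each iteration replaces $G_i$ by a spanner $G_i'$. A spanner is a subgraph of $G_i$, so its edge weights are a subset of those of $G_i$, and any upper bound on the weights of $G_i$ also holds for $G_i'$. This matters because Step 5 applies \Cref{lemma:skeleton-implement} to $G_i'$ rather than to $G_i$.

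For the base case $i=1$, the weights of $G_0=G$ are bounded by $g_0(n)$, and by the observation above so are the weights of $G_0'$. The graph $G_1$ is a skeleton graph of $G_0'$ built with parameter $k_0$, so the small-edge-weights property bounds its weights by $3k_0\cdot g_0(n)$. This is exactly $g_1(n)=\left(\prod_{j=0}^{0}3k_j\right)g_0(n)$.

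For the inductive step, assume the weights of $G_i$ are bounded by $g_i(n)$ for some $1\le i<\ell^*$. Then the weights of the subgraph $G_i'$ are also bounded by $g_i(n)$. Step 5 builds $G_{i+1}$ as a skeleton graph of $G_i'$ with parameter $k_i$, so its weights are at most $3k_i\cdot g_i(n)$. Unrolling the product gives $3k_i\cdot g_i(n)=\left(\prod_{j=0}^{i}3k_j\right)g_0(n)=g_{i+1}(n)$, which closes the induction.

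The one subtlety is how the property is applied. The skeleton definition says that if weights are bounded by a function $f(n)$, then the skeleton's weights are bounded by $3k\cdot f(n)$, where $n$ is the vertex count of the graph the construction is applied to. Here $G_i'$ has only $|V_i|\le n$ vertices, so I will read the bound as a fixed numerical quantity: any graph whose weights are at most $B$ yields a skeleton graph with weights at most $3kB$. This is how the property is derived in the construction anyway, since a skeleton edge weight is the length of a path of a few pieces, each of total length at most $k$ times the maximum edge weight. The restriction $i\le\ell^*$ is not needed for the argument; it only matches the range where the lemma will be used.
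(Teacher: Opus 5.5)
Your proposal is correct and follows the paper's proof: induction on $i$, applying the small-edge-weights property of the skeleton graph to $G_i'$ at each step to pick up the factor $3k_i$. Your two extra remarks make explicit what the paper uses implicitly. First, the spanner $G_i'$ is a subgraph of $G_i$, so it inherits the weight bound. Second, the skeleton property should be read as a numerical bound, independent of the vertex count of $G_i'$.
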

\begin{proof}
    We prove by induction on $i$. In the base case $i=1$, recall that $G_1$ is a skeleton graph of $G_0'$. Since the edge weights of $G_0'$ are bounded above by $g_0(n)$, recalling \Cref{definition:skeleton-graph}, we find that the function $g_1(n)=3k_0\cdot g_0(n)$ is a suitable upper bound for the weights of $G_1$. The inductive case works similarly: Suppose the claim holds for $i=j$. Since $G_{j+1}$ is a skeleton graph of $G_j'$, we have that the weights of $G_{j+1}$ are bounded above by the function $g_{j+1}(n)=3k_j\cdot g_j(n)$. Invoking our induction hypothesis, we therefore get the desired claim. 
\end{proof}

We now provide a bound for the values $\set{k_i}$ in the first phase. Roughly speaking, we work backwards: Rather than starting our induction with $k_0=\log n$, we start our induction instead with $k_{\ell^*-1}\leq n^{\gamma/4}$. This allows us to express our bound for $k_i$ in terms of a small polynomial in $n$, which will be useful later on. 
\begin{lemma}\label{lemma:hetero-k-ub}
Let $p=1.99$, and denote $S_i$ as the geometric sum of $i$ terms with common ratio and initial term $1/p$. For $i\in \set{0,1,2\ldots,\ell^*-1}$, we have
    $$
        k_i \leq C_S^{S_{\ell^*-1-i}}\cdot (n^{\gamma/4})^{(1/p)^{\ell^*-1-i}}.
    $$
    As a direct consequence, this implies that $k_i \leq C_S^{1/(p-1)}\cdot (n^{\gamma/4})^{(1/p)^{\ell^*-1-i}}$.
\end{lemma}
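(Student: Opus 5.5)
We argue by backward induction on $i$, starting from $i=\ell^*-1$ and decreasing to $i=0$. The key tool is to invert the recurrence $k_{i+1}=\min[k_i^{p}/C_S,\,n^{\gamma/4}]$. For every $i<\ell^*-1$ we have $k_{i+1}<n^{\gamma/4}$, since $i+1<\ell^*$ lies in the first phase. Hence the minimum is attained by the first term, giving $k_{i+1}=k_i^{p}/C_S$, or equivalently
$$k_i=(C_S\,k_{i+1})^{1/p}.$$

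\textbf{Base case} $i=\ell^*-1$. Here $S_0=0$ (the empty geometric sum) and the exponent of $n^{\gamma/4}$ is $(1/p)^0=1$. The claim therefore reads $k_{\ell^*-1}\le n^{\gamma/4}$. This holds because $\ell^*-1$ is a first-phase iteration, so $k_{\ell^*-1}<n^{\gamma/4}$. (If $\ell^*=0$ the statement is vacuous.)

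\textbf{Inductive step.} Assume the bound holds for $i+1$, where $i+1\le\ell^*-1$. Set $m=\ell^*-2-i$, so that $\ell^*-1-(i+1)=m$. Using the inverted recurrence,
$$k_i=(C_S k_{i+1})^{1/p}\le C_S^{1/p}\cdot C_S^{S_m/p}\cdot (n^{\gamma/4})^{(1/p)^{m+1}}.$$
It remains to check the exponent of $C_S$. By definition $S_m=\sum_{j=1}^{m}p^{-j}$, so
$$\frac1p+\frac{S_m}{p}=\frac1p+\sum_{j=2}^{m+1}p^{-j}=S_{m+1}.$$
Since $m+1=\ell^*-1-i$, this is exactly the claimed bound for $i$.

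\textbf{Consequence.} We have $S_j\le\sum_{j\ge1}p^{-j}=1/(p-1)$, and we assume $C_S\ge1$ throughout. Hence $C_S^{S_j}\le C_S^{1/(p-1)}$, which gives the stated corollary.

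There is no real obstacle in this argument. The one point needing care is justifying that the minimum is attained by $k_i^p/C_S$ at every step. Also, the inversion is an equality, not merely an inequality, precisely because all indices involved lie strictly before $\ell^*$.
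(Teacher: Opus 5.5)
Your proposal is correct and follows essentially the same route as the paper. Both use backward induction from $i=\ell^*-1$ with the inverted recurrence $k_i=(C_S k_{i+1})^{1/p}$, the identity $S_{m+1}=1/p+S_m/p$ for the exponent of $C_S$, and $S_j\le 1/(p-1)$ together with $C_S\ge 1$ for the corollary; your explicit justification that the minimum in the recurrence is attained by $k_i^{p}/C_S$ for all indices before $\ell^*$ is a detail the paper leaves implicit, and you handle it correctly.
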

\begin{proof}
    We prove by induction on $i$, starting with $i=\ell^*-1$. For the base case $i=\ell^*-1$, we get that $k_{\ell^*-1}\leq n^{\gamma/4}$ as $S_0=0$, which agrees with the definition of $\set{k_i}$. Proceeding to the inductive case: Suppose the claim holds for $i=j$. We recall from our definition of $k_{j}$ that $k_{j-1}=(C_Sk_j)^{1/p}$. Therefore invoking our induction hypothesis, we have that
    $$
        k_{j-1}=(C_Sk_j)^{1/p}\leq C_S^{1/p}\cdot \left[C_S^{S_{\ell^*-1-j}} (n^{\gamma/4})^{(1/p)^{\ell^*-1-j}}\right]^{1/p}=C_S^{S_{\ell^*-1-(j-1)}} (n^{\gamma/4})^{(1/p)^{\ell^*-1-(j-1)}},
    $$
    as $C_S \geq 1$ by assumption. Since $1/p < 1$, we have that as $i\rightarrow \infty$, $S_i$ converges to $1/(p-1)$ by property of the geometric sum. In other words, we have that 
    $$
                k_i \leq C_S^{S_{\ell^*-i-1}}\cdot (n^{\gamma/4})^{(1/p)^{\ell^*-1-i}}\leq C_S^{1/(p-1)}\cdot (n^{\gamma/4})^{(1/p)^{\ell^*-1-i}}
    $$
    for all $i\in \set{0,1,2,\ldots,\ell^*-1}$.
\end{proof}

In the following lemma, we show that the edge weights in the first phase are polynomial in $n$. This makes use of our bound for $k_i$ in \Cref{lemma:hetero-k-ub}, which we expressed in terms of $n^{\gamma/4}$. The key observation is that multiplying each $k_i$ successively results in an \emph{additive} effect in the exponent of $n$. In particular, the exponent of $n$ follows a geometric series that converges to a constant value, which shows the desired claim. With this lemma, we conclude our analysis of the first phase.
\begin{lemma}\label{lemma:hetero-phase1-weight}
    Let $p=1.99$. For all $i\in \set{0,1,2\ldots,\ell^*}$, the weights of the graph $G_i$ are bounded above by the function 
    $$
        g^*(n) = \left(3C_S^{1/(p-1)}\right)^{\ell^*}n^{p\gamma/[4(p-1)]}g_0(n)\in \poly(n).
    $$
\end{lemma}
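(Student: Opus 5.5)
We plan to combine the crude bound of \Cref{lemma:hetero-bad-weight-bound} with the upper bound on the $k_j$ from \Cref{lemma:hetero-k-ub}. For $i=0$ the claim is immediate, since $g_0(n)\le g^*(n)$ (all extra factors are at least $1$ because $C_S\ge 1$). For $i\in\set{1,\ldots,\ell^*}$, \Cref{lemma:hetero-bad-weight-bound} says the weights of $G_i$ are at most $\left(\prod_{j=0}^{i-1}3k_j\right)g_0(n)$. Since every $k_j\ge 1$, this product is monotone in $i$, so it suffices to bound the case $i=\ell^*$, namely $\prod_{j=0}^{\ell^*-1}3k_j$.

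Next we substitute the consequence of \Cref{lemma:hetero-k-ub}: $k_j\le C_S^{1/(p-1)}(n^{\gamma/4})^{(1/p)^{\ell^*-1-j}}$ for $j\in\set{0,\ldots,\ell^*-1}$. Multiplying over these $\ell^*$ indices, the constant parts contribute $\left(3C_S^{1/(p-1)}\right)^{\ell^*}$. The powers of $n$ combine additively in the exponent:
\[
\prod_{j=0}^{\ell^*-1}(n^{\gamma/4})^{(1/p)^{\ell^*-1-j}} = (n^{\gamma/4})^{\sum_{m=0}^{\ell^*-1}(1/p)^m}\le (n^{\gamma/4})^{\frac{1}{1-1/p}} = n^{p\gamma/[4(p-1)]}.
\]
Here we reindex with $m=\ell^*-1-j$ and bound the finite geometric series by its infinite sum. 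This gives exactly $g^*(n)$.

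Finally we check that $g^*(n)\in\poly(n)$. The factor $n^{p\gamma/[4(p-1)]}$ is a fixed polynomial, and $g_0(n)$ is polynomial by assumption. The only step needing care is $\left(3C_S^{1/(p-1)}\right)^{\ell^*}$. Since $\ell^*\in O(\log\log n)$ by \Cref{lemma:hetero-phase1-iteration}, this is $2^{O(\log\log n)}=\polylog(n)$, which is certainly $\poly(n)$.

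The main potential obstacle is the base of the induction in \Cref{lemma:hetero-k-ub}. If $\ell^*=0$, the statement is vacuous for $i\ge1$ and the product is empty. We handle this edge case separately and note that the bound still holds trivially.
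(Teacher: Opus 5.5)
Your proposal is correct and follows the paper's proof essentially step for step. Like the paper, you reduce to bounding $g_{\ell^*}(n)$ from \Cref{lemma:hetero-bad-weight-bound}, substitute the bound on $k_j$ from \Cref{lemma:hetero-k-ub}, bound the geometric series in the exponent by $p/(p-1)$, and use $\ell^*\in O(\log\log n)$ to show the factor $\left(3C_S^{1/(p-1)}\right)^{\ell^*}$ is only polylogarithmic. Your explicit handling of $i=0$, of monotonicity in $i$, and of the degenerate case $\ell^*=0$ spells out details the paper leaves implicit.
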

\begin{proof}
    In order to bound the weights for \emph{all} the graphs in the first phase, it suffices to bound the function $g_{\ell^*}(n)$, as defined in \Cref{lemma:hetero-bad-weight-bound}. Applying our upper bounds on $k_i$ in \Cref{lemma:hetero-k-ub}, we have that 
    $$
    g_{\ell^*}(n)=\left(\prod_{j=0}^{{\ell^*-1}}3k_j\right)\cdot g_0(n) \leq \left(3C_S^{1/(p-1)}\right)^{\ell^*} \left(\prod_{i=0}^{{\ell^*-1}}(n^{\gamma/4})^{(1/p)^{\ell^*-1-i}}\right)\cdot g_0(n).
    $$
    Notice that the exponent for the term $(n^{\gamma/4})$ is a geometric sum with the common ratio $1/p$ and initial term of 1. Since $1/p < 1$, this sum converges to $p/(p-1)$ at infinity. Therefore, we may further bound $g_{\ell^*}(n)$ with
    $$
        g_{\ell^*}(n)\leq g^*(n)=\left(3C_S^{1/(p-1)}\right)^{\ell^*}n^{p\gamma/[4(p-1)]}g_0(n).
    $$
    Since $\ell^*\in O(\log\log n)$ by \Cref{lemma:hetero-phase1-iteration}, $\gamma \leq 1$ and the remaining variables (except for $n$) are treated as constants, we find that $g^*(n)\in \poly(n)$ as desired.
\end{proof}

We now proceed to the analysis of the edge weights in the second phase. The analysis of this phase is relatively simpler compared to the first: There are only a constant number of iterations in the second phase (\Cref{lemma:hetero-phase2-iteration}) and the weights increase by at most a factor of $O(n)$ between iterations. Since we have shown that the first graph in the second phase $G_{\ell^*}$ has polynomially-bounded weights (\Cref{lemma:hetero-phase1-weight}), we therefore have that the edge weights throughout the second phase must also be polynomially-bounded. 

\begin{lemma}\label{lemma:hetero-phase2-weight}
    Suppose the weights of the graph $G_{\ell^*}$ are bounded above by the function $g^*(n)$. Then, for all $i\in \set{0,1,2,\ldots,\ell-\ell^*}$, the weights of the graph $G_{i+\ell^*}$ are bounded above by the function
    $$
        g_{i+\ell^*}(n) = \left(3n^{\gamma/4}\right)^i\cdot g^*(n).
    $$
\end{lemma}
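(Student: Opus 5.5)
The plan is an induction on $i$, mirroring the proof of \Cref{lemma:hetero-bad-weight-bound}.

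\textbf{Base case $i=0$.} Here $g_{\ell^*}(n)=g^*(n)$, so the claim is exactly the hypothesis on $G_{\ell^*}$.

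\textbf{Inductive step.} Assume the weights of $G_{j+\ell^*}$ are at most $(3n^{\gamma/4})^j g^*(n)$, for some $j<\ell-\ell^*$.
\begin{itemize}
\item Step 1 replaces $G_{j+\ell^*}$ by the spanner $G'_{j+\ell^*}$. A spanner is a subgraph, so it retains only original edges with their original weights. Hence the weights of $G'_{j+\ell^*}$ obey the same bound.
\item Step 5 builds $G_{j+1+\ell^*}$ as a skeleton graph of $G'_{j+\ell^*}$ with parameter $k_{j+\ell^*}$. Since we are in the second phase, $k_{j+\ell^*}=n^{\gamma/4}$.
\item The small-edge-weights property of \Cref{definition:skeleton-graph} then bounds the weights of $G_{j+1+\ell^*}$ by
\[
3k_{j+\ell^*}\cdot(3n^{\gamma/4})^j g^*(n)=(3n^{\gamma/4})^{j+1}g^*(n),
\]
which is the claim for $j+1$.
\end{itemize}

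\textbf{Main obstacle.} The only point needing care is that this property applies to $G'_{j+\ell^*}$ rather than $G_{j+\ell^*}$. So we must check that spanner sparsification does not increase any edge weight. This holds because \Cref{lemma:near-linear-spanner} outputs a subgraph.

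\textbf{Consequence.} By \Cref{lemma:hetero-phase2-iteration}, $\ell-\ell^*\in O(1/\gamma)$. It follows that $g_{i+\ell^*}(n)\le (3n^{\gamma/4})^{O(1/\gamma)}g^*(n)\in\poly(n)$, which is what the round-complexity analysis needs.
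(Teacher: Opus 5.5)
Your proposal is correct and follows the paper's proof: induction on $i$, with the base case given by the hypothesis and the inductive step applying the small-edge-weights property of the skeleton graph with $k_{j+\ell^*}=n^{\gamma/4}$. Your explicit check that $G'_{j+\ell^*}$ is a subgraph of $G_{j+\ell^*}$, and so cannot increase edge weights, fills in a step the paper leaves implicit in this lemma.
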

\begin{proof}
    We prove by induction on $i$. In the base case of $i=0$, we have $g_{\ell^*}(n)=g^*(n)$ as expected. Consider now the inductive case: Suppose the claim holds for $i=j$. Recalling \Cref{definition:skeleton-graph}, we observe that the function $g_{j+1+\ell^*}(n):=3k_{j+\ell^*}\cdot g_{j+\ell^*}(n)=3n^{\gamma/4}\cdot g_{j+\ell^*}(n)$ is a suitable upper bound for the weights in $G_{j+1+\ell^*}$. Applying our induction hypothesis, we then have $g_{j+1+\ell^*}(n)=(3n^{\gamma/4})^{j+1}\cdot g^*(n)$ as claimed. 
\end{proof}

With this next lemma, we claim the edge weights of \emph{all} the graphs $\set{G_i}$ are polynomially-bounded.
\begin{lemma}\label{lemma:hetero-poly-weights}
    For all $i\in \set{0,1,2,\ldots,\ell}$, the edge weights of the graph $G_i$ are bounded above by $\poly(n)$.
\end{lemma}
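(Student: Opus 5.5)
We split the index range $\set{0,1,\ldots,\ell}$ into the two phases and combine the bounds already established. For $i\in\set{0,1,\ldots,\ell^*}$, \Cref{lemma:hetero-phase1-weight} gives directly that the edge weights of $G_i$ are at most $g^*(n)=\left(3C_S^{1/(p-1)}\right)^{\ell^*}n^{p\gamma/[4(p-1)]}g_0(n)$. We will briefly recheck that this is $\poly(n)$. Since $\ell^*\in O(\log\log n)$ by \Cref{lemma:hetero-phase1-iteration}, the factor $\left(3C_S^{1/(p-1)}\right)^{\ell^*}$ is $2^{O(\log\log n)}=\polylog(n)$. The factor $n^{p\gamma/[4(p-1)]}$ has a constant exponent because $\gamma\le 1$ and $p=1.99$. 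Finally, $g_0(n)\in\poly(n)$ by the input assumption.

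For the second phase, $i=\ell^*+j$ with $0\le j\le \ell-\ell^*$, we invoke \Cref{lemma:hetero-phase2-weight}. Its hypothesis is exactly the conclusion of the first step applied to $G_{\ell^*}$. It bounds the weights of $G_{\ell^*+j}$ by $(3n^{\gamma/4})^{j}\cdot g^*(n)$. By \Cref{lemma:hetero-phase2-iteration}, the termination condition $|V_\ell|^2\in O(n)$ is met after $j\in O(1/\gamma)$ second-phase iterations, so $\ell-\ell^*\in O(1/\gamma)$. Hence $(3n^{\gamma/4})^{j}\le 3^{O(1/\gamma)}n^{O(1)}$, which is $\poly(n)$ because $\gamma$ is a constant. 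Multiplying by $g^*(n)\in\poly(n)$ gives a polynomial bound for every $G_i$ in this phase.

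The only step needing care is confirming that the algorithm indeed stops within $O(1/\gamma)$ second-phase iterations. \Cref{lemma:hetero-phase2-iteration} yields $|V_{\ell^*+j}|\in O(n^{1/2})$, and therefore $|V_{\ell^*+j}|^2\in O(n)$, which is the stated termination criterion. So the exponent $j$ in the second-phase bound never exceeds a constant, and the polynomial degree stays uniformly bounded over all $i\le\ell$. Taking the maximum of the two phase bounds completes the argument.
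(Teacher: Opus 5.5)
Your proposal is correct and matches the paper's proof essentially step for step. You handle the first phase via \Cref{lemma:hetero-phase1-weight}, and the second phase by combining \Cref{lemma:hetero-phase2-weight} with the $O(1/\gamma)$ iteration bound of \Cref{lemma:hetero-phase2-iteration}, giving $(3n^{\gamma/4})^{O(1/\gamma)}\cdot g^*(n)\in\poly(n)$.
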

\begin{proof}
    By \Cref{lemma:hetero-phase1-weight}, we know that the weights of the graphs $\set{G_0,G_1,G_2,\ldots,G_{\ell^*}}$ are polynomially-bounded. To bound the edge weights of the remaining graphs $\set{G_{\ell^*+1},G_{\ell^*+2},\ldots,G_\ell}$, it suffices to show that the function $g_\ell(n)$ as defined in \Cref{lemma:hetero-phase2-weight} is bounded above by $\poly(n)$, i.e., that the last graph of the second phase has polynomially-bounded weights. 
    
    By \Cref{lemma:hetero-phase1-weight}, we know that the edge weights of $G_{\ell^*}$ are bounded above by the function $g^*(n)\in \poly(n)$. Additionally, by \Cref{lemma:hetero-phase2-iteration}, we know that the number of iterations in the second phase is at most $O(1/\gamma)$. Therefore, we conclude that the function $g_\ell(n)$ is bounded by
    $$
        g_\ell(n) \in \left(3n^{\gamma/4}\right)^{O(1/\gamma)}\cdot g^*(n) \in 3^{O(1/\gamma)} n^{O(1)}\cdot g^*(n)\subseteq \poly(n),
    $$
    as desired.
\end{proof}

We are now ready to show that the weighted diameters of the graphs $G_i'$ are always polynomial in $n$ for both the first and second phases. As a result, Step 3 of each iteration, and by extension the entirety of each iteration, requires only $O(\log\log n)$ rounds. Since we only have $O(\log\log n)$ iterations, this gives an overall round complexity of $O(\log^2\log n)$ rounds. 
\begin{lemma}\label{lemma:hetero-rounds}
    Each iteration runs in $O(\log\log n)$ rounds. As a direct consequence, the entire algorithm runs in $O(\log^2\log n)$ rounds.
\end{lemma}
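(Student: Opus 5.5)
We go through the five steps of a single iteration $i$ and bound the round cost of each. Steps 1, 2, 4 and 5 each take $O(1)$ rounds. Step 1 and the first part of Step 2 are applications of \Cref{lemma:near-linear-spanner}. Collecting $S$ onto $M^*$ and the local computation on $M^*$ take $O(1)$ rounds. Step 4 is a local relabelling of edges. Step 5 is \Cref{lemma:skeleton-implement}. So the whole cost sits in Step 3, which by \Cref{lemma:E_k} runs in $O(\log a + \log\log d_i)$ rounds. Here $a = O(\log n)$ is the approximation ratio from Step 2, which gives $\log a = O(\log\log n)$. It therefore suffices to show $d_i \in \poly(n)$, where $d_i$ is the weighted diameter of $G_i'$.

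For the diameter bound we use \Cref{lemma:hetero-poly-weights}, which says every edge weight of $G_i$ is at most some $g(n) \in \poly(n)$. The polynomial is uniform over all $i \le \ell$, since it comes from $g^*(n)$ and $g_\ell(n)$. Next we pass from $G_i$ to $G_i'$. The spanner $G_i'$ is a subgraph of $G_i$, so its edge weights are also at most $g(n)$. By \Cref{lemma:hetero-connectivity} and the connectivity of spanners, $G_i'$ is connected. A shortest path in $G_i'$ has at most $|V_i| - 1 \le n - 1$ edges, so $d_i \le (n-1)\, g(n) \in \poly(n)$. This gives $\log\log d_i = O(\log\log n)$, and Step 3 costs $O(\log\log n)$ rounds. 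The termination step also takes $O(1)$ rounds: collect $G_\ell$ onto $M^*$, which fits since $|V_\ell|^2 \in O(n)$, and compute all-pairs distances locally.

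Summing over iterations: by \Cref{lemma:hetero-iterations} there are $O(\log\log n)$ iterations, each costing $O(\log\log n)$ rounds, for $O(\log^2\log n)$ rounds in total. The one point that needs care is that the polynomial bound on $d_i$ has a single exponent for all $i$. Otherwise the constants hidden in $O(\log\log d_i)$ could depend on $i$ and the sum would not be $O(\log^2\log n)$. This uniformity is already built into \Cref{lemma:hetero-phase1-weight} and \Cref{lemma:hetero-poly-weights}, so beyond citing them there is no real obstacle.
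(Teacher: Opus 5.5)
Your proposal is correct and matches the paper's proof essentially step for step. You bound the edge weights of $G_i$ by a polynomial via \Cref{lemma:hetero-poly-weights}, pass to the subgraph $G_i'$ to get $d_i \le (n-1)\,g(n) \in \poly(n)$, note that Step 3 (\Cref{lemma:E_k}) costs $O(\log\log n)$ rounds while the other steps cost $O(1)$, and multiply by the $O(\log\log n)$ iterations of \Cref{lemma:hetero-iterations}. Your extra remarks (connectivity of $G_i'$, the $O(1)$-round termination step, and that a single polynomial bound holds for all $i$) are sound and only make explicit what the paper leaves implicit.
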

\begin{proof}
    We first bound the weighted diameters $d_i$ of the graphs $G_i'$. By \Cref{lemma:hetero-poly-weights}, we know that the edge weights of the graphs $\set{G_i}$ for all iterations $i$ are bounded above by some function $g_i(n)\in \poly(n)$. Moreover, recall that $G_i'$ is a subgraph of $G_i$ by the definition of spanners. Therefore, we must have that $d_i\leq (n-1)\cdot g_i(n)\in \poly(n)$ as any shortest path in $G_i'$ has at most $n-1$ edges.  
    
    To assess the round complexity of each iteration, we break it down by steps.
    \begin{itemize}
        \item Steps 1, 2, 4 and 5 require $O(1)$ rounds.
        \item Step 3 requires $O(\log\log n + \log\log d_i)\subseteq O(\log\log n)$ rounds as $d_i\in\poly(n)$ for all iterations.
    \end{itemize}
    Therefore, we observe that each iteration requires $O(\log\log n)$ rounds. Since we have $O(\log\log n)$ iterations by \Cref{lemma:hetero-iterations}, the entire algorithm requires $O(\log\log n\cdot\log \log n) = O(\log^2\log n)$ rounds as claimed. 
\end{proof}

\paragraph{Emulator Properties.} We are now ready to bound the hopbound and stretch of the emulator $H$. To gain an intuitive picture, you can imagine that we are ``climbing a tower'' of skeleton graphs using edges in $H$, starting from $G=G_0$. We start with any pair of vertices $u_0,v_0\in V_0$ and the path $P_0$ from $u_0$ to $v_0$ with properties (P1) and (P2) as defined in \Cref{definition:skeleton-graph}. Denote the path $P_0$ as $u_0\rightarrow u_1\rightsquigarrow v_1\rightarrow v_0$, and denote the first and last edges of $P_0$ as $e_1=\set{u_0,u_1}$ and $e_2=\set{v_1,v_0}$ respectively. We make two observations: 
\begin{itemize}
    \item By property (P2), we know the edges $e_1$ and $e_2$ can be found in $H$.
    \item Also by property (P2), we know that the subpath from $u_1$ to $v_1$ in $P_0$ is contained \emph{entirely} in $G_1$.
\end{itemize}
Therefore, using the edges $e_1$ and $e_2$, we can ``climb up'' from $G_0$ to $G_1$. By property (P1) of path $P_0$, if we find a constant approximation for $d_{G_1}(u_1,v_1)$, this will directly yield a constant approximation for $d_{G_0}(u_0,v_0)$. This motivates us to repeat the same process: We now consider the path $P_1$ from $u_1$ to $v_1$ (as defined in \Cref{definition:skeleton-graph}) and using the same procedure, climb from $G_1$ to $G_2$. We repeat this until we reach the graph $G_\ell$, where we are finally able to ``connect'' the endpoints. Since we only have an $O(\log\log n)$ number of graphs to climb (by \Cref{lemma:hetero-iterations}), we only need $O(\log\log n)$ edges in $H$. Additionally, combining these edges yields a $(2/\gamma)^{O(\log\log n)}$-approximate length path from $u_0$ to $v_0$ in $H$. 

We now formalise the above intuition in the following lemma.
\begin{lemma}\label{lemma:hetero-hopset-correctness-ub}
    For all $i\in \set{0,1,2,\ldots,\ell}$, consider any pair of vertices $u,v\in V_i$. There exists a path $P$ from $u$ to $v$ in $H$ such that all of the following hold:
    \begin{itemize}
        \item $P$ has at most $1+2(\ell-i)$ edges, and
        \item the length of $P$ is at most $[7C(2/\gamma)]^{\ell-i}\cdot d_{G_i}(u,v)$, where $C$ is the same constant defined in Step 1 of the algorithm.
    \end{itemize}
\end{lemma}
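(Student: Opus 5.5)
We argue by backward induction on $i$, from $i=\ell$ down to $i=0$. For the base case $i=\ell$, the termination step adds to $H$ an edge $\{u,v\}$ for every pair $u,v\in V_\ell$ with weight $d_{G_\ell}(u,v)$. This single edge is a path with $1 = 1+2(\ell-\ell)$ edges and length exactly $d_{G_\ell}(u,v) = [7C(2/\gamma)]^0\cdot d_{G_\ell}(u,v)$, so both conditions hold.

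For the inductive step, assume the claim holds for $i+1$ and fix $u,v\in V_i$. Since $G_i'$ is a $C(2/\gamma)$-spanner of $G_i$, we have $d_{G_i}(u,v)\le d_{G_i'}(u,v)\le C(2/\gamma)\, d_{G_i}(u,v)$. The sets $E_{k_i}$ and $G_{i+1}$ are built from $G_i'$, so all reasoning at this level takes place in $G_i'$. We split into two cases, following the dichotomy in \Cref{definition:skeleton-graph}.

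\textbf{Case 1: $v\in N_{k_i}(u)$ or $u\in N_{k_i}(v)$ (neighbourhoods taken in $G_i'$).} By Step 4, the edge $\{u,v\}\in E_{k_i}\subseteq H$ has weight $d_{G_i'}(u,v)\le C(2/\gamma)\,d_{G_i}(u,v)$. This one-edge path satisfies both bounds, since $1\le 1+2(\ell-i)$ and $C(2/\gamma)\le[7C(2/\gamma)]^{\ell-i}$ when $\ell>i$. If $C(2/\gamma)<1$, we may take $C\ge 1$ without loss of generality.

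\textbf{Case 2: otherwise.} By the ``clusters as good approximations'' property of $G_{i+1}$ as a skeleton graph of $G_i'$, there is a path $u\to u^*\rightsquigarrow v^*\to v$ satisfying two properties. First, the edges $e_1=\{u,u^*\}$ and $e_2=\{v^*,v\}$ lie in $E_{k_i}\subseteq H$. Second, the middle subpath $Q$ lies in $E_S=E_{i+1}$, so $u^*,v^*\in V_{i+1}$. Its total length is at most $7\,d_{G_i'}(u,v)$. Applying the induction hypothesis to $u^*,v^*$ gives a path $P'$ in $H$ with at most $1+2(\ell-i-1)$ edges and
\[
\omega(P')\le[7C(2/\gamma)]^{\ell-i-1}\,d_{G_{i+1}}(u^*,v^*)\le[7C(2/\gamma)]^{\ell-i-1}\,\omega(Q).
\]
Concatenating $e_1,P',e_2$ gives a path with at most $1+2(\ell-i)$ edges. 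Since the multiplier is at least $1$, its length is at most
\[
[7C(2/\gamma)]^{\ell-i-1}\,(\omega(e_1)+\omega(Q)+\omega(e_2))\le[7C(2/\gamma)]^{\ell-i-1}\cdot 7\cdot C(2/\gamma)\,d_{G_i}(u,v),
\]
which equals $[7C(2/\gamma)]^{\ell-i}\,d_{G_i}(u,v)$, as required.

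The main care point is bookkeeping rather than difficulty. The skeleton property and $E_{k_i}$ refer to $G_i'$, not $G_i$, so the spanner factor must be absorbed exactly once per level. The degenerate case $u=v$ is trivial, since the empty path has length $0$. Connectivity, which guarantees all distances are finite, comes from \Cref{lemma:hetero-connectivity}.
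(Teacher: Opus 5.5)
Your proof is correct and follows the paper's argument essentially step for step: backward induction from $i=\ell$, using the all-pairs edges added at termination for the base case. In the inductive step you take the direct $E_{k_i}$ edge when one vertex lies in the other's $k_i$-nearest neighbourhood in $G_i'$, and otherwise the (P1)/(P2) path in the skeleton graph of $G_i'$, combined with the induction hypothesis on $u^*,v^*$ and one absorption of the spanner factor $C(2/\gamma)$ per level. Your explicit check that $C(2/\gamma)\le[7C(2/\gamma)]^{\ell-i}$ in the neighbourhood case fills in a detail the paper dismisses as trivial.
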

\begin{proof}
    We prove by induction on $i$, starting from $i=\ell$. 
    \begin{description}
        \item[Base Case ($i=\ell$).] Recalling the termination stage of the algorithm, we know that the all-pairs distances of $G_\ell$ were added to $H$ as weighted edges. Therefore, the base case holds.

        \item[Inductive Case.] Suppose the claim holds for $i=j$. Consider any two vertices $u,v\in V_{j-1}$.  In the case where $u\in N_{k_{j-1}}(v)$ or $v\in N_{k_{j-1}}(u)$ (with respect to the graph $G'_{j-1}$), the claim trivially holds. So suppose not: Consider the path $P$ from $u$ to $v$ in $E_{k_{j-1}}\cup E_{j}$ with the properties (P1) and (P2) as defined in \Cref{definition:skeleton-graph}. We denote the path $P$ as $u\rightarrow u^*\rightsquigarrow v^*\rightarrow v$. Additionally, let $e_1 = \set{u,u^*}$ and $e_2 = \set{v^*,v}$ be the first and last edges of path $P$ respectively. 
        
        By property (P2), we know that $e_1,e_2\in H$. Moreover, the subpath from $u^*$ to $v^*$ in $P$ is contained \emph{entirely} in the graph $G_j$. We therefore apply our induction hypothesis on $G_j$ to obtain the path $P^*\subseteq H$ from $u^*$ to $v^*$, which has the following properties:
        \begin{itemize}
            \item $P^*$ has at most $1+2(\ell-j)$ edges.
            \item The length of $P^*$ is at most $[7C(2/\gamma)]^{\ell-j}\cdot d_{G_j}(u^*,v^*)$.
        \end{itemize}
        We now consider a new path $Q$ from $u$ to $v$, defined as follows: From $u$ to $u^*$ using edge $e_1$, $u^*$ to $v^*$ using path $P^*$, and lastly $v^*$ to $v$ using edge $e_2$. Observe that $Q$ is entirely contained in the edge set $H$. Moreover, we find that $Q$ has the following properties:
        \begin{itemize}
            \item $Q$ has at most $1 + 2(\ell - j) + 2=1+2(\ell-(j-1))$ edges.
            \item Considering $P$ is a $7$-approximation of $d_{G_{j-1}'}(u,v)$ and that $G_{j-1}'$ is a $C(2/\gamma)$-spanner for $G_{j-1}$, we have
            \begin{align*}
                \omega(Q)&= \omega(e_1) + \omega(P^*) + \omega(e_2)\\
                &\leq \omega(e_1) + [7C(2/\gamma)]^{\ell-j}\cdot d_{G_j}(u^*,v^*) + \omega(e_2)\\
                &= d_{G_{j-1}'}(u,u^*) + [7C(2/\gamma)]^{\ell-j}\cdot d_{G_j}(u^*,v^*) + d_{G_{j-1}'}(v^*,v) \\
                &\leq [7C(2/\gamma)]^{\ell-j}\left[d_{G_{j-1}'}(u,u^*) + d_{G_j}(u^*,v^*)+ d_{G_{j-1}'}(v^*,v)\right]\\
                &\leq [7C(2/\gamma)]^{\ell-j}\cdot \omega(P)\\
                &\leq [7C(2/\gamma)]^{\ell-j}\cdot 7\cdot d_{G'_{j-1}}(u,v) \\
                &\leq [7C(2/\gamma)]^{\ell-j}\cdot 7C(2/\gamma)\cdot d_{G_{j-1}}(u,v) \\
                &= [7C(2/\gamma)]^{\ell-(j-1)}\cdot d_{G_{j-1}}(u,v). \qedhere
            \end{align*}
        \end{itemize}
    \end{description}
\end{proof}

We now summarise the properties of our emulator $H$ in the following lemma.
\begin{lemma}\label{lemma:hetero-hopset-correctness-final}
    $H$ is a $(2/\gamma)^{O(\log\log n)}$-stretch $O(\log\log n)$-hopbound emulator of the graph $G=G_0$.
\end{lemma}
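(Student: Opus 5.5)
The plan is to specialise \Cref{lemma:hetero-hopset-correctness-ub} to $i=0$ and combine it with \Cref{lemma:hetero-iterations}, then check the lower bound separately. \Cref{lemma:hetero-hopset-correctness-ub} gives, for every $u,v\in V_0=V$, a path in $H$ with at most $1+2\ell$ edges and length at most $[7C(2/\gamma)]^{\ell}\cdot d_G(u,v)$. By \Cref{lemma:hetero-iterations}, $\ell\in O(\log\log n)$. Hence the hopbound is $\beta=1+2\ell\in O(\log\log n)$. The stretch is $(7C)^{\ell}(2/\gamma)^{\ell}$. Since $\gamma\le 1$, we have $2/\gamma\ge 2$, so the constant factor $7C$ is at most $(2/\gamma)^{\log_2(7C)}$. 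This gives $(7C\cdot 2/\gamma)^{\ell}=(2/\gamma)^{O(\ell)}=(2/\gamma)^{O(\log\log n)}$. Together these give $d_H^{(\beta)}(u,v)\le t\cdot d_G(u,v)$.

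The inequality $d_H(u,v)\le d_H^{(\beta)}(u,v)$ is trivial. The remaining part, and the only real content beyond bookkeeping, is the lower bound $d_H(u,v)\ge d_G(u,v)$. It suffices to show that every edge $\{x,y\}\in H$ has weight at least $d_G(x,y)$; the triangle inequality then gives the bound for arbitrary paths. I would prove by induction on $i$ that $d_{G_i}(x,y)\ge d_G(x,y)$ and $d_{G_i'}(x,y)\ge d_G(x,y)$ for all $x,y\in V_i$.

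The induction runs as follows.
\begin{itemize}
\item The base case $i=0$ holds because $G_0'$ is a spanner, i.e.\ a subgraph of $G_0$, so its distances are at least $d_{G_0}$.
\item In the inductive step, the ``no shrinking distances'' property of \Cref{definition:skeleton-graph} gives $d_{G_{i+1}}\ge d_{G_i'}\ge d_G$, and again $G_{i+1}'$ is a subgraph of $G_{i+1}$.
\end{itemize}

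With this in hand, the edges of $H$ are handled directly.
\begin{itemize}
\item Edges from $E_{k_i}$ carry weights $d_{G_i'}(x,y)\ge d_G(x,y)$.
\item The terminal all-pairs edges carry weights $d_{G_\ell}(x,y)\ge d_G(x,y)$.
\end{itemize}

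The main obstacle, mild as it is, is this lower-bound bookkeeping: $E_{k_i}$ is defined relative to $G_i'$, not $G$, so one must chain the spanner and skeleton no-shrinking properties across levels. One should also note that the vertex sets are nested ($V_{i+1}\subseteq V_i\subseteq V$), so all these distances are comparable to $d_G$.
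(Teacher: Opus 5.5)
Your proposal is correct and matches the paper's proof: you specialise \Cref{lemma:hetero-hopset-correctness-ub} to $i=0$ and use $\ell\in O(\log\log n)$ from \Cref{lemma:hetero-iterations} to get the stretch and hopbound, and you get the lower bound $d_H\ge d_G$ from the no-shrinking property of skeleton graphs. Your explicit induction through the spanners $G_i'$, and your absorption of the constant $7C$ into the exponent via $2/\gamma\ge 2$, make precise two steps that the paper states in one line each.
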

\begin{proof}
   By \Cref{lemma:hetero-iterations}, we have that the final iteration $\ell\leq T\cdot \log\log n$ for some constant $T$. Therefore, applying our stretch upper bound in \Cref{lemma:hetero-hopset-correctness-ub} with $i=0$, we have that for any pair of vertices $u,v\in V$,
    $$
        d_{H}^{(1 + 2T\cdot \log\log n)}(u,v)\leq [7C(2/\gamma)]^{(T\cdot \log\log n)}d_G(u,v)\in (2/\gamma)^{O(\log\log n)}d_G(u,v).
    $$
    To show the lower bound, we recall our definition of the skeleton graph in \Cref{definition:skeleton-graph}. Specifically, we observe that for all iterations $i$, the distances in $G$ are never contracted in the subsequent graphs $G_i$, i.e.,~$d_G(u,v)\leq d_{{G_i}}(u,v)$. Consequently, we have that $d_G(u,v)\leq d_{H}(u,v)$ for all pairs of vertices $u,v\in V$. Therefore, we conclude that $H$ is indeed a $(2/\gamma)^{O(\log\log n)}$-stretch $O(\log\log n)$-hopbound emulator for $G$.
\end{proof}

We now conclude with the proof of \Cref{lemma:hetero-good-enough-hopset}.
\begin{proof}[Proof of \Cref{lemma:hetero-good-enough-hopset}]
    We run the algorithm described in \Cref{section:good-enough-hopset-algo}. This constructs the edge set $H$, which by \Cref{lemma:hetero-hopset-correctness-final} is a $(2/\gamma)^{O(\log\log n)}$-stretch $O(\log\log n)$-hopbound emulator for $G$. By \Cref{lemma:hetero-rounds}, we only require $O(\log^2\log n)$ rounds.
    
    By the algorithm description, we observe that each iteration requires at most $\tilde O(|E| + n^{1+\gamma})$ total memory, and since there are $O(\log\log n)$ iterations by \Cref{lemma:hetero-iterations}, we therefore require at most $\tilde O(\log\log n \cdot (|E| + n^{1+\gamma}))\subseteq \tilde O(|E| + n^{1+\gamma})$ total memory. In particular, we have sufficient total memory to store the emulator $H$ across the machines.
\end{proof}

\subsubsection{Wrapping it Up}
Now having access to a $(2/\gamma)^{O(\log\log n)}$-stretch $O(\log\log n)$-hopbound emulator, we can apply \Cref{lemma:exact-k-nearest} on the emulator to construct $(2/\gamma)^{O(\log\log n)}$-approximate $k$-nearest neighbourhoods for all vertices. We state this in the following lemma. 
\begin{lemma}\label{lemma:hetero-oracle}
    Let $G=(V,E)$ be a connected, weighted graph with $n$ vertices and let integer $k > 0$ be a parameter. Given that $k^2\in  O(n^\gamma)$, in the heterogeneous MPC model with $\tilde O(|E| + n^{1+\gamma})$ total memory, there exists a randomised algorithm that w.h.p.~constructs the set $\set{(u,t):u\in V\wedge t\in \tilde N_k(u)}$ where $\tilde N_k(u)$ is a $(2/\gamma)^{O(\log\log n)}$-approximate $k$-nearest neighbourhood for $u$. This algorithm runs in $O(\log^2\log n)$ rounds.     
\end{lemma}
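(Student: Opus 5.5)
The plan is to build the emulator with \Cref{lemma:hetero-good-enough-hopset} and then read off approximate neighbourhoods with \Cref{lemma:exact-k-nearest}. First, apply \Cref{lemma:hetero-good-enough-hopset} to $G$. In $O(\log^2\log n)$ rounds and $\tilde O(|E|+n^{1+\gamma})$ total memory this gives, w.h.p., an emulator $H$ with stretch $a=(2/\gamma)^{O(\log\log n)}$ and hopbound $\beta\le 1+2T\log\log n$. The size of $H$ is dominated by $\sum_i |E_{k_i}|$, which is at most $O(\log\log n)\cdot n\cdot n^{\gamma/4}$, plus the final all-pairs edges on $G_\ell$, of which there are $|V_\ell|^2\in O(n)$. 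Hence $|H|\in\tilde O(n^{1+\gamma})$.

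Next, consider the graph $(V,H)$ and apply \Cref{lemma:exact-k-nearest} to it with $r=\lceil\log_2\beta\rceil$, so that $h=2^r\ge\beta$. The hypothesis $k^2\in O(n^\gamma)$ is exactly the one assumed, and the memory required is $\tilde O(|H|+nk^2)\subseteq\tilde O(|E|+n^{1+\gamma})$. This step takes $O(r)=O(\log\log\log n)$ rounds and outputs, for every $u$, the set $N^{(h)}_{H,k}(u)$ with ties broken by vertex ID. If \Cref{lemma:exact-k-nearest} requires a connected input, note that $(V,H)$ is connected, because $d_H^{(\beta)}(u,v)\le a\cdot d_G(u,v)<\infty$ for all pairs.

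It remains to check that $N^{(h)}_{H,k}(u)$ is an $a$-approximate $k$-nearest neighbourhood of $u$ in $G$. Define $\delta(u,v):=d_H^{(h)}(u,v)$. By the emulator definition (\Cref{definition:emulator}), and since $h\ge\beta$,
\[
d_G(u,v)\le d_H(u,v)\le d_H^{(h)}(u,v)\le d_H^{(\beta)}(u,v)\le a\cdot d_G(u,v).
\]
So $\delta$ is an $a$-approximate distance estimate for $G$. The set returned is precisely the $k$ closest vertices to $u$ under $\delta$, with ties broken by ID, which matches the definition of $\tilde N_k(u)$ exactly. One subtlety is that the definition asks for a single estimate $\delta$ on $V\times V$, and $d_H^{(h)}$ may be asymmetric in principle. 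I would either note that it is symmetric, since $H$ is undirected and a reversed path has the same hop count, or observe that the definition only needs $\delta(u,\cdot)$ for each $u$.

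The total round count is $O(\log^2\log n)+O(\log\log\log n)=O(\log^2\log n)$. The only step I expect to need care is bounding $|H|$ within the total memory, so that \Cref{lemma:exact-k-nearest} can be invoked on $H$. This is already implicit in the memory accounting of \Cref{lemma:hetero-good-enough-hopset}, since each $E_{k_i}$ has at most $|V_i|k_i\le n\log n$ edges by \Cref{lemma:hetero-kV-bound}.
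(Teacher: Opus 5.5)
Your proposal is correct and follows the paper's proof: build the emulator via \Cref{lemma:hetero-good-enough-hopset}, then apply \Cref{lemma:exact-k-nearest} to $H$ with $2^r=h$ at least the hopbound, for $O(\log^2\log n)+O(\log\log\log n)$ rounds in total. Your additional checks on $|H|$, the connectivity of $(V,H)$, and the symmetry of $d_H^{(h)}$ fill in details the paper leaves implicit, and all three are right.
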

\begin{proof}
    We first invoke \Cref{lemma:hetero-good-enough-hopset} to construct a $(2/\gamma)^{O(\log\log n)}$-stretch $O(\log\log n)$-hopbound emulator $H$ for the graph $G$. This requires $\tilde O(|E|+n^{1+\gamma})$ total memory and runs in $O(\log^2\log n)$ rounds. Denote the hopbound of the emulator as $C_H \log\log n$, where $C_H > 0$ is some constant.
    
    We now apply \Cref{lemma:exact-k-nearest} on $H$, setting the parameter $r$ such that $2^r=h\geq C_H \log\log n$. This gives us  the $k$-nearest $h$-hop neighbourhoods $N_k^{(h)}(u)$ for all vertices $u\in V$ \emph{in the emulator} $H$. By the properties of the emulator $H$, these suffice as $(2/\gamma)^{O(\log\log n)}$-approximate $k$-nearest neighbourhoods \emph{in the graph} $G$, and we are done. This requires $\tilde O(|H| + nk^2)\subseteq \tilde O(|E| + n^{1+\gamma})$ total memory and $O(r)=O(\log\log\log n)$ rounds. 
    
    Overall, we require $\tilde O(|E| + n^{1+\gamma})$ total memory and $O(\log^2\log n + \log\log\log n)=O(\log^2\log n)$ rounds, as claimed.
\end{proof}

 It now remains to complete the proof for \Cref{theorem:hetero}, which follows similarly to the proof for \Cref{theorem:near-linear}. However, we now use \Cref{lemma:hetero-oracle} for constructing the approximate $k$-nearest neighbourhoods instead. 

\begin{proof}[Proof of \Cref{theorem:hetero}]
   By \Cref{lemma:connectivity-reduction}, we may assume that $G$ is a connected graph. We first apply the spanner construction in \Cref{lemma:near-linear-spanner} to construct an $O(2/\gamma)$-spanner $G'$ for $G$ of size $\tilde O(n^{1+\gamma/2})$ in $O(1)$ rounds. We then apply \Cref{lemma:hetero-oracle} to construct $a=(2/\gamma)^{O(\log\log n)}$-approximate $k=n^{\gamma/4}$-nearest neighbourhoods $\tilde N_k(u)$ for all vertices $u\in V$ in $G'$, which requires $O(\log^2\log n)$ rounds. 
   
    We now apply \Cref{lemma-central-construct} with the parameter $c=\gamma$, which constructs an $O(1/\gamma)$-approximate SSSP oracle for $G'$ in $O(\log a + \log\log n)\subseteq O(\log\log n)$ rounds -- the inclusion follows as we treat $\gamma$ as a constant. In total, we require $\tilde O(|E|+n^{1+\gamma})$ total memory across all steps. 
   
   Since $G'$ is an $O(2/\gamma)$-spanner for $G$, an $O(1/\gamma)$-approximate SSSP oracle for $G'$ directly yields an $O(1/\gamma^2)$-approximate SSSP oracle for $G$, and we are done. In total, we require $O(\log^2\log n + \log\log n)=O(\log^2\log n)$ rounds and $\tilde O(|E|+n^{1+\gamma})$ total memory, as desired.
\end{proof}

For the remainder of this work, we further discuss the building blocks as described in \Cref{section:building-blocks}. 

\section{Computing $k$-Nearest Hopsets}\label{section:k-nearest-hopset}
In this section, we aim to prove the following lemma, which constructs a $k$-nearest hopset for a given weighted graph $G$ in constant rounds. 
\kNearestHopset*
\paragraph{Construction.} We closely follow the construction of \cite[Lemma 2.2]{bui2024improved}. The central idea behind the construction is straightforward: Each vertex $u$ requests the $k$ shortest incident edges from every vertex $t\in \tilde N_k(u)$. This allows $u$ to learn the exact distances to a significant fraction of the vertices in $\tilde N_k(u)$; these exact distances are added as weighted edges to the hopset. Subsequently, one can show that for any vertex $v\in N_k(u)$, an appropriate choice of a hopset edge together with one edge from the original graph covers at least $1/a$-fraction of the distance from $u$ to $v$.  

More concretely, for every vertex $t\in V$, first denote $I_k(t)$ as the $k$ shortest edges incident to $t$ in $G$, breaking ties, say, by endpoint vertex ID. We define the edge set 
$$
T_k(u) = \bigcup_{t\in \tilde N_k(u)}I_k(t)
$$ 
and let $B_k(u)$ be the set of all vertices covered by at least one edge in $T_k(u)$. Lastly, for all $v\in B_k(u)$,  define $\delta_u(v)$ as the distance from $u$ to $v$ in the subgraph of $G$ induced by the edge set $T_k(u)$. We now define the weighted edge set $H$ as follows: 
$$
H=\set{\set{u,v}:u\in V\wedge v\in B_k(u)\wedge \delta_u(v)\neq\infty}, \ \ \ \text{where} \ \ \ 
\omega_H(u,v)=\delta_u(v).$$ If multiple candidates exist, choose the edge with the lowest weight. 
%We now claim that this edge set $H$ is indeed a $k$-nearest hopset for $G$. 
\paragraph{Correctness.} The hopset construction is essentially identical to \cite[Lemma 2.2]{bui2024improved}, except for a minor difference. In \cite{bui2024improved}, they defined $\delta_u(v)$ as the shortest distance from $u$ to $v$ in the induced subgraph of $T_k(u) \cup \set{\set{u,w}: \set{u,w}\in E}$, i.e., the edges incident to $u$ are also considered in the local shortest path computation. However, it is evident from their proof technique that this difference is not significant. For the sake of completeness, we make this explicit. 

We define the $d$-distance ball of a vertex $u$ as $\mathsf{Ball}_d(u):= \set{v: d_G(u,v)\leq  d}$. It is also convenient to define $\ell_k(u) := \max _{v\in N_k(u)}d_G(u,v)$, i.e., the minimum $d$ such that $N_k(u)\subseteq \mathsf{Ball}_d(u)$. Notice that for any $d < \ell_k(u)$, we have that $|\mathsf{Ball}_d(u)| < k$. 

We first invoke two claims from \cite{bui2024improved}. The first claim suggests that any approximate $k$-nearest neighbourhood $\tilde N_k(u)$ contains a ball of a certain distance radius around $u$. The second claim shows that this distance radius is sufficiently large such that short-cutting exact distances to vertices in this ball gives us the desired hopset. We note that the original statements in \cite{bui2024improved} only had $k=\sqrt{n}$, but it is clear that their proofs do not require any special value of $k$. 
\begin{claim}[\cite{bui2024improved}]\label{claim:hopset-ball}
    For any $u\in V$ and any $a$-approximate $k$-nearest neighbourhood $\tilde N_k(u)$ for $u$, we have $\mathsf{Ball}_{(\ell_k(u)-1)/a}(u)\subseteq \tilde N_k(u)$.
\end{claim}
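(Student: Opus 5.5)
The plan is a proof by contradiction, using only the definition of an $a$-approximate $k$-nearest neighbourhood and the counting fact already observed above: $|\mathsf{Ball}_d(u)| < k$ for every $d < \ell_k(u)$.

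Fix $u$, and let $\delta$ be the $a$-approximate distance estimate witnessing $\tilde N_k(u)$. Then $\tilde N_k(u)$ consists of the $k$ vertices smallest under $\delta(u,\cdot)$, with ties broken by ID. We may assume $a \ge 1$. If $\ell_k(u)$ is so small that $(\ell_k(u)-1)/a < 0$, the ball is empty and there is nothing to prove. So consider $r := (\ell_k(u)-1)/a \ge 0$, and suppose for contradiction that some $v \in \mathsf{Ball}_r(u)$ is not in $\tilde N_k(u)$.

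The first step bounds $\delta(u,v)$ from above. Since $d_G(u,v) \le r$, we get
\[
\delta(u,v) \le a\, d_G(u,v) \le \ell_k(u)-1.
\]

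The second step bounds $\delta(u,w)$ from below for every $w \in \tilde N_k(u)$. Because $v \notin \tilde N_k(u)$ while $w$ was selected, $w$ precedes $v$ in the $\delta$-order, so $\delta(u,w) \le \delta(u,v)$. Combined with $d_G \le \delta$, this gives
\[
d_G(u,w) \le \delta(u,w) \le \delta(u,v) \le \ell_k(u)-1 .
\]
Hence every $w \in \tilde N_k(u)$ lies in $\mathsf{Ball}_{\ell_k(u)-1}(u)$.

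The third step derives the contradiction. The set $\tilde N_k(u)$ has exactly $k$ vertices; if $n<k$, interpret all neighbourhoods as the whole of $V$ and the claim is trivial. So the ball $\mathsf{Ball}_{\ell_k(u)-1}(u)$ contains at least $k$ vertices. But $\ell_k(u)-1 < \ell_k(u)$, so by the counting fact this ball has fewer than $k$ vertices, a contradiction.

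The only delicate point, which is also the reason for the "$-1$" in the radius, is the strict inequality. Weights are positive integers, so all distances are integers, and every vertex at distance less than $\ell_k(u)$ is at distance at most $\ell_k(u)-1$. This is what the counting fact needs. Tie-breaking by ID causes no trouble, because the argument only uses the non-strict inequality $\delta(u,w)\le\delta(u,v)$.
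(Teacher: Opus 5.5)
Your proof is correct: the bound $\delta(u,v)\le a\,d_G(u,v)\le \ell_k(u)-1$ forces every vertex of $\tilde N_k(u)$ into $\mathsf{Ball}_{\ell_k(u)-1}(u)$, which contradicts the counting fact $|\mathsf{Ball}_d(u)|<k$ for $d<\ell_k(u)$. The paper gives no proof of its own here and imports the claim from \cite{bui2024improved}, so your argument is the standard self-contained justification. One small correction to your final remark: integrality of distances is not needed, since the counting fact applies at radius $\ell_k(u)-1$ simply because $\ell_k(u)-1<\ell_k(u)$, and indeed any radius $r$ with $a r<\ell_k(u)$ would work.
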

\begin{claim}[\cite{bui2024improved}]\label{claim:hopset-shortcut-ball-distances}
    The edge set $H':= \set{\set{u,v}: u\in V\wedge v\in \mathsf{Ball}_{(\ell_k(u)-1)/a}(u)}$ with edge weights $\omega_{H'}(u,v) := d_G(u,v)$ is a $k$-nearest $O(a\log d)$-hopbound hopset for $G$.
\end{claim}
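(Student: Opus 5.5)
The plan is to check the two conditions of \Cref{definition:hopsets} separately. The first one, no shrinking of distances, is immediate. Every edge $\{u,v\}\in H'$ has weight $d_G(u,v)$, so every path in $G\cup H'$ can be turned into a walk in $G$ of equal length by replacing each $H'$-edge with a shortest $u$–$v$ path in $G$. Hence $d_{G\cup H'}(u,v)=d_G(u,v)$.

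For the hopbound, the roles of $u$ and $v$ are symmetric, so I assume $v\in N_{G,k}(u)$ and fix a shortest path $P=(u=x_0,x_1,\ldots,x_m=v)$ in $G$. The whole argument rests on one inequality:
\[
\ell_k(x)\ \ge\ d_G(x,v)\qquad\text{for every vertex } x \text{ on } P.
\]
To prove it, write $r=\ell_k(x)$. The triangle inequality gives $\mathsf{Ball}_r(x)\subseteq \mathsf{Ball}_{r+d_G(u,x)}(u)$. The left-hand ball contains at least $k$ vertices, so $\ell_k(u)\le \ell_k(x)+d_G(u,x)$. Since $v\in N_k(u)$ we also have $d_G(u,v)\le \ell_k(u)$. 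Because $x$ lies on a shortest $u$–$v$ path, $d_G(u,v)-d_G(u,x)=d_G(x,v)$, and combining the three facts yields the inequality. One detail needs care: the tie-breaking by vertex ID in $N_k$ must not break the step $d_G(u,v)\le\ell_k(u)$. It does not, because $\ell_k(u)$ is defined as a maximum over $N_k(u)$ itself.

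Next I build a path from $u$ to $v$ in $G\cup H'$ greedily along $P$. Let $x$ be the current vertex and $D=d_G(x,v)$. Take $y$ to be the farthest vertex of $P$ after $x$ with $d_G(x,y)\le (D-1)/a$ (possibly $y=x$). By the inequality, $(D-1)/a\le(\ell_k(x)-1)/a$, so $\{x,y\}\in H'$ with weight exactly $d_G(x,y)$. Then take the single $G$-edge from $y$ to its successor $y'$ on $P$. Both moves follow $P$ with exact weights, so the concatenation has length exactly $d_G(u,v)$.

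For the progress bound, note that $d_G(x,y')>(D-1)/a$ by the maximality of $y$. Thus one round of two hops reduces the remaining distance from $D$ to less than $D-(D-1)/a$. Since the weights are integers, each round also reduces it by at least $1$. While $D\ge 2a$, the distance shrinks by a factor of at most $1-\tfrac{1}{2a}$ per round, which takes $O(a\log d)$ rounds. Once $D<2a$, at most $2a$ further rounds suffice. In total the path uses $O(a\log d)$ hops, so $d^{(O(a\log d))}_{G\cup H'}(u,v)=d_G(u,v)$.

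I expect the main obstacle to be the inequality $\ell_k(x)\ge d_G(x,v)$ itself. It is what guarantees that every intermediate vertex on $P$ has a large enough ball to make geometric progress toward $v$, and the remaining steps are routine accounting. If the tie-breaking or the ``$-1$'' in the radius causes trouble at the boundary, I would fall back on the integer slack: the ``$-1$'' is exactly what allows the use of the strict inequality $|\mathsf{Ball}_{d}(x)|<k$ for $d<\ell_k(x)$.
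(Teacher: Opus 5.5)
Your proof is correct and follows the approach the paper indicates for this claim, which it only cites from \cite{bui2024improved}: the inequality $\ell_k(x)\ge d_G(x,v)$ for every $x$ on a shortest $u$--$v$ path ensures that one $H'$-edge followed by one $G$-edge along the path covers more than $(D-1)/a$ of the remaining distance $D$. This gives geometric decrease of the remaining distance and hence $O(a\log d)$ hops. The paper offers no detail beyond this one-line sketch, and your argument correctly supplies it, including the tie-breaking point and the integer slack that absorbs the ``$-1$''.
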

To connect the preceding two claims, we show in the following claim that the edges $T_k(u)$ are sufficient to determine the exact distances to vertices in  $\mathsf{Ball}_{(\ell_k(u)-1)/a}(u)$.
\begin{claim}\label{claim:hopset-ball-edges}
    Consider any vertices $u,v\in V$ such that $v\in \mathsf{Ball}_{(\ell_k(u)-1)/a}(u)$, and let $u = w_1 \rightarrow w_2 \rightarrow \ldots \rightarrow w_m = v$ be any shortest path from $u$ to $v$ in $G$. For each integer $1\leq i \leq  m - 1$, we have that  $\set{w_i,w_{i+1}}\in I_k(w_i)$.
\end{claim}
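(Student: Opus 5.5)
We argue by contradiction, exploiting that every vertex on the shortest path is close to $u$ and hence close to each $w_i$. Fix $i$ with $1\le i\le m-1$. Let $r := (\ell_k(u)-1)/a$. Since $w_i$ and $w_{i+1}$ lie on a shortest $u$–$v$ path, we have $d_G(u,w_i) \le d_G(u,w_{i+1}) \le d_G(u,v) \le r$. Suppose, for contradiction, that $\set{w_i,w_{i+1}}\notin I_k(w_i)$. Then $w_i$ has at least $k$ incident edges $\set{w_i,x}$ that precede $\set{w_i,w_{i+1}}$ in the ordering defining $I_k(w_i)$. In particular, each such edge satisfies $\omega_G(w_i,x)\le \omega_G(w_i,w_{i+1})$, and these neighbours $x$ are $k$ distinct vertices different from $w_i$ (since $G$ is simple).

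The plan is to show that all of these $k$ vertices $x$, together with $w_i$ itself, lie in a ball around $u$ of radius strictly less than $\ell_k(u)$. For each such $x$,
\[
d_G(u,x) \le d_G(u,w_i) + \omega_G(w_i,x) \le d_G(u,w_i)+\omega_G(w_i,w_{i+1}) = d_G(u,w_{i+1}) \le r,
\]
where the equality uses that consecutive vertices lie on a shortest path. Hence $\mathsf{Ball}_r(u)$ contains the $k$ vertices $x$ together with $w_i$, so it has at least $k+1$ elements. It remains to check that $r<\ell_k(u)$. Since $a\ge 1$, we have $r = (\ell_k(u)-1)/a \le \ell_k(u)-1 < \ell_k(u)$. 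We may assume $\ell_k(u)\ge 1$: if $\ell_k(u)=0$ then $r<0$ and the ball is empty, so the claim is vacuous. (Weights are positive integers, so $\ell_k(u)\ge 1$ whenever $k\ge 2$.)

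By the remark preceding \Cref{claim:hopset-ball}, $|\mathsf{Ball}_d(u)|<k$ whenever $d<\ell_k(u)$. This contradicts $|\mathsf{Ball}_r(u)|\ge k+1$. Therefore $\set{w_i,w_{i+1}}\in I_k(w_i)$.

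The main subtlety is tie-breaking. Edges of equal weight ordered by endpoint ID are still edges of weight at most $\omega_G(w_i,w_{i+1})$, so the inequality $d_G(u,x)\le r$ continues to hold for them. We should also double-check the remark itself. For $d<\ell_k(u)$, the farthest vertex of $N_k(u)$ lies at distance $\ell_k(u)>d$, so it is outside the ball. Every vertex outside $N_k(u)$ is at distance at least $\ell_k(u)$, so it is also outside. Hence the ball is contained in $N_k(u)$ minus one vertex, giving fewer than $k$ elements, and the remark holds.
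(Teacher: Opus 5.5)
Your proposal is correct and follows the paper's own argument. Like the paper, you assume $\set{w_i,w_{i+1}}\notin I_k(w_i)$, note that the $k$ edges of $I_k(w_i)$ then have weight at most $\omega_G(w_i,w_{i+1})$, conclude that their endpoints lie in $\mathsf{Ball}_{(\ell_k(u)-1)/a}(u)$, and derive a contradiction with the fact that balls of radius below $\ell_k(u)$ contain fewer than $k$ vertices. Your extra steps (bounding via $d_G(u,w_{i+1})$ instead of $d_G(u,v)$, counting $w_i$ to reach $k+1$, and checking $r<\ell_k(u)$ using $a\ge 1$) just make explicit what the paper leaves implicit.
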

\begin{proof}
    Suppose not, i.e., there exists some vertex $w_i$ such that $\set{w_i,w_{i+1}}\notin I_k(w_i)$. Observe that this implies $|I_k(w_i)|=k$. Moreover, for every $z\in V$ such that $\set{w_i,z}\in I_k(w_i)$, we have $d_G(u,z)\leq d_G(u,v)$ -- to observe this, notice that $\omega_G(w_i,z)\leq \omega_G(w_i,w_{i+1}) \leq d_G(w_i,v)$ and therefore
    \begin{equation*}
        d_G(u,z) \leq d_G(u,w_i) + \omega_G(w_i,z) \leq d_G(u,w_i) + d_G(w_i,v) = d_G(u,v).
    \end{equation*}
    In particular, this implies that $d_G(u,z)\leq (\ell_k(u)-1)/a$ and $z\in \mathsf{Ball}_{(\ell_k(u)-1)/a}(u)$. Altogether, we find that the $k$ vertices $\set{z:\set{w_i,z}\in I_k(w_i)}$ are in $\mathsf{Ball}_{(\ell_k(u)-1)/a}(u)$ and hence $|\mathsf{Ball}_{(\ell_k(u)-1)/a}(u)| \geq k$. Contradiction. 
\end{proof}

\begin{claim}[\cite{bui2024improved}]\label{claim:k-nearest-hopset-correctness}
    The edge set $H$ is a $k$-nearest $O(a\log d)$-hopbound hopset for $G$.
\end{claim}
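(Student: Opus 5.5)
We prove the claim by comparing $H$ with the edge set $H'$ of \Cref{claim:hopset-shortcut-ball-distances}. The strategy is to show that $H$ does not shorten distances, and that every edge of $H'$ also lies in $H$ with the same weight. The hop-bounded guarantee for $H'$ then transfers to $H$ directly.

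\textbf{Step 1: no shortening.} Every weight $\omega_H(u,v)=\delta_u(v)$ is the length of an actual path in $G$, namely a path in the subgraph induced by $T_k(u)\subseteq E$. Hence $\delta_u(v)\geq d_G(u,v)$, and taking the minimum over parallel candidates preserves this. Any path in $G\cup H$ can be expanded into a walk in $G$ that is no longer, so $d_{G\cup H}(u,v)=d_G(u,v)$.

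\textbf{Step 2: $H'\subseteq H$ with exact weights.} Fix $u$ and $v\in\mathsf{Ball}_{(\ell_k(u)-1)/a}(u)$, and take a shortest path $u=w_1\to\cdots\to w_m=v$ in $G$.

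\begin{itemize}
\item Each $w_i$ lies on this path, so $d_G(u,w_i)\leq d_G(u,v)\leq(\ell_k(u)-1)/a$. Thus $w_i$ is in the ball, and by \Cref{claim:hopset-ball} we get $w_i\in\tilde N_k(u)$.
\item By \Cref{claim:hopset-ball-edges}, each edge $\{w_i,w_{i+1}\}$ lies in $I_k(w_i)$ for $i\leq m-1$, and therefore in $T_k(u)$.
\item The whole path is therefore contained in $T_k(u)$, so $v\in B_k(u)$ and $\delta_u(v)\leq d_G(u,v)$. Combined with Step 1, $\delta_u(v)=d_G(u,v)$.
\item The edge $\{u,v\}$ is thus in $H$ with weight $d_G(u,v)$. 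Taking the minimum over candidates from the other endpoint cannot lower it, by Step 1. The degenerate case $v=u$ is trivial.
\end{itemize}

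\textbf{Step 3: transfer the hopbound.} Let $u,v$ satisfy $v\in N_{G,k}(u)$ or $u\in N_{G,k}(v)$. \Cref{claim:hopset-shortcut-ball-distances} gives a path of $O(a\log d)$ hops in $G\cup H'$ of length $d_G(u,v)$. By Step 2, each $H'$-edge on it is present in $H$ with identical weight. Hence
\[
d^{(O(a\log d))}_{G\cup H}(u,v)\leq d_G(u,v),
\]
and Step 1 supplies the reverse inequality.

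The only delicate point is Step 2. We must verify that every intermediate vertex $w_i$, and not merely the endpoint $v$, lies in $\tilde N_k(u)$, since otherwise the edges $I_k(w_i)$ would never be requested. This follows from subpath monotonicity of shortest-path distances together with \Cref{claim:hopset-ball}.
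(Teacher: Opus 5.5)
Your proof is correct and takes the same approach as the paper. Like the paper, you use \Cref{claim:hopset-ball} and \Cref{claim:hopset-ball-edges} to show that every shortest path from $u$ to a vertex of $\mathsf{Ball}_{(\ell_k(u)-1)/a}(u)$ lies in $T_k(u)$, so $H$ contains the edges of $H'$ with exact weights, and then you invoke \Cref{claim:hopset-shortcut-ball-distances}. You also make explicit two details the paper leaves implicit: that $H$ never shortens distances, and that each intermediate vertex $w_i$ lies in $\tilde N_k(u)$ by subpath monotonicity.
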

\begin{proof}
    By \Cref{claim:hopset-ball,claim:hopset-ball-edges}, we find that the edge set $T_k(u)$ is sufficient to learn the exact distances to each vertex $v\in \mathsf{Ball}_{(\ell_k(u) - 1)/a}(u)$; in particular, the edges of every shortest path from $u$ to $v\in \mathsf{Ball}_{(\ell_k(u) - 1)/a}(u)$ are contained in $T_k(u)$. \Cref{claim:hopset-shortcut-ball-distances} implies that $H$ is indeed a $k$-nearest $O(a\log d)$-hopbound hopset for $G$.
\end{proof}

For the rest of the section, to finish the proof of \Cref{lemma:k-nearest-hopset}, we show that $H$ can be constructed efficiently in the sublinear MPC model.
In the following implementation details, we use two MPC procedures: the duplication lemma and labelling lemma. For the sake of presentation, we defer the description of these lemmas and their proofs to \Cref{appendix:mpc-tools}.

%\subsection{Algorithm Description}
The high-level idea for the construction is as follows: Every vertex $u$ first requests the set $I_k(t)$ from every vertex $t\in \tilde N_k(u)$. After receiving the edge set $T_k(u)$, vertex $u$ locally computes the distances $\delta_u(v)$ and adds the edge $\set{u, v}$ to $H$ for every vertex $v\in B_k(u)$. Implementing this idea in the MPC model does not pose any significant challenges, although care has to be taken in the communication of edges from one vertex to another.

We remark that the size of the edge set $|T_k(u)|\leq k^2$ by definition, and therefore we require $k^2\in O(n^\gamma)$ for the distances $\set{\delta_u(v):v\in B_k(u)}$ to be computed entirely on a single sublinear memory machine. Moreover, we also require the additional total memory of $O(nk^2)$ to store the edge sets $T_k(u)$ for all vertices $u\in V$. 

\paragraph{Representing Information.} Before discussing implementation details, we first need to describe how we represent information within the MPC model. 
We categorize information into two types of information \emph{packets}:

\begin{description}
    \item[Edges:] For each edge $\set{a,b}\in E$ with weight $\omega_G(a,b)$ \emph{belonging} to vertex $c$, we represent it as $(\mathcal E, c, \set{a,b}, \omega_G(a,b))$.
    \item[$k$-Nearest:] For each vertex $b\in \tilde N_k(a)$, we represent it as $(\mathcal K, b, a)$.
\end{description}
The symbols $\mathcal E$ and $\mathcal K$ are used only to differentiate between the two types of information packets.

\begin{description}
    \item[Step 0: Initialisation.] At the start of the algorithm, we may assume for every vertex $u\in V$, we only have information of its incident edges and its approximate $k$-nearest neighbourhood $\tilde N_k(u)$. 
    \begin{itemize}
        \item For each edge $\set{u,v}\in E$, there exists a machine that contains $(\mathcal E, u, \set{u,v},\omega_G(u,v))$.
        \item For each vertex $t\in \tilde N_k(u)$, there exists a machine that contains $(\mathcal K,t,u)$.
    \end{itemize}
    Initially, this information is arbitrarily distributed across all the machines. For every edge $\set{u,v}\in E$, we may assume there exist two packets representing this edge: \[(\mathcal E, u, \set{u,v},\omega_G(u,v)) \ \ \  \text{and} \ \ \ (\mathcal E, v, \set{u,v},\omega_G(u,v)).\]
    
    \item[Step 1: Send Requests.] To simulate the sending of a request from one vertex to another, it suffices to sort the information packets. Namely, by applying the sorting procedure in \Cref{lemma:mpc-toolbox} using the second item in the tuple as the key, each vertex $t\in V$ now has the following information stored on a \emph{contiguous} set of machines $M(t)$. 
    \begin{itemize}
        \item $(\mathcal E, t, \set{t,v},\omega_G(t,v))$, for every edge $\set{t,v}\in E$ incident to $t$.
        \item $(\mathcal K, t, u)$, for every vertex $u$ such that $t\in \tilde N_k(u)$.
    \end{itemize}
    Denote the machine in $M(t)$ with the smallest ID as $\widehat M(t)$. This step runs in $O(1)$ rounds. Intuitively,  $(\mathcal K, t, u)$ corresponds to a request from $u$ asking $t$ for the edge set $I_k(t)$.
    
    \item[Step 2: Create Duplicates of Incident Edges.] For each vertex $t\in V$, we write $D_k(t) = \set{u:t\in\tilde N_k(u)}$ and $d_t=|D_k(t)|$. Recall that our goal is to let $t$ send  the edge set $I_k(t)$ to each vertex in $D_k(t)$. The messages to be sent are prepared as follows: We create $d_t$ duplicates of the edge set $I_k(t)$ and thereafter label each duplicate according to the vertex IDs in $D_k(t)$. To initiate this procedure, we require that each leader machine $\widehat M(t)$ learns the value of $d_t$ and the edge set $I_k(t)$. We do so in the following way: 
    \begin{itemize}
        \item Learning the value $d_t$ is done by counting the total number of packets $(\mathcal K, t, \cdot)$ stored across the machines $M(t)$. We execute this by calling the prefix sum sub-routine in \Cref{lemma:mpc-toolbox} and sorting such that the leader machine $\widehat M(t)$ knows the sum $d_t$.
        
        \item Learning the edge set $I_k(t)$ is done by sorting the edge packets in $M(t)$ by their edge weights (using \Cref{lemma:mpc-toolbox}). We then collect the edges $I_k(t)$ onto the leader machine $\widehat M(t)$ with one additional round (which can be done as $|I_k(t)|\leq k$ and $k\in O(n^\gamma)$ by assumption). 
    \end{itemize}
    Now, we are ready to duplicate and label the edges.
    \begin{enumerate}
        \item We first apply \Cref{lemma:mpc-duplicate}  to create $d_t$ duplicates of $I_k(t)$ in the form $(I_k(t), 1), (I_k(t), 2), \ldots, (I_k(t), d_t)$.
        \item We next apply \Cref{lemma:mpc-label} to relabel each duplicate with a vertex ID in $D_k(t)$. In other words, denoting $D_k(t)=\set{u_1,u_2,\ldots,u_{d_t}}$, we now have the duplicates in the form of $(I_k(t), u_1),(I_k(t), u_2),\ldots, (I_k(t),  u_{d_t})$.
        \item Lastly, for each duplicate $(I_k(t), u_j)$, we relabel every edge packet in $I_k(t)$ from $(\mathcal E, t,\set{t,v},\omega_G(t,v))$ to  $(\mathcal E, u_j,\set{t,v},\omega_G(t,v))$.
    \end{enumerate}  
    Crucially, since $\sum_{t\in V}d_t\leq nk$, we only require $O(nk^2)$ additional total memory to apply \Cref{lemma:mpc-duplicate,lemma:mpc-label} which is satisfied by our constraints. Overall, this entire step runs in $O(1)$ rounds. 
    
    \item[Step 3: Send and Receive Edges.] Similar to Step 1, we again sort the edge packets by the second item in the tuple using \Cref{lemma:mpc-toolbox}. At this point, we can discard the $k$-nearest packets. After this, for each vertex $u\in V$, we have the packets $(\mathcal E, u,\set{t,v},\omega_G(t,v))$ for the edges $\set{t,v}\in T_k(u)$ stored on a set of contiguous machines $M(u)$ (Note that these are not necessarily the same machines in Step 2). 
    
    Recalling that $|T_k(u)| \leq k^2$ and $k^2\in O(n^\gamma)$ by assumption, the entire set of edges $T_k(u)$ can be collected onto the leader $\widehat M(u)$ with an additional $O(1)$ rounds (note that we do not need the entire set of incident edges of $u$, which could have size larger than $k^2$). Overall, this step requires $O(1)$ rounds.  
    
    \item[Step 4: Add Edges to $H$.] Each leader machine $\widehat M(u)$ (from Step 3) now locally computes the distances $\delta_u(v)$ for all $v\in B_k(u)$. Subsequently, for $v\in B_k(u)$ (and $\delta_u(v)\neq \infty$), the machine $\widehat M(u)$ creates a new packet $(\mathcal E, u,\set{u,v},\delta_u(v))$ to be included into the edge set $H$. 
\end{description}

This concludes the algorithm description. We make a few concluding remarks which are not critical to the algorithm description, but are worth mentioning for clarity: 
\begin{itemize}
    \item In the last step of the algorithm, it suffices for each machine $\widehat M(u)$ to just mark which edges are included in $H$. We do not specify that the edges in $H$ need to be on a set of contiguous machines. 
    \item To remove duplicate edges in $H$, it suffices to perform one final sorting step using \Cref{lemma:mpc-toolbox} to group all edge weights corresponding to each edge $\set{u,v} \in H$ onto a contiguous set of machines, with the weights within each group sorted. Then, each group can locally retain the smallest edge weight using the ``finding minimum'' procedure in \Cref{lemma:mpc-toolbox}, and discard the rest.
\end{itemize}

We summarise the above discussion in the following proof of \Cref{lemma:k-nearest-hopset}.
\begin{proof}[Proof of \Cref{lemma:k-nearest-hopset}]
	We apply the algorithm as described above to construct the edge set $H$. By the algorithm description, the edge set $H$ constructed by the algorithm is exactly the edge set $H$ specified in \Cref{claim:k-nearest-hopset-correctness}, and is therefore a $k$-nearest $O(a\log d)$-hopbound hopset for $G$. Using $\tilde O(|E| + nk^2)$ total memory, each step of the algorithm runs in $O(1)$ rounds. Therefore, the entire algorithm runs in $O(1)$ rounds as well, as desired. 
\end{proof}

\section{Extracting $k$-Nearest Neighbourhoods}\label{section:extracting-neighbourhoods}
In this section we prove the following lemma, which gives us $k$-nearest $h$-hop neighbourhoods according to the input graph. In particular, if we have a $c$-stretch $h$-hopbound emulator for a graph $G$, we can obtain $c$-approximate $k$-nearest neighbourhoods for all vertices in $G$ using $O(\log h)$ rounds. Alternatively, if we have a $k$-nearest $h$-hopbound hopset $H$ for $G$, applying the lemma on $G\cup H$ gives exact $k$-nearest neighbourhoods for all vertices in $G$. That is, in conjunction with \Cref{lemma:k-nearest-hopset}, we have a useful tool for extracting exact $k$-nearest neighbourhoods from approximate $k$-nearest neighbourhoods in the sublinear MPC model.

\exactKNearest*

\paragraph{Notation.} Before describing the construction, it is convenient to define some notation for matrices. We call $C$ the \emph{adjacency matrix} of a graph $H=(V_H,E_H)$ if $C_{u,v}=\omega_H(u,v)$ for all vertices $u,v\in V_H$. We denote $\overline C$ as the \emph{filtered} matrix of $C$, where only the $k$ smallest elements of each row in $C$ are kept, breaking ties by ID (with the remaining elements set to $\infty$). We specifically denote $\mathcal A$ to be the adjacency matrix of the input graph $G$, i.e., $\mathcal A_{u,v}=\omega_{G}(u,v)$ for all vertices $u,v\in V$.

In this section, we only consider matrix multiplication under the min-plus semiring. Observe that for all integers $i \geq 1$ and for all vertices $u,v\in V$, we have $\mathcal A^i_{u,v}=d_G^{(i)}(u,v)$. In other words, learning the matrix $\overline{\mathcal A^i}$ is equivalent to learning the neighbourhoods $N^{(i)}_k(u)$ for all vertices $u\in V$. 

\paragraph{Construction.} The construction is an application of the well-known graph exponentiation technique: Each vertex $u$ maintains a vertex set which describes its current view of its $k$-nearest neighbourhood. Initially, this vertex set is simply $N^{(1)}_k(u)$, the $k$ nearest adjacent vertices to $u$ in $G$. During the $i$th iteration, the vertex $u$ starts with the $k$-nearest $2^i$-hop neighbourhood $N^{(2^i)}_k(u)$, and gathers the sets $N^{(2^i)}_k(t)$ for every vertex $t\in N^{(2^i)}_k(u)$. Using this new information, $u$ is able to exactly construct the $2^{i+1}$-hop neighbourhood $N^{(2^{i+1})}_k(u)$. We repeat this process until every $u\in V$ knows the $h$-hop neighbourhood $N_k^{(h)}(u)$.

More concretely, we define the sequences of matrices $\set{B_i}$ for $i\in\set{0,1,2,\ldots}$ as follows:
\begin{align*}
    B_0 &= \overline {\mathcal A},\\
    B_i &= \overline{B_{i-1}^2}.
\end{align*}
For every vertex $u\in V$, the entries of row $u$ in $B_{i+1}$ correspond to the vertices and values that $u$ learns at the end of the $i$th iteration. In other words, we want to show that $B_i=\overline{\mathcal A^{2^i}}$ for all integers $i\geq 0$.

\paragraph{Correctness.} We adopt a useful lemma from \citet{bui2024improved}, which claims that for any adjacency matrix $C$ representing a graph with positive integer weights, the $k$ smallest entries in each row of $(\overline C)^i$ are the $k$ smallest entries in the corresponding row of $C^i$.
\begin{lemma}[\cite{bui2024improved}]\label{lemma:filter-at-end}
    Let $C$ be the adjacency matrix of a graph with positive integer weights. Then we have $\overline{(\overline C)^i}=\overline{C^i}$ for all integers $i\geq 1$.
\end{lemma}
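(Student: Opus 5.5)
We prove $\overline{(\overline C)^i}=\overline{C^i}$ by induction on $i$. The base case $i=1$ is the idempotence $\overline{\overline C}=\overline C$: filtering keeps the $k$ smallest entries of each row, and after that only those entries are finite, so they are again the $k$ smallest. Ties are broken consistently by ID throughout.

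For the inductive step it suffices to prove, for any matrix $D$, the row-wise statement
\[
\overline{D\,\overline C}=\overline{D\,C}.
\]
Write $(\overline C)^{i+1}=(\overline C)^i\,\overline C$. The first goal is to reduce to the case where the left factor is already filtered, i.e. to prove something like $\overline{\overline{X}\,Y}=\overline{XY}$ whenever $Y$ is an adjacency-power matrix. This is the key combinatorial fact. Combined with $\overline{X\,\overline C}=\overline{XC}$ and the inductive hypothesis, the chain would be
\[
\overline{(\overline C)^{i}\,\overline C}=\overline{(\overline C)^i C}=\overline{\overline{(\overline C)^i}\,C}=\overline{\overline{C^i}\,C}=\overline{C^iC}.
\]

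The core argument uses positivity of the weights and is about paths. Fix a row $u$. Let $v$ be among the $k$ smallest entries of row $u$ of $C^{i+1}$, with value $\mathcal D=C^{i+1}_{u,v}$, witnessed by a walk $u=w_0,\dots,w_{i+1}=v$.

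Claim: every prefix vertex $w_j$ satisfies $C^{j}_{u,w_j}\le$ the prefix length $<\mathcal D$. Hence every $w_j$, and moreover every vertex $x$ with $C^{j}_{u,x}$ at most that prefix length, lies strictly before $v$ in the order of row $u$ of $C^{i+1}$, up to ties. One must be careful here, since $C^{j}$ and $C^{i+1}$ are different hop bounds. We use $C^{j}_{u,x}\ge C^{i+1}_{u,x}$ when $C$ has zero diagonal, i.e. $\omega(u,u)=0$ as the paper sets, so that fewer hops can be padded.

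Strict positivity gives that the prefix to $w_j$ is strictly shorter than $\mathcal D$. Therefore fewer than $k$ vertices precede it, and $w_j$ is within the $k$ smallest of row $u$ of $C^{j}$. Similarly, each edge $w_j\to w_{j+1}$ must be among the $k$ smallest of row $w_j$ of $C$. Otherwise there would be $k$ neighbours $z$ of $w_j$ with $\omega(w_j,z)\le\omega(w_j,w_{j+1})$. Each such $z$ has $C^{j+1}_{u,z}\le$ the prefix length to $w_{j+1}$, which is at most $\mathcal D$. These $k$ vertices, plus $v$ with tie care, give too many vertices ranked at or before $v$, a contradiction.

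So every witness walk of a top-$k$ entry survives in $(\overline C)^{i+1}$. Thus the top-$k$ values are preserved, $(\overline C)^{i+1}\ge C^{i+1}$ entrywise, and the top-$k$ sets coincide.

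The main obstacle is tie-breaking. The statement "$z$ precedes $v$" uses the order (value, ID). When weights tie, the counting argument must show that the $k$ surviving neighbours are exactly the ID-smallest ones, and that the resulting top-$k$ set of row $u$ is identical, not merely equal in values. I would handle this by choosing, among witness walks, one that is lexicographically minimal, and arguing that ranks in $C^{j}$ dominate ranks in $C^{i+1}$ via the zero diagonal. If the zero-diagonal padding fails in the paper's convention, I would instead phrase everything directly by induction on the number of hops using the inequality $C^{j}\ge C^{j'}$ for $j\le j'$.
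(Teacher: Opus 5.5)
The paper gives no proof of this lemma; it cites it from \cite{bui2024improved}. There is therefore no in-paper argument to compare against, and judged on its own your proposal does contain a correct proof: the path-based ``core argument''.

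That argument never uses the inductive hypothesis. For an arbitrary exponent $m\ge 1$, it shows that every step of a witness walk for a top-$k$ entry of row $u$ of $C^m$ is a kept entry of $\overline C$. Hence $(\overline C)^m_{u,v}\le C^m_{u,v}$ at those positions, and together with $(\overline C)^m\ge C^m$ entrywise the claim follows directly.

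The induction and the displayed chain are therefore unnecessary, and as written they do not stand on their own. The identity $\overline{D\,\overline C}=\overline{D\,C}$ is not sufficient for the chain by itself; it also needs $\overline{\overline X\,C}=\overline{X\,C}$, and you prove neither identity. Both are in fact true, by the same exchange argument: the first for arbitrary $D$, the second using that $C$ has zero diagonal and positive off-diagonal entries. You can simply drop this part and present the direct argument.

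Two points in the core argument need tightening.

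First, the assertion that the prefix ending at $w_j$ is strictly shorter than $\mathcal D$ is false once the walk has reached $v$ and pads with zero-weight diagonal steps. Its consequence, that $w_j$ lies in the top $k$ of row $u$ of $C^j$, is never used, so delete it.

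Second, tie-breaking is not a real obstacle, and you do not need a lexicographically minimal walk; any witness walk works. Write $\ell_j$ for the length of the prefix ending at $w_j$. If $z$ precedes $w_{j+1}$ in the (weight, ID) order of row $w_j$ of $C$, then
\[
C^{i+1}_{u,z}\le C^{j+1}_{u,z}\le \ell_j+\omega(w_j,z)\le \ell_j+\omega(w_j,w_{j+1})=\ell_{j+1}\le \mathcal D .
\]
If $C^{i+1}_{u,z}=\mathcal D$, then all of these are equalities:
\begin{itemize}
\item $\ell_{j+1}=\mathcal D$ means the rest of the walk has weight $0$. Since off-diagonal weights are positive, the rest consists only of diagonal steps, so $w_{j+1}=v$.
\item $\omega(w_j,z)=\omega(w_j,v)$ then forces $\mathrm{ID}(z)<\mathrm{ID}(v)$.
\end{itemize}
So every such $z$ strictly precedes $v$ in the (value, ID) order of row $u$ of $C^{i+1}$. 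Having $k$ of them contradicts $v$ being in the top $k$.

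For the final step, make one more thing explicit. The top-$k$ vertices of row $u$ keep their values in $(\overline C)^{i+1}$, while every other entry can only increase. So in the (value, ID) order they still precede all other vertices, and the filtered rows agree, including any entries equal to $\infty$.
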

We now proceed to show that $B_i=\overline{\mathcal A^{2^i}}$. There is a slight difference between \Cref{lemma:filter-at-end} and the construction of the matrices $\set{B_i}$: For the former, the filtering occurs only \emph{at the end} of all $i$ iterations; for the latter, the filtering occurs \emph{between} each iteration. However, by applying \Cref{lemma:filter-at-end} between each iteration, we can show the desired claim.
\begin{lemma}\label{lemma:filter-every-iteration}
    For all integers $i\geq 0$, we have $B_i=\overline{\mathcal A^{2^i}}$.
\end{lemma}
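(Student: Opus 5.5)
I will prove $B_i=\overline{\mathcal A^{2^i}}$ by induction on $i$, using \Cref{lemma:filter-at-end} as the only nontrivial ingredient. The base case $i=0$ is immediate, since $B_0=\overline{\mathcal A}=\overline{\mathcal A^{1}}$ by definition.

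For the inductive step, assume $B_{i-1}=\overline{\mathcal A^{2^{i-1}}}$. Set $C:=\mathcal A^{2^{i-1}}$, which is the matrix of $2^{i-1}$-hop distances $d_G^{(2^{i-1})}$. Then
\[
B_i=\overline{B_{i-1}^2}=\overline{(\overline{C})^2}.
\]
Applying \Cref{lemma:filter-at-end} to $C$ with exponent $2$ gives $\overline{(\overline C)^2}=\overline{C^2}$. By associativity of min-plus matrix multiplication, $C^2=\mathcal A^{2^{i-1}}\cdot\mathcal A^{2^{i-1}}=\mathcal A^{2^i}$. Hence $B_i=\overline{\mathcal A^{2^i}}$, which closes the induction.

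The one point needing care is checking that \Cref{lemma:filter-at-end} applies to $C$. The lemma is stated for adjacency matrices of graphs with positive integer weights, whereas $C$ is a power of $\mathcal A$. I would view $C$ as the adjacency matrix of the auxiliary graph on $V$ whose edge $\{u,v\}$ carries weight $d_G^{(2^{i-1})}(u,v)$, with $\infty$ meaning no edge. This graph is undirected because $C$ is symmetric, and its off-diagonal weights are positive integers because $G$ has positive integer weights.

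The diagonal needs separate attention. The paper's convention $\omega_G(u,u)=0$ gives $\mathcal A_{u,u}=0$, and so $C_{u,u}=0$. These self-loops of weight $0$ are exactly what make $\mathcal A^i_{u,v}$ equal to the at-most-$i$-hop distance, and the original $\mathcal A$ already has them. So the lemma, as used with $C=\mathcal A$, must already accommodate zero diagonals, and $C$ has the same form. If one is worried about this, an alternative is to reprove the needed instance of the lemma directly, with ties in the filter broken by ID. The claim to verify is: if $v$ is among the $k$ smallest entries of row $u$ of $C^2$, witnessed by $C_{u,w}+C_{w,v}$, then $w\in\overline C$'s row $u$ and $v\in\overline C$'s row $w$. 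This holds because any $k$ vertices closer to $u$ than $w$ (or closer to $w$ than $v$) would, by the triangle inequality, yield $k$ vertices beating $v$ in row $u$ of $C^2$. Positivity of weights is what makes that triangle-inequality argument strict enough.
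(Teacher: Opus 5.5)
Your proof is correct and matches the paper's: induction on $i$, viewing $\mathcal A^{2^{i-1}}$ as the adjacency matrix of the auxiliary graph with weights $d_G^{(2^{i-1})}$, applying \Cref{lemma:filter-at-end} with exponent $2$, and then using $(\mathcal A^{2^{i-1}})^2=\mathcal A^{2^i}$. Your remarks on the zero diagonal and on the positive integer off-diagonal weights, and your direct fallback argument, are sound; they make explicit the check that the lemma's hypotheses hold, which the paper leaves implicit.
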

\begin{proof}
    We prove this by induction. In the base case $i=0$, we have $B_0=\overline {\mathcal A} = \overline{\mathcal A^{2^0}}$ as claimed. Proceeding to the inductive case: Suppose the claim holds for $i=j$. Invoking our induction hypothesis, we have that
    $$
        B_{j+1}=\overline{B_j^2}=\overline{\left(\overline{\cA^{2^j}}\right)^2}.
    $$
    Observe that the matrix $\cA^{2^j}$ represents the adjacency matrix of the graph $G'=(V,E')$ with edge weights $\omega_{G'}(u,v)=d_G^{(2^j)}(u,v)$. Therefore, we can apply \Cref{lemma:filter-at-end} on the matrix $A^{2^j}$ with $i=2$ to obtain
    $$
    B_{j+1}=\overline{\left(\overline{\cA^{2^j}}\right)^2} = \overline{(\cA^{2^j})^2}=\overline{(\cA^{2^{j+1}})},
    $$
    as desired.
\end{proof}

\subsection{Algorithm Description}

We describe the algorithm in iterations: For each iteration $i\in\set{0,1,2,\ldots}$, we start with the matrix $B_{i}$, and by communicating minimal but sufficient information between vertices, we show that we can construct the matrix $B_{i+1}$ in a constant number of rounds. For ease of notation, for all $i\in\set{0,1,2,\ldots}$ and for all vertices $u\in V$, we define 
$$
B_i(u) = \left\{\left(u,t,d_G^{(2^{i})}(u,t)\right):t\in N_k^{(2^i)}(u)\right\}.
$$
In other words, $B_i(u)$ represents the entries of the row $u$ of $B_i$ (by \Cref{lemma:filter-every-iteration}). To perform the matrix multiplication, observe that to determine the values $B_{i+1}(u)$ for any vertex $u\in V$, it suffices to have the following information:
\begin{itemize}
    \item The values $B_{i}(u)$, and
    \item The values $B_{i}(t)$ for all vertices $t\in N_k^{(2^{i})}(u)$.
\end{itemize}
By extension, recalling that $|B_{i}(v)|\leq k$ for all vertices $v\in V$, we require $O(k^2)$ memory per machine in order to construct the values $B_{i+1}(u)$ entirely locally. We note that the algorithm description for each iteration is nearly identical to the algorithm for \Cref{lemma:k-nearest-hopset}. Therefore, for the sake of brevity, we describe only the essential details.

\paragraph{Initialisation.} To determine the values $B_0(u)$ for every vertex $u\in V$, it suffices to sort the edges such that the $k$ shortest edges incident on $u$ (breaking ties by ID) are stored on a single machine. Using \Cref{lemma:mpc-toolbox}, this requires $O(1)$ rounds.

\paragraph{A Single Iteration.} Our procedure for each iteration proceeds as follows: In iteration $i\in \set{0,1,2,\ldots}$, every vertex $u\in V$ initially has the values $B_{i}(u)$. 
\begin{description}
    \item[Step 1. Sending Requests.] For every vertex $u\in V$, it sends a request for the values $B_{i}(t)$ for each vertex $t\in N_k^{(2^{i})}(u)$. Using \Cref{lemma:mpc-toolbox}, this step requires $O(1)$ rounds.
    
    \item[Step 2. Duplicating Information.] For every vertex $t\in V$, denote $D_i(t)=\set{u:t\in N_k^{(2^{i})}(u)}$ and $d_t = |D_i(t)|$. Recall that our goal is to let $t$ send the values $B_{i}(t)$ to every vertex in $D_{i}(t)$. This proceeds as follows: 
    \begin{enumerate}
        \item We duplicate $d_t$ copies of the values $B_{i}(t)$.
        \item We label each duplicate with a vertex ID in $D_i(t)$.
    \end{enumerate}
    In other words, denoting $D_i(t)=\set{u_1,u_2,\ldots,u_{d_t}}$, we have constructed the values $(B_{i}(t),u_1),(B_{i}(t),u_2),\ldots,(B_{i}(t),u_{d_t})$. Using \Cref{lemma:mpc-duplicate} and \Cref{lemma:mpc-label}, this step requires $O(1)$ rounds. Moreover, recalling that $\sum_{t\in V}d_t\leq nk$, this step requires $O(nk^2)$ additional total memory.

    \item[Step 3. Performing Matrix Multiplication. ] For every vertex $u\in V$, we sort the values across the machines such that we have the following information stored on a \emph{single} machine $\widehat M(u)$:
    \begin{itemize}
        \item The values $B_{i}(u)$, and
        \item The values $B_{i}(t)$ for all vertices $t\in N_k^{(2^{i})}(u)$.
    \end{itemize}
    Finally, using its local information, $\widehat M(u)$ is able to construct the values $B_{i+1}(u)$ as specified in the earlier construction. Using \Cref{lemma:mpc-toolbox}, this step requires $O(1)$ rounds. All remaining information is then discarded.
\end{description}

\begin{claim}\label{claim:exact-k-algorithm}
    Running $T$ consecutive iterations of the above algorithm in the sublinear MPC model requires $O(T)$ rounds and $\tilde O(|E| + nk^2)$ total memory (independent of $T$).
\end{claim}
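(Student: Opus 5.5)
The plan is to show that a single iteration uses $O(1)$ rounds and a fixed memory budget that never accumulates. The round bound then follows by summation. For memory, we discard all intermediate information at the end of each iteration, keeping only the $B_{i+1}(u)$ sets. This means the peak memory of any one iteration bounds the memory of the whole run, independent of $T$.

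For rounds, I would go through the three steps and the initialisation and charge each to earlier results.
\begin{itemize}
\item The initialisation and Step 1 are sorting operations, so they take $O(1)$ rounds by \Cref{lemma:mpc-toolbox}.
\item In Step 2, each leader $\widehat M(t)$ first needs $d_t$ and $B_i(t)$. We obtain $d_t$ by prefix sums over the request packets, and collect $B_i(t)$ directly since $|B_i(t)|\le k\in O(n^\gamma)$. Duplication and labelling then take $O(1)$ rounds by \Cref{lemma:mpc-duplicate} and \Cref{lemma:mpc-label}.
\item Step 3 is one more sort by destination vertex. Afterwards everything addressed to $u$ is gathered onto a single machine $\widehat M(u)$.
\end{itemize}
That last gathering is legitimate because the total volume addressed to $u$ is $|B_i(u)|+\sum_{t\in N_k^{(2^i)}(u)}|B_i(t)|\le k+k^2\in O(n^\gamma)$, using $k^2\in O(n^\gamma)$. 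The min-plus product and the filtering to the $k$ smallest entries are then done locally, with no communication. So each iteration costs $O(1)$ rounds, and $T$ iterations cost $O(T)$ rounds.

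For memory, at the start of an iteration we store the input edges, using $O(|E|)$, together with the sets $B_i(u)$, using at most $nk$ words. The request packets number $\sum_t d_t=\sum_u |N_k^{(2^i)}(u)|\le nk$. The duplicated payload has size $\sum_t d_t\cdot|B_i(t)|\le nk\cdot k=nk^2$. The tree-based primitives of \Cref{lemma:mpc-toolbox,lemma:mpc-duplicate,lemma:mpc-label} add at most polylogarithmic overhead. Hence each iteration runs within $\tilde O(|E|+nk^2)$ total memory.

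The main obstacle is making sure memory does not grow over iterations. I would handle this by stating explicitly that at the end of Step 3 all duplicates and requests are deleted, and only $B_{i+1}(u)$, of size at most $k$ per vertex, is carried forward. The memory state at the start of each iteration is then identical in size, so the same $\tilde O(|E|+nk^2)$ bound applies to every iteration regardless of $T$. A secondary point to check is that the per-machine load in Step 2 stays within $O(n^\gamma)$ even when some $d_t$ is very large. This is exactly what \Cref{lemma:mpc-duplicate} guarantees, by spreading the duplicates over a contiguous segment of machines.
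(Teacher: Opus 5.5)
Your proposal is correct and follows essentially the same argument as the paper. Each of the three steps costs $O(1)$ rounds via the sorting, duplication and labelling primitives; each iteration uses only $O(nk^2)$ additional memory because $\sum_t d_t \le nk$ duplicates of size-$k$ rows are created; and discarding everything except $B_{i+1}$ after Step 3 prevents memory from accumulating across iterations. You also spell out details the paper leaves implicit, such as the $k+k^2\in O(n^\gamma)$ bound that justifies gathering all of $u$'s data onto $\widehat M(u)$.
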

\begin{proof}
    We first argue about the round complexity. For each iteration, every step requires only $O(1)$ rounds. Since there is a constant (three) number of steps, each iteration only requires $O(1)$ rounds, and running $T$ consecutive iterations requires $O(T)$ rounds.

    We next argue about the total memory constraints. Observe that each iteration requires only $O(nk^2)$ additional memory: Step 1 requires $O(nk)$ additional memory, while Steps 2 and 3 require $O(nk^2)$ additional memory. Naively, we might assume that we therefore require $O(T\cdot nk^2)$ additional total memory overall. However, recall that after Step 3 in each iteration, we discard all remaining information except for the values of the matrix $B_{i+1}$. In other words, there is no accumulation of memory usage between iterations, and the total memory required remains as $\tilde O(|E| + nk^2)$, regardless of the number of iterations. 
\end{proof}

\paragraph{The Final Proof.} We summarise our discussion for this section with the proof of \Cref{lemma:exact-k-nearest}.
\begin{proof}
     We run the above algorithm for $r$ iterations, which requires $O(r)$ rounds and $\tilde O(|E|+nk^2)$ total memory by \Cref{claim:exact-k-algorithm}. By the end of the algorithm, we have constructed the entries of the matrix $B_{r}=\overline{A^h}$. Now applying \Cref{lemma:filter-every-iteration}, we observe that this is equivalent to the values $\set{(u,t,d_G^{(h)}(u,t)):u\in V\wedge t\in N_k^{(h)}(u)}$, and we are done.
\end{proof}

\section{Skeleton Graphs}\label{section:skeleton-graph}

In this section, we prove the following lemma, which provides a fast construction for the skeleton graph as defined in \Cref{definition:skeleton-graph}.

\skeletonGraphImp*

Our skeleton graph is based on the construction proposed by \cite{bui2024improved}, which has exactly the properties described in \Cref{definition:skeleton-graph}. However, the main (and only) issue with this construction is that the number of edges in the skeleton graph is poorly bounded. Therefore, we suggest a modified construction that maintains the same properties while bounding the number of edges by $O(|E|k^2)$. In doing so, we require adding additional vertices to the skeleton graph; however, we show that the number of additional vertices is at most $n/k$, and therefore the number of vertices is still bounded above by $\tilde O(n/k)$.

\paragraph{Original Construction.} We first discuss the construction proposed by \cite{bui2024improved}. We follow a clustering approach: The cluster centers are the vertex set $C\subseteq V$, chosen such that $|C| \in O((n\log k)/k)$ and $C\cap N_k(u)\neq\emptyset$ for all $u\in V$ (i.e., $C$ is a hitting set for the $k$-nearest neighbourhoods). The vertex set $C$ is constructed by first adding each vertex $w\in V$ to $C$ independently with probability $p \in \Theta((\log k)/k)$ and subsequently adding any remaining vertices $w'$ with $C\cap N_k(w')=\emptyset$. 

For all vertices $u\in V$, we add $u$ to the cluster with the nearest center. More concretely, we choose a $c(u)\in C$ such that the distance $d_G(u,c(u))$ is the smallest among all vertices in $C\cap N_k(u)$ (breaking ties arbitrarily). We now define the graph $G^*_S = (V^*_S,E^*_S)$ as follows:
\begin{itemize}
	\item $V_S^* = C$, and
	\item For all vertices $u,v\in V$, we add the edge $\set{c(u),c(v)}\in E^*_S$ if there exists a vertex $t\in V$ such that all of the following hold:
	\begin{itemize}
		\item $t\in N_k(u)$ (i.e., $t$ is in the $k$-nearest neighbourhood of $u$), and
		\item $\set{t,v}\in E$ (i.e., $t$ is adjacent to $v$ in $G$) or $t=v$.
	\end{itemize}
	We define the edge weight $\omega_{G_S^*}(c(u),c(v)) := d_G(c(u),u) + d_G(u,t) + \omega_G(t,v) + d_G(v,c(v))$. If there are multiple candidate edge weights, choose the minimum weight.
\end{itemize}
We now give a proof sketch of the correctness of $G_S^*$. Namely, for any two vertices that are not in the $k$-nearest of each other, there exists a path in $E_k\cup E_S^*$ that satisfies the two properties stated in \Cref{definition:skeleton-graph}.

\begin{claim}[\cite{bui2024improved}]\label{claim:skeleton-stretch-original}
$G_S^*$ is a skeleton graph for $G$ as defined in \Cref{definition:skeleton-graph}.
\end{claim}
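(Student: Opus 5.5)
We check the four conditions of \Cref{definition:skeleton-graph} for $G_S^*$ in turn; everything except the path property is nearly immediate.

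\textbf{Vertex count.} The hitting set $C$ is built by sampling each vertex with probability $p=\Theta((\log k)/k)$, then adding every $w'$ whose $N_k(w')$ was missed.
\begin{itemize}
\item The sample has expected size $O(n\log k/k)$ and, by Chernoff, this bound holds w.h.p. when $n\log k/k$ is large.
\item Each fixed $N_k(w')$ is missed with probability $(1-p)^k\le k^{-\Theta(1)}$, so the expected number of added vertices is at most $n/k$.
\end{itemize}
Choosing the constant in $p$ suitably gives $|C|\le C_S n\log k/k$. We would fold the concentration of the added count into the constant, or argue it in expectation with Markov plus repetition.

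\textbf{No shrinking, small weights, connectivity.}
\begin{itemize}
\item No shrinking: each skeleton edge $\{c(u),c(v)\}$ has weight equal to the length of an actual walk $c(u)\rightsquigarrow u\rightsquigarrow t\rightarrow v\rightsquigarrow c(v)$ in $G$. Hence every path in $G_S^*$ has length at least the $G$-distance between its endpoints.
\item Small weights: $c(u)\in N_k(u)$, and every vertex of $N_k(u)$ lies within $k-1$ hops of $u$ (distances increase along shortest paths, so the $k$ nearest vertices form a shortest-path tree). Thus $d_G(u,c(u))\le (k-1)f(n)$, and likewise $d_G(u,t)\le (k-1)f(n)$. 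The total weight is at most $(3k-1)f(n)\le 3k f(n)$.
\item Connectivity: for every edge $\{u,v\}\in E$, take $t=u\in N_k(u)$; this creates the skeleton edge $\{c(u),c(v)\}$. Hence a path in $G$ maps to a walk over cluster centres in $G_S^*$, so $G_S^*$ is connected.
\end{itemize}

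\textbf{Path property (P1)/(P2).} Fix $u,v$ with $v\notin N_k(u)$ and $u\notin N_k(v)$, let $D=d_G(u,v)$, and take a shortest path $u=w_0,\dots,w_m=v$. Let $r_x=\ell_k(x)$ be the radius of $N_k(x)$.

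The intended path is $u\to c(u)$, then skeleton edges, then $c(v)\to v$. Both end edges lie in $E_k$ because $c(u)\in N_k(u)$, and $d_G(u,c(u))\le r_u\le D$ since $v\notin N_k(u)$.

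Walk along the shortest path greedily:
\begin{itemize}
\item From the current vertex $x=w_i$, let $t$ be the last vertex on the path still in $N_k(x)$, and let $y$ be the next vertex after $t$.
\item The skeleton edge $\{c(x),c(y)\}$ has weight at most $r_x+d(x,t)+\omega(t,y)+r_y$.
\item Since $y\notin N_k(x)$, we have $d(x,y)\ge r_x$. The $r$-terms therefore charge to the path segment $x\rightsquigarrow y$ itself, up to the term $r_y$.
\end{itemize}

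The main obstacle is controlling the accumulated $r_y$ terms, i.e. bounding $\sum r_y$ by a constant times $D$. The standard Bui et al.\ trick handles this: $N_k(y)$ has radius at most $d(y,x)+r_x$ (compare balls) and at most $d(y,v)+\dots$. With care, choosing the next step so that radii grow at most geometrically along the path gives a telescoping bound. The total collapses to $7D$: $2D$ from the two end edges plus $5D$ from the middle.

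I would follow Bui et al.'s argument here and verify the constant 7 at this step, since it is where the proof could fail.
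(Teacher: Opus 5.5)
The vertex bound, no-shrinking, the $(k-1)$-hop weight bound and connectivity are all fine. Your connectivity argument, via $t=u$ and $c(x)=x$ for $x\in C$, is even simpler than a case analysis.

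The gap is in (P1), which is the substance of the claim. You set up exactly the right greedy decomposition $u=u_0\rightsquigarrow t_0\to u_1\rightsquigarrow t_1\to\cdots\to u_\ell\rightsquigarrow t_\ell=v$. But you then leave the terms $d_G(u_{i+1},c(u_{i+1}))$ uncontrolled, and defer to an unspecified ``geometric growth'' argument whose constant you have not checked.

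No such argument is needed, because the fact you already used for $x=u_i$ applies verbatim one step later. The greedy choice puts $u_{i+2}\notin N_k(u_{i+1})$, so $d_G(u_{i+1},c(u_{i+1}))\le \ell_k(u_{i+1})\le d_G(u_{i+1},u_{i+2})$. Thus for $i\le \ell-2$
\[
\omega_{G_S^*}(c(u_i),c(u_{i+1}))\le 2\,d_G(u_i,u_{i+1})+d_G(u_{i+1},u_{i+2}),
\]
so each greedy segment is charged at most three times. Your ball-comparison bound $\ell_k(y)\le d_G(x,y)+\ell_k(x)$ charges backwards instead, giving $4d_G(x,y)$ per edge. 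With the natural endpoint accounting this yields only about $8\,d_G(u,v)$, not the $7$ the definition requires.

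Forward charging fails only at the last centre $u_\ell$, since $v=t_\ell\in N_k(u_\ell)$. This case needs its own argument, which your proposal also omits:
\begin{itemize}
\item Use the skeleton edge $\{c(u_\ell),c(v)\}$, which exists by the clause $t=v$.
\item Bound $d_G(u_\ell,c(u_\ell))\le d_G(u_\ell,c(v))\le d_G(u_\ell,v)+d_G(v,c(v))\le d_G(u_\ell,v)+D$.
\item The first inequality holds because $c(u_\ell)$ is the nearest centre to $u_\ell$ overall, as $C$ hits $N_k(u_\ell)$.
\item The last holds because $d_G(v,c(v))\le \ell_k(v)\le D$, since $u\notin N_k(v)$.
\end{itemize}
Write $d_j=d_G(u_j,u_{j+1})$. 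The skeleton part then has length at most $2d_0+3\sum_{j=1}^{\ell-1}d_j+3\,d_G(u_\ell,v)+3D\le 6D-d_0$. Adding $d_G(u,c(u))\le d_0$ and $d_G(c(v),v)\le D$ gives exactly $7D$.

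Your split ``$2D$ from the end edges plus $5D$ from the middle'' is not what the accounting gives. The middle is bounded only by $6D-d_0$, and this is compensated by the first end edge having length at most $d_0$.
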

\begin{proof}[Proof Sketch]
To show that $G_S^*$ is a skeleton graph for $G$, we need to verify the following properties:
\begin{itemize}
    \item $|V_S^*|\in O((n\log k)/k)$.
    \item $d_{G_S^*}(u,v)\geq d_G(u,v)$ for all $u,v\in V^*_S$.
    \item If $\omega_G(u,v)\leq g(n)$ for all $\set{u,v}\in E$ for some function $g$, then $\omega_{G_S^*}(x,y)\leq 3k\cdot g(n)$ for all $\set{x,y}\in E^*_S$.
    \item If $G$ is connected, then so is $G_S^*$.
    \item If $u\notin N_k(v)$ and $v\notin N_k(u)$, then there exists a path $P$ in $E_k\cup E_S^*$ with the properties (P1) and (P2). 
\end{itemize} 
In this proof sketch, we focus on showing the last property; the remaining properties are relatively easier to verify. 

Consider any two vertices $u,v\in V$ such that $u\notin N_k(v)$ and $v\notin N_k(u)$. We first consider the shortest path from $u$ to $v$ in $G$, which we denote as $u=u_0\rightsquigarrow t_0\rightarrow u_1\rightsquigarrow t_1\rightarrow u_2\rightsquigarrow t_2\rightarrow\ldots \rightarrow u_\ell\rightsquigarrow t_\ell =v$ chosen such that $t_i\in N_k(u_i)$ while $u_{i+1}\notin N_k(u_i)$. By definition, observe that there must exist a path $Q$ in $E_S^*$ of the form $c(u_0)\rightarrow c(u_1)\rightarrow c(u_2)\rightarrow\ldots\rightarrow c(u_\ell)\rightarrow c(t_\ell)$. In particular, we observe that for $i\leq \ell - 2$,
\begin{align*}
    \omega_{G_S^*}(c(u_i),c(u_{i+1}))&\leq d_G(c(u_i),u_i)+d_G(u_i,t_i)+\omega_G(t_i,u_{i+1})+d_G(u_{i+1}, c(u_{i+1}))\\
    &\leq d_G(u_i,u_{i+1}) + d_G(u_i,t_i) + \omega_G(t_i,u_{i+1}) + d_G(u_{i+1},u_{i+2})\\
    &=2\cdot d_G(u_i,u_{i+1}) + d_G(u_{i+1},u_{i+2}).
\end{align*}
To bound the edge weights $c(u_{\ell-1})\rightarrow c(u_\ell)$ and $c(u_{\ell})\rightarrow c(t_\ell)$, we observe that 
$$
d_G(u_\ell,c(u_\ell))\leq d_G(u_\ell,c(t_\ell))\leq d_G(u_\ell,t_\ell)+d_G(t_\ell,c(t_\ell)) \leq d_G(u_\ell,t_\ell)+ d_G(u,v).
$$
Therefore, we conclude that the length of path $Q$ is at most 
\begin{align*}
 &\sum_{i=0}^{\ell-1}\omega_{G_S^*}(c(u_i),c(u_{i+1}))+\omega_{G_S^*}(c(u_\ell),c(t_\ell))\\
 &\leq 2\cdot d_G(u_0,u_1)+3\left(\sum_{j=1}^{\ell-1}d_G(u_j,u_{j+1}) + d_G(u_\ell,t_\ell)+d_G(u,v)\right).   
\end{align*}
% $$
% \sum_{i=0}^{\ell-1}\omega_{G_S^*}(c(u_i),c(u_{i+1}))+\omega_{G_S^*}(c(u_\ell),c(t_\ell))\leq 2\cdot d_G(u_0,u_1)+3\left(\sum_{j=1}^{\ell-1}d_G(u_j,u_{j+1}) + d_G(u_\ell,t_\ell)+d_G(u,v)\right).
% $$
We now construct the final path $P$ from $u$ to $v$ in $E_k\cup E^*_S$: $u\rightarrow c(u)$ using an edge in $E_k$, $c(u)\rightsquigarrow c(v)$ using path $Q$ (contained entirely in $E_S^{*}$), and $c(v)\rightarrow v$ using an edge in $E_k$. By construction, we observe that $P$ obeys property (P2) of \Cref{definition:skeleton-graph}. Additionally, noting that $d_G(u,c(u))\leq d_G(u_0,u_1)$ and $d_G(v,c(v))\leq d_G(u,v)$, we observe the following:
\begin{itemize}
    \item The length of the path $u\rightarrow c(u)\rightsquigarrow c(v)$ is at most $6\cdot d_G(u,v)$.
    \item The length of the path $c(v)\rightarrow v$ is at most $d_G(u,v)$.
\end{itemize}
Therefore, we conclude that the length of $P$ is at most $7\cdot d_G(u,v)$. Moreover, using the fact that $d_{G_S^*}(u,v)\geq d_G(u,v)$ for all $u,v\in V^*_S$, we have that the length of $P$ is indeed a $7$-approximation of the distance $d_G(u,v)$, and thus obeys property (P1) of \Cref{definition:skeleton-graph}.
\end{proof}

We now discuss our modified construction, which seeks to provide a better upper bound for the size of the skeleton graph while preserving the essential properties as stated in \Cref{definition:skeleton-graph}.

% \paragraph{The Problem, and the Fix.} They key barrier to implementing the above construction is the size of the skeleton graph: It is unfortunately trivial to construct graphs where the size of $E_S$ is larger than $O(|E|k^2)$. For example, consider the case of a vertex $t\in V$ such that
% \begin{itemize}
% 	\item $t$ is in the $k$-nearest neighbourhood of $n$ vertices, and
% 	\item $t$ is adjacent to $n$ vertices.
% \end{itemize}
% Such a vertex would add $O(|V_S|^2) = \tilde O(n^2/k^2)$ edges, and thus requires more memory than we can afford. Therefore, directly implementing this construction directly is not feasible. To resolve this, we present a surprisingly simple fix: For any vertex that would add an excessive number of edges (such as the vertex $t$ in the aforementioned example), we can instead add this vertex to the skeleton graph (and in doing so, add a significantly smaller number of edges). This comes at the cost of a larger vertex set, but by choosing such vertices carefully, we show that the number of vertices is still bounded by $O((n\log k)/k)$ while the number of edges is now reduced to $O(|E|k^2)$.  We make this idea more concrete in the next paragraph.\yijun{note: this has been discussed in the technical overview - remove or shorten (check also other places)}

\paragraph{Modified Construction.} For every vertex $t\in V$, we call $t$ a \emph{bad vertex} if there exist more than $k^2$ distinct vertices $u\in V$ such that $t\in N_k(u)$ (i.e., $|\set{u:t\in N_k(u)}| > k^2$); otherwise, we call $t$ a \emph{good vertex}. Denote $B\subseteq V$ as the set of bad vertices in $V$. We also define the vertex set $C$ and the vertices $c(u)$ for all $u\in V$ identically to the original construction. Before defining the modified skeleton graph, we first define two weighted edge sets $E_C$ and $E_B$. 

\begin{description}
\item [Edge Set $E_C$.] Defined over the vertices $C$. For all vertices $u,v\in V$, we add the edge $\set{c(u),c(v)}$ to $E_C$ if there exists a vertex $t\in V$ such that all of the following hold:
\begin{itemize}
	\item $t\in N_k(u)$ (i.e., $t$ is in the $k$-nearest neighbourhood of $u$),
	\item $\set{t,v}\in E$ (i.e., $t$ is adjacent to $v$ in $G$) or $t=v$, and
	\item $t\notin B$ (i.e., $t$ is a good vertex). 
\end{itemize}
We define the edge weight $\omega_{E_C}(c(u),c(v)) := d_G(c(u),u) + d_G(u,t) + \omega_G(t,v) + d_G(v,c(v))$. If there are multiple candidate edge weights, choose the minimum weight. 

Note that the definition of $E_C$ is almost identical to the earlier definition of $E_S^*$, with the additional requirement that $t\notin B$.

\item [Edge Set $E_B$.] Defined over the vertices $C\cup B$. For all bad vertices $t\in B$, we add the edge $\set{s,t}$ to $E_B$ if either one of the following conditions holds:
\begin{itemize}
	\item There exists a vertex $u\in V$ such that $s = c(u)$ and $t\in N_k(u)$. In this case, we set the edge weight $\omega_{E_B}(s,t):=d_G(s,u)+d_G(u,t)$.
	\item There exists a vertex $v\in V$ such that $s = c(v)$ and either $\set{t,v}\in E$ or $t=v$. In this case, we set the edge weight $\omega_{E_B}(s,t):=d_G(s,v)+\omega_G(v,t)$.
\end{itemize}
If both conditions hold simultaneously or if there are multiple candidate edge weights, we always choose the minimum weight. 
\end{description}

We are now ready to define the graph $G_S = (V_S,E_S)$ as follows: We define $V_S = C\cup B$ and $E_S = E_C\cup E_B$ (such that for any parallel edges, choose the edge with the smaller edge weight). This concludes the construction, and we claim that $G_S$ is a skeleton graph for $G$ with the additional property that $|E_S|\in O(|E|k^2)$.

\subsection{Correctness} 
We separate the proof of correctness into multiple smaller claims. We first argue on the size of the vertex set $V_S$. Namely, we show that the number of bad vertices is upper bounded by $n/k$. 
\begin{claim}\label{claim:skeleton-vertex-size}
	$|B|\leq n/k$. Consequently, we have $|V_S|\in O((n\log k)/k)$. 
\end{claim}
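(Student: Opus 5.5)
The plan is a double-counting argument over the membership relation $t \in N_k(u)$. Each vertex $u \in V$ has $|N_k(u)| \le k$, so the number of pairs $(u,t)$ with $t \in N_k(u)$ is at most $nk$. Writing $D(t) := \set{u : t \in N_k(u)}$, this gives
\[
\sum_{t \in V} |D(t)| \le nk .
\]
A bad vertex $t \in B$ has $|D(t)| > k^2$. Summing only over bad vertices,
\[
|B| \cdot k^2 < \sum_{t \in B} |D(t)| \le nk ,
\]
so $|B| < n/k$, which is the first assertion.

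For the second assertion, recall that $V_S = C \cup B$, so $|V_S| \le |C| + |B|$. It remains to bound $|C|$.

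By construction, $C$ consists of the vertices sampled independently with probability $p = \Theta((\log k)/k)$, together with a "patch" set. The patch set contains every vertex $w'$ whose neighbourhood $N_k(w')$ was missed by the sample, and from each such $w'$ we add one vertex (say $w'$ itself).
\begin{itemize}
\item The sampled part has expected size $pn = \Theta((n\log k)/k)$. When $(n \log k)/k$ is large, a Chernoff bound gives size $O((n\log k)/k)$ w.h.p.
\item A fixed $N_k(w')$ has $k$ vertices (the graph is connected with $n \ge k$ vertices), so it is missed with probability $(1-p)^k \le k^{-C'}$. Choosing the constant in $p$ large enough makes this at most $1/k$. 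The expected patch size is therefore at most $n/k$.
\end{itemize}
Hence $|C| \in O((n\log k)/k)$, and combining with $|B| \le n/k$ gives $|V_S| \in O((n\log k)/k)$.

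The main obstacle is making the size bound on $C$ hold with high probability rather than only in expectation. The patch events for different $w'$ are not independent, since neighbourhoods overlap, so Chernoff does not apply directly.

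The first thing I would try is to pick $p = C'' (\ln n)/k$ instead. Then every neighbourhood is hit w.h.p. by a union bound over the $n$ vertices, the patch is empty, and the size is $O((n\log n)/k)$. In the paper's applications this matches $O((n \log k)/k)$ up to a constant, because $k$ is either polylogarithmic or polynomial in $n$.

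If the stated $\log k$ is genuinely needed, I would instead argue with the expected patch size $n/k$ and Markov's inequality. If necessary I would resample $O(\log n)$ independent copies in parallel and keep the smallest. The patch requirement $C \cap N_k(w') \neq \emptyset$ is checkable in $O(1)$ rounds given $E_k$, so the resampling adds only $O(1)$ rounds.
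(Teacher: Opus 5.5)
Your bound on $|B|$ is exactly the paper's token-counting (pigeonhole) argument. For $|C|$, the paper simply takes $|C|\in O((n\log k)/k)$ from the construction, which Step~1 of the algorithm realises by running the $(\ln k)/k$-sampling $\Theta(\log n)$ times in parallel and keeping a small run (citing prior work). That is your Markov-plus-resampling fallback, so the proposal is correct and follows the paper's route.

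The one thing to drop is the $p=\Theta((\ln n)/k)$ shortcut. For polylogarithmic $k$ the ratio $\log n/\log k=\Theta(\log n/\log\log n)$ is not a constant, and the paper needs $\log k$ exactly there. With $k_0=\log n$, a $\log n$ factor would only give $|V_1|\le C_S n$, and the inductive bound $k_{j+1}|V_{j+1}|\le k_j|V_j|$ relies on $\log k_j\le k_j^{0.01}$. So the fallback is necessary, not optional.
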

\begin{proof}
By definition, we have $|C|\in O((n\log k)/k)$. Thus to show that $|V_S|\in O((n\log k)/k)$, it suffices to obtain a good bound for $|B|$. Consider the following ``token distribution'' procedure: For every vertex $u\in V$, $u$ gives the vertex $t\in V$ a token if $t\in N_k(u)$. Observe that the set $B$ is equivalent to the set of vertices with more than $k^2$ tokens. Additionally, we observe that there are at most $nk$ tokens across all vertices. Therefore by the pigeonhole principle, we can only have at most $nk/k^2=n/k$ bad vertices. In other words, $|B|\leq n/k$ as claimed. 
\end{proof}

We next argue on the size of the edge set $E_S$. In particular, we will show that both $E_C$ and $E_B$ have at most $O(|E|k^2)$ edges, and therefore $E_S$ has $O(|E|k^2)$ edges as well.
\begin{claim}\label{claim:skeleton-edge-size}
	$|E_S|\in O(|E|k^2)$.
\end{claim}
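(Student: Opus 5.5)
We bound $|E_C|$ and $|E_B|$ separately by charging every skeleton edge to a witness triple. By the definition of parallel edges in $E_S=E_C\cup E_B$, each pair of endpoints contributes at most one edge. It therefore suffices to count distinct witnesses, i.e.\ the configurations that can cause an edge to be added. For each vertex $t$, write $D_k(t)=\set{u: t\in N_k(u)}$. Then $\sum_t |D_k(t)|\le nk$, and a good vertex $t$ satisfies $|D_k(t)|\le k^2$.

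\emph{Bounding $|E_C|$.} An edge $\set{c(u),c(v)}\in E_C$ is witnessed by a triple $(u,t,v)$ with $t\notin B$, $u\in D_k(t)$, and either $\set{t,v}\in E$ or $v=t$. Since each vertex $u$ has a single center $c(u)$, the number of edges of $E_C$ is at most the number of such triples. Fix a good vertex $t$. There are at most $k^2$ choices of $u$ and at most $\deg_G(t)+1$ choices of $v$. Summing over $t$ gives
\[
|E_C|\le \sum_{t\notin B} k^2(\deg_G(t)+1)\le k^2(2|E|+n).
\]
This is $O(|E|k^2)$, because $G$ is connected and hence $n\le |E|+1$.

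\emph{Bounding $|E_B|$.} Edges of $E_B$ come in two types. Type 1 edges are $\set{c(u),t}$ with $t\in N_k(u)$ and $t\in B$. Each $u$ contributes at most $k$ of them, one for each member of $N_k(u)$, so there are at most $nk$ in total. Type 2 edges are $\set{c(v),t}$ with $t\in B$ and either $\set{t,v}\in E$ or $v=t$. Each such edge is charged to an edge of $G$ or to the vertex $t$ itself, so there are at most $2|E|+n$ of them. Together,
\[
|E_B|\le nk+2|E|+n = O(|E|k)\subseteq O(|E|k^2),
\]
again using $n=O(|E|)$.

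Adding the two bounds gives $|E_S|\le |E_C|+|E_B|\in O(|E|k^2)$. The only delicate step is $E_C$. The point to verify there is that restricting to good witnesses is exactly what caps the number of $u$'s per $t$ at $k^2$; without that restriction the count of $u$'s per $t$ could be as large as $n$, which is the blow-up discussed earlier. The bad vertices are instead handled by $E_B$, and each edge there costs only a single incidence, not a product of two incidences. Everything else is routine counting.
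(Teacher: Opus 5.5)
Your proposal is correct and follows essentially the same argument as the paper: each $E_C$ edge is charged to a triple $(u,t,v)$ with $t$ good (at most $k^2$ choices of $u$ per incidence $(t,v)$), and each $E_B$ edge to a single incidence, either $t\in N_k(u)$ or $\{t,v\}\in E$ or $v=t$. The only cosmetic difference is that you convert $n$ into $O(|E|)$ via connectivity ($n\le |E|+1$), whereas the paper uses the observation that there are at most $2|E|$ non-isolated vertices; under the lemma's connectivity hypothesis these amount to the same thing.
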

\begin{proof}
A useful observation here is that in any simple undirected graph, the number of non-isolated vertices (i.e., vertices with at least one adjacent edge) is at most $2|E|$; in particular, $|B|\leq 2|E|$.

\begin{description}
    \item[Bounding $|E_B|$.] For every bad vertex $t\in B$, define the integers $a_t = |\set{v:\set{v,t}\in E\text{ or } v=t}|$ and $b_t = |\set{u:t\in N_k(u)}|$ (i.e., $a_t$ is the number of vertices adjacent to $t$ in $G$ (and itself) and $b_t$ is the number of vertices that $t$ is in the $k$-nearest of). 
    
    We find that $\sum_{t\in B}a_t \leq 2|E|+|B|$ and $\sum_{t\in B}b_t\leq 2|E|k$ -- to observe the latter, sum the $k$-nearest neighbourhoods of each non-isolated vertex in $V$. Additionally, we observe from the definition of $E_B$ that for every bad vertex $t\in B$, we add at most $(a_t+b_t)$ edges. Therefore, the size of $|E_B|$ is bounded by 
    \begin{equation*}
        |E_B| \leq \sum_{t\in B}(a_t+b_t) \leq 2|E| +|B| + 2|E|k \in O(|E|k^2).
    \end{equation*}
    \item[Bounding $|E_C|$.] Fix any two non-isolated vertices $t,v\in V$ such that $t$ is a good vertex, and either $\set{t,v}\in E$ or $t=v$. By definition of a good vertex, there are at most $k^2$ vertices $u$ such that $t\in N_k(u)$. Therefore, it follows from the definition of $E_C$ that we only construct at most $k^2$ edges $\set{c(u),c(v)}$ with edge weight $d_G(c(u),u)+d_G(u,t)+\omega_G(t,v)+d_G(v,c(v))$. Since there are at most $O(|E|)$ possible choices for the vertices $t$ and $v$, we can therefore conclude that there are at most $O(|E|k^2)$ edges in $E_C$. \qedhere
\end{description}
\end{proof}

We now argue on the value of the edge weights of the edge set $E_S$. 
\begin{claim}\label{claim:skeleton-weights}
	Suppose that the edge weights $\omega_G$  are bounded above by the function $f(n)$. Then the edge weights $\omega_{G_S}$ are bounded above by the function $g(n) = 3k\cdot f(n)$. 
\end{claim}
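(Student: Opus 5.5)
The plan is to bound each of the three possible edge-weight expressions that occur in $E_C$ and $E_B$ separately. The key estimate is that every $u\in V$ satisfies $d_G(u,t)\le (k-1)\cdot f(n)$ for all $t\in N_k(u)$. I would prove this first. The graph is connected, so the $k$-nearest neighbourhood of $u$ has a simple structure. Order $N_k(u)=\{u=t_1,t_2,\ldots,t_k\}$ by distance from $u$. Since $G$ is connected and $|V|\ge k$, each $t_{j+1}$ has a shortest path from $u$. The vertex on that path immediately preceding $t_{j+1}$ is strictly closer to $u$, so it lies among $t_1,\dots,t_j$ (ties are broken consistently, and weights are positive). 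Hence $d_G(u,t_{j+1})\le d_G(u,t_j')+f(n)$ for some $j'\le j$. By induction, $d_G(u,t_j)\le (j-1)f(n)$, so $\ell_k(u)\le (k-1)f(n)$.

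Next I would bound $d_G(u,c(u))$. By construction, $c(u)\in C\cap N_k(u)$, so the key estimate gives $d_G(u,c(u))\le (k-1)f(n)$. Likewise $d_G(v,c(v))\le (k-1)f(n)$, and $\omega_G(t,v)\le f(n)$ (or $0$ if $t=v$).

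It remains to plug these into the weight formulas. For $E_C$, the weight is
\[
d_G(c(u),u)+d_G(u,t)+\omega_G(t,v)+d_G(v,c(v)) \le 3(k-1)f(n)+f(n)\le 3k\cdot f(n).
\]
For the first kind of $E_B$ edge, the weight is $d_G(s,u)+d_G(u,t)\le 2(k-1)f(n)$. For the second kind, the weight is $d_G(s,v)+\omega_G(v,t)\le (k-1)f(n)+f(n)$. Both are at most $3k f(n)$. Taking minima over parallel candidates can only decrease weights, so every edge of $E_S=E_C\cup E_B$ has weight at most $g(n)=3k\cdot f(n)$.

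The main obstacle is the key estimate in the first step. One needs connectivity, which the lemma assumes. One also needs to be careful that the predecessor on the shortest path really lies inside $N_k(u)$ under ID tie-breaking. I would handle this by noting that the predecessor is strictly closer than $t_{j+1}$ because weights are positive, so it precedes $t_{j+1}$ in the ordering. If $n<k$, then $N_k(u)=V$ and the same argument applies with $n-1<k$ hops.
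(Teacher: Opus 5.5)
Your proof is correct and follows the paper's argument. You establish $d_G(u,t)\le (k-1)f(n)$ for every $t\in N_k(u)$, then substitute it into the $E_C$ and $E_B$ weight formulas to get $(3k-2)f(n)$, $2(k-1)f(n)$ and $k f(n)$ respectively. The only cosmetic difference is in the key estimate: the paper argues directly that a shortest path to $t\in N_k(u)$ has at most $k-1$ edges, since otherwise its first $k$ vertices would all be strictly closer than $t$. You instead induct along the sorted neighbourhood, using the same strict-predecessor observation.
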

\begin{proof}
    Before bounding the edge weights, we first make a preliminary but important observation: For any pair of vertices $u,v\in V$ such that $v\in N_k(u)$, we must have that $d_G(u,v)\leq (k-1) f(n)$. To see this, note that the shortest path from $u$ to $v$ in $G$ necessarily has at most $(k-1)$ edges; otherwise the first $k$ vertices along this path, including $u$ and excluding $v$, form a better $k$-nearest neighbourhood for $u$.

    We now recall the definitions of the edge weights for $E_C$ and $E_B$.
    \begin{itemize}
        \item For any edge $\set{s_a,s_b}\in E_C$, recall that the edge weight 
        $$
        \omega_{E_C}(s_a,s_b)=d_G(s_a,u)+d_G(u,t)+\omega_G(t,v)+d_G(v,s_b),
        $$ such that $t,s_a\in N_k(u)$ and $s_b\in N_k(v)$. By the earlier observation, we have that $d_G(s_a,u) + d_G(u,t) + d_G(v,s_b)\leq (3k-3) f(n).$ Including the weight $\omega_G(t,v)$, we find that $\omega_{E_C}(s_a,s_b)\leq (3k-2)\cdot f(n)\leq 3k\cdot f(n)$.

        \item For any edge $\set{s,t}$ in $E_B$ where $t\in B$, recall that there exists some vertex $x\in V$ such that $c(x)=s$, and the edge weight $\omega_{E_B}(s,t)$ is either one of the following:
        \begin{itemize}
            \item $\omega_{E_B}(s,t)=d_G(s,x)+d_G(x,t)$ such that $t\in N_k(x)$; or
            \item $\omega_{E_B}(s,t)=d_G(s,x)+\omega_G(x,t)$ such that $\set{x,t}\in E$ (or $t=x$).
        \end{itemize}
        In the former case, the weight is bounded by $(2k-2)\cdot f(n)\leq 2k\cdot f(n)$, while in the latter case, the weight is bounded by $k\cdot f(n)$. Therefore, we find that $\omega_{E_B}(s,t)\leq \max[2k,k]\cdot f(n)=2k\cdot f(n)$.
    \end{itemize}
    Overall, we find that for all edges $\set{x,y}\in E_S$, we must have that $\omega_{G_S}(x,y)\leq \max[3k,2k]\cdot f(n)=3k\cdot f(n)$.
\end{proof}

The following claim is a lower bound on the distances in $G_S$. Namely, for any pair of vertices in $G$, we show that their distances do not shrink in $G_S$. 
\begin{claim}\label{claim:skeleton-no-shrink}
	For any pair of vertices $u,v\in V_S$, we have $d_G(u,v)\leq d_{G_S}(u,v)$.
\end{claim}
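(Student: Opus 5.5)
The plan is to show that every edge of $G_S$ has weight at least the $G$-distance between its endpoints. The claim then follows by the triangle inequality along any path in $G_S$. Concretely, take any path $x_0=u, x_1,\ldots,x_m=v$ in $G_S$. If each edge satisfies $\omega_{G_S}(x_j,x_{j+1})\geq d_G(x_j,x_{j+1})$, then summing and applying the triangle inequality in $G$ gives
$$
\sum_{j}\omega_{G_S}(x_j,x_{j+1})\geq \sum_j d_G(x_j,x_{j+1})\geq d_G(u,v).
$$
Minimising over all paths yields $d_{G_S}(u,v)\geq d_G(u,v)$. Since $E_S=E_C\cup E_B$ and parallel edges keep the minimum weight, it suffices to check every candidate weight from either edge set.

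\textbf{Edges of $E_C$.} A candidate weight for $\set{c(u),c(v)}$ has the form
$$
d_G(c(u),u)+d_G(u,t)+\omega_G(t,v)+d_G(v,c(v)).
$$
This is the length of a walk $c(u)\rightsquigarrow u\rightsquigarrow t\rightarrow v\rightsquigarrow c(v)$ in $G$. We use $\omega_G(t,v)\geq d_G(t,v)$, which also covers the case $t=v$ because $\omega_G(v,v)=0$. The triangle inequality then bounds the sum below by $d_G(c(u),c(v))$.

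\textbf{Edges of $E_B$.} There are two kinds of candidate weights. The first is $d_G(s,u)+d_G(u,t)$ with $s=c(u)$, which is at least $d_G(s,t)$. The second is $d_G(s,v)+\omega_G(v,t)$ with $s=c(v)$ and either $\set{t,v}\in E$ or $t=v$. This is at least $d_G(s,v)+d_G(v,t)\geq d_G(s,t)$.

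There is no real obstacle. The only points needing care are the degenerate cases ($t=v$, self-loops when $c(u)=c(v)$, and $s=t$), where the convention $\omega_G(x,x)=0$ makes the inequalities hold trivially. One should also note that the minimum over parallel candidates is still one of the candidates, so the per-edge bound survives the choice of minimum weight.
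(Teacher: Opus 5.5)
Your proposal is correct and follows the paper's proof: you show that every candidate weight in $E_C$ and $E_B$ is the length of a walk in $G$ between the edge's endpoints, so each edge weight is at least the $G$-distance, and the triangle inequality along any path in $G_S$ then gives the claim. Your extra remarks on the degenerate cases and on keeping the minimum over parallel candidates are fine additions; the paper's proof leaves them implicit.
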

\begin{proof}
We recall the definitions of the edge weights for $E_C$ and $E_B$.
\begin{itemize}
    \item Recalling the definition of $E_C$, observe that for any edge $\set{x,y}\in E_C$, the edge weight $\omega_{E_C}(x,y)$ is equivalent to the length of the path $x\rightsquigarrow u\rightsquigarrow t\rightarrow v\rightsquigarrow y$ in $G$ (for some vertices $u,v,t$ as per the definition of $E_C$). Namely, the length of this path is necessarily at least $d_G(x,y)$.

    \item Similarly, recalling the definition of $E_B$, observe that for any edge $\set{s,t}\in E_B$ and fixing $t$ to be the bad vertex, the edge weight $\omega_{E_B}(s,t)$ is equivalent to the length of the path $s\rightsquigarrow u\rightsquigarrow t$ or $s\rightsquigarrow v\rightarrow t$ (for some vertices $u,v$ as per the definition of $E_B$). Namely, the length of this path is necessarily at least $d_G(s,t)$.
\end{itemize}
Therefore, for any edge $\set{x,y}\in E_S$, we necessarily have that $\omega_{E_S}(x,y) \geq d_G(x,y)$. By extension, for any vertices $u,v\in V_S$, the length of any path from $u$ to $v$ contained in $E_S$ is at least $d_G(u,v)$. From this observation, the claim directly follows.
\end{proof}

We now show the most important claim: Between any pair of vertices in $G$ that are not in the $k$-nearest neighbourhood of each other, there exists a path in $E_k\cup E_S$ with the properties (P1) and (P2) described in \Cref{definition:skeleton-graph}. Before proving this claim, it helps to prove the following result which gives us a way to transform a path in the original construction $G_S^*$ to a path in the modified construction $G_S$ of at most the same path distance. 
\begin{claim}\label{claim:original-to-modified-skeleton-path-transform}
    For any edge $\set{x,y}\in E_S^*$ in the original construction $G_S^*$, the shortest path from $x$ to $y$ in the modified construction $G_S$ is at most the edge weight in $G_S^*$, i.e., $d_{G_S}(x,y) \leq \omega_{G_S^*}(x,y)$.
\end{claim}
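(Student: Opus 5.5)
Fix an edge $\set{x,y}\in E_S^*$ and let its weight $\omega_{G_S^*}(x,y)$ be attained by a witness triple $(u,t,v)$. That is, $x=c(u)$, $y=c(v)$, $t\in N_k(u)$, and either $\set{t,v}\in E$ or $t=v$, with
\[
\omega_{G_S^*}(x,y)=d_G(c(u),u)+d_G(u,t)+\omega_G(t,v)+d_G(v,c(v)),
\]
where $\omega_G(t,v)=0$ if $t=v$. Since $G_S$ takes the minimum over parallel edges, it suffices to exhibit a path from $x$ to $y$ in $G_S$ whose length is at most this single expression. The plan is a case split on whether the intermediate vertex $t$ is good or bad.

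\emph{Case $t\notin B$.} The triple $(u,t,v)$ satisfies all three conditions in the definition of $E_C$. So $\set{c(u),c(v)}\in E_C\subseteq E_S$, with $\omega_{E_C}(x,y)$ at most the expression above (the defining formula is identical). If $x=y$, the claim is trivial since $d_{G_S}(x,x)=0$. Hence $d_{G_S}(x,y)\le \omega_{E_C}(x,y)\le \omega_{G_S^*}(x,y)$.

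\emph{Case $t\in B$.} Here $t\in V_S$, and I route the path through $t$ using two edges of $E_B$.
\begin{itemize}
\item By the first condition of $E_B$ (vertex $u$, $s=c(u)$, $t\in N_k(u)$), the edge $\set{x,t}\in E_B$ exists with weight at most $d_G(c(u),u)+d_G(u,t)$.
\item By the second condition (vertex $v$, $s=c(v)$, $\set{t,v}\in E$ or $t=v$), the edge $\set{y,t}\in E_B$ exists with weight at most $d_G(c(v),v)+\omega_G(v,t)$.
\end{itemize}
The path $x\to t\to y$ in $G_S$ therefore has length at most the sum of these two weights, which is exactly $\omega_{G_S^*}(x,y)$. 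There are two minor degenerate situations. If $x=t$ or $y=t$ (a bad vertex could also be a center), the corresponding edge becomes a self-loop of nonnegative weight and is simply dropped, which only shortens the path. Parallel-edge minimisation in $E_S$ only decreases weights further.

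The main obstacle is not a calculation but bookkeeping. One must check that the edge weights of $E_B$ are defined so that the two halves split the $E_S^*$ weight exactly, in particular that $\omega_G(t,v)$ is attributed to the second $E_B$ edge. One must also handle the $t=v$ and $x$/$y$ coinciding with $t$ edge cases cleanly. With that verified, the claim follows in both cases.
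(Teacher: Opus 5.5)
Your proposal is correct and follows the paper's proof: it makes the same case split on whether the witness $t$ is good or bad. If $t$ is good, the same witness places $\{x,y\}$ in $E_C$ with no larger weight. If $t$ is bad, you route through $t$ via the two $E_B$ edges $\{x,t\}$ and $\{t,y\}$, whose weights sum to at most $\omega_{G_S^*}(x,y)$. Your handling of the degenerate cases $x=y$, $x=t$, $y=t$ and $t=v$ is harmless bookkeeping that the paper leaves implicit.
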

\begin{proof}
    Recall the original construction $G_S^*$: For any edge $\set{x,y}\in E_S^*$, the edge weight is given by $\omega_{G_S^*}(x,y)=d_G(x,u)+d_G(u,t)+\omega_G(t,v)+d_G(v,y)$ for some vertices $t,u,v\in V$ such that $c(u)=x$ and $c(v)=y$. Now consider these same vertices but in the modified construction $G_S$. There are two possible cases:
    \begin{description}
        \item[Case 1: $t$ is a good vertex.] Recalling the construction for $E_C$, we must have that the edge $\set{x,y}$ is also added into $E_C$. Since we always take the minimum weight of parallel edges, we have that the edge weight $\omega_{E_C}(x,y)\leq \omega_{G_S^*}(x,y)$.
        \item[Case 2: $t$ is a bad vertex.]  Recalling the construction for $E_B$, we must have that the edges $\set{x,t}$ and $\set{t,y}$ are added into $E_B$. Similarly, since we always take the minimum weight of parallel edges, we have that the weights $\omega_{E_B}(x,t)\leq d_G(x,u)+d_G(u,t)$ and $\omega_{E_B}(t,y)\leq \omega_G(t,v)+d_G(v,y)$. Namely, this implies that $\omega_{E_B}(x,t)+\omega_{E_B}(t,y)\leq \omega_{G_S^*}(x,y)$.\qedhere
    \end{description}
\end{proof}

We now adapt the earlier proof of \Cref{claim:skeleton-stretch-original} and apply \Cref{claim:original-to-modified-skeleton-path-transform} to prove the following claim.
\begin{claim}\label{claim:skeleton-stretch-modified}
	For any pair of vertices $u,v\in V$ such that $u\notin N_k(v)$ and $v\notin N_k(u)$, there exists a path $P$ in $E_S\cup E_k$ that satisfies the properties (P1) and (P2) given in \Cref{definition:skeleton-graph}.
\end{claim}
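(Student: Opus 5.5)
The plan is to reuse the path constructed in the proof sketch of \Cref{claim:skeleton-stretch-original} and then lift its middle portion from $G_S^*$ to $G_S$ using \Cref{claim:original-to-modified-skeleton-path-transform}. Fix $u,v\in V$ with $u\notin N_k(v)$ and $v\notin N_k(u)$.

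\textbf{Step 1: the path in $G_S^*$.} Decompose a shortest $u$--$v$ path in $G$ as
\[
u=u_0\rightsquigarrow t_0\rightarrow u_1\rightsquigarrow \cdots \rightarrow u_\ell\rightsquigarrow t_\ell=v,
\]
where $t_i\in N_k(u_i)$ and $u_{i+1}\notin N_k(u_i)$. By construction of $E_S^*$, the sequence $Q=c(u_0)\rightarrow c(u_1)\rightarrow\cdots\rightarrow c(u_\ell)\rightarrow c(t_\ell)$ is a walk in $G_S^*$. Each consecutive edge is witnessed by an explicit triple $(u_i,t_i,u_{i+1})$, with $t_i\in N_k(u_i)$ and $t_i$ adjacent or equal to $u_{i+1}$; the last edge is witnessed by $(u_\ell,t_\ell,t_\ell)$. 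The sketch already bounds the total length by
\[
\omega(u\rightarrow c(u)) + \omega(Q) + \omega(c(v)\rightarrow v)\le 7\,d_G(u,v).
\]
The two end edges $\{u,c(u)\}$ and $\{c(v),v\}$ lie in $E_k$, since $c(x)\in N_k(x)$ by the choice of the hitting set $C$.

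\textbf{Step 2: lift $Q$ to $G_S$.} Apply \Cref{claim:original-to-modified-skeleton-path-transform} edge by edge: each edge $\{x,y\}$ of $Q$ is replaced by a path in $E_S$ of length at most $\omega_{G_S^*}(x,y)$. This path is either a single edge of $E_C$ (when the witness $t_i$ is good) or the two edges $\{x,t_i\},\{t_i,y\}\subseteq E_B$ (when $t_i$ is bad). The result $Q'$ is a walk in $E_S$ from $c(u)$ to $c(v)$ with $\omega(Q')\le\omega(Q)$. Let $P = u\rightarrow c(u)\rightsquigarrow_{Q'} c(v)\rightarrow v$. Its first and last edges are in $E_k$ and all intermediate edges are in $E_S$, which gives (P2). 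Its length is at most $7\,d_G(u,v)$. It is also at least $d_G(u,v)$, because every edge weight in $E_k\cup E_S$ is at least the $G$-distance between its endpoints (\Cref{claim:skeleton-no-shrink} and the definition of $E_k$). This gives (P1).

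\textbf{Main obstacle.} The delicate point is that the argument bounds the cost of the specific edges of $Q$ via their witnesses, not via the minimum parallel weight in $G_S^*$. So I will re-check that the transformation claim applies with the witness triple itself, i.e.\ $d_{G_S}(x,y)\le d_G(x,u_i)+d_G(u_i,t_i)+\omega_G(t_i,u_{i+1})+d_G(u_{i+1},y)$. This holds for any valid witness, since the claim's proof only used membership conditions.

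I will also handle two degenerate cases. First, a lifted walk may repeat vertices or contain loops with $x=y$; these are simply removed, which only shortens the walk. Second, if $c(u)=c(v)$, then $Q'$ is empty and $P$ consists of just the two $E_k$ edges. This is still consistent with (P2), and the length bound is unchanged.
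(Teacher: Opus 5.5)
Your proposal is correct and follows the paper's proof essentially verbatim. You take the $u\to c(u)\rightsquigarrow c(v)\to v$ path from the proof of \Cref{claim:skeleton-stretch-original} and lift its middle portion edge by edge using \Cref{claim:original-to-modified-skeleton-path-transform}; the $7\,d_G(u,v)$ upper bound comes from the original sketch and the lower bound from \Cref{claim:skeleton-no-shrink}, exactly as in the paper. Your extra remarks on witness-based weights and on the degenerate cases (self-loops, $c(u)=c(v)$) are harmless refinements that the paper leaves implicit.
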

\begin{proof}
    First recall the proof of \Cref{claim:skeleton-stretch-original}: For the vertices $u,v$ as specified in the hypothesis, there exist edges $e_1=\set{u,c(u)}$, $e_2=\set{c(v), v}$ in $E_k$ and a path $Q$ from $c(u)$ to $c(v)$ in $G_S^*$ such that the path from $u$ to $v$ formed by $e_1$, $e_2$ and $Q$ has distance at most $7\cdot d_G(u,v)$. 
    
    By applying \Cref{claim:original-to-modified-skeleton-path-transform}, we construct the path $Q'$ from $c(u)$ to $c(v)$ in $G_S$ such that $\omega(Q')\leq \omega (Q)$. We now define the path $P$ in $E_k\cup E_S$ as follows: $u\rightarrow c(u)$ using the edge $e_1$, $c(u)\rightsquigarrow c(v)$ using $Q'$, and $c(v)\rightarrow v$ using the edge $e_2$. We now show that $P$ obeys properties (P1) and (P2) as defined in \Cref{definition:skeleton-graph}.
    \begin{description}
        \item[Proof of (P1).] We have that the length of $P$ is at most $7\cdot d_G(u,v)$ by the proof of \Cref{claim:skeleton-stretch-original}. By \Cref{claim:skeleton-no-shrink}, we also have that the length of $P$ is at least $d_G(u,v)$. Therefore, the length of $P$ is indeed a $7$-approximation of the distance $d_G(u,v)$.
        \item[Proof of (P2).] By construction, the first and last edges $u\rightarrow c(u)$ and $c(v)\rightarrow v$ are contained in $E_k$, while the remaining edges in $Q'$ are contained entirely in $E_S$.\qedhere
    \end{description}
\end{proof}

Lastly, connectivity is now straightforward to show.
\begin{claim}\label{claim:skeleton-connectivity}
    If $G$ is connected, then $G_S$ is also connected.
\end{claim}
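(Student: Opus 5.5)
Since $V_S = C\cup B$, we show that every vertex of $V_S$ is joined by a path in $E_S$ to some fixed cluster center; the claim then follows by transitivity. We handle centers and bad vertices separately.

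\textbf{Centers are connected to each other.} Fix two centers $x,y\in C$ and pick $u,v\in V$ with $c(u)=x$ and $c(v)=y$; for instance $u=x$, $v=y$ works once we note $c(x)=x$ for $x\in C$, since $x\in N_k(x)$ and $d_G(x,x)=0$. Because $G$ is connected, there is a path $u=w_0,w_1,\ldots,w_m=v$ in $G$. For each consecutive pair $\set{w_i,w_{i+1}}\in E$, take $t=w_i$. Then $t\in N_k(w_i)$, since every vertex belongs to its own $k$-nearest neighbourhood, and $\set{t,w_{i+1}}\in E$. So in the original construction the edge $\set{c(w_i),c(w_{i+1})}$ lies in $E_S^*$, or the two endpoints coincide. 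By \Cref{claim:original-to-modified-skeleton-path-transform}, we get $d_{G_S}(c(w_i),c(w_{i+1}))\le\omega_{G_S^*}(c(w_i),c(w_{i+1}))<\infty$. Hence $c(w_i)$ and $c(w_{i+1})$ are connected in $G_S$. Chaining these along the path connects $c(u)=x$ to $c(v)=y$ in $G_S$.

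\textbf{Bad vertices attach to centers.} For a bad vertex $t\in B$, take $u=t$. Then $t\in N_k(t)$, so the first rule in the definition of $E_B$ adds the edge $\set{c(t),t}$ with finite weight $d_G(c(t),t)$. The only exception is $c(t)=t$, in which case $t$ is itself a center. Either way, $t$ is connected in $G_S$ to a center, and the centers are all mutually connected by the previous step.

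\textbf{Where care is needed.} The main point to check is that \Cref{claim:original-to-modified-skeleton-path-transform} still applies in the degenerate case $t=u$, that is, when the edge of $E_S^*$ arises with $t$ being the vertex $u$ itself. Both the definition of $E_S^*$ and the construction of $E_C$/$E_B$ allow $t\in N_k(u)$ with $t=u$, so this case is covered. A second subtlety is that, when $t$ is bad, the path produced by the transform passes through $t$; this is fine because $t\in B\subseteq V_S$.
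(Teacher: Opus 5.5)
Your proof is correct. It matches the paper's argument for bad vertices, where $t\in N_k(t)$ gives the $E_B$ edge $\{c(t),t\}$, but it takes a different route for showing that two centers $x,y\in C$ are connected.

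\textbf{The paper's route.} It splits on whether $x$ and $y$ lie in each other's $k$-nearest neighbourhood.
\begin{itemize}
\item If they do, the witness triple $(u,t,v)=(y,x,x)$ (or its symmetric version) directly yields an edge of $E_C$, or an edge $\{x,c(y)\}\in E_B$ when $x$ is bad.
\item If they do not, it invokes \Cref{claim:skeleton-stretch-modified}. The middle portion of the (P1)/(P2) path is a $c(x)$--$c(y)$ path in $G_S$.
\end{itemize}

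\textbf{Your route.} You walk along an arbitrary $x$--$y$ path in $G$ and use the degenerate witness $t=u=w_i$, $v=w_{i+1}$. This shows that every edge of $G$ links the clusters of its two endpoints in $G_S$: by a single $E_C$ edge when $w_i$ is good, or by two $E_B$ edges through $w_i$ when $w_i$ is bad.

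\textbf{Comparison.} The paper's argument is shorter once the stretch claim is available. However, it makes connectivity depend on the full stretch analysis, which the paper only sketches for $G_S^*$. Your argument is local and self-contained, and it needs no case split. You could also drop \Cref{claim:original-to-modified-skeleton-path-transform} entirely by checking the definitions of $E_C$ and $E_B$ directly for your triple $(w_i,w_i,w_{i+1})$. That check is exactly the degenerate-case verification you flagged at the end.
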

\begin{proof}
    We first show that the vertices in $C$ are in the same connected component in $G_S$. Suppose not, i.e., there exist vertices $x,y\in C\subseteq V$ such that $x$ and $y$ are not in the same connected component in $G_S$. In particular, observe that $x = c(x)$ and $y = c(y)$ necessarily. There are two possible cases:
    \begin{description}
        \item[Case 1: $x\notin N_k(y)$ and $y\notin N_k(x)$.] Since $G$ is connected, we find that $x$ and $y$ are connected by a path, i.e., $d_G(x,y)\neq \infty$. By the proof of \Cref{claim:skeleton-stretch-modified}, we find that there is necessarily a path from $c(x)$ to $c(y)$ in $G_S$. 
        
        \item[Case 2: $x\in N_k(y)$ or $y\in N_k(x)$.] Suppose that $x\in N_k(y)$; the other case is similar. Notice that this necessarily satisfies the condition for an edge to be constructed between $c(x)$ and $c(y)$. Namely, if $x$ is a good vertex, then $\set{c(x),c(y)}\in E_C$; else,  $\set{x,c(y)}\in E_B$. 
    \end{description}
    In either case, we get a contradiction. Considering now any vertex $t\in V_S\setminus C$, note that $t$ is necessarily a bad vertex. In other words, we have that the edge $\set{t,c(t)}\in E_B$ by definition. This suffices to show the claim. 
\end{proof}

To summarise the above results, we present this final claim on correctness. 
\begin{claim}\label{claim:skeleton-correctness}
	The graph $G_S=(V_S,E_S)$ is a skeleton graph for $G$ as defined in \Cref{definition:skeleton-graph}. Additionally, we have $|E_S| \in O(|E|k^2)$. 
\end{claim}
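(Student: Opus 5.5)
This claim is a summary of results already proved, so I would prove it by checking the four conditions of \Cref{definition:skeleton-graph} one at a time against the preceding claims, and then add the edge bound.

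\emph{Vertex count.} Write $V_S = C\cup B$. The hitting set $C$ is built by sampling each vertex with probability $p\in\Theta((\log k)/k)$. A Chernoff bound gives $|C|\in O((n\log k)/k)$ w.h.p.

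Each fixed $N_k(w')$ has $k$ vertices, so it is missed by the sample with probability at most $(1-p)^k\le k^{-\Theta(1)}$. The expected number of patch-up vertices is therefore $O(n/k)$ once the constant in $p$ is large enough. This count needs its own Chernoff bound, or else a choice of $p=\Theta((\log n)/k)$. If I use the latter, I must check that it still fits the stated form $C_S n\log k/k$; that holds in the regime $k\in n^{\Theta(1)}$ used in the paper, and in general I would apply a bounded-differences argument. This is the one point needing care, since \Cref{claim:skeleton-vertex-size} simply asserts $|C|\in O((n\log k)/k)$ by definition.

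Combined with $|B|\le n/k$ from \Cref{claim:skeleton-vertex-size}, this gives $|V_S|\le C_S n\log k/k$ for a suitable universal constant $C_S$.

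\emph{The four defining conditions.} They follow directly from earlier claims:
\begin{itemize}
  \item \emph{No shrinking distances} is \Cref{claim:skeleton-no-shrink}.
  \item \emph{Clusters as good approximations}, namely (P1) and (P2), is \Cref{claim:skeleton-stretch-modified}. Its proof uses the path $Q$ from \Cref{claim:skeleton-stretch-original} together with the edge-by-edge replacement of \Cref{claim:original-to-modified-skeleton-path-transform}. I would note that the first and last edges $\{u,c(u)\}$ and $\{c(v),v\}$ really lie in $E_k$, because $c(u)\in N_k(u)$ by construction.
  \item \emph{Small edge weights} is \Cref{claim:skeleton-weights}.
  \item \emph{Connectivity} is \Cref{claim:skeleton-connectivity}.
\end{itemize}

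\emph{Edge bound.} The additional statement $|E_S|\in O(|E|k^2)$ is exactly \Cref{claim:skeleton-edge-size}.

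The only step that is not immediate is the vertex-count bound: making the w.h.p.\ bound on $|C|$, including the patch-up vertices, explicit with the right $\log k$ versus $\log n$ dependence.
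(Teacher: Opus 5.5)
Your assembly of the defining conditions and the edge bound from \Cref{claim:skeleton-no-shrink,claim:skeleton-weights,claim:skeleton-stretch-modified,claim:skeleton-connectivity,claim:skeleton-edge-size} is exactly the paper's proof. The gap is in the extra argument you give for $|C|$.

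\textbf{Why your proposed tools fail.} The patch-up indicators $\mathbf{1}[N_k(w')\cap C=\emptyset]$ are strongly dependent. Chernoff does not apply, and bounded differences gives nothing, because one vertex's coin can flip $\Theta(n)$ of them. Concretely:
\begin{itemize}
\item Let $T$ be a star on $k-1$ vertices with centre $t_0$ and unit weights. Attach the other $n-k+1$ vertices to $t_0$ by edges of weight $2$.
\item Then every leaf $w'$ has $N_k(w')=\{w'\}\cup T$, since it is at distance $2$ or $3$ from $T$ and $4$ from other leaves.
\item With probability $(1-p)^{k-1}=k^{-\Theta(1)}$ no vertex of $T$ is sampled. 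In that case almost every leaf is patched in, so $|C|\approx n$.
\end{itemize}
For polylogarithmic $k$ this failure probability is only inverse-polylogarithmic, even if you increase the constant in $p$. So no single-sample concentration argument can give the bound w.h.p.

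\textbf{Why the $\log n$ fallback does not help.} With $p=\Theta((\log n)/k)$ you get $|C|=\Theta(n\log n/k)$. Your claim that the paper only uses $k\in n^{\Theta(1)}$ is wrong. The heterogeneous emulator applies \Cref{lemma:skeleton-implement} with $k_0=\log n$, and with polylogarithmic $k_i$ throughout the first phase. There \Cref{lemma:hetero-kV-bound} needs precisely the $\log k$ form. With $\log n$ instead, the first iteration only gives $|V_1|\le C_S n\log n/k_0=C_S n$, i.e.\ no shrinkage.

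\textbf{The missing idea: Markov plus independent repetition.} This is what Step~1 of the paper's algorithm does, citing \cite{bui2024improved,dory2021}.
\begin{itemize}
\item Bound only the expectation: with $p=(\ln k)/k$, $\mathbb{E}[|C|]\le np+n(1-p)^k\le n(\ln k+1)/k$.
\item By Markov, one run has $|C|\le 2\,\mathbb{E}[|C|]$ with probability at least $1/2$.
\item Run $\Theta(\log n)$ independent samplings in parallel. Each $|C|$ can be counted in $O(1)$ rounds with the prefix-sum tool of \Cref{lemma:mpc-toolbox}. Keep a run that meets the bound; one exists with probability $1-1/\poly(n)$.
\end{itemize}
This is why \Cref{claim:skeleton-vertex-size} can treat $|C|\in O((n\log k)/k)$ as holding by the choice of $C$. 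With it, $|V_S|\le |C|+|B|\le C_S n\log k/k$, and the rest of your proof goes through unchanged.
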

\begin{proof}
By \Cref{claim:skeleton-vertex-size,claim:skeleton-no-shrink,claim:skeleton-weights,claim:skeleton-stretch-modified,claim:skeleton-connectivity}, it follows directly that $G_S$ is a skeleton graph for $G$ according to \Cref{definition:skeleton-graph}. Additionally by \Cref{claim:skeleton-edge-size}, we have $|E_S| \in O(|E|k^2)$ as desired.
\end{proof}

\subsection{Algorithm Description}
The most critical part of the algorithm description is constructing the edge sets $E_C$ and $E_B$ (as defined in the modified construction). The high-level idea is for every vertex $u\in V$ to communicate the following values:
\begin{itemize}
    \item $(c(u),d_G(u,c(u))+d_G(u,t))$ to every vertex $t\in N_k(u)$.
    \item $(c(u),d_G(u,c(u))+\omega_G(u,t))$ to every vertex $t$ adjacent to $u$ in $G$.
\end{itemize}
We then delegate the construction of $E_C$ to the good vertices, and the construction of $E_B$ to the bad vertices. More concretely, every $t\notin B$ constructs the edges $\set{c(u),c(v)}$ for every $u,v\in V$ such that $t\in N_k(u)$ and $\set{t,v}\in E$ (or $t=v$), which follows the familiar duplication--labelling procedure (as described previously in the algorithms for \Cref{lemma:k-nearest-hopset} and \Cref{lemma:exact-k-nearest}). For every vertex $t\in B$, it constructs the edges $\set{c(u),t}$ for every $u$ such that $t\in N_k(u)$, and the edges $\set{c(v),t}$ for every incident edge $\set{v,t}\in E$ (or $t=v$). Overall, we show that this requires $O(1)$ rounds with $\tilde O(|E|k^2)$ total memory. Similar to the previous algorithm descriptions, we use the term \textit{packets} to describe discrete units of information.

\begin{description}
\item[Step 0: Initialisation.] We may assume that for every edge $\set{u,v}\in E$, there exist two copies of this edge across the machines in the form $(\mathcal E_{\mathsf{adj}}, u, \set{u,v},\omega_{G}(u,v))$ and $(\mathcal E_\mathsf{adj}, v, \set{u,v},\omega_{G}(u,v))$. Likewise for every edge $\set{u,w}\in E_k$, there exist two copies of this edge in the form $(\mathcal {E}_\mathsf{k-near}, u, \set{u,w},d_G(u,w))$ and $(\mathcal E_\mathsf{k-near}, w, \set{u,w}, d_G(u,w))$. 

As a preliminary step, we first sort the edges using \Cref{lemma:mpc-toolbox} such that for every vertex $u\in V$, we have the packets
\begin{align*}
    \set{(\mathcal E_\mathsf{adj}, u, \set{u,v},\omega_G(u,v)):\set{u,v}\in E} && \text{and} && \set{(\mathcal {E}_\mathsf{k-near}, u, \set{u,w},d_G(u,w)): w\in N_k(u)} 
\end{align*}
stored on a contiguous set of machines $M(u)$ with leader $\widehat M(u)$. With another $O(1)$ rounds, we allow every leader $\widehat M(u)$ to know the set $N_k(u)$ (recall that $k\in O(n^\gamma)$ by assumption). 

\item[Step 1: Constructing the Hitting Set $C$.] The objective in this step is to construct the packets $(u,c(u))$ for all vertices $u\in V$. We first describe how to construct the hitting set $C$, which we execute via a random sampling process. The high-level idea is as follows: 
\begin{enumerate}
    \item We start with $C=\emptyset$. Each vertex selects itself into $C$ with probability $(\ln k)/k$.
    \item For every vertex $u\in V$ such that $N_k(u)\cap  C=\emptyset$, we add $u$ to $C$.
\end{enumerate}
By construction, $C$ must be a hitting set for the neighbourhoods $\set{N_k(u):u\in V}$. It can be shown that by repeating this procedure $\Theta(\log n)$ times, we will have at least one execution where $|C|\in \Theta((n\log k)/k)$ w.h.p., see~\cite[Lemma 6.1, arXiv2405.02695v2]{bui2024improved} or~\cite[Lemma 4.1]{dory2021}. 

We now translate this idea into the language of the MPC model. 
\begin{enumerate}
    \item For every vertex $u\in V$, the leader $\widehat M(u)$ locally selects $u$ into $C$ with probability $(\ln k)/k$. Denoting 
    $$
        i_u=
        \begin{cases}
            1 &\text{if $u$ is added in $C$} \\
            0 &\text{otherwise},
        \end{cases}
    $$
    we relabel every $k$-nearest edge packet into the form $(\mathcal E_{\mathsf{k-near}}, v, \set{u,v},d_G(u,v),i_u,i_v)$ which can be done in $O(1)$ rounds via the MPC tools in \Cref{lemma:mpc-toolbox}.
    \item Every leader $\widehat M(u)$ locally determines whether $N_k(u)\cap C=\emptyset$; if this is true, it adds $u$ to $C$. 
    
    \item The edge packets are re-labelled with the updated values for $i_u$ (for each vertex $u\in V$) in $O(1)$ rounds. Subsequently, each leader $\widehat M(u)$ chooses $c(u)$ as the vertex in $C\cap N_k(u)$ that is closest to $u$.  
\end{enumerate}
We execute the above procedure $\Theta(\log n)$ times \emph{in parallel}. We then choose an execution where $|C|\in O((n\log k)/k)$ -- such an execution occurs w.h.p.~-- and construct the packets $(u,c(u))$ corresponding to that particular execution, for all vertices $u\in V$. Overall, this step requires $O(1)$ rounds in total.

\item[Step 2: Communicating Preliminary Information.] Before constructing the edge sets $E_C$ and $E_B$, we first require the communication of some preliminary information. Namely, the objective of this step is to construct the values $\set{(c(u),d_G(c(u),u)+d_G(u,t)):t
\in N_k(u)}$ and $\set{(c(u),d_G(c(u),u)+\omega_{G}(u,t)):\set{u,t}\in E}$ for all vertices $u\in V$, and to communicate these values to the respective vertices. 

We do so in the following way: Following from Step 0, for every vertex $u\in V$, there exists a contiguous segment of machines $M(u)$ that contains the edges incident to $u$ in $E\cup E_k$. In particular, the leader machine $\widehat M(u)$ knows the $k$-nearest neighbourhood $N_k(u)$ and the value $(u,c(u),d_G(u,c(u)))$.
\begin{enumerate}
    \item $\widehat M(u)$ broadcasts the value $(u,c(u),d_G(u,c(u)))$ to the machines in $M(u)$ (using \Cref{lemma:mpc-toolbox}).
    \item For every vertex $t\in N_k(u)$, we construct the \emph{request packet} 
    $$
        (\mathcal R_\mathsf{k-near},t,c(u),d_G(u,c(u))+d_G(u,t)),
    $$
    which is done locally on the machines $M(u)$.
    \item For every edge $\set{u,t}\in E$ (and for $u=t$), we construct the request packet
    $$
        (\mathcal R_\mathsf{adj},t,c(u),d_G(u,c(u))+\omega_{G}(u,t)),
    $$
    which is also done locally on the machines $M(u)$.
\end{enumerate}
By applying a sorting procedure (\Cref{lemma:mpc-toolbox}), every vertex $t\in V$ now has the packets
\begin{itemize}
    \item $(\mathcal R_\mathsf{k-near},t,c(u),d_G(u,c(u))+d_G(u,t))$ for every vertex $u\in V$ such that $t\in N_k(u)$, and
    \item $(\mathcal R_\mathsf{adj},t,c(v),d_G(v,c(v))+\omega_{G}(v,t))$ for every vertex $v$ such that $\set{v,t}\in E$ (or $v=t$). 
\end{itemize}
stored on a (possibly different) contiguous segment of machines $M(t)$ with leader $\widehat M(t)$. This entire step runs in $O(1)$ rounds.

\item[Step 3: Identifying Bad Vertices $B$.] Recalling the definition of a bad vertex, it suffices for each vertex $t\in V$ to count the number of packets of the form $(\mathcal R_\mathsf{k-near},t,\ldots)$, which we execute by performing the prefix sum and sorting sub-routines in \Cref{lemma:mpc-toolbox}. At the end of this procedure, the leader $\widehat M(t)$ knows the number of packets $(\mathcal R_\mathsf{k-near},t,\ldots)$. More concretely, defining
$$
b_t = \begin{cases}
    1&\text{if }|\set{(\mathcal R_\mathsf{k-near},t,\ldots)}| > k^2\\
    0&\text{otherwise,}
\end{cases}
$$
$\widehat M(t)$ determines the value $(t,b_t)$ and distributes this value to the other machines in $M(t)$ (using \Cref{lemma:mpc-toolbox}). Overall, this entire step runs in $O(1)$ rounds.

\item[Step 4: Building Edges in $E_C$.] For every vertex $t\notin B$, our objective is the following: For every pair of request packets $(\mathcal R_\mathsf{k-near},t,s_a,w_a)$ and $(\mathcal R_\mathsf{adj},t,s_b,w_b)$, create a new edge $\set{s_a,s_b}$ with the edge weight $(w_a+w_b)$. For ease of notation, denote 
\begin{align*}
    K_t &= \set{(c(u),d_G(u,c(u))+d_G(u,t)):t\in N_k(u)},\\
    A_t&=\set{(c(v),d_G(v,c(v))+\omega_G(v,t)):\set{t,v}\in E\vee t=v},
\end{align*}
with $a_t=|A_t|$. Our plan is to create $a_t$ duplicates of the set $K_t$, and ``label'' each duplicate with a value from $A_t$. Therefore, we perform the following: 
\begin{enumerate}
        \item We first apply \Cref{lemma:mpc-duplicate} to create $a_t$ duplicates of $K_t$ in the form $(K_t, 1), (K_t, 2), \ldots, (K_t, a_t)$.
        \item We next apply \Cref{lemma:mpc-label} to relabel each duplicate with a value in $A_t$. In other words, denoting $A_t=\set{(s_1,w_1),(s_2,w_2),\ldots,(s_{a_t},w_{a_t})}$, we now have the duplicates in the form of $(K_t, (s_1,w_1)),(K_t, (s_2,w_2)),\ldots, (K_t, (s_{a_t},w_{a_t}))$.
        \item Lastly, for each duplicate $(K_t, (s_j,w_j))$, we re-construct each packet in $K_t$ from $(\mathcal R_\mathsf{k-near},t,s,w)$ into the edge $(\set{s,s_j},w+w_j)$.
\end{enumerate}  
Since $t\notin B$, we have that $|K_t| \leq k^2$. Additionally, we also observe that $\sum_{t\notin B}a_t\in O(|E|)$. Lastly recalling that $k^2\in O(n^\gamma)$ by assumption, we find that the above procedure can be executed in $O(1)$ rounds with $\tilde O(|E|k^2)$ additional memory. This completes the construction of the edge set $E_C$.

\item[Step 5: Building Edges in $E_B$.] For every vertex $t\in B$, we perform a simple procedure across the machines in $M(t)$ (from Step 3) to construct edges of $E_B$:
\begin{itemize}
    \item For every request packet $(\mathcal R_\mathsf{k-near},t,s_a,w_a)$ received, construct the value $(\set{t,s_a},w_a)$.
    \item For every request packet $(\mathcal R_\mathsf{adj},t,s_b,w_b)$ received, construct the value $(\set{t,s_b},w_b)$.
\end{itemize}
This can be done entirely locally, and completes the construction of the edge set $E_B$.
\end{description}

\begin{claim}\label{claim:skeleton-implement}
    There exists a randomised algorithm to construct the graph $G_S$ (as defined in the earlier modified construction) w.h.p.~in the sublinear MPC model in $O(1)$ rounds with $\tilde O(|E|k^2)$ total memory. 
\end{claim}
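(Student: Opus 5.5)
The plan is to verify, step by step, that the algorithm of Steps 0--5 produces exactly the graph $G_S=(V_S,E_S)$ of the modified construction. Each step should use $O(1)$ rounds and $\tilde O(|E|k^2)$ total memory. The only randomisation is the hitting-set sampling in Step 1, which we will show succeeds w.h.p. Correctness of the output as a skeleton graph then needs no further argument, since it is already established by \Cref{claim:skeleton-correctness} once the computed $G_S$ matches the definition.

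\textbf{Rounds and per-step correctness.} Steps 0, 2 and 3 consist of a constant number of invocations of sorting, prefix sums, minimum-finding and broadcast from \Cref{lemma:mpc-toolbox}, plus local computation. Since $|N_k(u)|\le k\le n^\gamma$, each leader $\widehat M(u)$ can hold $N_k(u)$. In Step 1, the $\Theta(\log n)$ parallel repetitions each need $O(1)$ rounds; by \cite[Lemma 4.1]{dory2021} some repetition yields $|C|\in O((n\log k)/k)$ w.h.p. Selecting that repetition takes a prefix-sum count of $|C|$ per repetition, a minimum over repetitions, and a broadcast of the chosen index, all in $O(1)$ rounds. After Step 2, vertex $t$ holds exactly the multisets $K_t$ and $A_t$. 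Step 3 correctly marks $t\in B$ by counting the $\mathcal R_{\mathsf{k-near}}$ packets. In Step 4, for a good vertex $t$, pairing each element of $K_t$ with each element of $A_t$ enumerates exactly the candidate edges $\{c(u),c(v)\}$ with weight $d_G(c(u),u)+d_G(u,t)+\omega_G(t,v)+d_G(v,c(v))$ from the definition of $E_C$. Step 5 enumerates exactly the candidates of $E_B$. A final sort by edge followed by a minimum per group (\Cref{lemma:mpc-toolbox}) removes parallel edges and keeps the minimum weight, as the definitions require.

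\textbf{Memory.} The packets of Steps 0--3 number $O(|E|+nk)$ in total. The main obstacle is Step 4, where the duplication and labelling lemmas (\Cref{lemma:mpc-duplicate,lemma:mpc-label}) must be applied with both a bounded duplicate size and a bounded total.
\begin{itemize}
\item \emph{Duplicate size.} Each duplicated object $K_t$ has size at most $k^2\in O(n^\gamma)$ because $t$ is good, so a duplicate fits on one machine.
\item \emph{Total memory.} The total is $\sum_{t\notin B} a_t|K_t|\le k^2\sum_t a_t\le k^2(2|E|+n)$. We restrict to non-isolated vertices, as in \Cref{claim:skeleton-edge-size}, so that the $+n$ term is absorbed into $O(|E|)$ using connectivity of $G$.
\end{itemize}
Bad vertices are never duplicated: Step 5 is local and linear in the number of packets, so it contributes $O(|E|+nk)$. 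The final deduplication sort operates on $O(|E|k^2)$ items, which matches \Cref{claim:skeleton-edge-size}.

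Summing over the constant number of steps gives $O(1)$ rounds and $\tilde O(|E|k^2)$ total memory, and a union bound over the single random event in Step 1 gives success w.h.p.
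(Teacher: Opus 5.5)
Your proposal is correct and takes essentially the same route as the paper. You run Steps 0--5 of the described algorithm, each in $O(1)$ rounds, and only the duplication--labelling of Step 4 needs $\tilde O(|E|k^2)$ memory, using $|K_t|\le k^2$ for good $t$ and $\sum_{t\notin B}a_t\in O(|E|)$; all other steps use $\tilde O(|E|+nk)$. Your extra details (selecting the successful repetition in Step 1, the final minimum-weight deduplication, and absorbing the $+n$ term via connectivity) only make explicit what the paper's proof leaves implicit.
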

\begin{proof}
    We apply the algorithm as described above to construct the edge sets $E_C$ (in Step 4) and $E_B$ (in Step 5), which completes the construction of $G_S$. By the description of the algorithm, we observe each step only requires $O(1)$ rounds, and therefore the entire algorithm runs in $O(1)$ rounds as well. Moreover, only Step 4 requires $\tilde O(|E|k^2)$ additional memory, with the remaining steps only needing $\tilde O(|E|+nk)$ total memory (with the extra $O(nk)$ memory needed to store the edge set $E_k$). Therefore, it suffices to have $\tilde O(|E|k^2)$ total memory overall.
\end{proof}

\paragraph{The Final Proof.} We summarise this section on the skeleton graph with the proof for \Cref{lemma:skeleton-implement}.
\begin{proof}[Proof of \Cref{lemma:skeleton-implement}]
    By \Cref{claim:skeleton-implement}, we construct the graph $G_S=(V_S,E_S)$ for $G$ as defined in the modified construction in $O(1)$ rounds and $\tilde O(|E|k^2)$ total memory. By \Cref{claim:skeleton-correctness}, we find that $G_S$ is indeed a skeleton graph for $G$, with the added property that $|E_S|\in O(|E|k^2)$ as claimed.
\end{proof}

\printbibliography

\appendix
\section{MPC Tools}\label{appendix:mpc-tools}
We describe two tools, the duplication lemma and the labelling lemma, that together enable efficient communication of information between vertices. As a motivating example, suppose each vertex holds $k$ edges that need to be delivered to $d$ other vertices. To facilitate this, we first create $d$ duplicates of each of the $k$ edges and then label each copy with its intended recipient. The labelled messages can then be routed through a sorting procedure. We show that this entire process can be implemented in a constant number of rounds.

\begin{lemma}[Duplication]\label{lemma:mpc-duplicate}
    Suppose we have $n$ sets $S_1,S_2,\ldots,S_n$ of items such that each set $S_i$ occupies at most $O(k)$ space for some integer $k > 0$. For every set $S_i$, the task is to create $d_i\geq 1$ duplicates of the set in the form $(S_i,1),(S_i,2),\ldots,(S_i,d_i)$. 
    
    Given that $k\in O(n^\gamma)$, in the sublinear MPC model with $O\left(nk+\sum_{i=1}^nkd_i\right)$ total memory, there exists a deterministic algorithm to accomplish this task for all sets $S_i$. This algorithm runs in $O(1)$ rounds.    
\end{lemma}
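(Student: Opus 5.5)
The plan is to reduce duplication to sorting, prefix sums and broadcasts from \Cref{lemma:mpc-toolbox}, by laying out the output array explicitly. We first compute, with a prefix sum over the values $d_1,\dots,d_n$ (and over the sizes $|S_i|\in O(k)$), the offsets $D_i=\sum_{j<i} d_j$. The copies $(S_i,1),\dots,(S_i,d_i)$ will then occupy consecutive \emph{slots} $D_i+1,\dots,D_i+d_i$ of a global output array. Each slot has size $O(k)$, so the total output size is $O(\sum_i k d_i)$. Each slot is assigned to a fixed machine, e.g.\ slot $j$ goes to machine $\lceil jk/n^\gamma\rceil$. This assignment fits because $k\in O(n^\gamma)$, so each machine holds $O(n^\gamma/k)\ge 1$ slots.

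Next we make, for each slot $j$, a small \emph{placeholder} record $(i,\text{copy index})$. It is created on the machine owning slot $j$, and it has to learn which set $i$ it belongs to. To do this, we place a marker $(D_i+1, i)$ for each $i$ with $d_i\ge1$. We then sort markers and placeholders together by slot index. Each placeholder takes the label of the nearest preceding marker, which is a prefix-maximum computation done in $O(1)$ rounds with the prefix-sum tree of \Cref{lemma:mpc-toolbox}. The copy index is then $j-D_i$.

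The main obstacle is moving the actual content of $S_i$ into $d_i$ distinct slots. The difficulty is that $d_i$ may be huge, so we cannot have a single machine send $S_i$ directly. We would handle this with a broadcast tree. Sorting the placeholders by slot index puts the placeholders of set $i$ on a contiguous segment of machines. We first deliver one copy of $S_i$, which has size $O(k)\subseteq O(n^\gamma)$, to the first machine of that segment. This takes one round, after sorting the items of $S_i$ together with the first placeholder of $i$. We then use the \textbf{Broadcast} primitive of \Cref{lemma:mpc-toolbox} within the segment. If we are worried about segments sharing machines, we note that each machine lies in at most two segments that straddle a boundary. We can therefore run the broadcasts in two phases, over odd and even segments. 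In each phase the segments are disjoint, and a boundary machine receives at most $O(n^\gamma)$ data per phase.

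Finally, each machine fills each of its slots with the received copy of $S_i$, tagged by its copy index. The memory used is $O(nk)$ for the inputs plus $O(\sum_i k d_i)$ for the outputs, and no randomness is needed. All steps are a constant number of applications of \Cref{lemma:mpc-toolbox}, so the algorithm runs in $O(1)$ rounds.
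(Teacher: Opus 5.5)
Your proposal is correct and takes essentially the same route as the paper. Prefix sums over the $d_i$ give each set a contiguous block of slots, one copy of $S_i$ goes to the first machine of its block, and the broadcast primitive of \Cref{lemma:mpc-toolbox} fills the block, with copy indices recovered from the offset. Packing several slots per machine and scheduling the straddling segments in odd/even phases spells out the generalisation that the paper skips by assuming each machine stores exactly one tuple $(S,d)$; the marker/placeholder step is optional, since the broadcast can carry $D_i$ just as the paper's broadcast carries $p_{i-1}$.
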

\begin{proof}
    For all $i\in\set{1,2,\ldots,n}$, we may assume that $S_i$ and $d_i$ are appropriately labelled such that there exists a single machine $\widehat M(i)$ that completely stores the tuple $(S_i,d_i)$. 

    To facilitate the duplication procedure, our goal is to first create ``gaps'' of memory between the machines; we then fill in the gaps by duplicating the sets. For simplicity, we may assume that each machine has \emph{exactly} enough local memory to store a single tuple of the form $(S,d)$, where $S$ is a set of $k$ items and $d$ is an integer; the procedure can be generalised for the case of larger local memory. 
    \begin{enumerate}
        \item Define the integers $p_0,p_1,p_2,\ldots, p_n$ as follows:
        \begin{align*}
            p_0 &= 0,\\
            p_i &= d_i + p_{i-1}.
        \end{align*}
        Using the prefix sum sub-routine in \Cref{lemma:mpc-toolbox}, these integers $\set{p_i}$ can be determined in $O(1)$ rounds. We then sort the values such that each machine $\widehat M(i)$ knows the value $p_{i-1}$.

        \item The machine $\widehat M(i)$ sends the tuple $(S_i,d_{i})$ to the machine with ID $(p_{i-1}+1)$. By our total memory constraints (and our simplifying assumption on the local memory), we have sufficiently many machines to perform this communication.
    \end{enumerate}
        Denote $M(i)$ as the contiguous set of machines with IDs $(p_{i-1}+1),(p_{i-1}+2),\ldots,(p_{i-1}+d_i)$, and $\widehat M(i)$ as the machine in $M(i)$ with the smallest ID. After the communication procedure in Step 2, $\widehat M(i)$ knows the value $(S_i,d_{i})$ while the remaining machines in $M(i)$ are empty. Subsequently, $\widehat M(i)$ broadcasts the value $(S_i,p_{i-1})$ to the machines in $M(i)$ in $O(1)$ rounds (\Cref{lemma:mpc-toolbox}). This enables us to create the duplicates $(S_i, 1),(S_i,2),\ldots,(S_i,d_i)$ locally across the machines in $M(i)$, and we are done. 
        
        Overall, this requires $O(1)$ rounds as claimed.
\end{proof}

\begin{lemma}[Labelling]\label{lemma:mpc-label}
Suppose we have $n$ sets $S_1,S_2,\ldots,S_n$ of items such that each set $S_i$ occupies at most $O(k)$ space for some integer $k > 0$. For every set $S_i$, there also exist $d_i\geq 1$ duplicates of $S_i$ in the form $(S_i,1),(S_i,2),\ldots,(S_i,d_i)$. Our task is to label each duplicate $(S_i,j)$ with a label $u_{i,j}$, where each label $u_{i,j}$ occupies $O(1)$ space. 

Given that $k\in O(n^\gamma)$, in the sublinear MPC model with $O\left(nk+\sum_{i=1}^nkd_i\right)$ total memory, there exists a deterministic algorithm to accomplish this task for all sets $S_i$. This algorithm runs in $O(1)$ rounds.   
\end{lemma}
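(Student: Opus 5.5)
The plan is to reduce labelling to a single sort that merges each duplicate with its label, in the same style as the proof of \Cref{lemma:mpc-duplicate}. We may assume each label $u_{i,j}$ is stored as a packet $(i,j,u_{i,j})$ somewhere in the system. If the labels are not already indexed by $j$, we first assign the indices $j\in\set{1,\ldots,d_i}$ to the labels of set $i$. To do this, we sort all label packets by $i$ and run the prefix sum subroutine of \Cref{lemma:mpc-toolbox} on indicator values, which gives each label its rank within its group. The total number of label packets is $\sum_i d_i$, which fits in memory, and this step takes $O(1)$ rounds.

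Next we bring each label to its duplicate. The duplicates $(S_i,j)$ occupy $O(k)$ space each, while the labels occupy $O(1)$ space. We therefore do not move the heavy duplicates arbitrarily. We sort the combined collection of items using the key $(i,j)$, with a tie-break that places the label packet immediately before the duplicate it belongs to. By \Cref{lemma:mpc-toolbox} this is $O(1)$ rounds.

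The main obstacle is that a duplicate of size $O(k)$ might be split across a machine boundary, separated from its label. There are two ways around this:
\begin{itemize}
\item We can reuse the layout from \Cref{lemma:mpc-duplicate}. There, the duplicate $(S_i,j)$ sits alone on machine $p_{i-1}+j$, a contiguous block $M(i)$ whose IDs are computable. Each label packet $(i,j,u_{i,j})$ can then be sent directly to machine $p_{i-1}+j$. For this, every label needs to know $p_{i-1}$, which we obtain by one sort and broadcast against the packets $(i,p_{i-1})$ stored at the leaders $\widehat M(i)$. Each machine then receives $O(1)$ words, so there is no congestion.
\item Alternatively, we break every duplicate into $O(k)$ unit items $(i,j,x)$. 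We sort unit items and labels together by $(i,j)$, so that every group of size $O(k)\subseteq O(n^\gamma)$ spans at most two machines. The label is then forwarded to the neighbouring machine if needed in one extra round.
\end{itemize}
Either way, the label reaches all parts of its duplicate.

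Finally, each machine locally replaces $(S_i,j)$ with $(S_i,u_{i,j})$. The memory used is $O(nk+\sum_i kd_i)$, namely the duplicates plus $O(\sum_i d_i)$ for the labels. Every step is a constant number of invocations of \Cref{lemma:mpc-toolbox} together with direct sends, so the whole procedure is deterministic and runs in $O(1)$ rounds.
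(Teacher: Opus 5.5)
Your proposal is correct and takes essentially the same route as the paper: sort so that each label $u_{i,j}$ lands on the same machine as its duplicate $(S_i,j)$, then relabel locally. The paper's proof simply assumes the packets are ``appropriately labelled'' so that one sort places $(S_i,j,u_{i,j})$ on a single machine; your prefix-sum step for assigning the indices $j$ and your handling of duplicates that straddle a machine boundary spell out exactly the details that assumption leaves implicit.
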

\begin{proof}
    We assume that the labels $u_{i,j}$ and the sets $(S_i,j)$ are appropriately labelled such that by performing a sorting procedure (\Cref{lemma:mpc-toolbox}), there exists a single machine $M(i,j)$ that completely stores the tuple $(S_i,j,u_{i,j})$. Subsequently, by means of local computation, we can then relabel $(S_i,j)$ into $(S_i,u_{i,j})$, and we are done.
\end{proof}

\section{Reporting the Path}\label{appendix:paths}
In this section, our task is the following: Given a weighted undirected graph $G=(V,E)$, construct a distance oracle that receives a pair of vertices $(u,v)$ as query and returns in constant rounds a path $P$ in $G$ from $u$ to $v$ such that $\omega(P)$ is a constant approximation of $d_G(u,v)$. We call this a \emph{path-reporting distance oracle}. We show that this can be accomplished by modifications to our distance oracle algorithms. Namely, our aim is to show the following lemma, which can be seen as a modification of \Cref{lemma-central-construct}. We remark that when applying this lemma, the connectivity assumption can be relaxed using \Cref{lemma:connectivity-reduction} as before (noting that the technique of \Cref{lemma:connectivity-reduction} almost immediately applies to path-reporting oracles as well).

\begin{lemma}\label{lemma:path-distance-oracle}
    Let $G=(V,E)$ be a connected, weighted graph with $n$ vertices, and let $c > 0$ be a constant such that $|E|\in \tilde O(n^{1+{c/2}})$. Suppose that for every vertex $u \in V$, we are given the set $\{(u,t) : t \in \tilde N_k(u)\}$, where $\tilde N_k(u)$ is an $a$-approximate $k$-nearest neighbourhood of $u$ with $k=n^{c/4}$.
    
    Given that $c\leq \gamma$, in the heterogeneous MPC model with $\tilde O(n^{1+c})$ total memory, there exists a randomised algorithm that w.h.p.~constructs an $O(1/c)$-approximate path-reporting distance oracle in $O(\log a + \log\log n)$ rounds. Every subsequent query can then be answered in $O(1)$ rounds.
\end{lemma}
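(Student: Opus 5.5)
We follow the construction of \Cref{lemma-central-construct} and attach path witnesses to every edge the oracle uses. Each edge that is not an edge of $G$ gets a pointer recording how it arose, so that a query path can be expanded into a genuine path of $G$ by a constant number of rounds of pointer-following. There are three kinds of virtual edges. Edges of $E_k$ have weight $d_G(u,t)$. Edges of $E_S$ come from the skeleton construction. Edges of $F$ are a subset of $E_S$. The query procedure is the same as in \Cref{lemma:main-construct-pathfind}, with argmins recorded. The machine $M^*$ finds the minimising $x,y\in V_S$ and the path $x\rightsquigarrow y$ in $F$, by running Dijkstra locally. This gives a path of $O(1)$ types of segments whose total length is the $O(1/c)$-approximation from \Cref{lemma:main-construct-ub}. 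The task is then to expand each segment into a path in $G$ of at most the same length, all in parallel.

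\textbf{Step 1: witnesses for $E_k$.} Expanding an edge of $E_k$ recursively through hopset edges would take $\Theta(\log h)$ rounds, so we store predecessor information instead. During \Cref{lemma:exact-k-nearest}, whenever row $u$ of $B_{i+1}$ is obtained as $B_i(u,t)+B_i(t,v)$, we store the midpoint $t$. Hopset edges of \Cref{lemma:k-nearest-hopset} are computed locally by Dijkstra on $T_k(u)$, so for them we store the whole shortest-path tree, of size $O(k^2)$ per vertex. This still gives a recursion of depth $O(\log h)$, which is too deep. To avoid it, we use a structural fact. Since $N_k(u)$ is closed under taking prefixes of shortest paths (by the argument used in \Cref{claim:skeleton-weights}), every $t\in N_k(u)$ has a predecessor $p_u(t)\in N_k(u)$ on a shortest $u$–$t$ path with $\{p_u(t),t\}\in E$. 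Once $E_k$ is known, $p_u(t)$ is computed in $O(1)$ rounds. For each $t\in N_k(u)$ we pick the neighbour $p$ of $t$ with $p\in N_k(u)$ minimising $d_G(u,p)+\omega_G(p,t)$; this uses the same request and duplication mechanism as \Cref{lemma:k-nearest-hopset}, with $\tilde O(nk^2)$ memory. The machine for $u$ then holds the shortest-path tree of $N_k(u)$, so it can output any $u$–$t$ path in $G$, of at most $k$ edges, in $O(1)$ rounds.

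\textbf{Step 2: witnesses for $E_S$.} For each edge of $E_C$ we store its defining tuple $(u,t,v)$ when it is created in Step 4 of \Cref{claim:skeleton-implement}, keeping the tuple of the minimum weight. For $E_B$ we store $u$ or $v$ analogously. The weight of such an edge is a sum of $E_k$-distances and at most one original edge, so the edge expands into at most three $E_k$-segments plus one edge of $G$. Each $E_k$-segment is then expanded by Step 1. Storing the tuples costs only constant overhead per edge, so the total memory remains $\tilde O(n^{1+c})$.

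\textbf{Step 3: query.} $M^*$ outputs the list of $O(n)$ virtual edges on the witness path and labels them with their position on the path. In $O(1)$ sorting rounds each virtual edge fetches its witness tuple. Each resulting $E_k$-segment then fetches its tree path from the machine for its source vertex, using \Cref{lemma:mpc-toolbox}. The pieces are concatenated in order by a prefix sum over the segment lengths. Each expansion produces a walk whose length is at most the virtual weight, so the resulting walk is a $O(1/c)$-approximation. Shortcutting repeated vertices can only shorten it.

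The main obstacle is the output size. A reported walk can have up to $n$ hops of $F$, each expanding into $\Theta(k)$ edges. We plan to argue that any path of $G$ can be reported distributedly across machines, so that only its $O(n)$ vertex count after shortcutting matters. If that argument fails, the fallback is to restrict Dijkstra on $F$ to simple paths and to accept an $\tilde O(nk)$-size distributed output.
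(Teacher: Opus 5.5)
Your construction is sound up to the query step, and Step 1 takes a different route from the paper. The proof is not finished, though. Your last paragraph leaves open how the expanded walk becomes a reported \emph{path} in $O(1)$ rounds, and the obstacle you describe there comes from a miscount. The $F$-path found by Dijkstra is simple in $F$, and $F$ has only $|V_S|\le C_S n\log k/k$ vertices, not $n$. Each $E_S$-edge on it expands into at most three tree paths of fewer than $k$ edges each, plus one edge of $G$. The two end segments $u\to x$ and $y\to v$ add fewer than $2k$ edges. So the walk has $O(|V_S|\cdot k)=O(n\log k)\subseteq\tilde O(n)$ edges. After your prefix-sum concatenation you can gather it on $M^*$ in $O(1)$ rounds and erase cycles locally. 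Since weights are positive, this gives a path no longer than the walk; this is exactly how the paper finishes. Without this observation the final step is unsupported. You do not say how a walk spread over sublinear-memory machines would be shortcut in $O(1)$ rounds. Your fallback of an $\tilde O(nk)$-size output would be a walk, not the required path.

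On the differences from the paper: the paper builds $L_k$ by threading last-edge-in-$G$ information through the hopset and graph-exponentiation algorithms. You instead compute predecessor pointers once $E_k$ is known, which is more modular. To run it in $O(1)$ rounds, do not range over all neighbours of $t$, whose degree is unbounded. Instead let $u$ request $I_k(p)$ for every $p\in N_k(u)$, as in the hopset algorithm. Then argue that the predecessor edge $\{p,t\}$ lies in $I_k(p)$:
\begin{itemize}
\item If it did not, $I_k(p)$ would contain $k$ edges $\{p,z\}$ ranked before $\{p,t\}$.
\item Each such $z$ satisfies $d_G(u,z)\le d_G(u,t)$, and equality forces $\omega_G(p,z)=\omega_G(p,t)$, so the ID tie-break favours $z$.
\item Together with $p$, these give $k+1$ vertices ranked before $t$, contradicting $t\in N_k(u)$.
\end{itemize}
Also, several skeleton edges on the $F$-path can share a witness vertex, so one vertex's tree may be requested many times. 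The duplication lemma handles this within $O(|V_S|k)$ memory. The paper avoids the issue by storing $L_F$ on $M^*$ and having $M^*$ output a repeat-free vertex sequence before expansion.
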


The algorithms in this section are presented primarily as modifications to algorithms discussed previously in the main body of this work, and will be nearly identical in proof to those stated in the main body. Therefore for the sake of clarity and presentation, we only sketch these modifications while providing sufficient exposition to motivate them.  

We begin by recalling the distance oracle construction of \Cref{lemma-central-construct}, which consists of the following two components; here, $k$ is some small polynomial of $n$ determined by the input parameters.
\begin{itemize}
    \item The $k$-nearest edge set $E_k$ of $G$.
    \item A constant-stretch spanner $F=(V_S,E_F)$ of the skeleton graph $G_S=(V_S,E_S)$.
\end{itemize}
A distance query $(u,v)$ is answered as follows. If $u\in N_k(v)$ or $v\in N_k(u)$, then return the edge weight $\omega_{E_k}(u,v)$. Otherwise, we return 
\begin{equation*}
    \delta(u,v) := \min_{x,y\in V_S}(\omega_{E_k}(u,x) + d_F(x,y) + \omega_{E_k}(y,v)).
\end{equation*}
We had previously shown that this can be computed in $O(1)$ rounds and yields a constant approximation of $d_G(u,v)$. Our aim is to output a path in the original graph that has length at most $\delta(u,v)$. Specifically, there are two components to this task: 
\begin{enumerate}
    \item For each edge $\set{u,v}\in E_k$, we should have knowledge of, or be able to compute, some shortest path $P$ from $u$ to $v$. 
    \item Likewise for each edge $\set{x,y}\in E_F\subseteq E_S$, we should have knowledge of, or be able to compute, a path $P$ from $x$ to $y$ such that $\omega(P)\leq \omega_{E_S}(x,y)$.
\end{enumerate}
We discuss each component separately before combining them to form the proof of \Cref{lemma:path-distance-oracle}.

\paragraph{Recovering paths in $E_k$.} The most immediate attempt would be to modify the building block algorithms \Cref{lemma:k-nearest-hopset} and \Cref{lemma:exact-k-nearest} such that each time we shortcut a path (i.e., construct a new edge from a path in the original graph), we also maintain the full path that was shortcut. While this works, we will consider a slightly more space-efficient method that uses a property of $k$-nearest neighbourhoods. Specifically, let $(u,t)$ be any vertex pair with $t\in N_k(u)$, and let $P$ be any shortest path from $u$ to $t$. Observe that for any vertex $v$ along $P$, we necessarily have that $v\in N_k(u)$ as well, i.e., \emph{all the vertices in $P$ are in the $k$-nearest neighbourhood of $u$}. Therefore for each edge $\set{u,t}\in E_k$, instead of storing the entire shortest path from $u$ to $t$, we just store the last edge along this path. 

Concretely, alongside the $k$-nearest edge set $E_k$, we also desire some set $L_k\subseteq V^3$ that satisfies the following properties: 
\begin{itemize}
    \item (\emph{Size-bounded}) $|L_k| \leq nk$.
    \item (\emph{Soundness}) $(u,z,t)\in L_k$ only if $t\in N_k(u)$ and either (1) the edge $\set{t,z}$ is on some shortest path from $u$ to $t$ or (2) $t = z = u$.
    \item (\emph{Completeness}) For all pairs $u,t\in V$ with $t\in N_k(u)$, there exists some vertex $z\in V$ such that $(u,z,t)\in L_k$. 
\end{itemize}
Note that for a fixed graph $G$ and integer $k$, there may be multiple candidate choices for $L_k$. We first show for any pair of vertices $u,t\in V$ such that $t\in N_k(u)$, the set $L_k$ is sufficient to recover some shortest path from $u$ to $t$. 
\begin{lemma}\label{lemma:path-L_k}
    Let $G=(V,E)$ be a connected, weighted graph with $n$ vertices, and let integer $k > 0$ be a parameter. Consider any set $L_k$ with the properties specified above. For any pair of vertices $u,t\in V$ such that $t\in N_k(u)$, there exists a path $u=v_1 - v_2 - v_3 -\ldots - v_m = t$ in $G$ with $m\leq k$ such that (1) the length of this path is $d_G(u,t)$ and (2) for all integers $1\leq i \leq m - 1$ we have $(u,v_{i},v_{i+1})\in L_k$. 
\end{lemma}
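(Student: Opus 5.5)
We reconstruct the path backwards from $t$ by repeatedly following the ``last edge'' pointers stored in $L_k$, always staying inside $N_k(u)$.

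Set $x_0 := t$. If $x_0 = u$ we are done, since $m=1$ trivially satisfies both conditions. Otherwise $t\in N_k(u)$ and $t\neq u$. By completeness there is some $z$ with $(u,z,t)\in L_k$. By soundness, and since $t\neq u$ rules out case (2), the edge $\set{z,t}$ lies on some shortest path from $u$ to $t$. Hence
$$
d_G(u,z)+\omega_G(z,t)=d_G(u,t),
$$
and in particular $d_G(u,z)<d_G(u,t)$ because weights are positive.

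To iterate, we need $z\in N_k(u)$. This is the observation made in the text: every vertex on a shortest $u$--$t$ path with $t\in N_k(u)$ is itself in $N_k(u)$. It needs a little care with tie-breaking by ID. Since $d_G(u,z)<d_G(u,t)$ strictly, $z$ precedes $t$ in the order $(d_G(u,\cdot),\mathrm{ID})$, so $z\in N_k(u)$ regardless of how ties are broken.

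Set $x_1:=z$ and repeat. This gives a sequence $t=x_0,x_1,x_2,\ldots$ in $N_k(u)$ with the following properties:
\begin{itemize}
    \item $(u,x_{j+1},x_j)\in L_k$ for each $j$;
    \item $\set{x_{j+1},x_j}\in E$;
    \item $d_G(u,x_{j+1})+\omega_G(x_{j+1},x_j)=d_G(u,x_j)$;
    \item the distances $d_G(u,x_j)$ are strictly decreasing.
\end{itemize}
Strict decrease means the vertices are distinct. Since they all lie in $N_k(u)$, which has at most $k$ elements, the process must stop within $k$ vertices. It can only stop when it reaches $x_j=u$, because for any $x_j\neq u$ completeness and soundness supply a further step.

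Reversing the sequence gives $u=v_1,\ldots,v_m=t$ with $m\le k$. Telescoping the distance identities shows that its length is exactly $d_G(u,t)$, which is property (1).

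Property (2) asks for $(u,v_i,v_{i+1})\in L_k$. Our construction gives $(u,x_{j+1},x_j)\in L_k$, i.e.\ a triple $(u,\text{predecessor},\text{vertex})$. This matches the soundness convention, where $(u,z,t)$ means $z$ is the predecessor of $t$. So with $v_i=x_{j+1}$ and $v_{i+1}=x_j$ the indices align as stated.

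The one delicate point is the membership $z\in N_k(u)$ under tie-breaking. Strict positivity of the weights resolves it, so I do not expect any real obstacle. The remaining steps are bookkeeping.
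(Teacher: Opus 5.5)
Your proof is correct and is essentially the paper's argument. The paper inducts on $d_G(u,t)$ using integrality of the weights: it takes the predecessor $z$ from completeness and soundness and appends $\{z,t\}$ to the path obtained for $(u,z)$, which is exactly your backward iteration unrolled. Your justification that $z\in N_k(u)$, via strict precedence in the $(d_G(u,\cdot),\mathrm{ID})$ order, is if anything a cleaner version of the paper's observation that every vertex on a shortest $u$--$t$ path lies in $N_k(u)$.
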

\begin{proof}
    We first note that any shortest path from $u$ to $t$ necessarily has at most $k$ distinct vertices (including $u$ and $t$); otherwise, the first $k$ vertices along the path, counting from $u$, have strictly shorter distances to $u$ than that of $t$, hence contradicting $t\in N_k(u)$ (recall that we consider only positive edge weights). Consequently, for any shortest path from $u$ to $t$, all vertices along the path are necessarily in $N_k(u)$ as well. 
    
    We will prove the claim by induction on $d_G(u,t)$. In the base case, we have $d_G(u,t)=0$ and $t=u$ necessarily; the claim immediately follows by the properties of $L_k$. 
    
    In the inductive case, suppose the claim holds for all distances at most $\ell$ and that $d_G(u,t) = \ell+1$. By the properties of $L_k$, observe that there exists some vertex $z\in N_k(u)$ such that $(u,z,t)\in L_k$. In particular, notice that $d_G(u,z) < d_G(u,t)$ -- this follows from the observation that the edge $\set{z,t}$ is on some shortest path from $u$ to $t$, and hence $d_G(u,t) = d_G(u,z) + \omega_G(z,t)$. Therefore, applying the induction hypothesis on $(u,z)$, there exists a path $u=v_1 - v_2 - \ldots - v_m = z$ of length $d_G(u,z)$ such that $(u,v_i,v_{i+1})\in L_k$ for all integers $1\leq i \leq m - 1$. The claim now follows by directly appending the edge $\set{z,t}$ to the end of this path and noticing that the new path indeed has length exactly $d_G(u,t)$; by an earlier observation, we find that the new path has at most $k$ distinct vertices as well. 
\end{proof}

We now show that given approximate $k$-nearest neighbourhoods for all vertices, we can efficiently construct a set $L_k$ satisfying the above properties alongside constructing the $k$-nearest set $E_k$. In particular, it suffices for us to introduce small modifications to the algorithm of \Cref{lemma:E_k}.
\begin{lemma}\label{lemma:path-k-nearest}
    Let $G=(V,E)$ be a connected, weighted graph with $n$ vertices, and let integer $k > 0$ be a parameter. Suppose that for every vertex $u \in V$, we are given the set $\{(u, t) : t \in \tilde N_k(u)\}$, where $\tilde N_k(u)$ is an $a$-approximate $k$-nearest neighbourhood of $u$.
    
    Given that $k^2 \in O(n^\gamma)$, in the sublinear MPC model with $\tilde O(|E| + nk^2)$ total memory, there exists a deterministic algorithm for constructing the $k$-nearest edge set $E_k$ and some set $L_k$ (satisfying the above properties) for $G$ in $O(\log\log d+\log a)$ rounds, where $d$ is the weighted diameter of $G$.
\end{lemma}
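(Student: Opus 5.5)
We reuse the pipeline of \Cref{lemma:E_k} verbatim and only augment each shortcut edge with one extra vertex ID, a ``predecessor'' witness. This keeps the memory bound at $\tilde O(|E|+nk^2)$ and the round complexity at $O(\log\log d+\log a)$. We maintain the invariant that every edge $\{x,y\}$ in the current working graph (original edges, hopset edges, and entries of the matrices $B_i$) carries a vertex $z$ such that the weight of $\{x,y\}$ equals $d_G(x,z)+\omega_G(z,y)$ along some actual path of that length in $G$ ending in the $G$-edge $\{z,y\}$. For original edges we set $z=x$; for the trivial pair we use $(u,u,u)$.

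\textbf{Hopset step.} In \Cref{lemma:k-nearest-hopset}, $\widehat M(u)$ computes $\delta_u(v)$ by Dijkstra on $T_k(u)\subseteq E$. Dijkstra naturally yields a last edge $\{z,v\}$ of a shortest $u$--$v$ path in that subgraph. We attach $z$ to the hopset edge $(u,v)$, which costs only local computation.

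\textbf{Exponentiation step.} In \Cref{lemma:exact-k-nearest}, when $\widehat M(u)$ forms the min-plus product entry $B_{i+1}(u)[t]=\min_s B_i(u)[s]+B_i(s)[t]$, it records the witness of the second factor. This is the predecessor of $t$ stored with $B_i(s)[t]$, or, in the base case $B_0$, the one stored on the edge $\{s,t\}$ of $G\cup H$. Ties are broken consistently, e.g.\ by vertex ID. By induction on $i$, if $B_{i+1}(u)[t]=d$ is realized, then $d=d_1+d_2$ where $d_1$ is the length of a $G$-walk from $u$ to $s$, and $d_2$ is the length of a $G$-walk from $s$ to $t$ whose last edge is $\{z,t\}$. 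Hence there is a $G$-walk from $u$ to $t$ of length $d$ ending in $\{z,t\}$. Each entry only grows by one word, so all memory bounds are unchanged.

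\textbf{Soundness of the witnesses.} At the end, for $t\in N_k(u)$ we have $B_r(u)[t]=d_G(u,t)$ by the argument of \Cref{lemma:E_k}. We then output $(u,z,t)$ into $L_k$. Since a $G$-walk of length exactly $d_G(u,t)$ ends with $\{z,t\}$, we get $d_G(u,t)\ge d_G(u,z)+\omega_G(z,t)$. Therefore $\{z,t\}$ lies on a shortest $u$--$t$ path. For $t=u$ the stored triple is $(u,u,u)$, because entries of weight $0$ only arise at the diagonal and positive weights forbid any other witness. Completeness holds since every $t\in N_k(u)$ gets exactly one triple, which also gives $|L_k|\le nk$.

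The main obstacle is making the walk invariant airtight through hopset edges and through filtering. Filtering only discards entries and never alters the surviving ones. Hopset edges carry correct witnesses because their weights are genuine path lengths in $T_k(u)\subseteq G$. So the invariant should survive both operations; I would verify this by a careful induction over the exponentiation iterations.
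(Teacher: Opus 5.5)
Your strategy is the paper's: attach path information to hopset edges, then propagate a last-edge witness through the min-plus exponentiation. However, two steps fail as you have written them.

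\textbf{One witness per hopset edge is not enough.} The graph $G\cup H$ is undirected. A hopset edge $\{u,v\}$ built at $u$ from a path in $T_k(u)$ therefore also appears as the entry $(v,u)$ of $\overline{\mathcal A}$, and the walk realising a min-plus entry may traverse it from $v$ to $u$. The surviving parallel candidate may even have been built at $v$, since candidates are resolved by minimum weight. For that orientation you need the neighbour of $u$ on the stored path, i.e.\ its \emph{first} edge. You only stored the neighbour $z$ of $v$, so using it for $B_0(v)[u]$ breaks your invariant. This is exactly why the paper attaches both the first and the last edge of the represented path to every hopset edge. Dijkstra's tree in $T_k(u)$ gives both at no extra cost. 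For original edges there is no problem, since the witness is just the other endpoint.

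\textbf{Recording the second factor's witness fails on the diagonal.} Since $B_i(t)[t]=0$, the choices $s=u$ and $s=t$ both give the value $B_i(u)[t]$ whenever $t\in N_k^{(2^i)}(u)$. So whenever that value is the minimum, as it is once it equals $d_G(u,t)$, the two choices tie. If the tie goes to $s=t$ (e.g.\ $t$ has the smaller ID), you inherit the witness of $B_i(t)[t]$, namely $t$ itself. Later iterations propagate this via $s=u$, and you output $(u,t,t)$ with $t\neq u$, violating soundness. Your induction assumes the $s$--$t$ walk has a last edge, which is false when $s=t$. The fix is to take the first factor's witness when $s=t$, or equivalently to prefer $s=u$ in ties.

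\textbf{State the weaker invariant.} Carry the walk version you actually use later: the entry equals the length of some $G$-walk from $x$ to $y$ whose last edge is $\{z,y\}$. The form ``weight $=d_G(x,z)+\omega_G(z,y)$'' fails for intermediate entries, and for hopset edges whose $T_k(u)$-distance exceeds the $G$-distance.

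With these patches your argument goes through and coincides with the paper's.
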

\begin{proof}[Proof Sketch]
    Recall that the original algorithm of \Cref{lemma:E_k} was simply a direct combination of two previous algorithms: \Cref{lemma:k-nearest-hopset} and \Cref{lemma:exact-k-nearest}, the algorithms for constructing a $k$-nearest hopset and extracting exact $k$-nearest neighbourhoods respectively. Separately, we show how to modify these algorithms to yield the desired output. 
    \begin{description}
        \item[Modifying \Cref{lemma:k-nearest-hopset}.] Recall that each hopset edge represents some path in the original graph $G$. We modify the output of the algorithm such that for each edge $e$ in the hopset $H$, we also include the first and last edges of the path represented by $e$. 

        \item[Modifying \Cref{lemma:exact-k-nearest}.] We modify the algorithm such that in each iteration $i$ of the graph exponentiation, we update both $E_k$ and $L_k$ corresponding to the new ``local'' view of the $k$-nearest neighbourhoods for each vertex. Crucially for $L_k$, we want to maintain paths in the \emph{original} graph $G$ instead of the graph $G\cup H$ -- this is enabled by the above modification to \Cref{lemma:k-nearest-hopset}. After sufficient iterations, the algorithm outputs both $E_k$ and $L_k$ corresponding to the true $k$-nearest neighbourhoods.\qedhere
    \end{description}
\end{proof}

\paragraph{Recovering paths in $G_S$.} Again, the most obvious attempt seems to be modifying the algorithm of \Cref{lemma:skeleton-implement} such that each time we shortcut a path in $G$ to form an edge in $G_S$, we also maintain the original path in $G$. However, we will utilise our specific implementation of the skeleton graph algorithm to suggest a more space-efficient procedure. 

Consider a skeleton graph $G_S=(V_S,E_S)$ constructed using \Cref{lemma:skeleton-implement} from a graph $G=(V,E)$ and parameter $k$. We recall the algorithm description for \Cref{lemma:skeleton-implement}. Namely, each edge $\set{x,y}\in E_S$ (oriented appropriately) satisfies one of the following cases.
\begin{description}
    \item[Case 1.] There exists a vertex $u\in V$ such that $\omega_{E_S}(x,y) = d_G(x,u) + d_G(u,y)$  satisfying that $x,y\in N_k(u)$.
    \item[Case 2.] There exists a vertex $v\in V$ such that $\omega_{E_S}(x,y) = \omega_G(x,v) + d_G(v,y)$ satisfying that $y\in N_k(v)$, and $\set{x,v}\in E$ or $x=v$. 
    \item[Case 3.] There exist vertices $u,t,v\in V$ such that all of the following hold:
    \begin{itemize}
        \item $x\in N_k(u)$ and $y\in N_k(v)$.
        \item $t\in N_k(u)$, and either $\set{t,v}\in E$ or $t=v$.
        \item $\omega_{E_S}(x,y) = d_G(x,u) + d_G(u,t) + \omega(t,v) + d_G(v,y)$.
    \end{itemize}
\end{description}
Specifically, the edges in the edge set $E_B$ fall under Cases 1 and 2 while the edges in the edge set $E_C$ fall under Case 3 (recalling the definition of $E_B$ and $E_C$ from \Cref{section:skeleton-graph}). This motivates us to modify the algorithm for \Cref{lemma:skeleton-implement} in the following way: When we construct any edge $\set{x,y}\in E_S$, we maintain which of the three cases this edge falls under as well as the vertices in $G$ that were used to construct this edge. 

Concretely, alongside the skeleton graph $G_S = (V_S,E_S)$, we also desire some set $L_S\subseteq E_S\times (V^3\sqcup V^5)\times\set{1,2,3}$ satisfying the following properties.
\begin{itemize}
    \item (\emph{Size-bounded}) $|L_S| = |E_S|$.
    \item (\emph{Soundness}) For any $(\set{x,y},\tau, i)\in L_S$, we have the following; assume that $\set{x,y}$ is oriented appropriately such that it satisfies at least one of the above cases.
    \begin{itemize}
        \item If $\set{x,y}$ satisfies Case 1, then $i = 1$ and $\tau = (x,u,y)$ for some vertex $u\in V$ such that $\omega_{E_S}(x,y) = d_G(x,u) + d_G(u,y)$ and $x,y\in N_k(u)$. 
        \item If $\set{x,y}$ satisfies Case 2, then $i = 2$ and $\tau = (x,v,y)$ for some vertex $v\in V$ such that $\omega_{E_S}(x,y) = \omega_G(x,v) + d_G(v,y)$, $y\in N_k(v)$, and either $\set{x,v}\in E$ or $x=v$. 
        \item If $\set{x,y}$ satisfies Case 3, then $i = 3$ and $\tau = (x,u,t,v,y)\in V^5$, where $u,t,v\in V$ are vertices that satisfy the definition of Case 3. 
    \end{itemize}
    If $\set{x,y}$ satisfies multiple cases, then $\tau$ and $i$ may be assigned any value that matches at least one of these specifications.
    \item (\emph{Completeness}) For any $\set{x,y}\in E_S$, we have that $(\set{x,y},\tau, i)\in L_S$ for some $\tau \in V^3\sqcup V^5$ and $i\in \set{1,2,3}$.
\end{itemize}
As before, we remark that for a fixed skeleton graph $G_S$, there may be multiple candidate choices for $L_S$. It is not difficult to see that given any path $P$ of the form $x = v_1 - v_2 - v_3 -\ldots - v_m = y$ in $G_S$, one can use $L_S$ to ``sketch'' a walk in the original graph $G$ of length at most $\omega(P)$; we refer to the proof sketch of \Cref{lemma:path-distance-oracle} for the details. 

In the following lemma, we show that the skeleton graph algorithm of \Cref{lemma:skeleton-implement} can be slightly modified to also output $L_S$ alongside $G_S$.
\begin{lemma}\label{lemma:path-skeleton}
    Let $G = (V, E)$ be a connected, weighted graph with $n$ vertices, and let $k > 1$ be an integer. Suppose the $k$-nearest edge set $E_k$ for $G$ is given. If $k^2 \in O(n^\gamma)$, then in the sublinear MPC model with $\tilde{O}(|E|k^2)$ total memory, there exists a randomised algorithm that w.h.p.~constructs a skeleton graph $G_S = (V_S, E_S)$ and some set $L_S$ (satisfying the above properties) in $O(1)$ rounds.
\end{lemma}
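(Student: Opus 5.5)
We would prove this by rerunning the algorithm of \Cref{lemma:skeleton-implement} verbatim and carrying, inside every packet, the identities of the vertices that produced the weight in that packet. Each packet grows only by $O(1)$ words, so the round complexity stays $O(1)$ and the total memory stays $\tilde O(|E|k^2)$. Correctness of $G_S$ as a skeleton graph is inherited from \Cref{claim:skeleton-correctness}, since $V_S$, $E_S$ and the weights are unchanged. It therefore remains to ensure that each surviving edge of $E_S$ carries a valid witness $(\tau,i)$, and that exactly one witness is kept per edge.

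The concrete modifications follow the steps of the algorithm. In Step 2, the request packet $(\mathcal R_\mathsf{k-near},t,c(u),\cdot)$ is extended with the vertex $u$, and likewise $(\mathcal R_\mathsf{adj},t,c(v),\cdot)$ with $v$. Both $u$ and $v$ are known locally on $M(u)$, respectively $M(v)$, so this costs nothing. In Step 4, for a good vertex $t$, combining $(\mathcal R_\mathsf{k-near},t,s_a,w_a,u)$ with the label $(s_b,w_b,v)$ produces the edge $(\{s_a,s_b\},w_a+w_b)$ tagged with $\tau=(s_a,u,t,v,s_b)$ and $i=3$. Since $s_a=c(u)\in N_k(u)$, $t\in N_k(u)$, $s_b=c(v)\in N_k(v)$, and the weight equals $d_G(s_a,u)+d_G(u,t)+\omega_G(t,v)+d_G(v,s_b)$, this is exactly Case 3. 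In Step 5, for a bad vertex $t$, a $\mathcal R_\mathsf{k-near}$ packet gives the edge $\{s_a,t\}$ tagged with $((s_a,u,t),1)$. This is Case 1, with both $s_a$ and $t$ in $N_k(u)$. An $\mathcal R_\mathsf{adj}$ packet gives the edge $\{t,s_b\}$, oriented as $x=t$, $y=s_b$, and tagged with $((t,v,s_b),2)$. This is Case 2, because $s_b=c(v)\in N_k(v)$ and either $\{t,v\}\in E$ or $t=v$.

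The final deduplication is the step needing most care. For parallel edges, and for edges appearing in both $E_C$ and $E_B$, we use the sort and find-minimum routine of \Cref{lemma:mpc-toolbox}, with the tuples $(\{x,y\},w,\tau,i)$ sorted lexicographically by edge and then by weight. We keep exactly one minimum-weight tuple per edge. Its witness certifies precisely the retained weight $\omega_{E_S}(x,y)$, which gives soundness. Exactly one tuple survives per edge of $E_S$, which gives completeness and $|L_S|=|E_S|$.

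The obstacle we anticipate is subtle but minor: making sure the witness's weight identity holds with the exact distances. This is guaranteed because the weights in the request packets are computed from $E_k$, whose weights are exact $d_G$ values, and because ties in the minimum are broken consistently. The hitting-set step succeeds w.h.p.\ exactly as before, so the randomised guarantee carries over unchanged.
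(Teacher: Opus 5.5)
Your proposal is correct and follows the same route as the paper: augment the Step~2 request packets with the originating vertex ID, so that good vertices in Step~4 and bad vertices in Step~5 can emit the witnesses $(\tau,i)$ alongside the edges of $E_C$ and $E_B$, at $O(1)$ extra words per packet. Your explicit verification of the three cases and your min-weight deduplication, which keeps exactly one witness per edge so that $|L_S|=|E_S|$, fill in details that the paper's sketch leaves implicit.
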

\begin{proof}[Proof Sketch]
    We make a simple modification to the algorithm description of \Cref{lemma:skeleton-implement}. Namely in Step 2, for each vertex $u\in V$ we modify the request packets to also include the vertex ID of $u$ before communicating these packets to the respective vertices. This provides the vertices $t\in V$ in Steps 4 and 5 with sufficient information to compute $L_S$ alongside the edges $E_S$. Notice that the total space needed to store $L_S$ is at most $O(|E_S|)\subseteq O(|E|k^2)$, and therefore we have sufficient total memory.
\end{proof}

\paragraph{The path-reporting distance oracle.} We now combine \Cref{lemma:path-k-nearest} and \Cref{lemma:path-skeleton} in a nearly identical fashion to how we combined \Cref{lemma:E_k} and \Cref{lemma:skeleton-implement} to form the algorithm of \Cref{lemma-central-construct}. 
\begin{proof}[Proof Sketch of \Cref{lemma:path-distance-oracle}]
    We apply the same preprocessing algorithm technique as the proof of \Cref{lemma-central-construct}, but replace \Cref{lemma:E_k} with \Cref{lemma:path-k-nearest} and \Cref{lemma:skeleton-implement} with \Cref{lemma:path-skeleton}. At the end of this procedure, we have constructed the following:
    \begin{itemize}
        \item The $k$-nearest edge set $E_k$.
        \item The constant-stretch spanner $F = (V_S, E_F)$ of the skeleton graph $G_S=(V_S,E_S)$, stored on the single near-linear machine $M^*$.
        \item Sets $L_k$ and $L_S$ with the properties as specified previously.
    \end{itemize}
    We may also assume that for each vertex $u\in V$, the edge set $\set{\set{u,t}:t\in N_k(u)}$ and the set $\set{(u,z,t)\in L_k: t,z\in V}$ are stored locally on a single machine $M(u)$. We now perform two last additional preprocessing steps: 
    \begin{enumerate}
        \item We filter out $L_S$ to only contain information about the edges in $F$. Specifically, we construct the set $L_F := (E_F \times (V^3\sqcup V^5)\times\set{1,2,3}) \cap L_S$; we proceed to collect $L_F$, which is now near-linear-sized, onto the machine $M^*$. This can be done in $O(1)$ rounds.
        \item For each vertex $u\in V$, the machine $M(u)$ locally computes some shortest path from $u$ to each vertex $t\in N_k(u)$ (\Cref{lemma:path-L_k}). Notice that this requires at most $O(k^2)=O(n^{c/2})\subseteq O(n^\gamma)$ space per vertex $u$, and thus we have sufficient local memory. This requires no communication rounds.
    \end{enumerate}
    We now describe the procedure for answering path-reporting queries. For each query $(u,v)$, the case where $u\in N_k(v)$ or $v\in N_k(u)$ is handled by simply returning the path constructed by the local machine $M(v)$ or $M(u)$ respectively. For the remainder of the proof sketch, we focus on the case when $u\notin N_k(v)$ and $v\notin N_k(u)$.

    Following the same answering-query procedure of \Cref{lemma-central-construct}, we first determine the vertices $x,y\in V_S$ that minimise the distance estimate $\delta(u,v)$. The vertices $(u,v,x,y)$ are then communicated to the near-linear machine $M^*$. Using $F$ and $L_F$, $M^*$ now locally computes a sequence of vertex pairs $(u,w_2)=(w_1,w_2), (w_2,w_3),\ldots, (w_{\ell-1},w_\ell)=(w_{\ell-1}, v) \in V^2$ and a sequence $b_1,b_2,\ldots,b_{\ell - 1}\in \set{1,2}$ with the following properties:
    \begin{itemize}
        \item (\emph{No repeat vertices}) $w_i \neq w_j$ for any $i\neq j$.
        \item (\emph{Size-bounded}) $\ell\in O(|V_S|)\subseteq O((n\log k) / k)$.
        \item (\emph{Pairs are $k$-nearest apart or share an edge}) For each pair $(w_i,w_{i+1})$, we have that (1) if $b_i = 1$ then either $w_i\in N_k(w_{i+1})$ or $w_{i+1}\in N_k(w_{i})$; (2) if $b_i = 2$ then $\set{w_i,w_{i+1}}\in E$.
        \item (\emph{Good approximation}) For each integer $1\leq i \leq \ell - 1$, we denote
        \begin{equation*}
            \Delta_{i} := 
            \begin{cases}
                \omega_{E_k}(w_i,w_{i+1}) & \text{if }b_i = 1 \\
                \omega_G(w_i,w_{i+1}) & \text{if }b_i = 2, 
            \end{cases}
        \end{equation*}
        and let $\Delta := \sum_{i=1}^{\ell-1}\Delta_{i}$. We have that $\Delta$ is at most $\delta(u,v)$.
    \end{itemize}
    It can be quickly verified that such a sequence of vertex pairs indeed exists and can be computed entirely using $F$ and $L_F$. 
    
    Using the paths computed in the last preprocessing step, we use $O(1)$ rounds to construct a \emph{walk} between $u$ and $v$ in $G$ of length at most $\Delta$ and with at most $O(k\cdot \ell)\subseteq O(n\log k)\subseteq \tilde O(n)$ edges; notice that the lack of repeat vertices in the above sequence allows this walk construction to be executed without any added complications. With one last $O(1)$ rounds, the entire walk can be collected onto the near-linear machine $M^*$. Local computation on $M^*$ simplifies this walk into a \emph{path} between $u$ and $v$ of length at most $\Delta$ by removing cycles.
\end{proof}

\end{document}